\documentclass[11pt,a4paper]{amsart}
\linespread{1.3}
\usepackage[backend=biber, style=alphabetic, date=year, url=false, doi=false, isbn=false, eprint=true, firstinits=true, maxalphanames=4, sorting=anyt, maxcitenames=10, maxbibnames=10]{biblatex}
\AtEveryBibitem{%
  \ifentrytype{online}{%
  }{%
    \clearfield{eprint}%
    \clearfield{urldate}%
  }%
}

\addbibresource{CSrefconn.bib}
\usepackage{tikz}
\usepackage{tkz-euclide}
\usetikzlibrary{positioning,arrows,patterns}
\usetikzlibrary{decorations.markings}
\usetikzlibrary{decorations.pathmorphing}
\usetikzlibrary{calc}
\usetikzlibrary{patterns}
\tikzset{
  photon/.style={decorate, decoration={snake}, draw=black},
  fermion/.style={draw=black, postaction={decorate},decoration={markings,mark=at position .55 with {\arrow{>}}}},
  ghost/.style={dashed, postaction={decorate},decoration={markings,mark=at position .55 with {\arrow{>}}}},
    antighost/.style={dashed, postaction={decorate},decoration={markings,mark=at position .55 with {\arrow{<}}}},
  vertex/.style={draw,shape=circle,fill=black,minimum size=5pt,inner sep=0pt},
  black/.style={draw,shape=circle,fill=black,minimum size=5pt,inner sep=0pt},
  striped/.style={draw,shape=circle,fill=white,pattern=north east lines,minimum size=5pt,inner sep=0pt},
  white/.style={draw,shape=circle,fill=white,minimum size=5pt,inner sep=0pt},
  grey/.style={draw,shape=circle,fill=black!20,minimum size=5pt,inner sep=0pt},
particle/.style={thick,draw=black},
particle2/.style={thick,draw=blue},
avector/.style={thick,draw=black, postaction={decorate},
    decoration={markings,mark=at position 1 with {\arrow[black]{triangle 45}}}},
gluon/.style={ draw=black,decorate,
    decoration={coil,aspect=0.9,amplitude=0.2cm}},
leaf/.style={draw,shape=circle, inner sep=2pt}
 }

\usepackage{amsmath}
\usepackage{amssymb}
\usepackage{amsthm}
\usepackage{paralist}
\usepackage{hyperref}  
\usepackage{tikz-cd}
\usepackage{todonotes}
\usepackage{array}
\usepackage{subcaption}
\usepackage{comment}
\usepackage{xcolor}
\captionsetup[subfigure]{labelfont=rm}

\usepackage{footnote}
\usepackage{cancel}
\usepackage[normalem]{ulem} 

\makesavenoteenv{tabular}
\makesavenoteenv{table}
\theoremstyle{plain} 
\newtheorem{thm}{Theorem}[section]
\newtheorem{lem}[thm]{Lemma}
\newtheorem{cor}[thm]{Corollary}
\newtheorem{prop}[thm]{Proposition}

\newtheorem{conj}[thm]{Conjecture}
\theoremstyle{definition}
\newtheorem{defn}[thm]{Definition}
\newtheorem{expl}[thm]{Example}

\newtheorem*{defn*}{Definition}
\theoremstyle{remark}
\newtheorem{rem}[thm]{Remark}
\newtheorem{ass}[thm]{Assumption}


\newcommand{\N}{\mathbb{N}}

\newcommand{\Z}{\mathbb{Z}}
\newcommand{\R}{\mathbb{R}}
\newcommand{\HH}{\mathbb{H}}
\newcommand{\g}{\mathfrak{g}}

\newcommand{\dd}{d}

\newcommand{\Aut}{\operatorname{Aut}}

\newcommand{\sign}{\operatorname{sign}}
\newcommand{\Det}{\operatorname{Det}}

\newcommand{\tr}{\operatorname{tr}}

\newcommand{\im}{\operatorname{im}}
\newcommand{\sdet}{\operatorname{Sdet}}
\newcommand{\Sym}{\operatorname{Sym}}
\newcommand{\mr}{\mathrm}

\newcommand{\calL}{\mathcal{L}}
\newcommand{\sfa}{\mathsf{a}}
\newcommand{\sfb}{\mathsf{b}}
\newcommand{\sfc}{\mathsf{c}}

\newcommand{\bt}{\bullet}
\newcommand{\ra}{\rightarrow}
\newcommand{\sh}{\mathsf{h}}
\newcommand{\M}{\mathcal{M}}

\newcommand{\wh}{\widehat}
\newcommand{\til}{\widetilde}
\newcommand{\mc}{\mathcal}
\newcommand{\ad}{\mathrm{ad}}
\newcommand{\FC}{\mathrm{FC}}
\newcommand{\HHH}{\mathcal{H}}
\newcommand{\xra}{\xrightarrow}
\newcommand{\la}{\leftarrow}
\newcommand{\BB}{\mathfrak{B}}

\newcommand{\UU}{\underline{\mc{U}}}
\newcommand{\ul}{\underline}
\newcommand{\sfg}{\mathrm{g}} 
\newcommand{\Kb}{\overline{K}}
\newcommand{\ib}{\overline{i}}
\newcommand{\pb}{\overline{p}}
\newcommand{\Qb}{\overline{Q}}
\newcommand{\Gb}{\overline{G}}
\newcommand{\mm}{\mathrm{m}}
\newcommand{\ddd}{\mathfrak{d}}
\newcommand{\Conf}[1][V]{\overline{\mathrm{Conf}}_#1(M)}
\newcommand{\II}{\mathbb{I}}
\newcommand{\PP}{\mathbb{P}}

\usepackage[automake,toc = false]{glossaries-extra}
\glsxtrRevertMarks
\newglossarystyle{myNoHeaderStyle}{
    \setglossarystyle{longheader} 
}

\makeglossaries

\newglossaryentry{space}
{
  name={},
  sort={},
  description={}
}
\newglossaryentry{M}
{
  name={$M$},
  sort={a},
  description={compact 3-manifold}
}
\newglossaryentry{G}
{
  name={$G$},
  sort={b},
  description={compact, simple, simply connected matrix group, e.g. $G = SU(N)$}
}
\newglossaryentry{P}
{
  name={$P$},
  sort={c},
  description={trivialized principal $G$-bundle, $P = M\times G$}
}
\newglossaryentry{A0}
{
  name={$A_0$},
  sort={d},
  description={A flat connection on $P$}
}
\newglossaryentry{Omega}
{
name = {$\Omega^{\bullet}_{A_0},\Omega^k_{A_0}$},
sort = {d1},
description={The dgla $(\Omega^{\bullet}(M,\g),\dd_{A_0})$}
}
\newglossaryentry{H}
{
  name={$H_{A_0}^\bullet, H_{A_0}^k$},
  sort={e},
  description={Cohomology groups of $d_{A_0}$}
}
\newglossaryentry{abc}
{
  name={$\sfa,\sfb,\sfc,\ldots$},
  sort={f},
  description={Cohomology classes $\sfa,\sfb,\sfc \in H^\bullet_{A_0}$}
}

\newcommand\restr[2]{{
  \left.\kern-\nulldelimiterspace 
  #1 
  \vphantom{\big|} 
  \right|_{#2} 
  }}

\allowdisplaybreaks

\title[Globalization in Chern-Simons theory]{
Globalization of perturbative Chern-Simons theory on the moduli space of flat connections in the BV formalism
}

\author[P. Mnev]{Pavel Mnev}
\address{University of Notre Dame, Notre Dame, IN 46556, USA}
\address{Institut f\"ur Mathematik, Universit\"at Z\"urich, Winterthurerstrasse 190, CH-8057 Z\"urich,
Switzerland}
\email{pmnev@nd.edu}

\author[K. Wernli]{Konstantin Wernli}
\address{Centre for Quantum Mathematics, IMADA, University of Southern Denmark, Campusvej 55, 5230 Odense M, Denmark}
\email{kwernli@imada.sdu.dk}

\thanks{
P.M. acknowledges partial support 
of the Simons Collaboration on Global Categorical Symmetries, of the Simons Foundation Travel Support Grant, and of the FIM at ETH Zurich. P.M.'s research was (partly) supported by the NCCR SwissMAP, funded
by the Swiss National Science Foundation.
K. W. was supported by the ERC-SyG project, Recursive and Exact New Quantum Theory (ReNewQuantum) with funding from the European Research Council under the European
Union's Horizon 2020 research and innovation programme, grant agreement no.
810573, and by a Sapere Aude Grant from the Independent Research Fund Denmark, grant ID 10.46540/4251-00015B.}
\date{\today}
\begin{document}
\begin{abstract}

We study the perturbative path integral of Chern-Simons theory (the effective BV action on zero-modes)  in Lorenz gauge, expanded around a (possibly non-acyclic) flat connection, as a family over the smooth irreducible stratum $\M’\subset \M$ of the moduli space of flat connections.
We prove that it is horizontal with respect to the Grothendieck connection up to a BV-exact term. From it, 
we construct a volume form on $\M’$ — the ``global partition function’’ — whose cohomology class is independent of the metric, and so is a 3-manifold invariant. 

As an element of the construction, we construct an extension of the perturbative partition function to a nonhomogeneous form on the space of triples $(A,A’,g)$ consisting of (1) a ``kinetic’’ flat connection $A$ around which Chern-Simons action is expanded, (2) a ``gauge-fixing’’ flat connection $A’$, (3) a metric $g$. This extension is horizontal with respect to an appropriate Gauss-Manin superconnection (which involves the BV operator as a degree zero component).

\end{abstract}
\maketitle
\setcounter{tocdepth}{3}
\tableofcontents
\section{Introduction}
The Chern-Simons field theory has been a major focus of interest of the mathematical physics community since the discovery of its close links to invariants of knots and 3-manifolds,
both in non-perturbative \cite{Witten1989}, \cite{Reshetikhin1991} and perturbative \cite{Froehlich1989}, \cite{Kontsevich1993} treatments of the theory. 
An early breakthrough by Axelrod and Singer \cite{Axelrod1991}, \cite{Axelrod1994} was the result that the perturbative series at acyclic flat connections is well defined and yields topological invariants of the spacetime three-manifold equipped with framing. In this work, we generalize this result to smooth irreducible components of the moduli space: We show that these carry a volume form (valued in formal power series) 
whose cohomology class (total volume of the component)
is a topological invariant of the (framed) spacetime three-manifold. 
Somewhat surprisingly, in the construction of this volume form, extra corrections beyond the usual Feynman diagrams are needed.


\subsection{Perturbative Chern-Simons partition function at a non-acyclic flat connection}
Fix a closed oriented 3-manifold $M$ and a compact simply-connected matrix Lie group $G$ with Lie algebra $\g$. 

We consider Chern-Simons theory, defined classically by the action functional
\begin{equation}\label{intro S_CS}
    S_{CS}(A)=\int_M \mr{tr} \left( \frac12 A\wedge dA +\frac16 A\wedge [A,A]  \right)
\end{equation}
on the space of connections $A$ in the trivial $G$-bundle $P$ on $M$, which are identified with $\g$-valued 1-forms on $M$. The critical points of $S_{CS}$ are \textit{flat} connections.

In Batalin-Vilkovisky (BV) formalism, one replaces (\ref{intro S_CS}) by the ``master action'' given by the same formula, but with field 
$A$ replaced with a nonhomogeneous $\g$-valued form  $\mc{A}$ on $M$, see \cite{Alexandrov1997}.

\subsubsection{Path integral heuristics}
We are interested in the Chern-Simons path integral over gauge equivalence classes of connections
\begin{equation}\label{intro CS path integral non-BV}
    \int_{\mr{Conn}(P)/\mr{Gauge}} \mc{D}A\; e^{\frac{i}{\hbar}S_{CS}(A)}.
\end{equation}
The perturbative (stationary phase) contribution of an acyclic flat connection to the $\hbar\ra 0$ asymptotics of  (\ref{intro CS path integral non-BV}) was studied in \cite{Witten1989} (one-loop approximation) and \cite{Axelrod1991}, \cite{Axelrod1994} (higher-loop contributions).

Given a non-acyclic flat connection $A_0$, one can decompose fields in the neighborhood of $A_0$ as $\mc{A}=A_0+\sfa+\alpha_\mr{fl}$, with $\sfa$ a $d_{A_0}$-cohomology class (represented by a harmonic form) and $\alpha_\mr{fl}$ a fluctuation. Then, one considers the path integral 
\begin{equation}\label{intro: CS path integral}
    Z_{A_0}(\sfa)= \int \mc{D}\alpha_\mr{fl}\; e^{\frac{i}{\hbar}S_\mr{CS}(A_0+\sfa+\alpha_\mr{fl})}
\end{equation}
where the integration is over field fluctuations -- $d^*_{A_0}$-exact forms $\alpha_\mr{fl}\in \Omega^\bt(M,\g)$.\footnote{Here $d^*_{A_0}$-exactness is the $A_0$-twisted Lorenz gauge condition and the fact that $\alpha_\mr{fl}$ is allowed to be a nonhomogeneous differential form is the AKSZ-BV gauge-fixing mechanism. In terms of Faddeev-Popov ghosts $c,\bar{c}$, the degree zero component of $\alpha_\mr{fl}$ is the ghost $c$ and the degree two component is $d^*_{A_0}\bar{c}$.}
In the case $A_0=0$, perturbative expansion for the path integral (\ref{intro: CS path integral}) was constructed and studied in \cite{Cattaneo2008}, as an effective BV action in $\sfa$.  For nonzero $A_0$ the construction is spelled out in \cite{Wernli2022}.

A related idea is that in the path integral (\ref{intro CS path integral non-BV}) one might want to consider a tubular neighborhood of the moduli space of flat connections and integrate over fibers, producing a volume form of the moduli space.

In this paper we will be denoting the perturbative evaluation of (\ref{intro: CS path integral}) $Z_{A_0}(\sfa)$ and the volume form on the moduli space as above $Z^\mr{glob}$ -- we will define both objects mathematically, without reference to heuristic path integral expressions. They are linked by $$Z^\mr{glob}=Z|_{\sfa=0}+\mr{correction\; terms},$$ 
see Section \ref{sss intro Zglob}.

\subsubsection{Mathematical definition of the perturbative partition function}
For $A_0$ any flat connection 
on $M$, adapting the construction of \cite{Cattaneo2008}, one defines 
the perturbative Chern-Simons partition function as 
\begin{multline}\label{Zpert}
    Z_{A_0}(\sfa)=e^{\frac{i}{\hbar}S_{CS}(A_0)} \tau(A_0)^\frac{1}{2} e^{\frac{\pi i}{4}\psi(A_0)} \exp \sum_\Gamma \frac{(-i\hbar)^{-\chi(\Gamma)}}{|\mr{Aut}(\Gamma)|}\Phi_{\Gamma,A_0}(\sfa) \\ 
    \in \mr{Dens}^{\frac12,\mr{formal}}(H^\bt_{A_0}[1])=\mr{Det}^\frac12(H^\bt_{A_0})\otimes \wh{\mr{Sym}}(H^\bt_{A_0}[1])^*
\end{multline}
-- a formal half-density on de Rham cohomology twisted by $A_0$. Here:
\begin{itemize}
\item $H_{A_0}^\bt$ is the cohomology of the complex of $\g$-valued differential forms on $M$ with differential $d_{A_0}=d+\mr{ad}_{A_0}$. One calls the variable $\sfa\in H_{A_0}^\bt[1]$ the \emph{zero-mode}.
\item $\tau(A_0)$ is the Ray-Singer torsion. For $A_0$ non-acyclic, rather than being a number, it is an element of the determinant line of the cohomology $H^\bt_{A_0}$, see Remark \ref{rem: torsion in Gaussian integrals} below.
\item $\psi(A_0)$ is the Atiyah-Patodi-Singer eta-invariant of the 
    operator \\${L_-\colon=*d_{A_0}+d_{A_0}*}$ acting on forms of odd degree.
\item The sum ranges over connected 3-valent graphs (``Feynman graphs'') $\Gamma$ with leaves (loose half-edges) allowed. $\chi(\Gamma)$ is the Euler characteristic of the graph. 
The weight $\Phi_{\Gamma,A_0}(\sfa)$ of a graph $\Gamma$ is a polynomial in $\sfa$ with coefficients given by certain integrals over the compactified configuration space of points on $M$, of a form defined in terms of Hodge decomposition data on $M$, defined by a metric on $M$ and twisted by the local system $A_0$.
\end{itemize}
We refer to Section \ref{ss perturbative partition function} for full details, in particular for the formula for Feynman weights $\Phi_{\Gamma,A_0}(\sfa)$.

Some elements of the formula (\ref{Zpert}) depend on the choice of a Riemann metric on $M$ (namely, the eta-invariant and Feynman weights). The dependence of the full object $Z$ on metric -- 
with an appropriate renormalization factor included
-- turns out to be BV-exact, see Section \ref{sec:intro metric}.

\subsubsection{Aside:  BV pushforward perspective
}

We briefly recall the BV pushforward construction which in particular elucidates: 
\begin{enumerate}[(i)]
    \item why one should expect $Z$ to be a half-density on the space of zero-modes and
    \item why one should expect $Z$ to change by a BV-exact term when the metric on $M$ is deformed.
\end{enumerate}

    Recall that in the BV formalism, one has a construction of a BV pushforward, or fiber BV integral:\footnote{See, e.g., \cite{mnev2019quantum}.} Let 
    \begin{equation}\label{intro V=V'xV''}
    V=V'\times V''
    \end{equation}
    be a degree $(-1)$-symplectic manifold (``space of fields'') presented as a product of degree $(-1)$-symplectic manifolds (``slow/infrared fields'' and ``fast/ultraviolet fields'') and $\mc{L}\subset V''$ be a Lagrangian submanifold. Then one has a BV pushforward map from half-densities on all fields to half-densities on slow fields
    \begin{equation}
        P_* \stackrel{\mr{def}}{=} \mr{id}_{V'}\otimes \int_{\mc{L}}\colon\quad  \mr{Dens}^{\frac12}(V) \ra \mr{Dens}^{\frac12}(V').
    \end{equation}
 By BV version of Stokes' theorem \cite{Schwarz1993}, one has that
 \begin{enumerate}[(a)]
     \item $P_*$ is a chain map w.r.t. the BV Laplacians on half-densities:
     \begin{equation}
     \Delta' P_*= P_* \Delta,
     \end{equation}
     with $\Delta,\Delta'$ the BV Laplacian on the half-densities on $V$ and on $V'$ respectively.
     \item 
     Denote the inclusion of $V'$ into $V$ in the splitting (\ref{intro V=V'xV''}) by $i$ and the projection of $V$ onto $V'$ by $p$. 
     Then, for an infinitesimal deformation 
     of $i,p$ and the Lagrangian $\mc{L}$,
     the induced variation of the BV pushforward is $\Delta'$-exact,
     \begin{equation} \label{intro delta P_*}
    \delta_{i,p,\mc{L}} \left(P_* \alpha\right) = \Delta'R,
     \end{equation}
     for any fixed $\alpha\in \mr{Dens}^{\frac12}(V)$ satisfying $\Delta\alpha=0$, with the generator 
     $R$ given explicitly in terms of the variation of $i,p,\mc{L}$, see \cite{Cattaneo2008}, \cite{Cattaneo2017}.
 \end{enumerate}

The two properties of $P_*$ above are a theorem in the finite-dimensional case; for infinite-dimensional BV pushforward (defined via perturbative path integral) they become a heuristic statement -- an expectation -- that has to be proven independently at the level of Feynman diagrams.

In the example of Chern-Simons theory, restricted to perturbations of a fixed flat connection $A_0$, we have $V=\Omega^\bt(M,\g)[1]$ with $V'$ being the $A_0$-harmonic forms and $V''$ their orthogonal complement (w.r.t. the Hodge inner product), with 
\begin{equation}\label{intro L=im(d^*_A0)}
\mc{L}=\mr{im}(d^*_{A_0})
\end{equation}
being the coexact forms. Then one has a function on $V$,
\begin{multline}\label{intro S_CS(S_0+B)}
    f(B)\colon=S_\mr{CS}(A_0+B)\\
    =S_\mr{CS}(A_0)+\int_M\mr{tr}\left(\frac12 B\wedge d_{A_0} B+\frac16 B\wedge [B,B]\right).
\end{multline}
As a function of $B$ it satisfies the BV classical master equation $\{f,f\}=0$. Denoting by $\mu_0$ the formal translation-invariant half-density on $V$,  one has 
$\Delta (e^{\frac{i}{\hbar}f}\mu_0 )=0$ where the l.h.s. should be appropriately regularized \cite{Costello2011}. Then the perturbative partition function (\ref{Zpert}) is the perturbative evaluation of the BV pushforward
\begin{equation}\label{intro BV pushforward}
    Z= P_*\left(e^{\frac{i}{\hbar}f}\mu_0\right)
\end{equation}
for the gauge-fixing Lagrangian $\mc{L} = \mr{im}(d^*_{A_0})$ -- this is the origin of the Chern-Simons path integral (\ref{intro: CS path integral}).
In particular, from this viewpoint it is natural that $ Z$ is a \emph{half-density} (rather than a function) on $V'$.

\begin{rem}\label{rem: torsion in Gaussian integrals}
The free Gaussian part  of the BV pushforward (\ref{intro BV pushforward}) is 
the formal Gaussian integral of $e^{\frac{i}{\hbar} \int_M \mr{tr}\frac12 B\wedge d_{A_0} B}$ which is interpreted (as guided by finite-dimensional oscillatory Gaussian integrals) as 
\begin{equation}\label{intro Gaussian integral as reg det times reg phase}
``{({\det}^\mr{reg} d_{A_0})^\frac12 e^{\frac{\pi i}{4}\mr{sign}^\mr{reg}(d_{A_0})},}"
\end{equation}
with ``${\det}^\mr{reg} d_{A_0}$'' the gauge-fixed (gauge-fixing can be thought of replacing $d_{A_0}$ with $(d^*_{A_0}d_{A_0})^{\frac12}$ and then restricting it to coexact forms) and zeta-regularized determinant of $d_{A_0}$, a.k.a. the Ray-Singer torsion. We refer to \cite{Schwarz1978}, \cite{Witten1989}, \cite[Section 1.7]{mnev2014lecture}, \cite{cattaneo2020cellular} for details on Ray-Singer torsion $\tau(A_0)$ (or Reidemeister torsion in the combinatorial setting of \cite{cattaneo2020cellular}) and how it appears as a Gaussian integral with kinetic operator given by de Rham operator twisted by a flat connection. The gauge-fixed and zeta-regularized signature of the kinetic operator in (\ref{intro Gaussian integral as reg det times reg phase}) is understood as the Atiyah-Patodi-Singer eta-invariant $\psi(A_0)$, see \cite{Witten1989}.

Ray-Singer torsion is the regularized version of the general algebraic notion of torsion of a cochain complex $C^\bt$ with inner product. This torsion $\tau(C^\bt)$ is an element of the determinant line $\mr{Det}\,C^\bt \cong \mr{Det}\,H^\bt$, with $H^\bt$ the cohomology of $C^\bt$, see \cite{milnor1966whitehead}, \cite{mnev2014lecture}. (In particular, if $C^\bt$ is acyclic, the torsion is a number.) In the case of $C^\bt$ being the de Rham complex twisted by a flat connection $A_0$, the torsion becomes the Ray-Singer torsion -- an element of the determinant line of twisted de Rham cohomology $H^\bt_{A_0}$, or, equivalently, a constant density on the graded space $H^\bt_{A_0}[1]$. Thus, the square root of torsion participating in the perturbative Chern-Simons path integral (\ref{Zpert}) is a (constant) half-density on the space of zero-modes $H^\bt_{A_0}[1]$. We refer also to \cite{bismut1992extension}, \cite{braverman2003new} for a discussion of Ray-Singer torsion for non-acyclic flat connections.
\end{rem}

Note that the property (\ref{intro delta P_*}) suggests that under a deformation of Riemannian metric on $M$, $Z$ should change by a $\Delta'$-exact term.

\begin{rem}
\label{rem: def Z pert framing}
There is a correction to the expected statement above 
-- a path integral phenomenon, not visible at the level of finite-dimensional integrals: The partition function \eqref{Zpert} exhibits anomalous dependence on metric. For an acyclic flat connection $A_0$, 
         at the 1-loop level, as already observed by Witten \cite{Witten1989}, this is due to the fact that the eta invariant depends on the metric. At higher loop orders this phenomenon is due to contributions from hidden boundary strata of 
         compactified configuration spaces, 
         as observed by Axelrod and Singer \cite{Axelrod1991}. One can cancel this dependence on the metric at the cost of ``renormalizing'' the partition function by multiplying it with a factor that depends on the metric and a framing (trivialization of the tangent bundle of $M$). The resulting renormalized partition function is independent of metric but depends on the framing
         --- this is the well-known \emph{framing anomaly} of Chern-Simons theory.
         In the case of non-acyclic $A_0$, as it turns out, one needs to include the same renormalization factor, and then this renormalized $Z$ changes under the variation of metric by a BV-exact term.
          We refer to
          Section \ref{sec:intro metric} below for details (see also Appendix \ref{sec: metric}). 
         \end{rem}
         \begin{rem}\label{rem: def Z pert normalization}
        Ultimately, the goal of this activity  
        is to compare the perturbative Chern-Simons partition function and the asymptotics of the Reshetikhin-Turaev invariants \cite{Reshetikhin1991}. Experiments in the literature have shown \cite{Freed1991}, \cite{Jeffrey1992}, \cite{Rozansky1995} that to this end one needs to do two things: 
        \begin{enumerate}[a)]
        \item 
        Be more careful in the normalization of the path integral measure.\footnote{E.g. in the quantum mechanics path integral for a particle in $\R^d$, the ``correct''  measure on paths is $\prod_t \frac{dp(t)dq(t)}{\sqrt{2\pi\hbar}^d}$, rather than the Lebesgue measure that we are implictly using here.} 
        \item Refine the framing correction to 2-framings and use the \emph{canonical} 2-framing.\footnote{2-framings are trivializations of $TM \oplus TM$, introduced by Atiyah \cite{Atiyah1990}. The canonical 2-framing $\alpha$ is the one for which the Hirzebruch defect $\mr{sign}(Y) - \frac16p_1(2TM,\alpha) =0$, where $Y$ is any 4-manifold with boundary $M$ and $p_1(2TM,\alpha)$ is the relative Pontryagin number of the bundle $2TM$ over $M \times I$, trivialized by $\alpha$ over the endpoints of the cylinder.  }
        \end{enumerate} 
        In this paper we will largely ignore these questions and only comment on them briefly in the motivating example in Section \ref{sec:intro example}. 
  
\end{rem}


\subsubsection{$Z$ as a family over the moduli space of flat connections.}
Let $\M$ be the moduli space of flat $G$-connections on $M$ and $\M'\subset \M$ the smooth irreducible locus.
\footnote{A point $[A_0]\in \M$ is ``smooth'' if 
the minimal model of the 
dg Lie algebra $(\Omega^\bt(M,\g),d_{A_0},[-,-])$ is the cohomology $H^\bt_{A_0}$ with \emph{vanishing} $L_\infty$ operations 
(which implies that $\M$ is locally a manifold around $[A_0]$). A flat connection is irreducible if $H^0_{A_0}=0$.}\footnote{
For comparisons with nonperturbative answers in Chern-Simons theory one may want to assume that the pair $M,G$ is such that $\M'\subset \M$ is an open dense subset. However, results of this paper don't need this assumption.}

The partition function (\ref{Zpert}) depends only on the gauge equivalence class $[A_0]$ of the flat connection $A_0$, and thus defines a section of the bundle of formal vertical (i.e., fiberwise) half-densities on the graded vector bundle $\mathbb{T}\M'$ over $\M'$, where the fiber of $\mathbb{T}\M'$ over $[A_0]$ is $H^\bt_{A_0}[1]$ (in particular, the degree zero part of $\mathbb{T}\M'$ is the tangent bundle of $\M'$):
\begin{equation}
    Z\in \Gamma(\M',\mr{Dens}^{\frac12,\mr{formal}}(\mathbb{T}\M')).
\end{equation}

We remark that on $\M'$: 
\begin{itemize}
\item The exponential factor in the partition function (\ref{Zpert}) is $1+O(\hbar)$ (tree Feynman graphs vanish
due to smoothness of $[A_0]$); $S_{CS}(A_0)$ and $\psi(A_0)$ are locally constant functions on $\M'$.\footnote{In fact, $S_{CS}(A_0)$ and $\psi(A_0)$ are locally constant on the entire moduli space $\M$, including singular/reducible strata.}
\item By irreducibility of $[A_0]$ and by Poincar\'e duality, one has $H^0_{A_0}=H^3_{A_0}=0$ and $H^2_{A_0}\cong (H^1_{A_0})^*$. Therefore, vertical half-densities on $\mathbb{T}\M'$ are naturally identified with vertical 1-densities on the tangent bundle $T\M'$ and in turn, using an orientation\footnote{One has a natural orientation on $\M'$, see \cite[Theorem 4.5]{joyce2020orientations}. 
} on $\M'$, with vertical top-degree forms on $T\M'$.
\end{itemize}

\subsubsection{The global partition function}\label{sss intro Zglob}
Restriction of the perturbative partition function (\ref{Zpert}) to the zero-section of $\mathbb{T}\M'$ (i.e., setting $\sfa=0$)  yields the ``naive global partition function''
\begin{equation}\label{Zglob naive intro}
    Z^\mr{glob,naive}_{A_0}=Z_{A_0}(\sfa=0) \quad \in  \mr{Dens}^{\frac12}_{\mr{base}}(T^*[-1]\M')\cong \Omega^\mr{top}(\M')
\end{equation}
where $\mr{Dens}^{\frac12}_{\mr{base}}(T^*[-1]\M')$ denotes half-densities on the shifted cotangent bundle that are independent of the fiber coordinates. 
It is given by the same formula as (\ref{Zpert}), where the sum over graphs ranges over trivalent graphs without leaves.

One of the main results of this work is that one can modify (\ref{Zglob naive intro}), by adding certain explicit corrections,
to
\begin{equation}
    Z^\mr{glob}_{A_0}= Z^\mr{glob,naive}_{A_0} (1+O(\hbar)),
\end{equation}
in such a way that:
\begin{enumerate}[(a)]
    \item 
    With the renormalization 
    factor included (as in (\ref{eq: renormalization})), $Z^\mr{glob}$ defines a cohomology class of $\M'$, which is \emph{independent of the metric} (Theorem \ref{thm:metric dependence intro}/Theorem \ref{thm: coho class invariant}). 
    In particular, if $\{\M'_\alpha\}$ are the connected components of $\M'$, then the collection $\{\int_{\M'_\alpha} Z^\mr{glob,ren}\}$ of elements of $e^{\frac{ic_\alpha}{\hbar}}\mathbb{C}[[\hbar]]$ is an invariant of a framed 3-manifold, where $c_\alpha=S_{CS}|_{\M'_\alpha}$.
    \item The pullback of $Z^\mr{glob}$ by the formal exponential map on $\M'$ recovers the perturbative partition function (\ref{Zpert}) up to a BV-exact term (Theorem \ref{intro thm A} (\ref{intro thm A (c)})/Corollary \ref{cor: Tphi^* Zglob=Z+Delta(...)}).
\end{enumerate}

The construction of $Z^\mr{glob}$ is as follows:
\begin{itemize}
    \item First, one extends the perturbative partition function (\ref{Zpert}) to a nonhomogeneous form $\til{Z}$ on the moduli space $\M'$ with values in $\mr{Dens}^{\frac12,\mr{formal}}(H_{A_0}^\bt[1])$. 
    This extension is constructed by taking the formula (\ref{Zpert}) and changing the assignment to edges and leaves of a Feynman graph to appropriate objects\footnote{
    The extended propagator $\wh{K}$ and the extended zero-mode inclusion $\wh{i}(\sfa)$, cf. (\ref{ihat, phat, Khat, Phihat}) with $A'=A=A_0$. One also needs to include a special graph consisting of a single edge with the weight $\frac{i}{2\hbar} \langle \sfa, \wh\Theta(\sfa)\rangle$, with $\wh\Theta$ as in (\ref{ihat, phat, Khat, Phihat}).
    } valued in $\Omega^\bt(\M')$.
    \item Then one constructs $Z^\mr{glob}$ as 
    \begin{equation}\label{Zglob formula intro}
        Z^\mr{glob}=\left(\sum_{k\geq 0}\frac{(-i\hbar)^k}{k!}\left\langle \frac{\partial}{\partial \sfa^2},\frac{\partial}{\partial [\delta A_0]} \right\rangle^k \til{Z} \right)\Bigg|_{\sfa^1=[\delta A_0]=0}.
    \end{equation}
    Here $\sfa^{1,2}$ are the components of $\sfa$ in $H^1_{A_0}$, $H^2_{A_0}$.
\end{itemize}
The term $k=0$ in (\ref{Zglob formula intro}) is $Z^\mr{glob,naive}$, and $k\geq 1$ terms are the corrections we referred to above. We refer to Section \ref{ss Zglob} for details on the construction and properties of $Z^\mr{glob}$.

\subsection{Dependence of the perturbative partition function on the flat connection: horizontality of $Z$, recovering $Z$ from $Z^\mr{glob}$}
\subsubsection{The sum-over-trees formal exponential map on $\M'$.}
One can define a map 
\begin{equation}
    \phi\colon V \ra \M'
\end{equation}
where $V$ is an open neighborhood of the zero-section of the tangent bundle $T\M'$ such that (a) the restriction of $\phi$ to the zero-section is the identity map $\M'\ra \M'$ and (b) the vertical component of the differential $d\phi$ on the zero-section is identity. Such a map $\phi$ is called, in the language of formal geometry, a ``formal exponential map.''\footnote{More precisely: the formal exponential map is the vertical $\infty$-jet of $\phi$. See e.g. \cite{Bonechi2012}.}

One can define a particular formal exponential map $\phi$ explicitly, as a sum over binary rooted trees (modulo isomorphism) with leaves decorated by $\sfa\in T_{[A_0]}\M'=H^1_{A_0}$, edges (and the root) decorated by Hodge chain homotopy 
and vertices decorated by the Lie bracket in $\Omega^\bt(M,\g)$.

\subsubsection{Main result 1}
\begin{thm}\footnote{
This is Proposition \ref{prop ulZ finite horizontality}, Corollary \ref{cor ulZ nabla^G horizontality} and Corollary \ref{cor: Tphi^* Zglob=Z+Delta(...)} put together.
} \label{intro thm A}
\begin{enumerate}[(a)]
    \item \label{intro thm A (a)}
    The perturbative partition function (\ref{Zpert}) satisfies 
    \begin{equation}\label{intro: Z finite horizontality eq}
     \det(B^\vee)\circ  Z_{\phi(A_0,\alpha)}( B(\sfa))
        = Z_{A_0}(\alpha+\sfa)-i\hbar\Delta_\sfa  R(A_0,\alpha;\sfa)
    \end{equation}
    for any smooth irreducible flat connection $A_0$.\footnote{To lighten the notations we write $A_0$ instead of $[A_0]$ for a point in $\M'$.} Here: 
    \begin{itemize}
   \item  $\sfa,\alpha \in H^1_{A_0}$  are formal variables; 
   \item $\phi$ is the sum-over-trees formal exponential map; 
   \item $B\colon=d^\mr{vert}\phi|_{(A_0,\alpha)}\colon H^1_{A_0}\ra H^1_{\phi(A_0,\alpha)}$ is the differential of $\phi$ in the second 
    argument; the determinant of the dual of $B$ is a map between determinant lines $\det (B^\vee)=\wedge^\mr{top}B^\vee\colon \mr{Det}(H^1_{\phi(A_0,\alpha)})^*\ra \mr{Det}(H^1_{A_0})^*$;
    \item $\Delta_\sfa = \left\langle \frac{\partial}{\partial \sfa^{1}},\frac{\partial}{\partial \sfa^{2}}\right\rangle$, with brackets meaning the Poincar\'e duality pairing, is the BV Laplacian on formal half-densities on the fiber of $\mathbb{T}\M'$ over $A_0$; 
    \item $R(A_0,\alpha;\sfa)$ is some degree $-1$ formal half-density on the fiber of $\mathbb{T}\M'$ over $A_0$ (in a family parametrized by $\alpha$). 
    \end{itemize}
    \item \label{intro thm A (b)}
    The formal exponential map induces a flat connection $\nabla^G$ (``Grothendieck connection'')
    on the bundle of formal fiberwise  half-densities on $\mathbb{T}\M'$, and the perturbative partition function is a horizontal section modulo a $\Delta_\sfa$-exact term:
    \begin{equation}\label{intro: horizontality eq}
        \nabla^G Z= -i\hbar\Delta_\sfa R_1
    \end{equation}
    with some degree $-1$ generator $R_1\in \Omega^1(\M',\mr{Dens}^{\frac12,\mr{formal}}(\mathbb{T}\M'))$.
    \item \label{intro thm A (c)} 
    Under 
    one can recover $Z$ from $Z^\mr{glob}$ modulo a $\Delta_\sfa$-exact term, as
    \begin{equation}\label{intro phi^* Z^glob = Z^pert+Delta(...)}
        \mathbf{T} (\phi^*)^\mr{vert} Z^\mr{glob} = Z+i\hbar\Delta_\sfa R_\mr{glob-pert},
    \end{equation}
    On the left, $(\phi^*)^\mr{vert}$ stands for the fiberwise top form on $V\subset T\M'$ obtained from the pullback of a top form on $\M'$;
    $\mathbf{T}$ stands for taking the Taylor expansion in the fiber coordinates on $T\M'$. The resulting formal fiberwise top form on $T\M'$ is reinterpreted as a formal fiberwise degree zero half-density on $\mathbb{T}\M'$. $R_\mr{glob-pert}$ is some degree $-1$ formal fiberwise half-density on $\mathbb{T}\M'$.
\end{enumerate}  
\end{thm}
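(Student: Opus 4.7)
The heart of the statement is part \eqref{intro thm A (a)}; parts \eqref{intro thm A (b)} and \eqref{intro thm A (c)} follow from \eqref{intro thm A (a)} by specialization, so the plan is to concentrate on proving \eqref{intro: Z finite horizontality eq}. My strategy is to view both sides of \eqref{intro: Z finite horizontality eq} as BV pushforwards of the same Chern-Simons master action, computed with two different splittings of fields into ``zero-modes'' and ``fluctuations'': on the right, one expands around $A_0$ with all of $\alpha+\sfa$ treated as a zero-mode, integrating along the Lagrangian $\mr{im}(d^*_{A_0})$; on the left, one first absorbs $\alpha$ into a genuine flat connection $\phi(A_0,\alpha)$ and then expands there with zero-mode $B(\sfa)$, integrating along $\mr{im}(d^*_{\phi(A_0,\alpha)})$. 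The finite-dimensional BV Stokes identity \eqref{intro delta P_*} predicts that changes of this form are $\Delta_\sfa$-exact, and the task is to verify this prediction at the level of Feynman diagrams.

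Concretely, I would first use that, by construction of $\phi$, the connection $\phi(A_0,\alpha)$ admits a perturbative expansion $A_0+\alpha+\sum_T \tfrac{1}{|\mr{Aut}(T)|} T(\alpha,\ldots,\alpha)$, with the sum over binary trees $T$ whose leaves are decorated by $\alpha$, edges by the Hodge chain homotopy $K_{A_0}$, and vertices by the bracket; this makes $\phi(A_0,\alpha)$ flat in $(\Omega^\bullet(M,\g),d_{A_0},[\cdot,\cdot])$ order by order in $\alpha$. Similarly, $B$ is the corresponding tree-like dressing of the identity, landing in $H^1_{\phi(A_0,\alpha)}$. Inserting these tree expansions into every propagator, harmonic inclusion, and cubic vertex of each Feynman diagram computing $Z_{\phi(A_0,\alpha)}(B(\sfa))$ rewrites each such diagram as a sum of Feynman diagrams around $A_0$ whose leaves are decorated by $\alpha$ or $\sfa$. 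A graph-combinatorial identification then matches this sum with the diagrams of $Z_{A_0}(\alpha+\sfa)$ up to a discrepancy accounting for (i) the difference between the Lagrangians $\mr{im}(d^*_{\phi(A_0,\alpha)})$ and $\mr{im}(d^*_{A_0})$ and (ii) the shift in harmonic/coexact/exact decompositions. Applying the Cattaneo-Mnev infinite-dimensional analogue of \eqref{intro delta P_*} order by order in $\alpha$ then repackages this discrepancy as an $i\hbar\Delta_\sfa$-exact half-density, with explicit generator $R(A_0,\alpha;\sfa) = \int_0^1 R^{(1)}(A_0, t\alpha;\sfa)\,dt$ obtained by integrating the infinitesimal BV-Stokes generator along the path $t\mapsto t\alpha$.

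Part \eqref{intro thm A (b)} then follows by differentiating \eqref{intro: Z finite horizontality eq} at $\alpha=0$: the left hand side yields the action of the Grothendieck connection $\nabla^G$ on $Z$, while the right hand side becomes $i\hbar\Delta_\sfa R_1$ with $R_1 := \partial_\alpha R|_{\alpha=0}$. For part \eqref{intro thm A (c)}, I would set $\sfa=0$ in \eqref{intro: Z finite horizontality eq}, so the left hand side becomes $\det(B^\vee)\,Z_{\phi(A_0,\alpha)}(0)$; via Poincar\'e duality and the orientation on $\M'$, this is exactly the fiberwise evaluation of $\phi^* Z^{\mr{glob},\mr{naive}}$ at the tangent vector $\alpha$. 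The $k\geq 1$ correction terms in \eqref{Zglob formula intro} promote $Z^{\mr{glob},\mr{naive}}$ to $Z^{\mr{glob}}$ and, under the formal exponential pullback, reconstruct the missing Taylor coefficients of $Z_{A_0}(\alpha)$ in $\alpha$ via the pairing $\langle \partial/\partial\sfa^2,\partial/\partial [\delta A_0]\rangle$. The $1$-extended smoothness assumption enters precisely here: it guarantees that the $L_\infty$-automorphism of $H^\bullet_{A_0}$ induced by varying the homotopy transfer data along $t\mapsto \phi(A_0,t\alpha)$ is trivial, so that $\det(B^\vee)$ reduces to the naive pullback of densities and the remaining discrepancy collapses into $i\hbar\Delta_\sfa R_{\mr{glob-pert}}$.

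The main obstacle will be the diagrammatic bookkeeping in part \eqref{intro thm A (a)}. One must verify that the resummation of trees coming from $\phi$ and from $B$ on the left hand side matches, graph by graph and with correct signs, the Feynman diagrams of $Z_{A_0}(\alpha+\sfa)$ on the right, including the Jacobian $\det(B^\vee)$ and the signs coming from the shifted symplectic structure on $H^\bullet_{A_0}[1]$. One must also verify that no hidden boundary strata of the compactified configuration spaces spoil the infinite-dimensional BV Stokes identity used to package the discrepancy as $\Delta_\sfa$-exact — this is the same mechanism responsible for the framing anomaly (Remark \ref{rem: def Z pert framing}), and is the technical heart of the argument, requiring a configuration-space-integral analysis in the spirit of Axelrod-Singer.
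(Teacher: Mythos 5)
Your overall heuristic — both sides of \eqref{intro: Z finite horizontality eq} are BV pushforwards of the same master action with different zero-mode/fluctuation splittings, and the discrepancy should be $\Delta_\sfa$-exact by a BV--Stokes argument verified diagrammatically — is the right starting point, and your treatment of part \eqref{intro thm A (b)} as the derivative of part \eqref{intro thm A (a)} at $\alpha=0$ matches the paper. But there is a genuine gap in how you propose to produce the $\Delta_\sfa$-exact term. Replacing $A_0$ by $\phi(A_0,\alpha)$ changes \emph{two} things at once: the kinetic operator $d_{A_0}$ in the action, and the gauge-fixing Lagrangian $\mr{im}(d^*_{A_0})$. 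The finite-dimensional lemma \eqref{intro delta P_*} you invoke only covers variations of the gauge-fixing data $(i,p,\mc L)$ for a fixed $\Delta$-closed integrand; it does not apply to a variation of the quadratic form of the action itself, so ``applying the Cattaneo--Mnev analogue to the discrepancy'' is not a legitimate single step. The paper resolves this by desynchronizing: it introduces $Z_{A,A'}(\sfa)$ with kinetic operator $d_A$ and gauge-fixing $d^*_{A'}$, proves that the shift of the kinetic operator alone is absorbed \emph{exactly} (no BV-exact correction) into a shift of the zero-mode via a tree-collapse/homological-perturbation-lemma argument (Theorem \ref{thm 4.2}), and shows that only the subsequent change of gauge-fixing $A'\colon A_0\to\phi(A_0,\alpha)$ produces the $i\hbar\Delta_\sfa$-exact term, via configuration-space Stokes (Theorem \ref{thm: change gf}). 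Relatedly, your ``diagrammatic bookkeeping'' only addresses the $\geq 1$-loop Feynman graphs: the exact identity for the kinetic shift also requires the nontrivial analytic statement that $\det(B^\vee)\circ\tau_{\tilde A}^{1/2}$ equals $\tau_{A}^{1/2}$ times the exponential of the leafed one-loop (wheel) graphs (Proposition \ref{prop: tau Grothendieck horizontality}), which the authors had to prove from scratch using the curvature of $\nabla^{\mr{Harm}}$; without it the one-loop graphs with all leaves decorated by $\alpha$ have nowhere to go.

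Part \eqref{intro thm A (c)} is also not obtained by setting $\sfa=0$ in \eqref{intro: Z finite horizontality eq} and ``reconstructing Taylor coefficients.'' The paper's route is: extend $Z$ to a nonhomogeneous form $\til Z$ on $\M'$ satisfying the differential quantum master equation $(\nabla^G-i\hbar\Delta_\sfa)\til Z=0$, build SDR data contracting $(\Omega^\bullet(\M',\mr{Dens}^{\frac12,\mr{formal}}(H_A^\bullet[1])),\nabla^G)$ onto global half-densities, perturb it by $-i\hbar\Delta_\sfa$ via the homological perturbation lemma, and define $Z^{\mr{mod}}=\mathsf{i}\mathsf{p}'\til Z$; the identity \eqref{intro phi^* Z^glob = Z^pert+Delta(...)} is then the degree-zero component of the chain-homotopy relation $\til Z=\mathsf{i}\mathsf{p}'\til Z+(\nabla^G-i\hbar\Delta_\sfa)\mathsf{K}'\til Z$. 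This is where the $k\geq 1$ corrections in \eqref{Zglob formula intro} actually come from, and where $1$-extended smoothness is used (to reduce the exponential map and Grothendieck connection to the moduli space); your sketch asserts the conclusion without supplying this construction.
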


Equation (\ref{intro: Z finite horizontality eq}) expresses the fact (expected from the heuristic formula (\ref{intro: CS path integral})) that a shift $\alpha$ of the flat connection $A_0$ can be absorbed into a shift of the zero-mode $\sfa$, modulo a 
BV-exact term.
Formula (\ref{intro: horizontality eq}) expresses the same fact infinitesimally (to first order) in the shift $\alpha$.

Equation (\ref{intro phi^* Z^glob = Z^pert+Delta(...)}) 
is related to
the fact that $Z$ can be modified by a BV-exact term to a strictly global object (horizontal w.r.t. $\nabla^G$), see Theorem \ref{thm 5.15}.

The generator $R_1$ in (\ref{intro: horizontality eq}) is given explicitly as a sum over graphs with one marked edge or leaf,  cf. Corollary \ref{cor ulZ nabla^G horizontality}. 
Generators $R(A_0,\alpha;\sfa)$ in (\ref{intro thm A (a)}) and $R_\mr{glob-pert}$ in (\ref{intro thm A (c)}) are also given explicitly by (\ref{R for Z underline finite almost-Grothendieck-horizontality}), (\ref{rho from Zmod=ulZ-Delta(rho)}).

\begin{rem}
\label{rem: Poincare lemma}
Cohomology of $\Delta_\sfa$ acting on (formal or smooth) half-densities on the odd-symplectic graded vector space $H_{A_0}[1]=T^*[-1]H^1_{A_0}$ is concentrated in ghost number $-\dim H^1_{A_0}$ and has rank one there. This is a consequence of Poincar\'e lemma, since the odd Fourier transform gives a chain isomorphism $\mr{Dens}^{\frac12}(T^*[-1]V)_{-k},\Delta_\sfa \cong \Omega^{\dim V-k}(V),d$, for $V=H^1_{A_0}$ a vector space. Thus, $H_{\Delta_\sfa}^{-k}$ is the de Rham cohomology of a point in degree $\dim V-k$.

Thus, if $\dim H^1_{A_0}>0$ (i.e., $A_0$ is not an isolated point of $\M'$), the perturbative partition function $Z$ (which is automatically $\Delta_\sfa$-closed for degree reason) is in fact $\Delta_\sfa$-exact. From this standpoint, statements like (\ref{intro: horizontality eq}), saying that something holds for $Z$ up to \emph{some} BV-exact term $\Delta_\sfa R$ might look trivial, since $Z$ is itself BV-exact. What makes these statements nontrivial is that (i) we give a formula for $R$, (ii) the statement holds in a family over $\M'$, with a coherent choice of $R$.
More precisely, $Z$ possesses an extension to a nonhomogeneous form $\check{Z}$ on the space of background data, whose 0-from component is $Z$ and 1-form component is $R$, satisfying the ``differential quantum master equation,'' see Section \ref{ss intro:Zcheck} below.
%
\end{rem}

\subsection{``Desynchronized'' Chern-Simons partition function -- main result 2}
Parts (\ref{intro thm A (a)}), (\ref{intro thm A (b)}) of Theorem \ref{intro thm A} follow from an auxiliary statement on the ``desynchronized'' partition function which we explain below.

In the partition function (\ref{Zpert}), the flat connection $A=A_0$ played two different roles: it was the local system for the kinetic operator $d_{A_0}$ (cf. the quadratic term $B\wedge d_{A_0} B$ in (\ref{intro S_CS(S_0+B)})) and it was a parameter in the Lorenz gauge-fixing (\ref{intro L=im(d^*_A0)}). 

One can 
allow the parameter in the kinetic operator $d_{A}$ 
and in the gauge-fixing operator $d^*_{A'}$ to be two different (but sufficiently close)
flat connections. This leads to the ``desynchronized'' partition function $Z_{A,A'}(\sfa)$ which is given by the formula (\ref{Zpert}) with the following modification:
    weights of Feynman graphs are based on a ``desynchronized'' analog of Hodge decomposition of forms on $M$, based in turn on the operators $d_{A}$, $d^*_{A'}$.

The desynchronized partition function is still a formal half-density on $H^\bt_{A}[1]$. By construction, it satisfies the ``extension property'': the restriction of $Z_{A,A'}(\sfa)$ to the diagonal $A=A'$ coincides with $Z_{A}(\sfa)$, 
    \begin{equation}
        Z_{A,A}(\sfa) = Z_{A}(\sfa).
    \end{equation}


We denote $\FC$ the space of flat connections and $\FC'\subset \FC$ the subspace of smooth irreducible connections.

Theorem \ref{intro thm A} above (parts (\ref{intro thm A (a)}) and (\ref{intro thm A (b)})) is a consequence of the following collection of results on the desynchronized partition function. 
\begin{thm}\footnote{
This is Proposition \ref{prop: Z desync equivariance under diagonal gauge transf}, Theorem \ref{thm 4.2}, Corollary \ref{cor: horizontality wrt nabla_G}, Theorem \ref{thm: change gf}, Proposition \ref{prop: variation of Z wrt A'} put together.
}\label{thm: intro desy Z}
Let $A,A'$ be a pair of sufficiently close smooth irreducible flat connections. Then we have:
    \begin{enumerate}[a)]
    \item \emph{Gauge invariance:} We have that $Z_{A,A'}(\sfa)$ is invariant under ``diagonal'' gauge transformations $(A,A',\sfa) \mapsto ({}^\sfg A,{}^\sfg A',{}^\sfg\sfa)$. 
        \item \emph{Variation of kinetic operator:} The desynchronized partition function satisfies 
        \begin{equation}
            \det(B^\vee) \circ Z_{\phi(A,A',\alpha),A'}(B(\sfa)) = Z_{A,A'}(\alpha + \sfa) \label{eq: thm desy Z intro 1}
        \end{equation}
        with notations as in Theorem \ref{intro thm A} above;  $\phi(A,A',-)\colon H_A^1\ra \FC'$ is the desynchronized variant of the sum-over-trees map.
        \item \emph{Infinitesimal variation of kinetic operator:} The 
        map $\phi$
        induces a partial connection $\widetilde{\nabla}_G$ in the direction of harmonic shifts of $A$ on the 
         bundle of formal half-densities on $H_{A}[1]$
        such that 
        \begin{equation}\label{intro til nabla^G horizontality}
            \widetilde{\nabla}_GZ_{A,A'} = 0. 
        \end{equation}
        \item \emph{Variation of gauge fixing operator:}  We have that, for $A_1'$ sufficiently close to $A'_0$, 
        \begin{equation}\label{intro desync variation of Z wrt A'}
            Z_{A,A_1'}(\sfa) = 
            Z_{A,A'_0}(\sfa) - i\hbar\Delta_\sfa R(A,A'_0,A_1',\sfa).
        \end{equation}
        For an infinitesimal variation of $A'\ra A'+\delta A'$, one has
        \begin{equation}\label{intro desync infinitesimal variation of Z wrt A'}
            \delta_{A'}Z_{A,A'}(\sfa)=-i\hbar\Delta_\sfa R_{\delta A'}(A,A',\sfa),
        \end{equation}
        with $R_{\delta A'}$ given by a sum over graphs with one marked edge or one marked leaf (cf. Proposition \ref{prop: variation of Z wrt A'}).
    \end{enumerate}
\end{thm}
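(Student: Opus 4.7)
The overall strategy is to interpret each statement via the BV pushforward picture described in Section 1.1.3, with $A$ controlling the effective BV action $f(B)=S_{CS}(A+B)-S_{CS}(A)$ on the fluctuations, and $A'$ controlling only the choice of gauge-fixing Lagrangian $\mc{L}=\mr{im}(d^*_{A'})$ and the induced Hodge-type splitting $\Omega^\bullet(M,\g)=\mr{Harm}_{A,A'}\oplus \mr{im}(d_A)\oplus \mr{im}(d^*_{A'})$. In this picture, (a) is naturality of the whole construction under simultaneous gauge action, (d) is the standard BV statement of independence on the gauge-fixing Lagrangian $\mc{L}$, while (b) and (c) are statements about redistribution of a single flat connection between the ``background'' $A$ and the ``zero-mode'' $\sfa$. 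The results for $Z_{A_0}$ in Theorem \ref{intro thm A} then follow by restricting to the diagonal $A=A'=A_0$.

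For (a), I would check term by term that every ingredient of the Feynman rules — the desynchronized propagator $\eta_{A,A'}$ built from the chain homotopy $K_{A,A'}=d^*_{A'}(d_A d^*_{A'}+d^*_{A'}d_A)^{-1}$, the harmonic inclusion $i_{A,A'}$, the Lie structure $f\in\g^{\otimes 3}$ and the invariant pairing on $\g$, plus the Ray--Singer torsion and eta-invariant prefactors — transforms equivariantly under the conjugation action of a gauge transformation $\sfg$. The conjugations on the $\g$-factors along each half-edge cancel pairwise at vertices and against $\sfg$ acting on $\sfa$, leaving $Z_{A,A'}(\sfa)=Z_{{}^\sfg A,{}^\sfg A'}({}^\sfg\sfa)$. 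For (d), I would compute $\delta_{A'}Z_{A,A'}$ infinitesimally: variation of $K_{A,A'}$ in $A'$ can be written in the form of a ``BV homotopy'' for the propagator, and variation of $i_{A,A'}$ lands in the image of $d_A$; substituting these variations into one edge or one leaf of each graph and integrating by parts in configuration space produces precisely the BV Laplacian $\Delta_\sfa$ acting on a sum over graphs with one marked edge or leaf. Finite variation follows by integrating along a path $A'_t$ between $A'_0$ and $A'_1$ and taking the path-ordered exponential of the exact form.

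For (b), the key observation is that the sum-over-trees map $\phi(A,A',\alpha)$ is engineered so that, writing $\alpha':=\phi(A,A',\alpha)-A$, the shifted action satisfies
\begin{equation*}
S_{CS}(A+\alpha'+B)=S_{CS}(A)+\int_M\mr{tr}\Bigl(\tfrac12(\alpha'+B)d_A(\alpha'+B)+\tfrac16[\alpha'+B,\alpha'+B]\wedge(\alpha'+B)\Bigr),
\end{equation*}
and $\alpha'$ is the unique solution in $\mr{im}(K_{A,A'})\oplus H^1_A$ of the Maurer--Cartan equation with harmonic part $\alpha$. The binary trees defining $\phi$ are precisely those that, when combined with the Feynman expansion at the new background, reproduce the expansion at the old background but with the zero-mode shifted by $\alpha$. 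The determinant $\det(B^\vee)$ is the Jacobian of the induced linear map $B\colon H^1_A\to H^1_{\phi(A,A',\alpha)}$ on cohomology, needed for the transformation of formal half-densities. The main technical obstacle is this combinatorial matching: one must show that the sum of all ways to graft background trees from $\phi$ onto a Feynman graph at $\phi(A,A',\alpha)$ coincides, after orientation and symmetry factors, with the corresponding Feynman graph at $A$; this is a version of the homotopy-transfer identity for $L_\infty$-morphisms and uses smoothness of $[A]$ (vanishing of higher Massey brackets on $H^\bullet_A$) in an essential way to ensure that $\phi$ lands in $\FC'$.

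Finally, (c) follows from (b) by differentiation: setting $\alpha$ to be infinitesimal, the linear-in-$\alpha$ part of \eqref{eq: thm desy Z intro 1} reorganizes into the action of a connection $\widetilde{\nabla}_G$ whose Christoffel symbols encode both the first-order term of $\phi$ (giving a horizontal shift of the background) and the first-order term of $B$ (giving a linear action on the zero-mode), establishing $\widetilde{\nabla}_G Z_{A,A'}=0$. Flatness of $\widetilde{\nabla}_G$ is automatic from the fact that \eqref{eq: thm desy Z intro 1} is a strict (not up to exact terms) equality in the desynchronized setting, so the proof reduces to the content of (b).
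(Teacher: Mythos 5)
Your parts (a), (c) and (d) follow essentially the paper's own route: (a) is proved exactly by the term-by-term equivariance check you describe; (d) is proved by inserting the first-order variations of $K_{A,A'}$ and $i_{A,A'}$ (which are of the form $[d_A,\Lambda]+P\mathbb{P}+\mathbb{I}P$ and $-d_A\mathbb{I}i$) into one edge or leaf and using Stokes' theorem on configuration spaces, then integrating along a path $A'_t$; and (c) is obtained from (b) by differentiating at $\alpha=0$, as you say. One small imprecision in (d): the generator is not just the sum over marked graphs but $r(\sfa)\cdot Z_{A,A'}(\sfa)$ — the Stokes argument produces both a $\Delta_\sfa$-term and a Poisson-bracket term $\{\sum\overline\Phi_{\Gamma^\#},\sum\overline\Phi_{\Gamma'}\}$, and only their combination assembles into $\Delta_\sfa(rZ)$.

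The genuine gap is in (b). You present \eqref{eq: thm desy Z intro 1} as a pure combinatorial matching of Feynman graphs via homotopy transfer (grafting the trees of $\phi$ into edges and leaves, which deforms $K_{A,A'}\to K_{\til A,A'}$ and $i_{A,A'}(\sfa)\to i_{\til A,A'}(B\sfa)$). That matching is indeed the core of the paper's proof of Theorem \ref{thm 4.2}, but it does \emph{not} close on its own: the one-loop graphs $\gamma$ whose leaves are all decorated by $\alpha$ (no $\sfa$) have no counterpart on the left-hand side, because collapsing all their subtrees produces a ``loop with no vertices,'' which is not a legitimate Feynman graph. These leftover wheel graphs must be absorbed by the non-diagrammatic prefactor, i.e.\ one needs the identity
\begin{equation*}
  \det(B^\vee)\circ \tau_{\til A}^{1/2}=\tau_{A}^{1/2}\exp\sum_{\gamma}\frac{1}{|\mr{Aut}(\gamma)|}\Phi_{\gamma,A,A'}(\alpha),
\end{equation*}
which is Proposition \ref{prop: tau Grothendieck horizontality} of the paper — a nontrivial analytic statement about the dependence of the Ray–Singer torsion on the local system, proved in Appendix \ref{Appendix: proof of Prop 4.4} via zeta-regularized supertraces, the curvature of $\nabla^{\mr{Harm}}$, and a comparison of the zeta and point-splitting regularizations (the authors note they could not locate it in the literature). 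Without this ingredient the strict equality in (b) — and hence your derivation of (c) from it — is not established. Your appeal to smoothness of $[A]$ is also slightly misplaced: smoothness is what makes $\phi(A,A',\alpha)$ land in $\FC'$ and kills the tree-level contributions, but it does not address the one-loop mismatch above.
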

\begin{figure}
    \centering
    \includegraphics[scale=0.7]{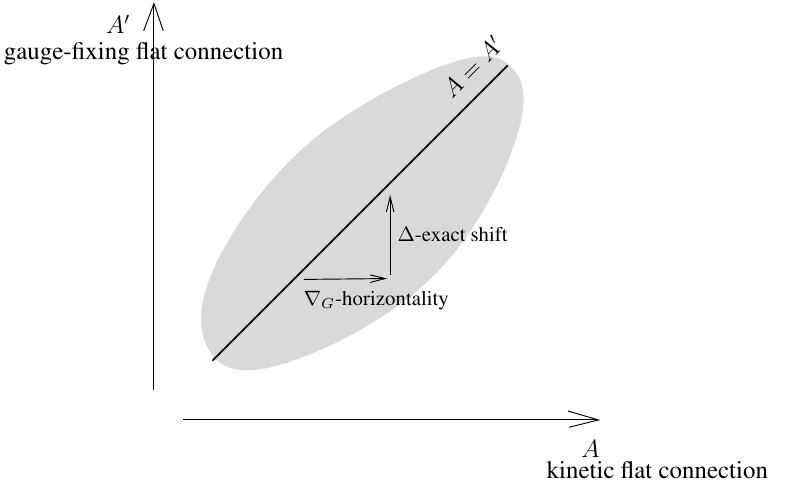}
    \caption{Desynchronized partition function $Z$ is a section of the bundle of formal half-densities on $H_{A}[1]$ over a neighborhood of the diagonal in $\FC'\times \FC'$. Under harmonic shifts of $A$ the section is $\til{\nabla}^G$-horizontal, shifts of $A'$ change $Z$ by a BV-exact term.}
    \label{fig:enter-label}
\end{figure}

Parts (\ref{intro thm A (a)}), (\ref{intro thm A (b)}) of Theorem \ref{intro thm A} follow by setting $A = A'$. The original motivation to consider the ``desynchronized'' partition function was precisely that the shift in the zero mode as on the right hand side of \eqref{eq: thm desy Z intro 1} produces a variation in the ``kinetic operator'' $A$ while keeping the gauge fixing operator fixed.  


\subsection{Metric (in)dependence of the global partition function -- main result 3}\label{sec:intro metric}
For the perturbative partition function, one has the following result \cite{Cattaneo2008,mnev2019quantum,Wernli2022}:
\begin{itemize}
    \item The perturbative partition function $Z(A_0,\sfa)$ is closed with respect to the canonical BV Laplacian on formal half-densities on $H^\bullet_{A_0}[1]$. 
    \item There is a universal power series $c(\hbar) = \frac{\dim G}{24}\hbar + c'(\hbar), c'(\hbar)  \in \hbar^2\mathbb{R}[[\hbar^2]]$ such that for every framing $f\colon TM \xrightarrow{\cong} M \times \R^3$ the ``renormalized'' partition function
\begin{equation}
    Z^\mr{ren}_{A_0}(\sfa) := e^{\frac{i}{\hbar}c(\hbar)\frac{S_\mr{grav}(g,f)}{2\pi}}Z_{A_0}(\sfa) \label{eq: renormalization}
\end{equation}
is independent of the metric $g$, up to BV-exact terms:\footnote{ 
See Theorem \ref{thm: metric dependence}.
}
\begin{equation}\label{intro delta_g Z}
    \delta_g Z^\mr{ren}_{A_0}(\sfa)=-i\hbar\Delta_\sfa R_{\delta g}.
\end{equation}
Here $S_\mr{grav}$ is the evaluation of the Chern-Simons action on the Levi-Civita connection of $g$ via the framing.\footnote{For a 2-framing $\alpha$ one defines $S_\mr{grav}({g,\alpha}) =\frac12 S_{CS}(A_g \oplus A_g)$ by evaluating $\frac12$ the Chern-Simons action on the direct sum of the Levi-Civita connection with itself. }
\end{itemize}
For these statements one does not need to assume that $A_0$ is smooth or irreducible.
In particular, the BV cohomology class of $Z^\mr{ren}_{A_0}(\sfa)$ is independent of the metric $g$.\footnote{For $A_0$ an irreducible flat connection, this statement is trivial by Poincar\'e lemma, cf. Remark \ref{rem: Poincare lemma}.   Otherwise, for $[A_0]\in \M\setminus \M'$, this statement is not a triviality.
}

In this paper, we investigate the metric dependence of the ``global partition function'' $Z^\mr{glob}$.  
Since it does not depend on the fiber coordinates, the global partition function $Z^\mr{glob} \in \mr{Dens}^{\frac12}_\mr{base}(T^*[-1]\M')$ is trivially BV-closed on $T^*[-1]\M'$, 
\begin{equation}
    \Delta_{\M'}Z^\mr{glob} = 0.
\end{equation}
Our main result in this direction is that the  BV cohomology class of the renormalized global partition function is independent of the metric used to define the gauge-fixing. 
\begin{thm}\footnote{
This is Theorem \ref{thm: coho class invariant}/Proposition \ref{prop: properties of Zcheckglob} (\ref{delta_g Zglob = Delta(...)}).
}\label{thm:metric dependence intro}
Suppose $g_t, t\in (-\varepsilon,\varepsilon)$, is a smooth family of Riemannian metrics on $M$, and denote by $Z_t^\mr{glob,ren}$ the global partition function defined using the metric $g_t$, renormalized as in \eqref{eq: renormalization}. Then we have 
\begin{equation}
    \frac{d}{dt}\bigg|_{t=0} Z_t^\mr{glob,ren} = -i\hbar\Delta_{\M'}(R^\mr{glob}),
\end{equation}
where $R^\mr{glob}$ is a degree -1 half-density given explicitly by the 1-form component of (\ref{Zcheckglob formula})  along the space of metrics (evaluated on the tangent vector $\dot{g}$), see also (\ref{Zglobren (1) Feynman diagram expansion}).
\end{thm}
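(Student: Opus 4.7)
The plan is to promote the known $\Delta_\sfa$-exactness of the metric variation of the perturbative partition function (\ref{intro delta_g Z}) to a corresponding $\Delta_{\M'}$-exactness statement after globalization, leveraging the fact that the globalization procedure (\ref{Zglob formula intro}) intertwines the fiber BV Laplacian $\Delta_\sfa$ with the base BV Laplacian $\Delta_{\M'}$ on $T^*[-1]\M'$.

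First, I would extend the perturbative partition function to a nonhomogeneous form $\check{Z}^\mr{ren}$ on $\M'\times\mr{Met}(M)$ valued in formal half-densities on $H^\bullet_{A_0}[1]$, by promoting the propagators and zero-mode inclusions in the Feynman diagram expansion to differential forms on $\M'\times\mr{Met}(M)$, in the spirit of the extended constructions developed in Section \ref{ss intro:Zcheck}. The key identity will be that $\check{Z}^\mr{ren}$ satisfies a differential quantum master equation (dQME) of the form $(\dd - i\hbar\Delta_\sfa)\check{Z}^\mr{ren}=0$, where $\dd$ is the total de Rham differential on $\M'\times\mr{Met}(M)$. Projection onto form bidegrees recovers familiar statements: the $(0,0)$-form part reproduces $\Delta_\sfa Z^\mr{ren}=0$, and the bidegree $(1,0)$ component along $\mr{Met}(M)$ reproduces (\ref{intro delta_g Z}).

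Next, I would apply the globalization formula (\ref{Zglob formula intro}) to $\check{Z}^\mr{ren}$ to produce $\check{Z}^\mr{glob,ren}$, a nonhomogeneous form on $\M'\times\mr{Met}(M)$ valued in base half-densities on $T^*[-1]\M'$. Using Poincar\'e duality $H^2_{A_0}\cong(H^1_{A_0})^*$, the contraction $\langle\partial/\partial\sfa^2,\partial/\partial[\delta A_0]\rangle$ intertwines the relevant piece of $\Delta_\sfa$ with $\Delta_{\M'}$, so the dQME descends to $(\dd-i\hbar\Delta_{\M'})\check{Z}^\mr{glob,ren}=0$. Decomposing $\check{Z}^\mr{glob,ren}=\sum_{j,k}\check{Z}^\mr{glob,ren}_{(j,k)}$ by bidegree in $(\mr{Met}(M),\M')$-directions, with $\check{Z}^\mr{glob,ren}_{(0,0)}=Z^\mr{glob,ren}$, the $(1,0)$-component of the dQME yields
\begin{equation*}
\dd_{\mr{Met}(M)}Z^\mr{glob,ren} = i\hbar\Delta_{\M'}\check{Z}^\mr{glob,ren}_{(1,0)}.
\end{equation*}
Pairing with the tangent vector $\dot g$ gives the claimed identity with $R^\mr{glob}=\iota_{\dot g}\check{Z}^\mr{glob,ren}_{(1,0)}$.

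The hard part will be verifying the dQME for $\check{Z}^\mr{ren}$ at the level of Feynman weights on the extended parameter space. Concretely, one needs to show that the de Rham differential along $\mr{Met}(M)$ and $\M'$ of the Feynman integrals produces only boundary contributions from the compactified configuration spaces, and that these boundary contributions assemble precisely into the $\Delta_\sfa$-exact terms once the framing renormalization of (\ref{eq: renormalization}) is included to cancel the residual hidden-face anomalies analyzed by Axelrod-Singer (cf. Remark \ref{rem: def Z pert framing}). Extending this anomaly cancellation coherently to the higher form degrees along $\mr{Met}(M)\times\M'$ --- so that no new anomaly appears at bidegree $(1,0)$ --- is the technical heart of the argument.
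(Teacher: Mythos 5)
Your route is the same as the paper's: extend $Z^{\mathrm{ren}}$ to a nonhomogeneous form over the moduli space and the space of metrics satisfying a differential quantum master equation, push it through the globalization map so that the fiber Laplacian $\Delta_\sfa$ becomes the base Laplacian $\Delta_{\M'}$, and read off the metric-degree-one component. Two points, however, are glossed over in a way that would make the argument fail as written.

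First, your dQME $(\dd-i\hbar\Delta_\sfa)\check{Z}^{\mathrm{ren}}=0$ is not the equation that actually holds. The object is a section of the bundle of formal half-densities on $H^\bullet_{A_0}[1]$, so $\dd$ along $\M'$ must be the Grothendieck connection $\nabla^G$, and the correct superconnection (cf. (\ref{intro dQME})) carries an extra curvature term $-\frac{i}{\hbar}\frac12\langle\sfa,F\sfa\rangle$ coming from the non-flatness of the connection on the cohomology bundle; its $\delta A\,\delta g$ block (\ref{F_nabla^H}) is nonzero in general. The paper kills this term by restricting to the diagonal with harmonic shifts \emph{under the metric-extended smoothness assumption} (Definition \ref{def: metric-extended smoothness}), an assumption your proposal never invokes. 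Second, the descent of the dQME through globalization needs the globalization data (the formal exponential map, $\nabla^G$, and the contraction $(\mathsf{i}',\mathsf{p}',\mathsf{K}')$ of (\ref{(i',p',K')})) to be independent of $g$ — otherwise $\delta_g$ does not commute with the projection and extra terms appear in $\delta_g Z^{\mathrm{glob,ren}}$. This again is exactly what metric-extended smoothness guarantees. Relatedly, the paper justifies the intertwining of $\Delta_\sfa$ with $\Delta_{\M'}$ not by a direct Poincar\'e-duality argument but by the homological perturbation lemma: $\mathsf{p}'$ is a chain map from $(\Omega^\bullet(\M',\mc{D}),\nabla^G-i\hbar\Delta_\sfa)$ to $(\mathrm{Dens}^{1/2}(T^*[-1]\M'),-i\hbar\Delta)$, and the contraction formula (\ref{Zcheckglob formula}) is a consequence, not the definition. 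With these repairs your argument coincides with the paper's proof of Proposition \ref{prop: properties of Zcheckglob}.
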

Put differently, if we think of the global partition function as a top form on $\M'$, then under a change of metric, it changes by an exact form 
\begin{equation}
    \frac{d}{dt}\bigg|_{t=0} Z_t^\mr{glob,ren} = -i\hbar\, d_{\M'}R^\mr{glob}.
\end{equation}
Here we interpret the degree -1 half-density $R^{\mr{glob}}$ as a differential form of degree $\mr{top} -1$.  

\subsection{Extension of 
the partition function to a nonhomogeneous form in $A,A',g$ -- main result 4
}
\label{ss intro:Zcheck}

In Section \ref{ss: full extension of Z} we consider an open set $\UU=\{(A,A',g)\;|\;A'\mr{\;and\;}A\;\mr{close}\}$ in $\FC'\times\FC'\times \mr{Met}$ and construct: 
\begin{enumerate}
\item A connection\footnote{Connection $\nabla^\mathbb{H}$ arises as the projection -- using the desynchronized Hodge decomposition -- of the trivial connection in the trivial bundle over $\UU$ with fiber $\Omega^\bt(M,\g)$ onto the subbundle of harmonic forms, cf. Remark \ref{rem: nabla^Harm as shift-and-project connection}.} 
$\nabla^{\mathbb{H}}$ on the ``cohomology bundle'' $\mathbb{H}$ over $\UU$ with fiber $H_A$ and the induced connection
$\nabla^\mc{D}$ on the ``half-density bundle'' $\mc{D}$ over $\UU$ with fiber $\mr{Dens}^{\frac12,\mr{formal}}(H_A[1])$. 
\item An ``extended'' partition function $\check{Z}\in \Omega^{\bt,\bt,\bt}(\UU,\mc{D})$ (see (\ref{Z check perturbative})) -- a nonhomogeneous form on $\UU$ valued in $\mc{D}$ -- defined similarly to (\ref{Zpert}), where the weights of Feynman graphs are extended appropriately to differential forms on $\UU$; we also include the framing-dependent renormalization factor as in (\ref{eq: renormalization}).
\end{enumerate}

\begin{thm}\footnote{See Theorem \ref{thm dQME on Zcheck}.}
The extended partition function satisfies the following ``differential quantum master equation'' (dQME):
\begin{equation}\label{intro dQME}
    (\nabla^\mc{D}-i\hbar\Delta_\sfa-\frac{i}{\hbar}\frac12 \langle \sfa, F \sfa \rangle) \check{Z}
    =0.
\end{equation}
\end{thm}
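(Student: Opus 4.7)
The plan is to deduce the dQME from the variational results already established in Theorem \ref{thm: intro desy Z} (desynchronized horizontality) and Theorem \ref{thm:metric dependence intro} (metric-dependence) together with a systematic extension of the Feynman-diagram expansion to forms on $\UU$. The underlying philosophy is that $\check{Z}$ is a fiber BV integral over a family parametrized by $\UU$, and the dQME is the BV version of the fact that a fiber integral of a closed form is closed, modulo corrections from the non-flatness of the connection $\nabla^{\mathbb{H}}$.

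First, I would expand $\check{Z}$ as a sum over Feynman graphs whose weights lie in $\Omega^\bt(\UU)\otimes \mc{D}$, built from the natural extensions to $\UU\times\overline{\mr{Conf}}(M)$ of the propagator $\eta_{A,A'}$ and of the zero-mode inclusion $i_A$, analogous to the operators $\wh{K},\wh{i},\wh{p},\wh\Theta$ appearing in (\ref{ihat, phat, Khat, Phihat}). A key preliminary step is to write $\nabla^{\mc{D}} = d_{\UU} + (\text{connection piece on leaves})$ and to compute the covariant derivatives of these extended Feynman building blocks; in particular, the extended propagator should satisfy an identity of the form $(d_{\UU}+d_A)\wh{\eta} = \wh{i}\wedge\wh{p} - \delta_{\mr{diag}} + (\text{curvature terms})$, which is the essential input.

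Next, I would split the dQME according to form degree on $\UU = \FC'\times \FC'\times \mr{Met}$ and match each component. The $0$-form piece on $\UU$ is the pointwise quantum master equation $\Delta_\sfa Z_{A,A',g}(\sfa)=0$, which holds by the standard BV argument at a fixed triple. The three families of $1$-form pieces recover, respectively, the horizontality $\widetilde{\nabla}_G Z_{A,A'}=0$ of Theorem \ref{thm: intro desy Z}(c), the $\Delta_\sfa$-exactness of $\delta_{A'}Z$ in Theorem \ref{thm: intro desy Z}(d), and the $\Delta_\sfa$-exactness of $\delta_g Z^\mr{ren}$ from Theorem \ref{thm:metric dependence intro}. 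Higher-form components are then obtained by an inductive Stokes argument on the configuration-space integrals, since all defining identities for the extended propagator are themselves $d_{\UU}$-closed.

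The crucial — and delicate — point is the appearance of the curvature term $\frac{i}{\hbar}\cdot\frac12\langle\sfa,F\sfa\rangle$. It comes from the single Feynman graph consisting of one edge with two leaves (both decorated by $\sfa$): the differential $\nabla^{\mc{D}}$ hits the two $\wh{i}$-insertions, and the square of the connection piece produces exactly the curvature $F=(\nabla^{\mathbb{H}})^2$ of the harmonic-projection connection on $\mathbb{H}$, paired via the Poincar\'e pairing on $H_A^\bt$. The main obstacle is bookkeeping: one must verify that the hidden-face contributions from $\partial\overline{\mr{Conf}}(M)$, the connection corrections on leaves, and the differential of the extended propagator on $\UU\times\overline{\mr{Conf}}_2(M)$ assemble, with the right combinatorial signs and multiplicities, into precisely the three operators $\nabla^{\mc{D}}$, $-i\hbar\Delta_\sfa$, and $-\frac{i}{\hbar}\cdot\frac12\langle\sfa,F\sfa\rangle$, with no residual terms. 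The framing renormalization factor $e^{\frac{i}{\hbar}c(\hbar)S_\mr{grav}/2\pi}$ absorbs exactly the anomalous hidden-face contributions along the $\mr{Met}$-direction, as in the proof of Theorem \ref{thm:metric dependence intro}; this is what makes the metric-components of the dQME close up on the nose rather than only up to an anomaly.
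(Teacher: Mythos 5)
Your overall strategy --- expand $\check{Z}$ in Feynman graphs whose edges and leaves carry extended, $\Omega^\bt(\UU)$-valued decorations, and verify the dQME by computing how the total differential acts on those decorations --- is indeed the route the paper takes in its diagrammatic proof. However, there are two genuine gaps in your write-up.

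First, your account of where the curvature term $-\frac{i}{\hbar}\frac12\langle\sfa,F\sfa\rangle$ comes from is not correct. You attribute it to ``the square of the connection piece'' acting on the single-edge, two-leaf graph. But the dQME is a \emph{first-order} identity in $\nabla^\mc{D}$; the square of the connection only shows up in $(\nabla^\mc{D})^2$, i.e.\ in checking flatness of the superconnection (which the paper does separately, in the remark following the theorem), not in $\nabla^\mc{D}\check{Z}$ itself. In the paper's argument the curvature arises from the interplay between the gauge-fixing variation of the propagator and the univalent $\delta A$-vertices (equivalently, the Grothendieck term $\Psi^G=\int_M\langle\delta A,B\rangle$): the identity $\delta_{\mr{GF}}\check{K}=-[d,\check{K}]+\mr{id}-\check{i}\,\check{p}$ produces an $\mr{id}$-term on the edge joining a univalent $\delta A$-vertex to a trivalent vertex with two $\sfa$-leaves, and that surviving term evaluates precisely to $\frac12\langle\sfa,F_{\nabla^\HH}\sfa\rangle+\nabla^\HH_A\check\Theta(\sfa)$. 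Relatedly, your proposed propagator identity contains a spurious ``curvature term'': the actual identity $(\delta_{A'}+[d_A,-])\wh{K}=1-\wh{i}\,\wh{p}$ has none; the curvature only appears once $\delta A$ enters.

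Second, your treatment of the components of form degree $\geq 2$ on $\UU$ is not a proof. Reducing the degree-$0$ and degree-$1$ pieces to the previously established variational statements is fine (the paper itself advertises the dQME as the omnibus version of those), but the claim that higher components follow ``by an inductive Stokes argument since all defining identities are $d_\UU$-closed'' does not supply the needed content. The $2$-form components require the full extended SDR relations --- in particular $(\delta_{\mr{GF}}+d_A)\check{i}=\check{i}\,\check\Theta$ and the Maurer--Cartan equation $\delta\check\Theta+\check\Theta^2=0$ --- and the mixed $(\delta A,\delta A')$ and $(\delta A,\delta g)$ components are exactly where the nonvanishing curvature blocks of $\nabla^\HH$ live. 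These identities are established in the paper via the homological perturbation lemma (Appendix C) and are not consequences of the low-degree statements; without them your induction has no base to run on.
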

Here $F$ is the curvature of $\nabla^\mathbb{H}$. The expression in brackets acting on $\check{Z}$ above is a \emph{flat} superconnection concentrated in de Rham degrees 0,1,2 along $\UU$. By abuse of terminology, we call it the \emph{Gauss-Manin superconnection}.\footnote{In \cite{Cattaneo2019}, \cite{Cattaneo2020} a similar operator was called ``Grothendieck operator''. Here, we use the term Gauss-Manin connection since it is a connection on the (half-density bundle of the) cohomology bundle $\mathbb{H}$.}

Low-degree components of $\check{Z}$ and of the dQME (\ref{intro dQME}) yield various objects and infinitesimal variation statements we have encountered in the earlier subsections:
\begin{itemize}
    \item Degree $(0,0,0)$ component of $\check{Z}$ is the desynchronized partition function $Z_{A,A'}(\sfa)$.\footnote{We remark that the degree $(0,0,0)$ component of the dQME is the ordinary QME $\Delta_\sfa Z_{A,A'}(\sfa)=0$ (which is trivial for degree reasons at irreducible connections $A$).}
    \item Degree $(0,1,0)$ component of $\check{Z}$ is the generator $R_{\delta A'}$ appearing in (\ref{intro desync infinitesimal variation of Z wrt A'}). Degree $(0,1,0)$ component of the dQME is the equation (\ref{intro desync infinitesimal variation of Z wrt A'}).
    \item Degree $(0,0,1)$ component of $\check{Z}$, evaluated at $A'=A$, is the generator $R_{\delta g}$ in (\ref{intro delta_g Z}); the corresponding component of the dQME is the equation (\ref{intro delta_g Z}).
    \item $(1,0,0)$ component of the dQME, contracted with a tangent vector to $\UU$ representing a harmonic shift of $A$, is equivalent to horizontality w.r.t. partial Grothendieck connection $\til{\nabla}^G$ (\ref{intro til nabla^G horizontality}).
    \item Restricting dQME to the diagonal $A'=A$ and fixing $g$ and then taking the degree $1$ component in $A$ yields (\ref{intro: horizontality eq}).
\end{itemize}

Thus, the dQME (\ref{intro dQME}) is an ``omnibus equation'' implying as low-degree specializations all the infinitesimal variation statements from before.

The restriction of $\check{Z}$ to the diagonal $A=A'$ is the key ingredient in the construction of the global partition function, see Section \ref{ss Zglob}.


\subsection{Motivating example: $\Sigma \times S^1$}\label{sec:intro example}
As a motivating example where the (leading) asymptotics of the Reshetikhin-Turaev invariants agree with the integral of $Z^\mr{glob}$ over the moduli space of flat connections, consider the case $M = \Sigma \times S^1$, with $\Sigma$ a Riemann surface of genus $\gamma \geq 2$. Then the non-perturbative Chern-Simons partition function -- the RT invariant -- at level $k$ for $G = SU(2)$ is given by the Riemann-Roch-Hirzebruch formula as 
\begin{multline} Z^{(k)} = \dim H^0(\M(\Sigma)),\mathcal{L}^{\otimes k}) = 
\int_{\M'(\Sigma)}e^{k\omega_{AB}}\mathrm{Td}  (T\M) \\ 
= k^N \int_{\M'(\Sigma)} \frac{w_{AB}^N}{N!}+O(k^{N-1}) \label{eq: intro ex 1} \end{multline}
where $\omega_{AB}$ denotes the Atiyah-Bott symplectic form on $\M(\Sigma)$, $\M'(\Sigma)$ denotes the subset corresponding to irreducible flat connections,\footnote{Recall that irreducible flat connections satisfy $H^0_{A_0} =0$, in particular in dimension 2 this implies $H^2_{A_0} =0$ and smoothness. For $\gamma \geq 2$, $\M' \subset \M$ is an open dense subset.} $\mr{Td}$ the Todd class and $N = \frac12\dim \M(\Sigma)$. By work of Witten \cite{Witten1991}, the symplectic volume is related to the torsion as \begin{equation}
     k^N \int_{\M'(\Sigma)} \frac{w_{AB}^N}{N!} = \frac{k^N}{(2\pi)^{2N}}\int_{\M'(\Sigma)}\tau_\Sigma.
     \label{eq: intro ex 3}
\end{equation}
We want to compare this with the integral 
$$ Z^\mr{num} =  \int_{\M'(\Sigma \times S^1)} Z^\mr{glob}$$
-- the number-valued partition function. 
Let $A_0$ be an irreducible flat connection on $\Sigma \times S^1$. Then $A_0$ is gauge equivalent to a connection of the form 
\begin{equation}
   A_0 = \pi^*\phi\, dt + \pi^*\alpha\label{eq: special conn}
\end{equation}
where $\alpha$ is a flat connection on $\Sigma$ and $\phi \in \Omega^0(\Sigma,\g)$ is $d_{A_0}$-closed. In particular, all irreducible connections are smooth and there is a bijection (in fact a diffeomorphism)
\begin{equation}
    \M'(\Sigma \times S^1) \to \bigsqcup_{g \in Z(G)} \M'(\Sigma)\label{eq: diffeo}
\end{equation}
which sends the class of $\pi^*\phi\, dt + \pi^*\alpha$ to the pair $([\alpha],g)$ where $g \in G$ is the holonomy of $A_0$ along the circle direction, if $\alpha$ is irreducible, then $g$ is necessarily central. 

For flat connections of the form \eqref{eq: special conn}, we have that $S_{CS}(A_0)=0$.\footnote{For connections of this form we have $S_{CS}(A_0) = \int_\Sigma \langle \phi, F_\alpha\rangle$ which of course vanishes for flat connections $\alpha$. }
This also implies that $\psi(A_0,g) = 0$.\footnote{The Atiyah-Patodi-Singer theorem implies $\frac{\pi i}{4}\psi(A_0,g) = \frac{\pi i}{4}\dim G\, \psi_0(g) - \frac{c_2(G)}{2\pi i}S_{CS}(A_0)$. The second term vanishes by the previous argument, the first term ---because the eta invariant of a product manifold satisfies $\psi_0(g_{M\times N}) = \psi_0(g_M)\tau(g_N) + \psi_0(g_N)\tau(g_M)$ where $\tau$ denotes the signature, however we have $\psi_0(g_M) = 0$ unless $\dim M = 4k-1$ and $\tau(g_N) = 0$ unless $\dim N = 4k$.   } 
Under the identification \eqref{eq: diffeo} we have $\tau_{\Sigma \times S^1} = \tau_\Sigma^2$ and therefore we get
\begin{multline} \int_{\M'(\Sigma \times S^1)}Z^\mr{glob} = \int_{\M'(\Sigma\times S^1)}\tau_{\Sigma \times S^1}^\frac12(1 + O(\hbar)) \\
= 
 |Z(G)|\int_{\M'(\Sigma)}\tau_\Sigma (1 + O(\hbar)).
 \label{eq: intro ex 2}
\end{multline}
Equations \eqref{eq: intro ex 1} and \eqref{eq: intro ex 2} agree if we identify $k = \frac{2\pi}{\hbar}$ and divide $Z^\mr{glob}$ by  $|Z(G)|(2\pi\hbar)^{N}$. In particular, here the framing correction vanishes in the canonical 2-framing. This is precisely the factor we were alluding to in Remark \ref{rem: def Z pert normalization} and agrees with the proposals in the literature such as \cite{Freed1991}, \cite{Rozansky1995}, \cite{Reshetikhin2010}.
\subsection{Comparison to literature and historical remarks}
 The problem of studying the perturbative (or semiclassical) behavior of the Chern-Simons partition function around non-acyclic flat connections, where the path integral has zero modes, was already observed in Witten's seminal paper on the subject \cite[p.361]{Witten1989}. Axelrod and Singer studied the perturbative theory around acylic flat connections in detail \cite{Axelrod1991}, \cite{Axelrod1994} but already comment that the assumption on acyclicity should be removed, and state (without proof) that the partition function changes by a total divergence when changing the Riemannian metric. They also identify the problem of defining the integral over the moduli space and proving that it is finite, as well as potential anomalies. Axelrod has a later 
 text in conference proceedings 
 on the subject \cite{Axelrod1995}, where he develops the theory of oscillatory integrals of Morse-Bott functions and announces some theorems on their application to Chern-Simons theory, but without proof. Our work 
 is independent from 
 Axelrod's paper\footnote{In fact, we only learned about its existence shortly before completion of this paper.} and draws on a different background - BV pushforwards. 
The main body of the literature on perturbative Chern-Simons theory turned to the study of (rational) homology 3-spheres, where one can treat the problem of zero modes either by puncturing \cite{Kontsevich1993}, \cite{Kuperberg1999}, \cite{Lescop2002} (resulting in the Kontsevich-Kuperberg-Thurston-Lescop or KKTL invariant) or by introducing 
extra vertices as in the works of Bott and Cattaneo \cite{Bott1998}, \cite{Bott1999} to cancel the effect of zero modes.\footnote{Here one should mention the recent paper \cite{cattaneo2021note} filling a gap in the construction of Bott and Cattaneo.} Cattaneo later showed those constructions agree \cite{Cattaneo1999}. Another line of research focused on extracting perturbative invariants of 3-manifolds from the Kontsevich integral \cite{Kontsevich1993a}, such as the Aarhus integral \cite{Bar-Natan2002}, \cite{Bar-Natan2002a} and the LMO invariant \cite{Le1998}.\footnote{It is known that the Aarhus integral and the LMO invariant are equivalent \cite{Bar-Natan2004}. It is conjectured that the KKTL invariant and the LMO invariant are equal, but this is known only up to degree 2 for integral homology spheres (accredited to C. Lescop in private communication of K.W. with G. Massuyeau). }   A full definition of the perturbative Chern-Simons partition function at non-acyclic flat connections only appeared with the introduction of the BV formalism to the problem and the works of Cattaneo and the first author \cite{Cattaneo2008} (see also \cite{mnev2019quantum}, \cite{Wernli2022}) and simultaneously Iacovino \cite{Iacovino2008}. \\
In the present paper we show how to use the BV partition function to define a volume form on smooth components of the moduli space  whose cohomology class is a topological invariant of the framed 3-manifold. We defer to future work the question of anomalies and convergence of the integral over noncompact smooth components, as well as a more detailed study of the behavior at singular points. 

\subsection{Acknowledgements}
We thank Alberto S. Cattaneo for inspiring discussions. We also thank the referees for their helpful comments.

\section{Formal geometry on the moduli space of flat connections}\label{sec: formal geometry}
In this section we discuss 
formal geometry on the moduli space of flat connections on a trivialized principal $G$-bundle $P = M \times G$ over a 3-manifold $M$. We will assume that $G$ is a compact, simple and simply connected matrix group, such as $G = SU(n)$, and denote $\g$ its Lie algebra.  
In particular, we discuss two special types of points in the moduli space, \emph{smooth} and \emph{irreducible points}. Roughly speaking, smooth points are the ones where all obstructions to deformations vanish, while irreducible points are those with a minimal stabilizer, so that one can ignore stacky aspects of the moduli space.
\subsection{Smooth points}
In this subsection we specialize the results and definitions of \cite[Appendix C]{Cattaneo2014} to the case of Chern-Simons theory and its Euler-Lagrange moduli space, the moduli space of flat connections. \\
 Since $P$ is trivialized, we identify connections on $P$ with their connection 1-forms $\mr{Conn}(P) \cong \Omega^1(M,\g)$, and denote 
\begin{equation}
    \FC \equiv \FC(P) = \{A \in \Omega^1(M,\g) \;|\; \dd A + \frac{1}{2}[A,A] = 0 \} \subset \Omega^1(M,\g)
\end{equation}
the space of flat connections on $P$.
We also identify $\mr{Aut}\,P \cong C^\infty(M,G)$, its action on $\mr{Conn}(P)$ is given by 
$$ \sfg\cdot A \equiv {}^\sfg A \equiv \sfg A\sfg^{-1} + \sfg d \sfg^{-1}.$$
The moduli space of flat connections is 
\begin{equation}
    \M \equiv \M(M,P) = \FC/\Aut P.
\end{equation}
Next, we turn to the definition of smooth points in $\FC$ and $\M$. 
Let $A_0 \in \FC$ be a flat connection on $P$. Then 
$\Omega^\bullet(M,\mr{Ad}P) \cong \Omega^\bullet(M,\g)$ carries the structure of a differential graded Lie algebra with differential the twisted de Rham differential $\dd_{A_0} = \dd + [A_0,\cdot]$ and Lie bracket the extension of the Lie bracket on $\g$ to differential forms. 
We denote this dgla by $\Omega^\bullet_{A_0} = (\Omega^\bullet(M,\g), \dd_{A_0})$ and by
$$ H^\bullet_{A_0} := H^\bullet_{\dd_{A_0}}(M,\g)$$
the cohomology of $\dd_{A_0}$. By homotopy transfer of $L_\infty$-algebras, $H^\bullet_{A_0}$ is turned into a minimal $L_\infty$-algebra endowed with induced operations $\{l'_{n,A_0}\}_{n\geq 2}$. A choice of 
SDR data\footnote{``Strong Deformation Retraction data'' \cite{gugenheim1989perturbation}, also known in the literature under the names ``contraction'' \cite{eilenberg1953groups}, ``homotopy equivalence data'' \cite{Crainic2004}, ``induction data,'' ``$(i,p,K)$ triple.''} 
$r_{A_0} = (i_{A_0},p_{A_0},K_{A_0})$ of $\Omega^\bullet_{A_0}$ onto $H^\bullet_{A_0}$ 
provides us with explicit representatives of these operations (see Appendix \ref{app: SDR} for our conventions on SDR data). 
Denote $T_n$ the set of isomorphism classes of binary rooted trees with $n$ leaves --- here we think of leaves and the root as half-edges emanating from internal vertices. To $T\in T_n$ we can associate an $n$-ary operation  $\lambda_T\colon 
\wedge^n H^\bullet_{A_0}
\to H^\bullet_{A_0}$ of degree $2-n$ as follows: To the $n$ leaves we assign the map $i_{A_0}$, to internal vertices we assign the map $l_2$, to internal edges we assign the map 
$K_{A_0}$, to the root we assign the map $p_{A_0}$ (see Figure \ref{fig: lambda});  then we skew-symmetrize over the permutations of $n$ inputs on the $n$ leaves.
\begin{figure}
    \centering
    \begin{subfigure}{0.3\textwidth}
    \centering
    \begin{tikzpicture}
        \node[vertex] (r) at (0,0) {}; 
        \draw (r) edge node[pos=0.7, left] {$p_{A_0}$} (0,-.5);
        \node[draw,shape=circle] (l2) at (1,1) {$\sfb$};
        \draw (r) edge node[pos=0.4, right] {$i_{A_0}$} (l2);
        \node[draw,shape=circle] (l1) at (-1,1) {$\sfa$};
        \draw (r) edge node[pos=0.4, left] {$i_{A_0}$} (l1);
    \end{tikzpicture}
    \caption{Unique tree $T \in T_2$}
    \end{subfigure}
    \begin{subfigure}{0.6\textwidth}
    \centering
    \begin{tikzpicture}
        \node[vertex] (r) at (0,0) {}; 
        \draw (r) edge node[pos=0.7, left] {$p_{A_0}$} (0,-.5);
        \node[draw,shape=circle] (l1) at (1,1) {$\sfc$};
        \draw (r) edge node[pos=0.4, right] {$i_{A_0}$} (l1);
        \node[vertex] (v1) at ($(r)+(-1,1)$) {};
        \draw (r) edge node[pos=0.4, left] {$K_{A_0}$} (v1);
        \node[draw,shape=circle] (l2) at ($(v1) +(1,1)$) {$\sfb$};
        \draw (v1) edge node[pos=0.4, right] {$i_{A_0}$} (l2);
        \node[draw,shape=circle] (l3) at ($(v1)+(-1,1)$) {$\sfa$};
        \draw (v1) edge node[pos=0.4, left] {$i_{A_0}$} (l3);
        
    \end{tikzpicture}
    \caption{Unique tree $T \in T_3$}
    \end{subfigure}
    \caption{Trees with labeling defining $\lambda_T$}
    \label{fig: lambda}
\end{figure}
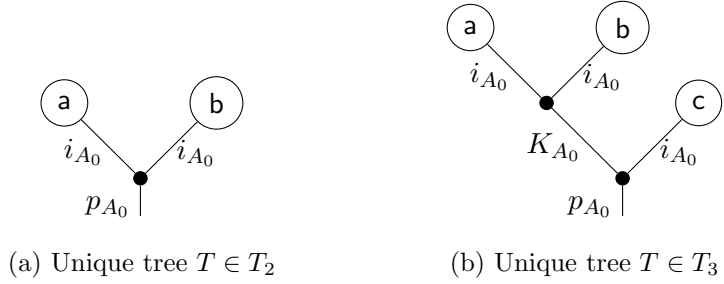
 Then we have 
\begin{equation}
    l_{n,A_0}'=  (-1)^n\sum_{T \in T_n}{ \frac{ n!}{|\operatorname{Aut} T|}}\lambda_T.
\end{equation} Explicitly, the first two non-vanishing operations are given by
\begin{align*}
l'_{2,A_0}(\sfa,\sfb) &= p_{A_0}[i_{A_0}(\sfa),i_{A_0}(\sfb)], \\
l'_{3,A_0}(\sfa,\sfb,\sfc) &= {\mr{Sym}_{\sfa,\sfb,\sfc}}\Big( p_{A_0}[-K_{A_0}[i_{A_0}(\sfa),i_{A_0}(\sfb)],i_{A_0}(\sfc)] \Big).
\end{align*}
{ Here $\mr{Sym}_{\sfa,\sfb,\sfc}$ stands for skew-symmetrization in $\sfa,\sfb,\sfc$.}
\begin{defn}[\cite{Cattaneo2014}]\label{def: smooth}
We say that $A_0$ is smooth if, for all $n \geq 0$, we have $l'_n = 0$.
\end{defn}
We denote by $\FC^\mr{sm} \subset \FC$ the subset of all smooth flat connections. 


The purpose of this subsection is to prove that both the space of all flat connections and the moduli space of flat connections are smooth manifolds close to a smooth point $A_0$, resp. its class in the moduli space $[A_0]$. It follows that $\FC^\mr{sm}$ and $\M^\mr{sm} = \pi(\FC^\mr{sm})$ are smooth manifolds.\footnote{When going to the moduli space, we will only prove it for the subset of smooth irreducible flat connections $\M' = \pi(\FC')$, i.e. flat connections $A_0$ for which $H^0_{A_0}=0.$} 
\begin{rem}[Topologies on $\Omega^\bullet$ and $\FC$]
The vector space $\Omega^\bullet$ carries the natural Fr\'echet topology of uniform convergence, with respect to which it is complete. Moreover, it also carries the coarser Banach topologies induced by the Sobolev norms\footnote{This norm is defined by the $L^p$-norms of $\omega$ and covariant derivatives up to order $l$, with respect to a given Riemannian metric $g$.} $ ||\cdot||_{l,p}$. The spaces $\FC' \subset \FC^\mr{sm} \subset \FC \subset \Omega^1$ carry the corresponding subset topologies. With respect to the $||\cdot||_{l,p}$ norm none of those spaces is complete and we denote the completion by $\Omega^\bullet_{l,p}, \FC_{l,p},\ldots$. We usually work with $p=2$ and denote $\Omega^\bullet_{l,2} = \Omega_{l}$.\footnote{The completions in the Sobolev topologies can contain connections with singularities. However, when we say a connection is smooth, we mean that it is a smooth point in $\FC$.} 
\end{rem}
Given the different topologies, one can make different assumptions on $K_{A_0}$. 
\begin{ass}[$l$-Sobolev boundedness]\label{ass: Banach} 
    The chain homotopy $K_{A_0}\colon \Omega^\bullet \to \Omega^{\bullet-1}$ is bounded with respect to the Sobolev norm $||\cdot||_{l,2}$.
\end{ass} 
Actually, the main example considered in this text, the SDR coming from Hodge decomposition (Example \ref{ex: hodge decomposition}) satisfies the following stronger assumption: 
\begin{ass}[Fr\'echet continuity]\label{ass: Frechet}
The chain homotopy $K_{A_0} \colon \Omega^\bullet \to \Omega^{\bullet -1}$ is continuous in the Fr\'echet topology. 
\end{ass}

The goal of this subsection is to prove the following theorems:
\begin{thm}\label{thm: banach smoothness}
    For every $l \geq 2$, the completion of the set of smooth points $\FC^\mr{sm}_l \subset \FC_l$ has the structure of a Banach manifold. For every smooth point $A_0$, there is a neighborhood $V_{A_0}$ of $A_0 \in \FC_l$ modeled on a neighborhood $U_{A_0}$ of $0 \in (\Omega^1_{A_0-\mr{cl},l},||\cdot||_l)$. Given SDR data $r_{A_0}$ at $A_0$ satisfying Assumption \ref{ass: Banach}, we have a homeomorphism in the Banach topologies  
    \begin{equation} \label{phi tilde - chain level exp map}
        \til{\varphi}_{A_0} \colon U_{A_0} \to V_{A_0}, \qquad \alpha \mapsto \til{\varphi}_{A_0}(\alpha) = A_0 + \sum_{k\geq 0}\alpha^{(k)}
    \end{equation}
     with $\alpha^{(1)} = \alpha$ and $\alpha^{(k)}$ given by 
    \begin{equation}
    \alpha^{(k)} = (-1)^{k-1} \sum_{T \in T_k} \til{\mu}_{T}(\alpha,\ldots,\alpha)
    \end{equation}
    where $\til{\mu}_T$ is defined in Equation (\ref{eq: def mu tilde}) below. 
\end{thm}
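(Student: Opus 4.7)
The plan is Kuranishi-style: set up a formal power-series ansatz for the perturbation $\beta := \til\varphi_{A_0}(\alpha) - A_0$ that automatically solves the Maurer-Cartan equation $d_{A_0}\beta + \tfrac12[\beta,\beta]=0$ modulo transferred obstructions, kill the obstructions using smoothness, and promote the formal solution to a convergent one via the Banach bounds of Assumption \ref{ass: Banach}. Concretely, I would set $\alpha^{(1)} = \alpha$ and, for $k \geq 2$, define
$$\alpha^{(k)} = -\tfrac{1}{2}K_{A_0}\!\!\!\sum_{\substack{i+j = k\\ i,j\geq 1}}\![\alpha^{(i)},\alpha^{(j)}].$$
Iterating the recursion expands $\alpha^{(k)}$ as a sum over binary rooted trees $T \in T_k$: leaves carry the input $\alpha$, internal edges carry $K_{A_0}$, internal vertices carry the Lie bracket, and the root is left bare. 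This pins down $\til\mu_T$ in (\ref{eq: def mu tilde}), and the overall $(-1)^{k-1}$ accounts for the $k-1$ internal nodes (with symmetry factors properly absorbed).

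The heart of the argument is to verify that $\beta = \sum_{k\geq 1}\alpha^{(k)}$ is a genuine MC element. Applying $d_{A_0}$ to each $\alpha^{(k)}$ term by term, using $d_{A_0}\alpha = 0$ (since $\alpha\in\Omega^1_{A_0-\mr{cl},B}$) together with the SDR identity $d_{A_0}K_{A_0} + K_{A_0}d_{A_0} = \mr{id} - i_{A_0}p_{A_0}$, a direct manipulation yields the fixed-point relation $F = \tfrac12\, i_{A_0}p_{A_0}[\beta,\beta] + K_{A_0}[F,\beta]$ for $F := d_{A_0}\beta + \tfrac12[\beta,\beta]$; the collapse of the $K_{A_0}d_{A_0}$-part uses the graded Jacobi identity $[[\beta,\beta],\beta]=0$ (valid since $\beta$ is odd). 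Using $p_{A_0}K_{A_0}=0$ we have $p_{A_0}\beta = p_{A_0}\alpha$; combined with the fact that the harmonic projection of the bracket descends to cohomology (so the source depends only on the class $p_{A_0}\alpha\in H^1_{A_0}$), the source $\tfrac12\, i_{A_0}p_{A_0}[\beta,\beta]$ is identified with the image under $i_{A_0}$ of the transferred $L_\infty$-Maurer-Cartan obstruction $\sum_{n\geq 2}\tfrac{1}{n!}l'_{n,A_0}(p_{A_0}\alpha,\ldots,p_{A_0}\alpha)$. By Definition \ref{def: smooth}, smoothness of $A_0$ kills every $l'_{n,A_0}$; the source then vanishes and uniqueness of the fixed-point (a contraction for $\|\alpha\|_B$ small) forces $F=0$.

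Banach convergence of $\sum_k\alpha^{(k)}$ in $\|\cdot\|_B$ follows from Assumption \ref{ass: Banach}: boundedness of $K_{A_0}$ together with the multiplicativity of the Lie bracket in the Sobolev norm gives $\|\til\mu_T(\alpha,\ldots,\alpha)\|_B \leq C_0^k\,\|\alpha\|_B^k$ for each individual tree, and $|T_k|$ is a Catalan number bounded by $4^k$, so the full series converges geometrically on a ball $U_{A_0}\subset \Omega^1_{A_0-\mr{cl},B}$ of radius of order $1/(4C_0)$ and is real-analytic in $\alpha$. The differential $d\til\varphi_{A_0}|_0$ is the inclusion $\ker d_{A_0}\hookrightarrow \Omega^1_B$, which matches the formal tangent space $T_{A_0}\FC_B$, so the Banach inverse function theorem yields a homeomorphism onto an open neighborhood $V_{A_0}\subset\FC_B$. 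Smooth dependence of the SDR data on the basepoint makes the transitions $\til\varphi_{A_1}^{-1}\circ \til\varphi_{A_0}$ smooth between charts at nearby smooth points, endowing $\FC^{\mr{sm}}$ with a Banach manifold structure.

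The main obstacle is the MC verification in the second paragraph: the identification of the source of the fixed-point equation with the transferred $L_\infty$-obstruction -- together with the care needed to check that the harmonic projection of the bracket only sees the cohomology class of $\alpha$, even though $\alpha$ is allowed to be any closed 1-form rather than a harmonic representative -- is the standard but delicate content of the Kuranishi step, and it is precisely where the smoothness hypothesis is indispensable. Banach convergence is by comparison a clean consequence of Assumption \ref{ass: Banach} and the Catalan bound.
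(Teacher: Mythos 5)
Your recursion, tree expansion, and Catalan-number convergence estimate coincide with the paper's, and your all-orders fixed-point identity $F=\tfrac12 i_{A_0}p_{A_0}[\beta,\beta]+K_{A_0}[F,\beta]$ for the curvature $F$ of $A_0+\beta$ is a clean alternative to the paper's order-by-order induction (Proposition \ref{prop: deformation of A0}) \emph{when $\alpha$ is harmonic}: in that case $\tfrac12 p_{A_0}[\beta,\beta]$ really is the transferred obstruction $\sum_n l'_{n,A_0}(\sfa,\ldots,\sfa)$ (up to combinatorial normalization) and vanishes by smoothness, after which $F=K_{A_0}\mathrm{ad}_\beta F$ forces $F=0$. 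The first genuine gap is your treatment of general closed $\alpha\in\Omega^1_{A_0-\mr{cl},B}$, which is what the theorem requires. Your justification --- that ``the harmonic projection of the bracket descends to cohomology,'' so the source depends only on $p_{A_0}\alpha$ --- does not apply: $\beta=\til{\delta}_{A_0}(\alpha)$ is not closed, so $p_{A_0}[\beta,\beta]$ is not a cup product of cohomology classes, and the tree sum with non-harmonic leaves is not the homotopy-transfer formula, so the vanishing of the $l'_n$ does not directly kill your source term. The paper sidesteps this entirely: for a general closed $\alpha$ it introduces the \emph{different} lift $\til{\psi}_{A_0}(t\alpha)={}^{\sfg_t}\varphi_{A_0}(t[\alpha])$ with $\sfg_t=\exp(-tK_{A_0}\alpha)$, which is flat by construction because it is a gauge transform of the harmonic lift; the Remark closing that subsection explicitly notes that $\til\varphi$ and $\til\psi$ disagree from second order on, which is a warning sign that the behavior of the bare tree sum on non-harmonic inputs is exactly the delicate point.

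The second gap is the chart property. You invoke the Banach inverse function theorem for a map into $\FC_B$ on the grounds that $d\til{\varphi}_{A_0}|_0$ ``matches the formal tangent space $T_{A_0}\FC_B$''; this is circular, since $\FC_B$ is not yet known to be a manifold near $A_0$ --- that is precisely what the theorem asserts. What must actually be proved is \emph{surjectivity} onto a neighborhood of $A_0$ in $\FC_B$: every flat connection $A_1$ close to $A_0$ lies in the image of the chart. The paper establishes this in Proposition \ref{prop: bij} by (i) applying the implicit function theorem to the gauge-fixing map $(\beta,\delta)\mapsto K_{A_0}\bigl({}^{\exp(-\beta)}(A_0+\delta)-A_0\bigr)$ to bring $A_1$ into $\ker K_{A_0}$, and then (ii) producing the preimage via the Kuranishi map $\til{\kappa}_{A_0}$ (Proposition \ref{prop: kuranishi}). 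Neither step appears in your proposal; without them you have only shown that $\til{\varphi}_{A_0}(U_{A_0})$ is \emph{some} set of flat connections, not an open neighborhood of $A_0$ in $\FC_B$, and you have no mechanism for recognizing an arbitrary nearby flat connection as $\til{\varphi}_{A_0}(\alpha)$ for some closed $\alpha$.
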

The chart $\til{\varphi}$ is constructed using the chain homotopy $K_{A_0}$. If $K_{A_0}$ is Fr\'echet continuous, then $\til{\varphi}_{A_0}$ is actually a chart for the  Fr\'echet manifold $\FC^\mr{sm}$. 
\begin{thm}\label{thm: Frechet smoothness}
    The set $\FC^\mr{sm} \subset \FC$ has the structure of a Fr\'echet manifold. If the chain homotopy $K_{A_0}$ satisfies Assumption \ref{ass: Frechet}, then the map \eqref{phi tilde - chain level exp map}, restricted to $U_{A_0} \cap \Omega^1_{d_{A_0}-\mr{cl}}$, is a Fr\'echet homeomorphism.
\end{thm}
In the later sections of the text, we consider $\FC' \subset \FC^\mr{sm}$, with this Fr\'echet manifold structure. 
\subsubsection{Outline of the proof}
We briefly outline the strategy of the proof. The Banach case was inspired by \cite{Ho2019}.
\begin{enumerate}
    \item We define the formal power series $\delta_{A_0} = \sum_{k \geq 0} \alpha^{(k)}$ as a sum over trees, and show that $A_0 + \delta_{A_0}$ is flat for $\alpha^{(1)}$ \emph{harmonic} -- Proposition \ref{prop: deformation of A0}. 
    \item We show that $\til\delta_{A_0}$ (defined using the sum over trees for all closed forms) converges in the Banach topology under Assumption \ref{ass: Banach} (Proposition \ref{prop: banach convergence})  or in the Fr\'echet topology under Assumption \ref{ass: Frechet} (Remark \ref{rem: frechet convergence}).
    \item We define the Kuranishi map \eqref{eq: kuranishi map}, show that
    it is invertible on $\Omega^1$ and a compositional inverse of $\delta_{A_0}$ on $\Omega^1$, and show that it maps Maurer-Cartan forms to closed forms (Proposition \ref{prop: kuranishi}).
    \item We establish the Banach smoothness of $\FC^\mr{sm}$ by means of the chart $\til{\psi}$, see Proposition \ref{prop: bij}. 
    \item Now, one can use the inverse function theorem for $\til{\kappa}$ on $\FC^\mr{sm}$ to conclude that $\til{\kappa}$, restricted to the Maurer-Cartan set, surjects onto a neighborhood of $0 \in \Omega^1_{cl,l}$, this then finally implies that $\til{\delta}$ lands in the Maurer-Cartan set (Proposition \ref{prop: til delta MC}). This proves Theorem \ref{thm: banach smoothness}.
    \item Since $\til{\kappa}, \til{\delta}$ are both continuous in the Fr\'echet topology if $K_{A_0}$ is, restricting them to $\FC^\mr{sm}$ proves Theorem \ref{thm: Frechet smoothness}.
\end{enumerate}
 \subsubsection{Formal deformations of flat connections}
 We recall the following elementary facts about flat connections. If $A_t \colon (-\epsilon,\epsilon) \to \Omega^1$ is a smooth curve of flat connections, then from differentiating $F_{A_t} = 0$ we obtain $d_{A_0}\dot{A}_0 =0$, i.e., tangent vectors at $A_0$ to curves of flat connections are $d_{A_0}$-closed 1-forms. If $\sfg_t\colon (-\epsilon,\epsilon) \to C^\infty(M,G)$ is a curve with $\sfg_0 \equiv 1$ and 
 $\dot{\sfg}_0 = \gamma \in \Omega^0(M,
 \g)$, then the tangent vector at $0$ to the curve of flat connections $A_t = {}^{\sfg_t}A_0$ is $\dot{A}_0 = -d_{A_0}\gamma$, i.e. the tangent vector  {
 to a curve along the gauge orbit of a flat connection is a $d_{A_0}$-exact 1-form}. This is equivalent to saying that infinitesimal\footnote{ By ``infinitesimal'' we everywhere mean ``first-order,'' as opposed to formal deformations (of infinite order) discussed below.} deformations of flat connections are $d_{A_0}$-closed 1-forms, and two such deformations are equivalent whenever they differ by an exact 1-form, i.e. equivalence classes of infinitesimal deformations are in 1-to-1 correspondence with the first twisted cohomology group $H^1_{{A_0}}$ (sometimes called the \emph{Zariski tangent space} to the moduli space of flat connections at $[A_0]$). 
\begin{prop}\label{prop: deformation of A0}
Let $A_0$ be a flat connection on $P$, and let $(i_{A_0},p_{A_0},K_{A_0})$ be SDR 
data at $A_0$.  If $A_0$ is smooth, then all infinitesimal deformations of $A_0$ lift to formal deformations of $[A_0]$, i.e., for every $\sfa \in H^1_{A_0}$ there exists a formal power series 
\begin{equation}\delta = \delta_{A_0}(t\sfa) = \sum_{n\geq 1}t^n \alpha^{(n)} \in \Omega^1[[t]]\label{eq: formal delta}\end{equation}
with $\alpha^{(1)} = i_{A_0}\sfa$, such that $A_t:=A_0 + \delta_{A_0}(t\sfa)$ is flat, i.e. it satisfies $$\dd A_t + \frac12 [A_t,A_t] =0$$ with $\dd, [\cdot,\cdot]$ the induced operations on $\Omega^1[[t]]$. 
Moreover,  we have $K_{A_0}\delta_{A_0}(t\sfa) = 0$. 
\end{prop}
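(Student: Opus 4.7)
The plan is to construct $\delta=\sum_{n\geq 1}t^n\alpha^{(n)}$ recursively by solving the Maurer--Cartan equation $\dd_{A_0}\delta+\tfrac12[\delta,\delta]=0$ order by order in $t$, using the chain homotopy $K_{A_0}$ to invert $\dd_{A_0}$. The order-$n$ component of this equation reads
\[
  \dd_{A_0}\alpha^{(n)} = -\tfrac12\,\Sigma_n,\qquad \Sigma_n := \sum_{k=1}^{n-1}[\alpha^{(k)},\alpha^{(n-k)}],
\]
which at $n=1$ becomes $\dd_{A_0}\alpha^{(1)}=0$, satisfied by $\alpha^{(1)}=i_{A_0}\sfa$ since harmonic representatives are closed. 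Assuming $\alpha^{(1)},\dots,\alpha^{(n-1)}$ have been built, I would set $\alpha^{(n)}:=-\tfrac12 K_{A_0}\Sigma_n$. The gauge condition $K_{A_0}\delta=0$ is then automatic: $K_{A_0}\alpha^{(1)}=K_{A_0}i_{A_0}\sfa=0$ by the SDR identity $K_{A_0}i_{A_0}=0$, while $K_{A_0}\alpha^{(n)}=-\tfrac12 K_{A_0}^2\Sigma_n=0$ for $n\geq 2$; unrolling the recursion then produces the asserted sum-over-binary-trees expression with $\til\mu_T$ decorating internal vertices by $[-,-]$, internal edges and the root by $K_{A_0}$, and leaves by $i_{A_0}$.

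It remains to check that this Ansatz actually solves the order-$n$ Maurer--Cartan equation. Applying the homotopy identity $\dd_{A_0}K_{A_0}+K_{A_0}\dd_{A_0}=\mathrm{id}-i_{A_0}p_{A_0}$ to $\Sigma_n$ yields
\[
  \dd_{A_0}\alpha^{(n)}+\tfrac12\Sigma_n \;=\; \tfrac12\,i_{A_0}p_{A_0}\Sigma_n \;+\; \tfrac12\,K_{A_0}\dd_{A_0}\Sigma_n,
\]
so the task reduces to showing both right-hand terms vanish. The second vanishes because $\dd_{A_0}\Sigma_n=0$: by the graded Leibniz rule, $\dd_{A_0}\Sigma_n$ splits into a sum of brackets $[\dd_{A_0}\alpha^{(k)},\alpha^{(n-k)}]\pm[\alpha^{(k)},\dd_{A_0}\alpha^{(n-k)}]$, substituting the inductive Maurer--Cartan relations $\dd_{A_0}\alpha^{(k)}=-\tfrac12\Sigma_k$ for $k<n$ expresses everything in terms of iterated brackets $[[\alpha^{(i)},\alpha^{(j)}],\alpha^{(l)}]$ with $i+j+l=n$, and these sum to zero by the graded Jacobi identity.

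The remaining term, $i_{A_0}p_{A_0}\Sigma_n$, is the genuine Kuranishi obstruction, and this is where the main difficulty lies. Iteratively substituting the tree expansions of the $\alpha^{(k)}$'s into $\Sigma_n$ and then projecting with $p_{A_0}$, one recognizes $p_{A_0}\Sigma_n$ as the sum over trees in $T_n$ that defines the induced $L_\infty$ operation $l'_{n,A_0}$ evaluated at the $n$-tuple $(\sfa,\dots,\sfa)$, up to an overall combinatorial factor; the graded skew-symmetrization over inputs collapses to a combinatorial multiplicity because $\sfa\in H^1_{A_0}$ sits in odd degree. By the smoothness hypothesis of Definition \ref{def: smooth}, $l'_{n,A_0}\equiv 0$ for every $n$, so the obstruction vanishes at each order and the recursion closes, delivering the desired formal flat deformation $A_0+\delta_{A_0}(t\sfa)$ with $K_{A_0}\delta_{A_0}(t\sfa)=0$.
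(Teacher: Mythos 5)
Your proposal is correct and follows essentially the same route as the paper: expand the Maurer--Cartan equation order by order, define $\alpha^{(n)}=-\tfrac12 K_{A_0}\Sigma_n$, observe that $\Sigma_n$ is closed by the Jacobi identity and the inductive hypothesis, and identify the only obstruction to exactness with the induced operation $l'_n(\sfa,\dots,\sfa)$, which vanishes by smoothness; the side conditions $K_{A_0}i_{A_0}=0$ and $K_{A_0}^2=0$ give $K_{A_0}\delta=0$ exactly as in the paper. Your write-up is if anything slightly more explicit than the paper's in separating the two correction terms coming from the homotopy identity.
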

\begin{proof}
We can expand the flatness equation $dA_t + \frac12[A_t,A_t] = 0$ in powers of $t$, obtaining  
\begin{align}
dA_0 + \frac12[A_0,A_0] &= 0,  \\ 
d_{A_0} 
\alpha^{(1)} &= 0,  \\ 
d_{A_0}\alpha^{(2)} + \frac{1}{2}[\alpha^{(1)},\alpha^{(1)}] &= 0, \\ 
&\vdots \notag \\
d_{A_0}\alpha^{(n)} + \frac12\sum_{k=1}^{n-1}[\alpha^{(k)},\alpha^{(n-k)}] &= 0. \label{eq:deformation}
\end{align}
The first two equations are satisfied by our assumptions. It is instructive to look at the third equation in detail. We see that we can solve it for $\alpha^{(2)}$ if and only $\frac12[\alpha^{(1)},\alpha^{(1)}]$ is $d_{A_0}$-exact. Because $\alpha^{(1)} = \alpha$ is closed by assumption, and the bracket is compatible with the differential, $\frac12[\alpha,\alpha]$ is always $d_{A_0}$ closed. It is exact if and only if $l_2'(\sfa,\sfa) = 0$. 
In this case, we can write down an explicit solution: 
$$\alpha^{(2)} = -\frac{1}{2}K_{A_0}[\alpha,\alpha].$$
Indeed, 
$$d_{A_0}\alpha^{(2)} =-\frac12d_{A_0}K_{A_0}[\alpha,\alpha] = -\frac12(\mathrm{id} - P_{A_0} - K_{A_0}d_{A_0})[\alpha,\alpha]$$ 
and $d_{A_0}[\alpha,\alpha] = P_{A_0}[\alpha,\alpha] = 0$ by $d_{A_0}$-closedness and vanishing of $l_2$ respectively. The rest of the proof now follows by induction. Suppose we are given $\alpha^{(1)},\ldots,\alpha^{(n-1)}$ satisfying 
$$d_{A_0}\alpha^{(k)} = -\frac12\sum_{j=1}^{k-1}[\alpha^{(j)},\alpha^{(k-j)}] $$ 
for $k= 1,\ldots,n-1$ and we are looking for $\alpha^{(n)}$ to solve \eqref{eq:deformation}. Then it is a straightforward application of the Jacobi identity that the right hand side is $d_{A_0}$-closed, and is exact if and only if $l_{n}' = 0$, in which case we can define 
$$\alpha^{(n)} = -\frac12K_{A_0}\sum_{k=1}^{n-1}[\alpha^{(k)},\alpha^{(n-k)}].$$ 
Notice that by construction $K_{A_0}\alpha^{(k)} = 0$ for $k\geq 0$. Therefore $K_{A_0}\delta = 0$ if and only if $K_{A_0}\alpha = 0$.  
\end{proof}
We denote the corresponding map by 
\begin{equation}\label{phi - exp map on moduli space}
    \varphi_{A_0}\colon H^1_{A_0} \to \Omega^1[[t]],\quad (A_0,\sfa) \mapsto  \varphi_{A_0}(t\sfa)= A_0 +\delta_{A_0} (t\sfa) . 
\end{equation}
In fact, we can extract from the proof the following slightly more precise fact:
\begin{prop}
Let $A_0$ be a flat connection (not necessarily smooth), $\sfa \in H^1_{{A_0}}$ and $k \geq 2$ an integer. Then $\sfa$ can be lifted to an order $k$ deformation $t\alpha^{(1)} + \ldots + t^k\alpha^{(k)}$ if and only if $l_2'(\sfa,\sfa) = \ldots =l_k'(\sfa,\ldots,\sfa) = 0$, and in this case 
\begin{equation}\alpha^{(n)} ={ (-1)^{n-1}} \sum_{T \in T_n}{ \frac{1}{|\operatorname{Aut} T|}}\mu_T(\sfa,\ldots,\sfa)\label{eq:alphaj}\end{equation}
for $1\leq n \leq k$. 
Here the sum is over { isomophism classes of} rooted binary trees $T$ with $n$ leaves. 
{
The 
$n$-ary multilinear operation $\mu_T \colon \mr{Sym}^n H^\bullet_{A_0}[1]  \to\Omega^\bullet [1]$ is 
the evaluation of the tree $T$ by putting inputs on the leaves, $l_2$ on internal vertices, $K_{A_0}$ on the internal edges and $K_{A_0}$ on the root, and symmetrizing over inputs, see Figure \ref{fig: mu}. 
}
\begin{figure}
    \centering
    \begin{subfigure}{0.3\textwidth}
    \begin{tikzpicture}
        \node[vertex] (r) at (0,0) {}; 
        \draw (r) edge node[pos=0.7, left] {$K_{A_0}$} (0,-.5);
        \node[draw,shape=circle] (l2) at (1,1) {$\sfb$};
        \draw (r) edge node[pos=0.4, right] {$i_{A_0}$} (l2);
        \node[draw,shape=circle] (l1) at (-1,1) {$\sfa$};
        \draw (r) edge node[pos=0.4, left] {$i_{A_0}$} (l1);
    \end{tikzpicture}
    \caption{Unique tree $T \in T_2$}
    \end{subfigure}
    \begin{subfigure}{0.6\textwidth}
    \centering
    \begin{tikzpicture}
        \node[vertex] (r) at (0,0) {}; 
        \draw (r) edge node[pos=0.7, left] {$K_{A_0}$} (0,-.5);
        \node[draw,shape=circle] (l1) at (1,1) {$\sfc$};
        \draw (r) edge node[pos=0.4, right] {$i_{A_0}$} (l1);
        \node[vertex] (v1) at ($(r)+(-1,1)$) {};
        \draw (r) edge node[pos=0.4, left] {$K_{A_0}$} (v1);
        \node[draw,shape=circle] (l2) at ($(v1) +(1,1)$) {$\sfb$};
        \draw (v1) edge node[pos=0.4, right] {$i_{A_0}$} (l2);
        \node[draw,shape=circle] (l3) at ($(v1)+(-1,1)$) {$\sfa$};
        \draw (v1) edge node[pos=0.4, left] {$i_{A_0}$} (l3);
        
    \end{tikzpicture}
    \caption{Unique tree $T \in T_3$}
    \end{subfigure}
    \caption{Trees with the labeling defining $\mu_T$}
    \label{fig: mu}
\end{figure}
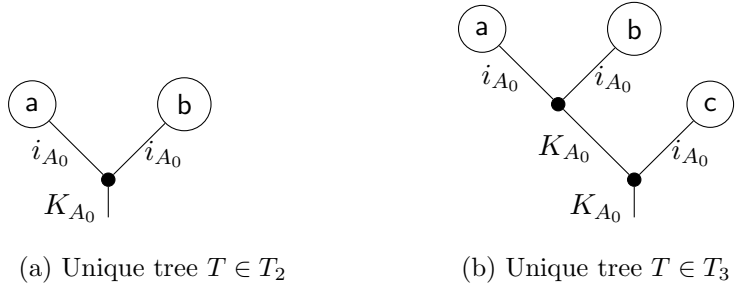
\end{prop}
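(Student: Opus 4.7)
The proof strategy is a direct extension of the preceding Proposition (the smooth case), with the inductive obstruction at each order identified as the diagonal evaluation of the minimal $L_\infty$-operation $l'_n$.

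\textbf{Step 1: Induction setup.} I would proceed by induction on $n$, proving simultaneously: (i) the lift exists iff $l'_2(\sfa,\sfa) = \cdots = l'_n(\sfa,\ldots,\sfa) = 0$, (ii) the explicit tree formula \eqref{eq:alphaj}, (iii) the side condition $K_{A_0}\alpha^{(j)} = 0$ for all $j \leq n$. The base case $n=1$ sets $\alpha^{(1)} = i_{A_0}\sfa$, which is $d_{A_0}$-closed (so no obstruction arises) and is annihilated by $K_{A_0}$ by the side conditions on the SDR data (Appendix \ref{app: SDR}). This corresponds to the unique tree with one leaf, decorated by $i_{A_0}$ at the leaf and $K_{A_0}$ at the root; but for $n=1$ we interpret the tree as trivial and omit the root $K_{A_0}$.

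\textbf{Step 2: Inductive step --- identifying the obstruction.} Assume $\alpha^{(1)},\ldots,\alpha^{(n-1)}$ are given by \eqref{eq:alphaj}, satisfy the order-$(n-1)$ flatness equations, and satisfy $K_{A_0}\alpha^{(j)} = 0$. Equation \eqref{eq:deformation} demands that $\beta_n := \tfrac12\sum_{k=1}^{n-1}[\alpha^{(k)},\alpha^{(n-k)}]$ be $d_{A_0}$-exact. The standard argument via Jacobi (as in the preceding proof) shows $d_{A_0}\beta_n = 0$, so $\beta_n = d_{A_0}\gamma$ for some $\gamma$ iff $p_{A_0}\beta_n = 0$. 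The key claim is that
\[ p_{A_0}\beta_n = \frac{(-1)^n}{n!}\, l'_n(\sfa,\ldots,\sfa). \]
This follows by unpacking the homotopy-transfer formula (recalled before Definition \ref{def: smooth}): expanding $\alpha^{(k)}$ via \eqref{eq:alphaj} gives $p_{A_0}\beta_n$ as a sum over pairs $(T_1, T_2)$ of rooted binary trees with $k$ and $n-k$ leaves, where we graft at a new internal vertex labeled $l_2$, then apply $p_{A_0}$ at the new root; the resulting sum ranges over all trees in $T_n$ with multiplicities matching the combinatorial factor $(-1)^n n!/|\operatorname{Aut} T|$, and since the inputs are all $\sfa$, the skew-symmetrization in $l'_n$ reproduces the symmetric diagonal evaluation with the combinatorial weight. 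If $l'_n(\sfa,\ldots,\sfa) = 0$, the obstruction vanishes; conversely, if the lift to order $n$ exists, we can run the same identification to read off $l'_n(\sfa,\ldots,\sfa) = 0$, establishing ``only if.''

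\textbf{Step 3: Constructing $\alpha^{(n)}$ and verifying the tree formula.} Choose the gauge-fixed solution $\alpha^{(n)} := -K_{A_0}\beta_n$. Exactly as in the smooth case, the relation $d_{A_0}K_{A_0} + K_{A_0}d_{A_0} = \operatorname{id} - i_{A_0}p_{A_0}$ together with $d_{A_0}\beta_n = 0$ and $p_{A_0}\beta_n = 0$ yields $d_{A_0}\alpha^{(n)} = -\beta_n$, and $K_{A_0}\alpha^{(n)} = 0$ by $K_{A_0}^2 = 0$. Substituting the tree expansion for $\alpha^{(k)}, \alpha^{(n-k)}$ into $-K_{A_0}\beta_n$ produces a sum over all pairs $(T_1, T_2) \in T_k \times T_{n-k}$ summed over $k$, i.e., over all trees in $T_n$ obtained by grafting two rooted binary subtrees at a new internal vertex (decorated by $l_2$) whose outgoing half-edge is decorated by $K_{A_0}$. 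Collecting isomorphic trees, the grafting multiplicity combined with the prefactor $-\tfrac12$ reproduces exactly $\frac{1}{|\operatorname{Aut}(T)|}$, and the alternating sign $(-1)^n$ propagates inductively from the factor $-K_{A_0}$. This yields \eqref{eq:alphaj} at order $n$.

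\textbf{Anticipated obstacle.} The main bookkeeping issue is verifying the combinatorial identification in Step 2: that the sum over pairs of trees arising from squaring the tree expansion and applying $p_{A_0}\circ l_2$ reproduces precisely $l'_n(\sfa,\ldots,\sfa)$ with the stated normalization $(-1)^n/n!$. This is ultimately a standard computation in the theory of $L_\infty$-minimal models via homotopy transfer (Kontsevich--Soibelman, Fukaya), but reconciling the symmetry-factor conventions ($n!/|\operatorname{Aut}(T)|$ in the definition of $l'_n$ versus $1/|\operatorname{Aut}(T)|$ in \eqref{eq:alphaj}) with the factor of $\tfrac12$ from the quadratic term and the symmetrization of $\sfa$-inputs requires care.
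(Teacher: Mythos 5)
Your proposal is correct and follows essentially the same route as the paper, which presents this proposition as extracted directly from the inductive proof of the preceding one: the obstruction at order $n$ is the class of $\tfrac12\sum_{k}[\alpha^{(k)},\alpha^{(n-k)}]$, identified (via grafting pairs of trees at a new $l_2$-vertex) with the diagonal evaluation $l'_n(\sfa,\ldots,\sfa)$ up to normalization, and the gauge-fixed solution $\alpha^{(n)}=-K_{A_0}\beta_n$ unfolds into the sum over $T_n$ with weights $1/|\operatorname{Aut} T|$. The only caveats are cosmetic: your sign $(-1)^n/n!$ in Step 2 need not match the paper's convention exactly (irrelevant for the vanishing statement), and the $n=1$ convention you adopt is the same one the paper uses implicitly.
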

In words, the induced $L_\infty$-operations $l_{n,A_0}'$ on $H^\bullet_{A_0}$ are precisely the obstructions to the lift of infinitesimal deformations to higher-order deformations. At smooth points all those obstructions vanish, so all infinitesimal deformations lift to  formal ones. 
\subsubsection{Lifting to forms}
For any binary tree $T$  with $n$ leaves, we can lift the operations $\lambda_T\colon \mr{Sym}^n H^\bullet_{A_0}[1] \to H^\bullet_{A_0}[2]$ and $\mu_T\colon \mr{Sym}^n H^\bullet_{A_0}[1]  \to \Omega^\bullet[1]$ to operations \begin{align}
    \til{\lambda}_T&\colon \mr{Sym}^n \Omega^\bullet[1] \to \Omega^\bullet[2], \\
    \til{\mu}_T&\colon \mr{Sym}^n \Omega^\bullet[1] \to \Omega^\bullet[1] \label{eq: def mu tilde}
\end{align}
by replacing $i_{A_0}$ and $p_{A_0}$ with $\mr{id}_{\Omega^\bullet}$. Obviously, we have $\lambda_T = p_{A_0} \circ \til{\lambda}_T \circ i_{A_0}^{\otimes n}$ and $\mu_T = \til{\mu}_T \circ i_{A_0}^{\otimes n}$. 
    We have defined the map
    $$\delta_{A_0} = t\cdot i_{A_0} +  \sum_{k \geq 2} t^k \sum_{T \in T_k} (-1)^{k-1} \mu_T
     \circ (-)^{\otimes k}\colon H^1_{A_0} \to t\Omega^1[[t]],$$
    but it is clear from the definition that it factors through a map 
    \begin{equation} 
    \til{\delta}_{A_0} = t\cdot\mr{id}_{\Omega^\bullet} +  \sum_{k \geq 2} t^k \sum_{T \in T_k} (-1)^{k-1} \til{\mu}_T \circ (-)^{\otimes k}  \colon \Omega^1 \to t \Omega^1[[t]]\label{eq: def tilde delta}
    \end{equation} 
    by $\delta_{A_0} = \til{\delta}_{A_0} \circ i_{A_0}.$   
    
     We now establish by defining another lift of $\alpha$ to a formal deformation.
    Namely, we split a closed 1-form $\alpha$ as $\alpha = i_{A_0}p_{A_0}\alpha + d_{A_0}K_{A_0}\alpha$. Denoting  $\beta =  K_{A_0}\alpha$, we can lift $\alpha$ { to} a (formal) curve of flat connections with tangent vector $\dot{A}_0 = \alpha$ by setting $\sfg_{t} = \exp(-  t\beta)$ and  setting
    \begin{equation}
        \til{\psi}_{A_0}(t\alpha) = {}^{\sfg_t}\varphi_{A_0}(t[\alpha]) = \sfg_t {\varphi}_{A_0}(t[\alpha])\sfg_t^{-1} + \sfg_t d \sfg_t^{-1}  \in \Omega^1[[t]].
    \end{equation}
    Notice that $\varphi_{A_0}(t[\alpha])$ is flat and therefore $\til{\psi}_{A_0}(t\alpha)$ is also.
    Expanding $\til{\psi}_{A_0}(t\alpha)$ in powers of $t$, we obtain 
\begin{multline}
    \til{\psi}_{A_0}(t\alpha) = A_0 + \sum_{k\geq 1}t^k\left(\frac{
    (-1)^{k-1}
    }{k!}\mathrm{ad}_\beta^{k-1} d_{A_0}\beta +\sum_{j+l = k, j\geq 0, l\geq 1} \frac{
    (-1)^{j}
    }{j!}\mathrm{ad}_{\beta}^j \alpha_h^{(l)}\right) 
\end{multline}
with $\alpha_h^{(l)}$ as in Proposition \ref{prop: deformation of A0} applied to $\sfa=p_{A_0}\alpha$.
    In this way, we have constructed the map $\til{\psi}_{A_0}(t\alpha) \colon \Omega^1_{A_0-\mr{cl}} \to \Omega^1[[t]]$ lifting any infinitesimal deformation of $A_0$, i.e. a closed form, to a formal deformation.    
\subsubsection{Convergence in Banach norm}
It is natural to ask whether the formal deformations defined in the previous sections actually converge. One instance where this happens is the case when $K_{A_0}$ is continuous with respect to a Banach norm on $\Omega^\bullet$. 
\begin{prop}\label{prop: banach convergence}
Suppose $K_{A_0}$ is bounded with respect to $||\cdot||_{l}$ for some $l \geq 2$ on $\Omega^\bullet$ (Assumption \ref{ass: Banach}). Then there is an open interval $I$ around zero such that  $\til{\delta}_{A_0}$ defined by \eqref{eq: def tilde delta}, seen as a formal power series in $t$, converges in $||\cdot||_l$ for $t\in I$.     
\end{prop}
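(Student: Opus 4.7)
My plan is to give a direct termwise estimate for the series \eqref{eq: def tilde delta} and show it is dominated by a convergent geometric series. First I would unpack the combinatorics of a single tree contribution. A binary rooted tree $T$ with $n$ leaves has exactly $n-1$ internal vertices and $n-2$ internal edges; together with the root half-edge this gives $n-1$ edges labeled by $K_{A_0}$ and $n-1$ vertices labeled by the bracket $[-,-]$, while the $n$ leaves carry $\alpha$. Hence
\[
\|\til{\mu}_T(\alpha,\ldots,\alpha)\|_B \;\leq\; \|K_{A_0}\|_{\mr{op}}^{\,n-1}\, C_{[-,-]}^{\,n-1}\, \|\alpha\|_B^{\,n},
\]
where $\|K_{A_0}\|_{\mr{op}}$ is the operator norm from Assumption \ref{ass: Banach} and $C_{[-,-]}$ is a bound for the bilinear bracket $[-,-]\colon \Omega^\bullet_B\times\Omega^\bullet_B\to\Omega^\bullet_B$ in the $B$-norm.

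Next I would control the number of trees. The cardinality of $T_n$ (isomorphism classes of binary rooted trees with $n$ leaves) is bounded by the Catalan number $C_{n-1}=\frac{1}{n}\binom{2n-2}{n-1}\sim 4^{\,n}/(n^{3/2}\sqrt{\pi})$, so there is a constant $D$ with $|T_n|\leq D\cdot 4^{\,n}$. Combining these ingredients,
\[
\Big\|\sum_{T\in T_n}(-1)^{n-1}\til{\mu}_T(\alpha,\ldots,\alpha)\Big\|_B \;\leq\; D\cdot 4^n\cdot \|K_{A_0}\|_{\mr{op}}^{\,n-1}\, C_{[-,-]}^{\,n-1}\, \|\alpha\|_B^{\,n}\;=\;D'\cdot \big(C\,\|\alpha\|_B\big)^{n}
\]
for constants $D',C>0$ independent of $n$. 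Therefore the series $\til{\delta}_{A_0}(t\alpha)=\sum_{n\geq 1}t^n\!\sum_{T\in T_n}(-1)^{n-1}\til{\mu}_T(\alpha,\ldots,\alpha)$ is dominated in $\|\cdot\|_B$ by the geometric series $\sum_n D'(C\|\alpha\|_B|t|)^n$. Taking $I=(-\rho,\rho)$ with $\rho<(C\|\alpha\|_B)^{-1}$ (or, if one wants a uniform statement, restricting also to $\|\alpha\|_B\leq R$ and taking $\rho<(CR)^{-1}$) yields absolute convergence in the Banach space $\Omega^1_B$.

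The main -- and essentially only nontrivial -- point in carrying out this plan is justifying the bound $\|[\alpha,\beta]\|_B\leq C_{[-,-]}\|\alpha\|_B\|\beta\|_B$, i.e.\ continuity of the Lie bracket in the chosen Banach norm. In the applications of interest $\|\cdot\|_B$ is a Sobolev norm $H^s$ on forms with $s>3/2$, where boundedness of the bracket follows from the standard Sobolev multiplication theorem on the compact $3$-manifold $M$; this is the implicit companion to Assumption \ref{ass: Banach} and the only place where the concrete choice of Banach norm enters the estimate.
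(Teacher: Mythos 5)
Your proof is correct and follows essentially the same route as the paper: a termwise estimate combining the Catalan-number bound $|T_n|\lesssim 4^n$ with boundedness of $K_{A_0}$ and of the bracket, yielding domination by a geometric series. Your explicit remark that continuity of the bracket in $\|\cdot\|_B$ is the one nontrivial input (supplied by the Sobolev multiplication theorem in the cases of interest) is a point the paper passes over by simply invoking a bound "by the structure constants," so your write-up is, if anything, slightly more careful on that step.
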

Notice that $i_{A_0}$ and $p_{A_0}$ are automatically bounded, since we are assuming that $M$ is compact which implies that $H^1_{A_0}$ is finite-dimensional.
\begin{proof}
    The summand $\alpha^{(n)}$ is a sum over binary trees with $n$ leaves, involving at most $n$ applications of $K_{A_0}$. The number of such binary trees is given the $(n-1)$-th Catalan number, and by assumption $K_{A_0}$ is bounded in $||\cdot||_l$, say with constant $L$. The Lie bracket $[\alpha,\alpha]$ is bounded by \begin{equation}
        ||[\alpha,\alpha]||_l \leq F ||\alpha||^2_l
    \end{equation} for some other constant $F$.\footnote{This is because for $k \cdot p \geq n$, one has, for $f,g \in W_{k,p}(M)$ the estimate $||f \cdot g || \leq \til{C} ||f||_{k,p} ||g||_{k,p}$ which in turns follows from the Sobolev embedding $W_{k,p}(M) \hookrightarrow C^{0}(M)$, which one has for $kp > n$ on a compact Riemannian manifold. See e.g. \cite[Theorem 2.10, Theorem 2.20]{Aubin}.} We therefore have the simple estimate 
    $$||\alpha^{(n)}||_l \leq C_{n-1} (1+L)^n (1+F)^n ||\alpha||^n_l < \til{C}^n ||\alpha||^n_l$$
    with $\til{C} = 4 (1+L)(1+F)$. Therefore, the sum converges in $||\cdot||_l$ for $|t|< 1/ (\til{C}||\alpha||_l)^n$. 
\end{proof}
Equivalently, we can set $t=1$ by letting $||\alpha||_l$ small enough, i.e. there exists some open set  $\til{U}_{A_0} \subset \Omega^1$  on which the formal power series  $\til{\delta}_{A_0}(\alpha)$ converges,  i.e. there is a map 
\begin{equation}
    \til{\delta}_{A_0}\colon \Omega^1_l \supset \til{U}_{A_0} \to \Omega^1_l
\end{equation} given by \eqref{eq: def tilde delta} with $t=1$. 
\begin{rem} \label{rem: frechet convergence}A priori, the limit of this power series lives in the completion $\Omega^{1}_l$ of $\Omega^1$ with respect to $||\cdot||_l$. 
However, under the additional Assumption \ref{ass: Frechet} of Fr\'echet continuity the sum actually converges in the Fr\'echet topology and $\til{\delta}_{A_0}$ actually lands in $\Omega^1$. 
\end{rem}

Convergence of the map $\til{\delta}_{A_0}$ implies that there are well-defined maps
\begin{align}
&\delta_{A_0} = \til{\delta}_{A_0} \circ i_{A_0} \colon U_{A_0} \to \Omega^1,\\ 
&\varphi_{A_0} = A_0 + \delta_{A_0} \colon U_{A_0} \to \FC^\mr{sm}, \\ 
& \til{\psi}_{A_0}  \colon \til{U}_{A_0} \cap\; \Omega^1_{d_{A_0}\mr{-cl}} \to \FC^\mr{sm}, \\
&\til{\varphi}_{A_0} = A_0 + \til{\delta}_{A_0} \colon \til{U}_{A_0} \to \Omega^1.
\end{align}

 \subsubsection{Kuranishi map}
The map $\til\delta_{A_0}$ admits a compositional inverse known as Kuranishi map.\footnote{It appeared in the context of deformations of complex structures in \cite{Kuranishi1965}. }
\begin{defn}
    We define the Kuranishi map $\til{\kappa}_{A_0}\colon \Omega^1 \to \Omega^1$ to be given by 
\begin{equation}
 \til{\kappa}_{A_0}(\ddd) = \ddd + \frac{1}{2}K_{A_0}[\ddd,\ddd] \label{eq: kuranishi map}.
\end{equation}
\end{defn}
This map is as regular as $K_{A_0}$, in particular if $K_{A_0}$ is $l$-Sobolev bounded (for $l \geq 2$) then so is $\til\kappa_{A_0}$ and if $K_{A_0}$ is Fr\'echet continuous then so is $\til{\kappa}_{A_0}$.
Some salient properties of the this map are:
\begin{prop}\label{prop: kuranishi} Assume $r_{A_0}$ satisfies the $l$-Sobolev boundedness assumption for some $l \geq 2$ (Assumption \ref{ass: Banach}). Then $\til{\kappa}_{A_0}$ extends to the completion $\Omega^1_l$ and we have: 
\begin{enumerate}[i)]
    \item The map $\til{\kappa}_{A_0} \colon \Omega^1_l \to \Omega^1_l$ is invertible in a neighborhood $V_{A_0}$ of $ 0\in \Omega^1_l$, and 
     $$V_{A_0} \supset \{\ddd \in \Omega^1|\; ||K_{A_0}\mathrm{ad}_\ddd||_\mr{op} < 1\}. $$
     Here $||\cdot||_\mr{op}$ denotes the operator norm induced by the Sobolev norm $||\cdot||_l$.
    \item 
    Recall that there exists an open set $U_{A_0}$ on which $\til{\delta}_{A_0}$ converges. We have $U_{A_0} \subset V_{A_0}$, and on $U_{A_0}$ the inverse of $\til{\kappa}_{A_0}$ is given by the map $\til{\delta}_{A_0}$. 
    \item Define the Maurer-Cartan set $MC_{A_0} \subset \Omega^1_{A_0}$ by 
    \begin{equation}
        MC_{A_0} = \{m \in \Omega^1_{A_0} \;|\; \dd_{A_0}m + \frac12[m,m] = 0\}.
    \end{equation}
    Then $\til{\kappa}_{A_0}(MC_{A_0}) \subset \Omega^1_{A_0-\mr{cl}}$, i.e. $m \in MC_{A_0}$ implies $\dd_{A_0}\til\kappa_{A_0} m = 0$.
    \item We have
    \begin{equation}
        \til{\varphi}_{A_0}^{-1}(A_1) = \til\kappa_{A_0}(A_1-A_0).\label{eq: til phi inverse}
    \end{equation}
\end{enumerate}
\end{prop} 
The proof is in Appendix \ref{app: proofs 1}.
\subsubsection{Banach and Fr\'echet smoothness}
We know that $\til{\psi}_{A_0}(\alpha)$ is flat for any closed 1-form $\alpha$ in its domain. We claim that locally, it is actually invertible and thus provides $\FC$ with the structure of a Banach manifold around $A_0$.
\begin{prop}\label{prop: bij}
   \begin{enumerate}[i)]
   \item There exists a neighborhood $U_{A_0}$ of $0$ in $\Omega^1_{A_0-\mr{cl},l}$ such that the map $\til{\psi}_{A_0}\colon U_{A_0} \to \FC_l\subset \Omega^1_l$ is a homeomorphism onto its image.  
   \item If $A_1 \in \til{\psi}_{A_0}(U_{A_0})$, then the composition $\til\psi_{A_1}^{-1} \circ \til\psi_{A_0}$ is smooth\footnote{In the Banach space setting this means smoothness w.r.t. the Fr\'echet derivative.} on its domain. 
   \end{enumerate} 
\end{prop}
In particular, $\FC^\mr{sm}_l$ is a Banach manifold and $\{\til\psi_{A_0}\}_{A_0 \in \FC^\mr{sm}_l}$ is an atlas for $\FC^\mr{sm}_l$. We give the proof in Appendix \ref{app: proofs 2}. We then have the following corollary. 
    \begin{prop}\label{prop: til delta MC}
    There is a neighborhood $0 \in V_{A_0} \subset \Omega^1_{A_0-\mr{cl},l}$ such that $\til{\delta}_{A_0}$ 
    applied to a closed 1-form $\alpha \in V_{A_0}$
    defines a Maurer-Cartan element in $\Omega^1_{A_0,l}$: 
    \begin{equation} 
    d_{A_0}\til{\delta}_{A_0} (\alpha) + \frac{1}{2}[\til{\delta}_{A_0}(\alpha),\til{\delta}_{A_0} (\alpha)] = 0.\label{eq: MC delta tilde}
    \end{equation}
    \end{prop}
    \begin{proof}
    We know that $\FC^\mr{sm}_l$ is a Banach manifold and for any 
    $A_0 \in \FC^\mr{sm}$, the map $A_1 \mapsto \til\kappa_{A_0}(A_1 - A_0)$ sends $A_1 \in\FC^\mr{sm}_l$ to $\Omega^1_{A_0-\mr{cl},l}$ (point iii of Proposition \ref{prop: kuranishi}). We can now invoke the inverse function theorem to conclude that this map is surjective onto a neighborhood $V_{A_0}$ of zero. We then have for all $\alpha \in V_{A_0}$ that $\alpha = \til\kappa_{A_0}(m)$, with $m \in MC_{A_0}$. Therefore, $\til\delta_{A_0}(\alpha) = \til\delta_{A_0}\til\kappa_{A_0}(m) = m$ satisfies the Maurer-Cartan equation. 
\end{proof}
This implies the following corollary. 
\begin{cor}\label{cor: banach smoothness}
    For each point $A_0 \in \FC^{\mr{sm}}_l$, the map 
    $\til\varphi_{A_0}\colon \Omega^1_{A_0-\mr{cl}} \supset U_{A_0} \to \FC_l^\mr{sm}$
    is a chart with inverse given by Equation \eqref{eq: til phi inverse}. 
\end{cor}

In particular, Theorem \ref{thm: banach smoothness} follows. 
Finally, if $K_{A_0}$ is Fr\'echet continuous, we also have smoothness of $\FC^\mr{sm}$ in the Fr\'echet topology:   
\begin{cor}\label{cor: Frechet smoothness}
    The set of smooth points $\FC^{\mr{sm}}$ is a Fr\'echet manifold. If $K_{A_0}$ satisfies Assumption \ref{ass: Frechet}, then for each point $A_0 \in \FC^{\mr{sm}}$, restriction of $\til{\varphi}_{A_0}$ to $\FC^\mr{sm}$ defines a chart 
    $\til\varphi_{A_0}\colon \Omega^1_{A_0-\mr{cl}} \supset U_{A_0} \to \FC^\mr{sm}$.
\end{cor}
\begin{proof}
    The maps $\til\kappa_{A_0}$ and $\til\delta_{A_0}$ are both Fr\'echet continuous and inverse to each other, hence they are inverse homeomorhisms in the Fr\'echet topology when restricted to $\FC^\mr{sm}$ and $\Omega^1_{A_0-\mr{cl}}$ respectively. Also, note that $\til{\kappa}_{A_0}$ and $\til\delta_{A_1}$ are both smooth when considered on open neighborhoods of zero in $\Omega^1$. Therefore, if $A_1 \in \til\varphi_{A_0}(U_{A_0})$, then the composition $\til\varphi^{-1}_{A_1} \circ \til\varphi_{A_0}$ is smooth as a map between neighborhoods of zero in $\Omega^1$.\footnote{In Fr\'echet spaces derivatives are defined using directional (Gateaux) derivatives for which the chain rule holds.} Therefore, its restriction to the subspace $\Omega^1_{A_0-\mr{cl}}$ is also smooth.  
\end{proof}
This finishes the proof of Theorem \ref{thm: Frechet smoothness}. 
\begin{rem}
    Given a (small) closed form $\alpha \in \Omega^1_{A_0-\mr{cl}}$, we now have two different ways to lift it to a flat connection, namely as $\til{\psi}_{A_0}(\alpha)$, or $\til{\varphi}_{A_0}(\alpha)$. By definition, their restrictions to $\operatorname{Im} i_{A_0}$ agree, and they agree up to first order. However, from second order, they disagree. Considering for example $\alpha= \dd_{A_0}\beta$, we have 
    $$\til\psi_{A_0}(\dd_{A_0}\beta) = A_0 + \dd_{A_0}\beta -\frac{1}{2}[\beta,d_{A_0}\beta] + \frac12[\beta,[\beta,\dd_{A_0}\beta]] + O(\beta^4)$$ 
    whereas the sum-over-trees map is 
    \begin{multline*}
        \til{\varphi}_{A_0}(\dd_{A_0}\beta) = A_0 + \dd_{A_0}\beta - \frac{1}{2}K_{A_0}[\dd_{A_0}\beta,\dd_{A_0}\beta] + \frac12K_{A_0}[\dd_{A_0}\beta, K_{A_0}[\dd_{A_0}\beta,\dd_{A_0}\beta]] \\ 
 = A_0 + \dd_{A_0}\beta - \frac12[\beta,d_{A_0}\beta] + \frac12 d_{A_0}K_{A_0}[\beta,d_{A_0}\beta] +\frac12 i_{A_0}p_{A_0}[\beta,d_{A_0}\beta] + O(\beta^3).
    \end{multline*}
\end{rem}
{


\subsection{Irreducible points}
\begin{defn}
    We say that a flat connection is \emph{irreducible} if  $H^0_{A_0}(M,\g) = 0$.
\end{defn}

The following example shows that connections can define smooth points without being irreducible.
\begin{expl}
    Let $G = SU(2)$. For $p > 2$ prime, on a lens space $L(p,q)$ with fundamental group $\Z_p$, there are, up to conjugation $\frac{p+1}{2}$ different representations labeled by $k = 0,1,\ldots, \frac{p-1}{2}$ defined by 
    $$\rho_k(\gamma) = \begin{pmatrix} e^{2\pi i k/p} & 0 \\ 0 & -e^{2\pi i k/p} \end{pmatrix},$$ 
    with $\gamma$ the generator of the fundamental group.
    Clearly those representations are reducible, but for $k\neq 0$ all the induced $L_\infty$ operations vanish as $H^1 = 0$ and $H^0 = \mathfrak{t}$ is the abelian subalgebra of diagonal matrices. 
\end{expl}
We denote the irreducible locus by $\FC^\mr{irr}$. On the irreducible locus, the quotient of the gauge group by its center acts freely and properly. Therefore, the quotient of the smooth irreducible locus 
\begin{equation}
    \FC' = \FC^\mr{sm} \cap \FC^\mr{irr}
\end{equation}
by the gauge group is a smooth manifold that we denote by $\M' \subset \M$.

\subsection{Transporting ``harmonic'' forms}  

For a given smooth flat connection $A_0$ and SDR data $(i_{A_0},p_{A_0},K_{A_0})$ we will call forms in $\operatorname{Im}(i_{A_0}) = \ker d_{A_0} \cap \ker K_{A_0}$ ``harmonic.''\footnote{For Hodge SDR data, this is the space of harmonic forms in the usual sense of the word.}
\begin{prop}\label{prop: deformation harmonic}
Suppose we are given a smooth flat connection $A_0$ and a harmonic 1-form $\sfa$. Let 
\begin{equation}\label{prop: deformation harmonic A_t = A_0+...}
A_t = A_0 + \til\delta_{A_0}(t\sfa) = \til{\varphi}_{A_0}(t\sfa)
\end{equation}
be the path of flat connections given 
by Proposition \ref{prop: deformation of A0}. 
Let $\chi$ be another harmonic form. Then, there exists a deformation 
\begin{equation}\label{chi_t}
    \chi_t = \sum_{k\geq 0} t^k \chi^{(k)}
\end{equation}
with $\chi^{(0)}= \chi$, such that $d_{A_t}\chi_t =  K_{A_0}\chi_t = 0$.
\end{prop}
\begin{proof}
    We have 
    $$d_{A_t} \chi_t = \left(d_{A_0} + \sum_{k\geq 1}t^k\mathrm{ad}_{\alpha^{(k)}}\right)\left(\sum_{l\geq 0}t^l\chi^{(l)}\right) = 0$$ 
    and again we can look at the equations in powers of $t$: 
    \begin{equation}
        d_{A_0}\chi^{(n)} = -\sum_{k+k' = n, k\geq 1} \mathrm{ad}_{\alpha^{(k)}}\chi^{(k')}.
    \end{equation}
Similarly to the proof of Proposition \ref{prop: deformation of A0}, the right hand side here is a representative of the induced $L_\infty$-operation $l_n'(\sfa,\ldots,\sfa,\chi)$, its vanishing in cohomology implies that it is exact and that we can set 
\begin{equation}
    \chi^{(n)} = -K_{A_0}\sum_{k+k' = n, k\geq 1} \mathrm{ad}_{\alpha^{(k)}}\chi^{(k')}. 
\end{equation}
\end{proof}
\begin{rem}\label{rem: sum over trees d til varphi}
    Again, one has a similar sum-over-trees formula for $\chi^{(n)}$, namely it is a sum over rooted binary trees  with $n$ leaves where one leaf is labeled with $\chi$. I.e. we have that 
    \begin{equation}
        \chi_t = (d\til{\varphi}_{A_0})_{t\sfa}(\chi).
    \end{equation}
\end{rem}

\begin{rem}\label{rem 2.18}
    One can also understand the deformation (\ref{chi_t}) of a harmonic form $\chi$ as $\chi_t=i_t ([\chi])$ with $i_t=\sum_{n\geq 0}(-K_{A_0}\ad_{\til\delta_{A_0}(t\sfa)})^n\circ i_{A_0}$ the deformation of inclusion $i_{A_0}\colon H_{A_0}^\bt\hookrightarrow \Omega^\bt$ of cohomology as harmonic forms, induced via homological perturbation lemma from the deformation of the differential $d_{A_0}\ra d_{A_t}=d_{A_0}+\ad_{\til\delta_{A_0}(t\sfa)}$ on $\Omega^\bt$. Note that the corresponding induced differential on $H_{A_0}^\bt$ is $d'_t\colon=\sum_{n\geq 1}\frac{1}{n!} l'_{n+1}(\underbrace{t\sfa,\ldots,t\sfa}_n,-)$; it vanishes since $A_0$ is assumed to be a smooth point. Hence, $d_{A_t}i_t [\chi]= i_t d'_t [\chi]=0$.
\end{rem}

Under the additional assumption that $A_0$ is irreducible, Proposition \ref{prop: deformation harmonic} can be generalized to paths $A_t$ not restricted to the form (\ref{prop: deformation harmonic A_t = A_0+...}):

\begin{prop}\label{prop: deformation of harmonic forms with any path of A}
    Let $A_0\in \FC'$ be a smooth irreducible flat connection and let $A_t=A_0+\alpha_t$ with $\alpha_t 
    $ be a path of flat connections with $\alpha_0=0$. Fix a harmonic form $\chi$. Then the path of forms
    \begin{equation}\label{prop: deformation of harmonic forms with any path of A, chi_t}
        \chi_t=\sum_{k\geq 0} (-K_{A_0} \ad_{\alpha_t})^k\circ \chi
    \end{equation}
    starting at $\chi_{0}=\chi$ satisfies  $d_{A_t}\chi_t=K_{A_0}\chi_t=0$.
\end{prop}
\begin{proof}
Property $K_{A_0}\chi_t=0$ is immediate by construction (\ref{prop: deformation of harmonic forms with any path of A, chi_t}). The nontrivial part is proving $d_{A_t}\chi_t=0$.

    Due to irreducibility of $A_0$, we only have harmonic 1-forms and 2-forms. We consider these two cases for $\chi$ separately.

    Let $\chi$ be a harmonic 1-form. Then $\chi_t = (d\til\varphi_{A_0})_{\sfa_t}(\chi)$, where $\sfa_t$ is a path of 
    closed forms such that $\til\varphi_{A_0}(\sfa_t)=\alpha_t$, cf. Theorem \ref{thm: banach smoothness}. Since $\til\varphi_{A_0}(\alpha)$ maps closed 1-forms to flat connections, its derivative maps closed 1-forms to first order deformations of flat connections, i.e., $d_{\til\varphi_{A_0}(\alpha)}$-closed forms.
    Therefore, $d_{A_t}\chi_t=0$.

    Now, let $\chi$ be a harmonic 2-form. We have
    \begin{multline*}
        d_{A_0}\chi_t=\underbrace{d_{A_0}\chi}_0-\sum_{k\geq 1} d_{A_0}K_{A_0}\ad_{\alpha_t}(-K_{A_0}\ad_{\alpha_t})^{k-1}\circ\chi\\
        = - \sum_{k\geq 1}\ad_{\alpha_t}(-K_{A_0}\ad_{\alpha_t})^{k-1}\circ\chi=- \ad_{\alpha_t} \chi_t,
    \end{multline*}
    which implies $d_{A_t}\chi_t=0$. 
    Here in transition to the second line, we note that the operator $d_{A_0}K_{A_0}=\mr{id}-i_{A_0}p_{A_0}-K_{A_0}d_{A_0}$ acts on a 3-form $\beta$, and thus this operator acts as identity ($p_{A_0}\beta=0$ by irreducibility of $A_0$ and $d_{A_0}\beta=0$ by degree reason).
\end{proof}

\subsection{Exponential maps} \label{ssec: exp maps}
The upshot of the previous discussion is the following. Suppose that we are given a smooth family $(i,p,K)$ of SDR data  on the smooth irreducible locus $\FC' \subset \FC$. Then, we have two exponential maps 
    $\til{\varphi}$ and $\til{\psi}$ on $\FC'$, defined on an open neighborhood $\til{U} \subset T\FC'$ of the zero section, which agree on $\til{U} \cap \im i$. We denote by $\mathbb{H}$ the cohomology bundle over $\FC'$ - the graded vector bundle with fiber over $A_0$ given by $H^\bullet_{A_0}$, and by $U \subset \mathbb{H}$ the preimage of $\til{U}$ under $i$. Then, by restriction of $\til{\varphi}$, we have the map $\varphi = \til{\varphi} \circ i \colon U \to \FC'$. 
\begin{lem} Suppose the family $(i,p,K)$ is equivariant with respect to the action of the gauge group, i.e. 
    for all $\sfa \in H^\bullet_{A_0}$ and $\alpha \in \Omega^\bullet$ we have 
    \begin{equation}
        i_{{}^\sfg A_0}({}^\sfg a) = {}^\sfg (i_{A_0}a), \quad 
        p_{{}^\sfg A_0}({}^\sfg \alpha) = {}^\sfg (p_{A_0}\alpha), 
        \quad K_{{}^\sfg A_0}({}^\sfg \alpha) = {}^\sfg (K_{A_0}\alpha).
    \end{equation}
Then the map $\varphi$ is equivariant with respect to gauge transformations, 
    \begin{equation}
        \varphi_{{}^\sfg A_0}({}^\sfg \alpha) = {}^\sfg (\varphi_{A_0}\alpha).
    \end{equation}
\end{lem}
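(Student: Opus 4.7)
My plan is to unwind the definitions and reduce the equivariance of $\varphi$ to equivariance of the $1$-form $\delta_{A_0}(\alpha)$ under $\mr{Ad}_\sfg$. Recall that gauge transformations act on connections by $A\mapsto \sfg A \sfg^{-1}+\sfg\, d\sfg^{-1}$ but on honest $\g$-valued forms by the adjoint action $\alpha\mapsto\mr{Ad}_\sfg\alpha$. Since $\varphi_{A_0}(\alpha) = A_0+\delta_{A_0}(\alpha)$ with $\delta_{A_0}(\alpha)\in\Omega^1(M,\g)$, the identity ${}^\sfg(\varphi_{A_0}(\alpha)) = \sfg A_0\sfg^{-1}+\sfg d\sfg^{-1}+\mr{Ad}_\sfg\delta_{A_0}(\alpha) = ({}^\sfg A_0)+\mr{Ad}_\sfg\delta_{A_0}(\alpha)$ shows that the claim reduces to
\begin{equation}
    \delta_{{}^\sfg A_0}(\mr{Ad}_\sfg \alpha) \;=\; \mr{Ad}_\sfg\, \delta_{A_0}(\alpha). \label{eq:plan-reduction}
\end{equation}

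Next, I would verify \eqref{eq:plan-reduction} term by term in the sum-over-trees expansion $\delta_{A_0}(\alpha) = \sum_{k\geq 1}(-1)^{k-1}\sum_{T\in T_k}\mu_T(\alpha,\ldots,\alpha)$ of Proposition \ref{prop: deformation of A0}. For a rooted binary tree $T$, the operation $\mu_T$ is obtained by placing $i_{A_0}$ on the leaves, the Lie bracket $[-,-]$ on internal vertices, and $K_{A_0}$ on internal edges and the root. Each of these three ingredients is gauge-equivariant: equivariance of $i_{A_0}$ and $K_{A_0}$ is the hypothesis, and equivariance of the Lie bracket,
\begin{equation*}
    [\mr{Ad}_\sfg\alpha_1,\mr{Ad}_\sfg\alpha_2] = \mr{Ad}_\sfg[\alpha_1,\alpha_2],
\end{equation*}
holds because $\mr{Ad}_\sfg$ is a fibrewise Lie algebra automorphism. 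A straightforward induction on the number of internal vertices of $T$ (or equivalently on $k$), propagating $\mr{Ad}_\sfg$ from the leaves down to the root through each node and edge, then yields $\mu_T^{{}^\sfg A_0}(\mr{Ad}_\sfg\alpha,\ldots,\mr{Ad}_\sfg\alpha) = \mr{Ad}_\sfg\,\mu_T^{A_0}(\alpha,\ldots,\alpha)$ for every $T$. Summing over $T\in T_k$ and $k\geq 1$ gives \eqref{eq:plan-reduction}.

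There is essentially no obstacle beyond the bookkeeping described above; the only point requiring care is to keep track of the two distinct actions of $\sfg$, namely the affine action on connections (producing the inhomogeneous term $\sfg d\sfg^{-1}$) versus the linear action $\mr{Ad}_\sfg$ on the tensorial object $\delta_{A_0}(\alpha)$. Once this is separated cleanly, as in the reduction to \eqref{eq:plan-reduction}, the inhomogeneous term is absorbed entirely into the transformation of $A_0$ itself, and the tree-by-tree argument finishes the proof.
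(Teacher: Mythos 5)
Your proof is correct and follows essentially the same route as the paper's (one-line) proof, which likewise observes that the only ingredients of $\varphi$ are $i_{A_0}$, $K_{A_0}$ and the Lie bracket, all of which are gauge-equivariant; your version simply spells out the tree-by-tree induction and the useful bookkeeping point that the inhomogeneous term $\sfg\, d\sfg^{-1}$ is absorbed into the transformation of $A_0$.
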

\begin{proof}
    The only ingredients of the map $\varphi$ are the chain homotopy $K_{A_0}$ and the Lie bracket, which are both equivariant with respect to gauge transformations. 
\end{proof} 
In particular, the map $\varphi$ descends to the moduli space and defines a generalized exponential map that we denote by $\underline{\varphi}$: 
\begin{equation}\label{phi underline}
    \underline{\varphi} \colon U \subset T\M \to \M, \qquad ([A_0],[\alpha] \mapsto [\varphi_{A_0}\alpha]).
\end{equation}

\subsubsection{Grothendieck connection} 
The exponential maps $\til{\varphi}$ and $\underline{\varphi}$ induce   connections on the tangent bundles (viewed as fiber bundle) of $\FC'$ and $\M'$. These connections are sometimes called the \emph{Grothendieck connections} (see \cite{Cattaneo2001a},\cite{Cattaneo2002}, \cite{Cattaneo2019}, \cite{Cattaneo2020}). Here we present a slightly 
different approach. Namely, let $[A],[\til{A}] \in \M'$ and $\alpha \in T_A\M'$. If $[A]$ and $[\til{A}]$ are 
close enough, there exists $\til{\alpha} \in T_{\til{A}}\M'$ such that 
\begin{equation}
    \underline{\varphi}_{\til{A}}\til{\alpha} = \underline{\varphi}_A\alpha \qquad\text{or} \qquad \til{\alpha} = \underline{\varphi}^{-1}_{\til{A}}(\ul{\varphi}_A\alpha). 
\end{equation}
\begin{defn}\label{def: Grothendieck connection}
    The \emph{Grothendieck connection} $\nabla^G$ is the fiber bundle connection on $U \subset T\M'$ whose parallel transport of $\alpha \in U_A$ from $A$ to $\til{A}$ is given by 
    \begin{equation}\label{Grothendieck conn from phi underline}
        \til{\alpha} = \underline{\varphi}^{-1}_{\til{A}}(\underline{\varphi}_A\alpha). 
    \end{equation}
\end{defn}
In other words, if $A_t$ is a path of flat connections starting at $A$, then the horizontal lift of this path starting at $\alpha$ is given by $\alpha_t = \underline{\varphi}_{A_t}^{-1}\underline{\varphi}_A\alpha$. 
From the definition, is it obvious that this connection is flat, since its parallel transport between any two (close enough) points $[A],[\til{A}]$ is independent of the choice of a path between them. 

\begin{rem}\footnote{See e.g. \cite{Bonechi2012}.}
 The role of $\nabla^G$ is that it ``recognizes'' global objects. 
 More explicitly, $\nabla^G$ induces a connection in the bundle $\wh{\mr{Sym}^\bt}T^*\M'$ of formal functions on $\M'$ -- let us also denote it $\nabla^G$ by abuse of notations. Then a section $\sigma$ of $\wh{\mr{Sym}^\bt} T^*\M'$  (a formal function) is of the form $\mr{T}\underline{\varphi}^* f$ for some $f\in C^\infty(\M')$ (an actual, ``global,'' function) if and only if $\sigma$ is horizontal w.r.t. $\nabla^G$:  
 \begin{equation}
 \nabla^G\sigma=0.
 \end{equation} 
 Here $\mr{T}$ stands for the Taylor expansion of a function on $U$ in vertical (tangent) coordinates on $T\M'$.
 This discussion applies to any manifold with a formal exponential map, not just $(\M',\underline{\varphi})$. Also,
 one can replace functions with half-densities (especially relevant for BV formalism), differential forms, spinors, etc.
\end{rem}


\subsection{Gauge fixing operators}
We now specialize to SDR data defined by gauge-fixing operators. 
\begin{defn}[Gauge-fixing operator]
    We say that $\sh\colon \Omega^{\bullet}(M,\g) \to \Omega^{\bullet -1}(M,\g)$ is a gauge fixing operator for $d_{A_0}$ if the operator\footnote{Here we are using the graded commutator. Since $d_{A_0}$ and $\sh$ have degree $+1$ and $-1$ respectively, this means $[d_{A_0},\sh] = d_{A_0}\sh + \sh 
    d_{A_0}$.} 
    \begin{equation} \HHH=\HHH_{d_{A_0},\sh}:=[d_{A_0},\sh] \colon \Omega^{\bullet}(M,\g) \to \Omega^{\bullet}(M,\g) 
    \end{equation}
    is a generalized Laplacian, i.e. has symbol $\sigma_2(H)(x,\xi) = |\xi|^2$.
\end{defn}
\begin{expl}
    If $g$ is a Riemannian metric on $M$, then the formal adjoint $d^*_{A_0}$ of $d_{A_0}$ is a gauge fixing operator, with $\HHH_{d_{A_0},d^*_{A_0}} = \Delta_{A_0}$ the (twisted) Hodge-de Rham Laplacian. In fact, if $A'$ is a different flat connection, then $d^*_{A'}$ is still a gauge-fixing operator for $d_{A_0}$, because the difference 
    $$\HHH_{d_{A_0},d^*_{A_0}} - \HHH_{d_{A_0},d^*_{A'}} = [d_{A_0}, \ad_{A_0-A'}^*] $$ 
    is a first-order differential operator. 
\end{expl}


\subsubsection{Good gauge fixing operators}
Let $\sh$ be a gauge fixing operator for $d_{A_0}$ and $\HHH= [d_{A_0},\sh]$  the corresponding generalized Laplacian.  \begin{defn} 
We say that $\sh$ is a good gauge fixing operator if
\begin{enumerate}
\item $\sh$ is skew-selfadjoint with respect to Poincar\'e pairing,
    \item $\sh^2 = 0$, 
    \item the eigenvalues of $\HHH$ have nonnegative real part,
    \item there is a Hodge decomposition
     \begin{equation}
     \Omega = \ker \HHH \oplus \underbrace{\operatorname{im} d_{A_0} \oplus \operatorname{im} h}_{\operatorname{im} \HHH} , \label{eq: decomposition disj}
     \end{equation}
     \item we have $\ker \HHH \cong H^\bullet_{A_0}$.
\end{enumerate}
\end{defn}
Denote $P$  the 
projection onto the kernel of $\HHH$ along the image of $\HHH$. 
 The operator $\HHH + P$ is invertible and we denote $G:=(\HHH+P)^{-1}$ its inverse. It satisfies 
 \begin{equation}
     \HHH G = G \HHH = \mathrm{id} - P. 
 \end{equation}
Also, defining $K = \sh \circ G$ we have $[d_{A_0},K] = \operatorname{id} - P$. 

For good gauge-fixing operators, we thus have a strong deformation retraction (SDR)
\begin{equation}\label{eq: SDR good gf}
\begin{gathered}
    i\colon H^\bullet_{A_0}\cong \ker \HHH \hookrightarrow \Omega^\bt, \\
    p\colon \Omega^\bt \twoheadrightarrow \ker \HHH \cong H^\bullet_{A_0}, \\ 
    K= \sh \circ G \colon \Omega^\bullet \to \Omega^{\bullet - 1}.
\end{gathered}
\end{equation}
\begin{expl}[Hodge decomposition] \label{ex: hodge decomposition}
The main example of a good gauge fixing operator is, given the choice of a Riemannian metric on $M$, the codifferential $d^*_{A_0}$. The fact that $d^*_{A_0}$ is a good gauge-fixing operator follows from the well-known Hodge decomposition.\footnote{ 
See e.g. \cite[Section 3]{Ray1971}, \cite{Mathai2011} for details on Hodge-de Rham decomposition theorem for forms on a Riemannian manifold with coefficients in a flat bundle.
}
In this case \begin{itemize}
\item The operator $\HHH = \Delta_{A_0}$ is the Hodge-de Rham Laplacian (twisted by the flat connection $A_0$),
\item the map $i_{A_0} \colon H^\bullet_{A_0} \to \ker \Delta_{A_0}$ is the isomorphism between de Rham cohomology and harmonic forms, composed with the inclusion into $\Omega^\bullet$,
    \item  the decomposition (\ref{eq: decomposition disj}) is orthogonal,
     \item and $p_{A_0} = i_{A_0}^{-1} P_{A_0}$ is the orthogonal projection to harmonic forms, composed with the isomorphism with de Rham cohomology. 
\end{itemize}
Moreover, the family of SDR data defined by $A_0 \mapsto (i_{A_0},P_{A_0},K_{A_0})$ defines a global, equivariant family of SDR data and in particular induces a formal exponential map on $\M'$ as explained in Section \ref{ssec: exp maps}.
\begin{prop}\label{prop: bounded}
    On a compact manifold $M$, for any $A_0$ and $k \in \N, p \geq 2$, the SDR data $(i_{A_0},p_{A_0},K_{A_0})$ is bounded with respect to the Sobolev norm $||\cdot||_{k,p}$. In particular, the SDR data is Fr\'echet continuous (satisfies Assumption \ref{ass: Frechet}
\end{prop}
\begin{proof}
The maps $i_{A_0}$ and $p_{A_0}$ are bounded since they have finite-dimensional domain (resp. codomain). 
Denoting the completion of $\Omega^\bullet$ with respect to $||\cdot||_{k,p}$ by $\Omega^\bullet_{k,p}$, the map $K$ is bounded since it is a composition of the bounded maps $G_{A_0} \colon \Omega^\bullet_{k,p} \to \Omega^\bullet_{k+2,p}$ and $d^*_{A_0}\colon \Omega^\bullet_{k+2,p} \to \Omega^{\bullet -1}_{k+1,p}$. Therefore it defines a continuous map $K_{A_0}\colon\Omega^\bullet_{k,p} \to \Omega^{\bullet -1}_{k+1,p}$, and the injection $\Omega^{\bullet -1}_{k+1,p} \to \Omega^{\bullet -1}_{k,p}$ is continuous, which proves the first claim. \footnote{Since the inclusion $\Omega^{\bullet -1}_{k+1,p} \to \Omega^{\bullet -1}_{k,p}$ is compact on a compact manifold by the Rellick-Kondranov theorem, we have actually proven that $K_{A_0}$ is compact. } Since it is bounded with respect to all Sobolev norms, it is also continuous in the Fr\'echet topology. 
\end{proof}
\end{expl}
\begin{lem}[Variation of $\sh$]
    An infinitesimal variation of a good gauge-fixing operator $\sh\ra \sh+\delta\sh$ induces the  following first-order deformation of the SDR (\ref{eq: SDR good gf}): $i\ra i+\delta i$, $p\ra p+\delta p$, $K\ra K+\delta K$ with
\begin{equation}
    \delta i=-d_{A_0} \mathbb{I}_{\delta\sh} i,\quad \delta p =-p \mathbb{P}_{\delta\sh} d_{A_0},\quad \delta K= [d_{A_0},\Lambda_{\delta \sh}]+P \mathbb{P}_{\delta\sh}+\mathbb{I}_{\delta \sh}i.
\end{equation}
Here we denoted
\begin{equation}
    \mathbb{I}_{\delta\sh}=G\delta\sh,\quad \mathbb{P}_{\delta\sh}=\delta\sh G,\quad 
    \Lambda_{\delta\sh}=K\delta\sh G.
\end{equation}
\end{lem}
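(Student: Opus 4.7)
The lemma is a computation: differentiate the relations characterizing the SDR data $(i,p,K)$ in terms of $\sh$. I organize it in three steps.

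First, I derive $\delta i$ and $\delta p$. Since $\operatorname{im} i = \ker\HHH$, differentiating $\HHH\cdot i = 0$ and using $\delta\HHH = [d_{A_0},\delta\sh]$ together with $d_{A_0}\cdot i = 0$ gives $\HHH\cdot\delta i = -d_{A_0}\,\delta\sh\cdot i$. Because $\delta i$ is only determined modulo $\ker\HHH$, I impose the normalization $p\cdot\delta i = 0$, forced by preserving $p\cdot i = \mathrm{id}$. Inverting $\HHH$ on $\operatorname{im}\HHH$ via $G$ and using $[d_{A_0},G]=0$ yields
\begin{equation*}
\delta i = -G\,d_{A_0}\,\delta\sh\cdot i = -d_{A_0}\,\mathbb{I}_{\delta\sh}\cdot i.
\end{equation*}
The derivation of $\delta p = -p\,\mathbb{P}_{\delta\sh}\,d_{A_0}$ is completely dual: differentiate $p\cdot\HHH = 0$, use $p\cdot d_{A_0} = 0$, and impose $\delta p\cdot i = 0$.

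Next, for $\delta K$, I differentiate $K = \sh G$ using the resolvent identity derived from $(\HHH+P)G = \mathrm{id}$:
\begin{equation*}
\delta G = -G\cdot\delta\HHH\cdot G - G\cdot\delta P\cdot G.
\end{equation*}
Substituting $\delta\HHH = d_{A_0}\,\delta\sh + \delta\sh\,d_{A_0}$ and commuting $d_{A_0}$ through $G$, I rearrange $\sh G\cdot\delta\HHH\cdot G$ using $\sh\,d_{A_0} = \HHH - d_{A_0}\sh$ and $\HHH G = \mathrm{id} - P$. This produces (i) a total super-commutator $[d_{A_0},\Lambda_{\delta\sh}]$, (ii) a term $P\,\mathbb{P}_{\delta\sh}$, and (iii) a cancellation of the leading $\delta\sh\cdot G$ piece of $\delta K = \delta\sh\cdot G + \sh\cdot\delta G$. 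The remaining contribution $-\sh G(\delta P)G = -K(\delta P)G$ is computed from $\delta P = \delta i\cdot p + i\cdot\delta p$ using the formulas from Step 1 together with $KP = 0$, $PG = P$, and $p\,d_{A_0} = 0$; only the piece of the shape $\mathbb{I}_{\delta\sh}\cdot P$ survives, matching (modulo the implicit composition with $p$) the last summand $\mathbb{I}_{\delta\sh}\,i$ in the stated formula.

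The main obstacle is the bookkeeping: the naive expansion of $\delta K$ produces many terms, most of which collapse only after repeated application of $\sh P = P\sh = 0$, $\sh^2 = 0$, $P\,d_{A_0} = d_{A_0}\,P = 0$, $[d_{A_0},G] = 0$, $\HHH P = 0$, and $KP = pK = 0$. As a parallel sanity check, I would verify that the proposed formula satisfies the differentiated SDR side conditions: $[d_{A_0},\delta K] = -\delta P$ (from $[d_{A_0},K] = \mathrm{id} - P$), $K\,\delta K + \delta K\,K = 0$ (from $K^2 = 0$), $\delta K\cdot i + K\cdot\delta i = 0$ (from $Ki = 0$), and $p\cdot\delta K + \delta p\cdot K = 0$ (from $pK = 0$). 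These four identities, together with the normalization $P\,\delta K\,P = 0$ inherited from the previous step, pin down $\delta K$ uniquely, and checking each reduces to a one-line computation using $[d_{A_0},[d_{A_0},\Lambda_{\delta\sh}]] = 0$ (super-Jacobi with $d_{A_0}^2 = 0$) and the annihilation identities for $P$ and $\sh$.
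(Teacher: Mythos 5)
Your route is the paper's: the Lemma's proof is explicitly deferred to that of Proposition \ref{prop:varAtprime}, which is exactly your computation — differentiate $\HHH i=0$ and $p\HHH=0$ (equivalently $\Delta P=P\Delta=P^2-P=0$) to get $\delta i,\delta p,\delta P$, then use $\delta G=-G(\delta\HHH+\delta P)G$ and rearrange $\sh G\,\delta\HHH\,G$ via $\sh d_{A_0}=\HHH-d_{A_0}\sh$ and $\HHH G=\mathrm{id}-P$. Your reading of the last summand as $\mathbb{I}_{\delta\sh}\,i\,p=\mathbb{I}_{\delta\sh}P$ is also the intended one (compare (\ref{variation of ipK in A'}) and Lemma \ref{lem: SDR deformations}).

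There is, however, one cancellation you assert that does not follow from the identities you list. Carrying the expansion through, what survives is
\begin{equation*}
\delta K=[d_{A_0},\Lambda_{\delta\sh}]+P\,\mathbb{P}_{\delta\sh}+\mathbb{I}_{\delta\sh}P\;-\;P\,\delta\sh\,P\;-\;d_{A_0}\,K\,G\,\delta\sh\,P,
\end{equation*}
and the last two terms are not killed by $\sh P=P\sh=0$, $\sh^2=0$, $KP=pK=0$, $[d_{A_0},G]=0$, etc. They vanish precisely because $\delta\sh$ maps $\ker\HHH$ into $\operatorname{im}\sh$: the identity $\sh\,\delta\sh+\delta\sh\,\sh=0$ (from differentiating $\sh^2=0$) gives $\sh G\,\delta\sh\,P=0$, and $P\,\delta\sh\,P=0$ follows from requiring that the deformed harmonic space still maps isomorphically onto $H^\bullet_{A_0}$ (if $\omega_t\in\ker\HHH_t$ represents a fixed class, then $\dot\omega=d_{A_0}\beta$ and $\sh\dot\omega+\delta\sh\,\omega=0$ forces $\delta\sh\,\omega\in\operatorname{im}\sh$). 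This is exactly the point where the proof of Proposition \ref{prop:varAtprime} invokes Lemma \ref{lem:ad_alpha} to discard $P\mathrm{ad}^*_\alpha P$ and $d_A d^*_{A'}G\mathrm{ad}^*_\alpha P$ (using smoothness of $[A_0]$), and in the metric case it holds because $\delta\sh=[\sh,\lambda_{\delta g}]$ gives $\delta\sh\,P=\sh\lambda_{\delta g}P\in\operatorname{im}\sh$. You need to supply this input explicitly; your own consistency check would have flagged it, since $\delta K\,i+K\,\delta i=d_{A_0}K\delta\sh\,i+P\delta\sh\,i+d_{A_0}KG\delta\sh\,i$, which vanishes only under the same condition.
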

The proof is similar to the proof of Proposition \ref{prop:varAtprime}.

\subsubsection{Desynchronized Hodge decomposition}

\begin{prop} \label{lem: Hodge decomposition close connections}
    Let $A_0\in \FC'$ be a smooth irreducible flat connection. 
    Then there is a neighborhood $U$ of $A_0$ in $\FC'$ such that,  for any $A' \in U$, $d^*_{A'}$ is a good gauge fixing operator for $d_{A_0}$. 
\end{prop}
Before giving the proof we need to make the following remark.
\begin{rem}\label{rem: change of harm form under change of gf operator}
Let $A_0$ be a smooth irreducible flat connection and $A'_t = A_0 + \alpha_t$, with $\alpha_t=\sum_{k \geq 1}t^k \alpha^{(k)}$, a path of flat connections starting at $A_0$.
    Dually to Proposition \ref{prop: deformation of harmonic forms with any path of A}, one can deform a harmonic form $\chi$ satisfying $d_{A_0}\chi = d^*_{A_0}\chi = 0$ to a path $\chi_t=\sum_{k\geq 0}t^k\chi^{(k)}$ with $\chi^{(0)}=\chi$, satisfying $d_{A_0}\chi_t = d_{A'_t}^*\chi_t = 0$, with 
    $$
    \chi^{(n)} = -d_{A_0}G_{A_0}\sum_{k+k' = n, k\geq 1}\mathrm{ad}^*_{\alpha^{(k)}}\chi^{(k')}.
    $$
    Equivalently,
    $$
    \chi_t=\sum_{k\geq 0}(-d_{A_0}G_{A_0}\ad^*_{\alpha_t})^k\circ \chi.
    $$
    This statement follows by applying Hodge star to (\ref{prop: deformation of harmonic forms with any path of A, chi_t}).
\end{rem}

\begin{proof}[Proof of Proposition \ref{lem: Hodge decomposition close connections}]
Property (1) follows from integration by parts and property (2) from $(d_{A'}^*)^2 = 0$. 

Remark \ref{rem: change of harm form under change of gf operator} implies that, for $A'$ an open neighborhood $U$ of $A_0$, the graded vector space 
\begin{equation}\label{W= space of (A_0,A_1)-harmonic forms}
W\colon=\ker d_{A_0}\cap \ker d^*_{A'}
\end{equation}
satisfies the following:
\begin{enumerate}[(a)]
    \item 
    $W$ coincides with $\ker \HHH$ and has constant (graded) rank as $A'\in U$ moves away from $A_0$. Indeed, one has:
    \begin{enumerate}[(i)]
        \item $W\subset \ker\HHH$.
        \item At $A'=A_0$, $W=\ker\HHH$ (by usual Hodge decomposition).
        \item Rank of $W$ is non-decreasing when moving $A'$ away from $A_0$. 
        Indeed, by Remark \ref{rem: change of harm form under change of gf operator}, one can deform a basis of $A_0$-harmonic forms to a linear independent set in $W$. (Since linear independence is an open condition.)
        \item Rank of $\ker\HHH$ is non-increasing as $A'$ is moved away from $A_0$. -- The rank of the kernel in a family of elliptic operator jumps (increases) on a closed subset in the space of parameters.
    \end{enumerate}
    These properties imply the claim above: $W=\ker\HHH$ and the rank is constant.
    \item $W$ contains a single representative of each $d_{A_0}$-cohomology class. Indeed, the map $q\colon W \ra H_{A_0}$ sending $\alpha\mapsto [\alpha]$ is surjective, since Remark \ref{rem: change of harm form under change of gf operator} 
    constructs a preimage under $q$ of a cohomology class $[\alpha]$ as a deformation of its harmonic representative $\chi$ along a path $A'_t$ from $A_0$ to $A'$, to a form $\til\chi$ satisfying $d_{A_0}\til\chi=d^*_{A'}\til\chi=0$.
    On the other hand, by (a) $W$ has constant rank when $A'$ is changing, equal to the rank of $H_{A_0}$ at $A'=A_0$. Hence, the fact that $q$ is a surjection implies that it is in fact an isomorphism.
\end{enumerate}
Then, (a) together with (b) proves (5).

For (4), note that $\HHH$ is diagonalizable at $A'=A_0$ and hence is diagonalizable for $A'$ in a neighborhood of $A_0$ (since diagonalizability is an open condition). Thus, $\Omega= \ker\HHH \oplus \mr{im}\HHH$ for $A'\in U$, -- splitting into zero-modes of $\HHH$ and $\mr{im}\HHH=\colon V$ -- the span of eigenforms of $\HHH$ with nonzero eigenvalues. Operators $d_{A_0}$ and $d^*_{A'}$ act on $V$ (since they commute with $\HHH$). Moreover, one has
\begin{equation}\label{im(H) splitting}
\mr{im}\HHH=\mr{im}(d_{A_0})\oplus \mr{im}(d^*_{A'}) 
\end{equation}
Indeed, the intersection of the summands on the right is zero: if $d_{A_0}\alpha=d^*_{A'}\alpha=0$, then $\alpha\in \ker d_{A_0}\cap \ker d^*_{A'}=\ker\HHH$, but since $\alpha\in V$ it must be zero.  Also, if $\alpha\in V$ then $\alpha=d_{A_0} \beta + d^*_{A'}\gamma$ with $\beta=d^*_{A'}\HHH^{-1}\alpha$  and $\gamma=d_{A_0}\HHH^{-1}\alpha$ (here we are using that $\HHH$ is invertible on $V$). This proves that (\ref{im(H) splitting}) is a direct sum.

Property (3) is obvious by a continuity argument: in the deformation of $A'$ away from $A_0$ (in a small enough neighborhood), zero modes of $\HHH$ are deformed to zero modes while eigenvectors with eigenvalues $\lambda>0$ are deformed to eigenvectors with $\mr{Re}(\lambda)>0$.
\end{proof}

\begin{defn}\label{def: close connections}
\begin{enumerate} 
\item If $A_0$ and $U$ are as in Proposition \ref{lem: Hodge decomposition close connections} then for any $A' \in U$ we say that $(A_0, A')$ is a pair of \emph{close} flat connections. 
\item If $(A_0,A')$ is a pair of close flat connections, we call the space (\ref{W= space of (A_0,A_1)-harmonic forms}) the \emph{space of $(A_0,A')$-harmonic forms} and denote it $\mr{Harm}_{A_0,A'}$. 
We also call the associated decomposition (\ref{eq: decomposition disj}) the \emph{desynchronized Hodge decomposition}:
\begin{equation}
    \Omega(M,\g)=\mr{Harm}_{A_0,A'}\oplus \mr{im}\, d_{A_0} \oplus \mr{im}\, d^*_{A'}.
\end{equation}
\end{enumerate}
\end{defn}

Further in this section we will suppress the subscript in $A_0$ and just denote it $A$. 
\begin{rem}
    Notice that Proposition \ref{prop: bounded} is true also for the desynchronized Hodge gauge fixing, since we are only perturbing $\Delta_{A_0}$ by differential operators of degree one or less, so its inverse is still a degree -2 operator. 
\end{rem}

\subsubsection{Connection on the bundle of $(A,A')$-harmonic forms}\label{sss nabla^Harm}
Let $\mc{U}\subset \FC'\times \FC'$ be an open neighborhood of the diagonal in $\FC'\times \FC'$ obtained as the union of open sets $U$ from Proposition \ref{lem: Hodge decomposition close connections}.
Consider the vector bundle $\mr{Harm}$ over $\mc{U}$ whose fiber over $(A,A')$ is the space of $(A,A')$-harmonic forms $\mr{Harm}_{A,A'}$.

Consider the connection $\nabla^\mr{Harm}$ on the bundle $\mr{Harm}$ defined by infinitesimal parallel transport as follows. If $\chi\in \mr{Harm}_{A,A'}$ is a harmonic form, then:
\begin{enumerate}[(i)]
    \item For any $\alpha\in \Omega^1_{d_{A}\mr{-closed}}$, when moving from $(A,A')$ to $(A+t \alpha,A')$ on $\mc{U}$, $\chi$ transforms to 
    \begin{equation}\label{nabla^harm infinitesimal horizontal transport}
    \chi-t d^*_{A'}G_{A,A'}\mr{ad}_\alpha \chi\quad \in \mr{Harm}_{A+t\alpha,A'}.
    \end{equation}
    \item For any $\beta\in \Omega^1_{d_{A'}\mr{-closed}}$, when moving from $(A,A')$ to $(A,A'+s \beta)$ on $\mc{U}$, $\chi$ transforms to 
    \begin{equation}\label{nabla^harm infinitesimal vertical transport}
    \chi-s d_{A}G_{A,A'}\mr{ad}^*_\beta\chi\quad \in \mr{Harm}_{A,A'+s\beta}.
    \end{equation}
\end{enumerate}
The formulae above are written in the first order in deformation parameters $s,t$. 
One can consider $\nabla^\mr{Harm}$ as a connection in the trivial bundle over $\mc{U}$ with fiber $\Omega^\bt(M,\g)$ preserving the subbundle $\mr{Harm}$.
The covariant derivative operator associated with the connection $\nabla^\mr{Harm}$  is
\begin{equation}\label{nabla^Harm}
 \nabla^\mr{Harm}=\delta-G_{A,A'}\Big(d^*_{A'}\mr{ad}_{\delta A}+d_{A}\mr{ad}^*_{\delta A'}\Big)
\end{equation}
with $\delta$ the de Rham operator on $\FC'\times \FC'$.

\begin{rem}\label{rem: nabla^Harm as shift-and-project connection}
One can think of $\nabla^\mr{Harm}$ as a ``shift-and-project'' connection: its infinitesimal parallel transport takes 
an $(A,A')$-harmonic form $\chi$ over $(A,A')$ and moves it to the  $(A+t\alpha,A'+s\beta)$-harmonic form $P_{A+t\alpha,A'+s\beta}(\chi)$ over $(A+t\alpha,A'+s\beta)$.  We note that this construction is reminiscent of the construction of Hitchin's (projectively flat) connection in the Verlinde bundle\footnote{The vector bundle with fiber being the space of states of Chern-Simons theory on $\Sigma$, a.k.a. the space of WZW conformal blocks on $\Sigma$, a.k.a. the Verlinde space. See \cite{Axelrod1991a}.} over the moduli space of complex structures on a surface $\Sigma$.
\end{rem}

One has the following:
\begin{prop}\label{prop 3.5}
\begin{enumerate}[(a)]
    \item\label{prop 3.5 (a)}    The curvature of the connection $\nabla^\mr{Harm}$ (restricted to harmonic forms) is
    \begin{multline}\label{nabla^Harm curvature}
        F_{\nabla^\mr{Harm}}=-P\ad_{\delta A} G \ad^*_{\delta A'} P- P\ad^*_{\delta A'} G \ad_{\delta A} P\\ \in \Omega^{1,1}(\mc{U},\mr{End}(\mr{Harm}_{A,A'})),
    \end{multline}
    where we suppress subscripts in $P_{A,A'},G_{A,A'}$.
    In particular, $\nabla^\mr{Harm}$ is flat on $A'$-fixed and on $A$-fixed slices of $\mc{U}$.
    \item \label{prop 3.5 (b)} The restriction of $\nabla^\mr{Harm}$ to the diagonal in $\mc{U}\subset \FC'\times \FC'$ is a Euclidean connection 
    -- it preserves the Hodge inner product on harmonic forms.
    \item \label{prop 3.5 (c)}
    Given a path $A_t$ of flat connections $0\leq t\leq 1$, from $A$ to $A'$, the parallel transport of an $(A,A')$-harmonic form $\chi$ along the path $(A_t,A')$ is $q(\chi)\in \mr{Harm}_{A',A'}$ with $q\colon \mr{Harm}_{A,A'}\xra{\sim} \mr{Harm}_{A',A'}$ the orthogonal projection onto $A'$-harmonic forms. Likewise, the parallel transport of $\chi$ along the path $(A,A_{1-t})$ is $p(\chi)\in \mr{Harm}_{A,A}$ with $p\colon \mr{Harm}_{A,A'}\xra{\sim} \mr{Harm}_{A,A}$ the orthogonal projection onto $A$-harmonic forms. 
\end{enumerate}
\end{prop}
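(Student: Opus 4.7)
I will establish (b), (c), and the slice-flatness part of (a) from structural properties of $\nabla^\mr{Harm}$ and the desynchronized Hodge decomposition, leaving only the explicit curvature formula in (a) as a direct (but technical) expansion of $P\delta\omega P$.

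\emph{For (b):} On the diagonal $A=A'$ the desynchronized Hodge decomposition reduces to the standard orthogonal one, so $P$ is orthogonal projection onto $\mr{Harm}_{A,A}$. Combining (\ref{nabla^harm infinitesimal horizontal transport}) and (\ref{nabla^harm infinitesimal vertical transport}) for a diagonal variation, a harmonic $\chi$ is sent to $\chi - t(d^*_A G\mr{ad}_\alpha + d_A G \mr{ad}^*_\alpha)\chi$, the correction lying in $\mr{im}(d^*_A)\oplus\mr{im}(d_A) = \mr{Harm}_{A,A}^\perp$. For parallel sections $\chi_1(t), \chi_2(t)$ along a path in the diagonal, each of $\langle\dot\chi_1,\chi_2\rangle$ and $\langle\chi_1,\dot\chi_2\rangle$ pairs a harmonic form with an element of $\mr{Harm}^\perp$ and hence vanishes; the inner product is therefore preserved.

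\emph{For (c):} Along the path $(A_t,A')$, $\delta A'=0$, so only (\ref{nabla^harm infinitesimal horizontal transport}) contributes to the horizontality condition, giving $\dot\chi_t \in \mr{im}(d^*_{A'})$ at every $t$, hence $\chi_1-\chi_0 \in \mr{im}(d^*_{A'})$. Writing the orthogonal $(A',A')$-Hodge decomposition $\chi_0 = q(\chi_0) + d_{A'}\gamma + d^*_{A'}\delta$, we obtain $\chi_1 = q(\chi_0) + d_{A'}\gamma + \bigl(d^*_{A'}\delta + (\chi_1-\chi_0)\bigr)$, which is a valid $(A',A')$-Hodge decomposition of $\chi_1$ (the third summand remains in $\mr{im}(d^*_{A'})$). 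Since $\chi_1 \in \mr{Harm}_{A',A'}$, uniqueness of the orthogonal Hodge decomposition forces $\chi_1 = q(\chi_0)$. The argument for the path $(A, A_{1-t})$ is symmetric.

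\emph{For (a):} The structural observation is that $\omega = -G(d^*_{A'}\mr{ad}_{\delta A} + d_A\mr{ad}^*_{\delta A'})$ sends every form into $\mr{im}(d^*_{A'}) \oplus \mr{im}(d_A) = \mr{Harm}^\perp$, since $G$ preserves each of these subspaces (it commutes with $d^*_{A'}$ and with $d_A$). Hence $P\omega = 0$ identically, so $P(\omega\wedge\omega)P = 0$ and the curvature reduces to $F = P(\delta\omega)P$. Flatness on the $A'$-fixed and $A$-fixed slices follows from (the argument of) (c): along any path in the $A'$-fixed slice from $(A_0,A')$ to $(A_1,A')$, the horizontal lift satisfies $\chi_1 - \chi_0 \in \mr{im}(d^*_{A'})$, and by the direct-sum property of the desynchronized Hodge decomposition at $(A_1,A')$ we have $\mr{Harm}_{A_1,A'} \cap \mr{im}(d^*_{A'}) = 0$; therefore $\chi_1$ is the unique element of $\mr{Harm}_{A_1,A'}$ in the coset $\chi_0 + \mr{im}(d^*_{A'})$, so parallel transport in the slice is path-independent and the slice-restricted connection is flat. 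The explicit formula (\ref{nabla^Harm curvature}) then comes from a direct expansion of $P\delta\omega P$ using $\delta G = -G(\delta\mathcal{H}+\delta P)G$, $\delta d_A = \mr{ad}_{\delta A}$, $\delta d^*_{A'} = \mr{ad}^*_{\delta A'}$, the identities $PG = GP = P$, the anti-commutations $\{d^*_{A'}, \mr{ad}^*_{\delta A'}\} = 0$ and $\{d_A, \mr{ad}_{\delta A}\} = 0$ (from preserving $(d^*_{A'})^2 = 0 = d_A^2$ along the family), and the smoothness identity $P\mr{ad}_{\delta A}\chi = 0$ on harmonic $\chi$. The main technical obstacle is organizing this expansion so that the ``non-mixed'' contributions (those carrying a $\delta A \wedge \delta A$ or a $\delta A' \wedge \delta A'$ factor) cancel, leaving only the mixed $\delta A \wedge \delta A'$ terms in (\ref{nabla^Harm curvature}).
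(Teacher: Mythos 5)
Your arguments for (b) and (c) are correct and essentially the paper's own: for (b) the paper likewise observes that the order-$t$ correction to $\chi$ under a diagonal shift is orthogonal to $\mr{Harm}_{A,A}$, so $\|\chi_t\|$ is constant to first order; for (c) the paper notes that transport along $(A_t,A')$ only ever adds elements of $\mr{im}(d^*_{A'})$ and then reads off the projection $q$, which is the same observation you package via uniqueness of the $(A',A')$-Hodge decomposition. Your path-independence argument for flatness on the fixed-$A'$ (and fixed-$A$) slices is a mild repackaging of the same fact and is sound, given that $\mr{im}(d^*_{A'})$ is a fixed (closed, in the relevant Sobolev completion) subspace along such a path and meets $\mr{Harm}_{A_1,A'}$ only in $0$.

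The gap is in part (a). The actual content of (a) is the explicit formula \eqref{nabla^Harm curvature}, and you do not derive it: you reduce the curvature on harmonic forms to $P(\delta\omega)P$ (correctly — $P\omega=0$ does hold, since $G$ commutes with $d_A$ and $d^*_{A'}$ so $\omega$ lands in $\mr{im}(d_A)\oplus\mr{im}(d^*_{A'})$), list the identities you would feed into the expansion, and then explicitly defer "organizing this expansion so that the non-mixed contributions cancel" as "the main technical obstacle." That deferred step is the proof. The paper carries it out: writing $\nabla^\mr{Harm}=\delta+G\,\delta\Delta$, it expands $(\delta G)\delta\Delta\,P+G\,\delta\Delta\,G\,\delta\Delta\,P$ using $\delta G=-G(\delta\Delta+\delta P)G$ and the previously computed $\delta P$, and the surviving terms are identified via $KGd=GP_\mr{coex}$ and $dGd^{*}=P_\mr{ex}$, yielding exactly the two mixed $\delta A\,\delta A'$ terms with a $G$ sandwiched between $\ad_{\delta A}$ and $\ad^*_{\delta A'}$. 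Your structural reduction genuinely shortens this (one does not need the $\omega\wedge\omega$ term), and your slice argument independently certifies that the $(\delta A)^2$ and $(\delta A')^2$ components must vanish — but neither of these tells you what the mixed component \emph{is}, and that formula is used later (e.g.\ in the proof of Lemma \ref{lemma D.2} and in the discussion of 1-extended smoothness), so it cannot be left as an unexecuted computation.
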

\begin{figure}
    \centering
    \includegraphics[width=0.5\linewidth]{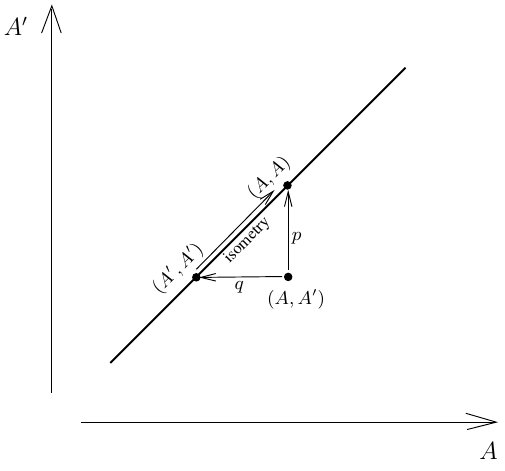}
    \caption{Vertical and horizontal parallel transport to the diagonal on $\FC'\times \FC'$ (Proposition \ref{prop 3.5} (\ref{prop 3.5 (c)})).}
    \label{fig:enter-label}
\end{figure}

\begin{proof}
    (\ref{prop 3.5 (a)}): 
Note that the connection (\ref{nabla^Harm}) can be equivalently written as 
$$ 
\nabla^\mr{Harm}=\delta-G([d^*,\mr{ad}_{\delta A}]+[d,\mr{ad}^*_{\delta A'}]) = \delta+ G \delta \Delta.
$$
Therefore, the curvature (on harmonic forms) is 
\begin{multline}
(\nabla^\mr{Harm})^2 P = (\delta G \delta \Delta+ G \,\delta \Delta\, G \,\delta \Delta )P
\\
=\Big(G\Big(\underbrace{[d^*,\ad_{\delta A}]+[d,\ad^*_{\delta A'}]}_{-\delta \Delta}-K\ad_{\delta A}P+P\ad_{\delta A}K\\
-dG \ad^*_{\delta A'} P+P\ad^*_{\delta A'} Gd\Big) G\delta \Delta + G\delta\Delta G \delta\Delta \Big)P\\
=-P\ad_{\delta A}\underbrace{KGd}_{G P_\mr{coex}} \ad^*_{\delta A'} P- P\ad^*_{\delta A'} G\underbrace{d G d^*}_{P_\mr{ex}} \ad_{\delta A} P,
\end{multline}
which simplifies to (\ref{nabla^Harm curvature}). 

(\ref{prop 3.5 (b)}): Infinitesimal parallel transport along the diagonal in $\FC'\times \FC'$, from $(A,A)$ to $(A+t\alpha,A+t\alpha)$ transform a harmonic form $\chi \in \mr{Harm}_{A,A}$ to $\chi'=\chi-t d^*_{A}G_{A,A} \mr{ad}_\alpha \chi -t d_{A} G_{A,A} \mr{ad}^*_{\alpha}\chi$. Note that the three summands in $\chi'$ are mutually orthogonal and two of them are of order $t$, hence $||\chi'||=||\chi||+O(t^2)$. Therefore, if $A_t$ is a path of flat connections and $\chi_t\in \mr{Harm}_{A_t,A_t}$ is the parallel transport of $\chi$ along the corresponding path in the diagonal in $\FC'\times \FC'$, then $\frac{d}{dt}||\chi_t ||=0$.

(\ref{prop 3.5 (c)}): The form of the connection (\ref{nabla^harm infinitesimal horizontal transport}) implies that the parallel transport from $(A',A')$ to $(A,A')$ transforms an $A'$-harmonic form $\psi$ to $\psi+d^*_{A'}(\cdots)=\chi$. Hence, the reverse parallel transport transforms an $d^*_{A'}$-closed form $\chi$ to its projection $q(\chi)$ onto $A'$-harmonic forms. The case of moving from $(A,A')$ to $(A,A)$ is analogous. 
\end{proof}

\begin{rem}\label{rem 3.9}
The connection $\nabla^\mr{Harm}$ is a rephrasing of the result of Proposition \ref{prop: deformation harmonic} and Remark \ref{rem: change of harm form under change of gf operator} (and for paths considered in that Proposition and Remark, $\nabla^\mr{Harm}$ gives the same parallel transport).
\end{rem}
\subsubsection{Cohomology comparison map}\label{sss cohomology comparison map}
Let $\mathbb{H}$ be the ``cohomology bundle'' over $\FC'$ -- the graded vector bundle with the fiber over $A$ being $H^\bt_{A}$. For a fixed $A'\in \FC'$, let 
$\mc{U}_{A'}\colon=\mc{U}\cap (\FC'\times\{A'\})$ -- the $A'$-fixed slice of $\mc{U}$.
The connection $\nabla^\mr{Harm}$ of Section \ref{sss nabla^Harm} restricted to $\mc{U}_{A'}$ induces (via the isomorphism $\mr{Harm}_{A,A'}\cong H_{A}$, $\chi\mapsto [\chi]$) a flat connection $\nabla^{\mathbb{H},A'}$ in $\mathbb{H}|_{\mc{U}_{A'}}$.

For $A,\til{A}$ a pair of close flat connections, close to $A'$, we will call the parallel transport of $\nabla^{\mathbb{H},A'}$ -- the linear map 
\begin{equation}\label{cohomology comparison map}
\BB_{\til{A}\leftarrow A;A'}\colon H^\bt_{A}\ra H^\bt_{\til{A}}
\end{equation}
-- the ``cohomology comparison map.'' Notice that due to the curvature of $\nabla^\mr{Harm}$, this map depends nontrivially on $A'$. 


Sometimes we will need the cohomology comparison map restricted to cohomology in degree $1$; we will denote it $\BB^1_{\til{A}\la A;A'}$.

\begin{rem}
In the special case $\til{A}=A'$ the cohomology comparison map is
 $$\BB_{A'\la A;A'}=[q]\colon H_{A}\ra H_{A'}$$
 -- the map induced in cohomology by the map $q$ of Proposition \ref{prop 3.5} (\ref{prop 3.5 (c)}).
\end{rem}
\subsubsection{Local exponential map for fixed $A'$.}
For a given smooth flat connection $A'$, for $A \in \mc{U}_{A'}$, we have the desynchronized Hodge SDR data $(i_{A,A'},p_{A,A'},K_{A,A'})$. These induce, locally around $A'$, a sum-over-trees exponential map  that we denote $\til\varphi_{\bullet,A'}\colon T\FC' \supset V_A \to \FC'$. Contrary to the ``global'' exponential map $\til\varphi_{\bullet,\bullet}$, the local exponential map $\til\varphi_{\bullet,A'}$ is not equivariant with respect to the gauge group action on its argument. However, it satisfies the following ``convolution'' property: 
\begin{prop}\label{prop: desy exp map conv}
    Let $\beta, \gamma \in T_A\FC'$ such that $\beta, \gamma$ and $\beta + \gamma$ are in the domain of $\varphi_{A,A'}$. Let $\til{A} = \til\varphi_{A,A'}(\beta)$. Then 
    \begin{equation}
        \til\varphi_{A,A'}(\beta + \gamma) = \til\varphi_{\til{A},A'}( (d\til\varphi_{A,A'})_\beta(\gamma)). 
    \end{equation}
\end{prop}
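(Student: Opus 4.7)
The approach I would take is via the Kuranishi inverse of Proposition \ref{prop: kuranishi}, applied at the flat connection $\til A$ using the desynchronized SDR data at $(\til A, A')$ provided by Proposition \ref{lem: Hodge decomposition close connections}.

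First, set $A^\ast := \til\varphi_{A,A'}(\beta+\gamma)$, so that $m := A^\ast - \til A$ is a Maurer-Cartan element in the $\til A$-dgla $(\Omega^\bullet(M,\g), d_{\til A}, [\cdot,\cdot])$. Proposition \ref{prop: kuranishi}(iii) applied at $\til A$ guarantees that $\eta^\ast := \til\kappa_{\til A, A'}(m)$ is $d_{\til A}$-closed, and Proposition \ref{prop: kuranishi}(ii) (for $\gamma$ sufficiently small so that $m$ lies in the relevant domain of convergence) gives $\til\delta_{\til A, A'}(\eta^\ast) = m$. Hence $\til\varphi_{\til A, A'}(\eta^\ast) = \til A + m = A^\ast$, so the proposition reduces to the identity
\[
\eta^\ast = (d\til\varphi_{A,A'})_\beta(\gamma).
\]

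To prove this identity I would introduce the one-parameter family $C(t) := \til\varphi_{A,A'}(\beta + t\gamma)$ of flat connections with $C(0) = \til A$ and $\dot C(0) = (d\til\varphi_{A,A'})_\beta(\gamma) =: \eta$, and define $\eta(t) := \til\kappa_{\til A, A'}(C(t) - \til A)$. By the previous paragraph applied with $t\gamma$ in place of $\gamma$, one has $\til\varphi_{\til A, A'}(\eta(t)) = C(t)$ for all $t$ in the domain, so in particular $\eta^\ast = \eta(1)$. A direct computation using $\til\kappa_{\til A, A'}(x) = x + \tfrac12 K_{\til A, A'}[x,x]$ shows $\eta(0) = 0$ and $\dot\eta(0) = \dot C(0) = \eta$. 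The proposition then reduces to the stronger claim $\eta(t) = t\eta$ for all $t$.

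To establish $\eta(t) = t\eta$ I would expand all ingredients via sum-over-trees formulas: for $\til\delta_{A,A'}$, for its differential $(d\til\delta_{A,A'})_{\beta + t\gamma}$ (cf.\ Remark \ref{rem: sum over trees d til varphi} adapted to the desynchronized setting), and for $\til\kappa_{\til A, A'}$. Using the SDR identity $[d_A, K_{A,A'}] = \mathrm{id} - P_{A,A'}$, the Maurer-Cartan property of $\til\delta_{A,A'}$, and the first-order variation formula for the desynchronized SDR data under perturbations of $A$ (obtained as in the variation-of-gauge-fixing lemma of this section, applied now to the dual variation $A \mapsto \til A = A + \til\delta_{A,A'}(\beta)$), one verifies the identity order by order in $t$. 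The main obstacle is the combinatorics of this tree-level comparison; intuitively, it reflects the fact that the sum-over-trees exponential converts the linear translation $\beta \mapsto \beta + \gamma$ in the fiber argument into a shift of base point that is absorbed exactly by the linearization $(d\til\varphi_{A,A'})_\beta$, with no higher-order corrections.
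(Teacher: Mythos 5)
Your reduction via the Kuranishi map is sound as far as it goes: with $A^\ast=\til\varphi_{A,A'}(\beta+\gamma)$ and $m=A^\ast-\til A$, Proposition \ref{prop: kuranishi} (applied with the desynchronized SDR data at $(\til A,A')$, which satisfies the same Banach boundedness) indeed gives $\til\varphi_{\til A,A'}(\til\kappa_{\til A,A'}(m))=A^\ast$, so the statement is equivalent to the identity $\til\kappa_{\til A,A'}(m)=(d\til\varphi_{A,A'})_\beta(\gamma)$, and your family $\eta(t)$ correctly packages this as $\eta(t)=t\eta$ (your computation of $\eta(0)$ and $\dot\eta(0)$ is correct). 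This is a legitimate alternative entry point, different from the paper's.

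The problem is that at this point you have only reformulated the proposition, not proved it. The identity $\eta(t)=t\eta$ carries exactly the same combinatorial content as the original statement: expanding it in powers of $t$ requires comparing the trees occurring in $\til\delta_{A,A'}(\beta+t\gamma)-\til\delta_{A,A'}(\beta)$, together with the quadratic Kuranishi correction involving $K_{\til A,A'}$, against $t$ times the sum of trees with a single $\gamma$-leaf — and this tree-matching \emph{is} the proof. Your proposal defers it with ``one verifies the identity order by order in $t$'' and names the combinatorics as ``the main obstacle,'' but that obstacle is the entire content. Note also that the \emph{first-order} variation formula for the SDR data under $A\mapsto\til A$ cannot suffice: to all orders one needs the full homological-perturbation expansion $K_{\til A,A'}=\sum_{k\geq 0}K_{A,A'}(-\ad_{\til\delta_{A,A'}(\beta)}K_{A,A'})^k$, interpreted as plugging forests of $\beta$-decorated trees into edges. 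The paper's proof carries out precisely this matching directly: it expands the left-hand side as trees with $K_{A,A'}$-edges and leaves decorated by $\beta$ or $\gamma$, expands the right-hand side using the HPL formula for $K_{\til A,A'}$ together with the sum-over-trees formula for $(d\til\varphi_{A,A'})_\beta(\gamma)$ (Remark \ref{rem: sum over trees d til varphi}), and checks that the decorated trees and their prefactors $(-1)^{n-1}/|\operatorname{Aut} T|$ agree on both sides. That verification, in whichever of the two equivalent formulations, is missing from your argument.
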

\begin{proof}
    The proof follows from the combinatorics of tress and the homological perturbation lemma. Namely, one can expand the left hand side as a sum-over-trees map where edges are decorated by $K_{A,A'}$ and leaves are decorated either by $\beta$ or by $\gamma$. On the other hand, one can expand the right hand side as a sum-over-trees map where edges are decorated by $K_{\til{A},A'}$ and leaves are decorated by $\BB^1_{\til{A}\la A;A'}(\gamma)$. By the homological perturbation lemma, we have $K_{\til{A},A'} = K_{A,A'} - K_{A,A'}\delta_{A,A'}K_{A,A'} + \ldots$, which we can 
    represent as a sum over ways to plug in a forest of trees into an edge, with leaves decorated by $\beta$ and edges decorated by $K_{A,A'}$.
    Finally, $(d\til\varphi_{A,A'})_\beta(\gamma)$ is itself given as a sum over trees with edges labeled by $K_{A,A'}$, one leaf labeled $\gamma$ and all other leaves labeled $\beta$ (see Remark \ref{rem: sum over trees d til varphi}). In this way, one also on the right hand side obtains a sum over trees with edges labeled by $K_{A,A'}$ and leaves either labeled $\beta$ or $\gamma$. The numerical prefactor of each such tree is the same on both sides and given by $(-1)
    ^{n-1}/|\operatorname{Aut} T|$ where $n$ is the number of leaves and $\operatorname{Aut} T$ are the automorphisms of $T$ respecting the decorations of leaves.  
\end{proof}
This then leads to the following explicit description of the Grothendieck connection on $\FC'$: 
Let $\alpha, \beta \in T_A\FC'$, and $\til{A} = \til\varphi_{A,A'}(\beta)$. Denote $\til\alpha$ the parallel transport of $\alpha$ from $A \to \til{A}$.
\begin{prop}
    We have 
    \begin{equation}
        \til\alpha = (d\til\varphi_{A,A'})_\beta(\alpha - \beta). 
    \end{equation}
\end{prop}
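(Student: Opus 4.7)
The plan is to derive the formula as a direct consequence of the convolution property established in Proposition \ref{prop: desy exp map conv}. By the definition of the Grothendieck connection (analogous to \eqref{Grothendieck conn from phi underline} but using the local exponential map $\til\varphi_{\bullet,A'}$ for fixed $A'$), the parallel transport $\til\alpha$ of $\alpha$ from $A$ to $\til{A}$ is characterized by the identity
\begin{equation*}
    \til\varphi_{\til{A},A'}(\til\alpha) = \til\varphi_{A,A'}(\alpha).
\end{equation*}
Thus the task reduces to finding $\til\alpha \in T_{\til{A}}\FC'$ satisfying this equation.

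The plan is then to apply Proposition \ref{prop: desy exp map conv} with $\gamma := \alpha - \beta$, so that $\beta + \gamma = \alpha$. The convolution property gives
\begin{equation*}
    \til\varphi_{A,A'}(\alpha) = \til\varphi_{A,A'}(\beta + \gamma) = \til\varphi_{\til{A},A'}\bigl((d\til\varphi_{A,A'})_\beta(\gamma)\bigr) = \til\varphi_{\til{A},A'}\bigl((d\til\varphi_{A,A'})_\beta(\alpha - \beta)\bigr),
\end{equation*}
where in the middle equality $\til{A} = \til\varphi_{A,A'}(\beta)$ as specified. Comparing with the characterization of $\til\alpha$ above yields the claimed formula, provided the argument lies in the domain of $\til\varphi_{\til{A},A'}$; this is guaranteed by shrinking the neighborhood of $A$ in which $\alpha,\beta$ are taken.

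The only subtle point is the implicit uniqueness of $\til\alpha$, which is needed to conclude from the displayed equality that $\til\alpha = (d\til\varphi_{A,A'})_\beta(\alpha - \beta)$ and not merely that this is \emph{a} preimage. This follows because $\til\varphi_{\til{A},A'}$ is a local diffeomorphism near $0 \in T_{\til{A}}\FC'$ (its derivative at $0$ restricted to the subspace of representatives is the identity in the chart, by the Kuranishi/inversion discussion of Proposition \ref{prop: kuranishi} and Proposition \ref{prop: bij}). Hence no obstacle remains: the proposition is essentially a restatement of the convolution identity, and no further computation is required.
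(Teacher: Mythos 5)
Your proof is correct and follows essentially the same route as the paper's: characterize $\til\alpha$ by $\til\varphi_{\til{A},A'}(\til\alpha)=\til\varphi_{A,A'}(\alpha)$ and apply the convolution property of Proposition \ref{prop: desy exp map conv} with $\gamma=\alpha-\beta$. Your added remark on local invertibility of $\til\varphi_{\til{A},A'}$ just makes explicit a point the paper leaves implicit.
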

\begin{proof}
    This follows from the fact that $\til\alpha = \til\varphi^{-1}_{\til{A},A'}\til\varphi_{A,A'}$ and  
    Proposition \ref{prop: desy exp map conv} by choosing $\gamma$ such that $\alpha = \gamma + \beta$. 
\end{proof}
By restricting to harmonic forms and passing to cohomology, we obtain a local exponential map $\varphi_{A,A'}\colon V_{A'} \to \mc{U}_{A'}$, defined on an open subset $V_{A'}$ of the cohomology bundle $\mathbb{H}\big|_\mc{{U}_A}\to \mc{U}_{A'}$. Associated to this map is a partial fiber bundle connection, whose parallel transport can be defined as follows: Let $\alpha, \beta \in H^1_A$ and $\til{A} = \varphi_{A,A'}(\alpha)$. Then the parallel transport of $\alpha$ from $A$ to $\til{A}$ is given by the cohomology comparison map
\begin{equation}
    \til{\alpha} = \BB_{\til{A} \la A,A'}(\alpha - \beta)
\end{equation}
since by Remark \ref{rem 3.9} the parallel transport of $\nabla^\mr{Harm}$ from $A$ to $\til{A} = \varphi_{A,A'}(\beta)$ is given by $(d\varphi_{A,A'})_\beta$.
Cf. also Definition \ref{def: partial Grothendieck connection} and Remark \ref{rem: partial Grothendieck connection}.
\section{Perturbative Chern-Simons partition function in the BV formalism}
In this section we recall the definition of the perturbative Chern-Simons partition function at an arbitrary reference flat connection $A_0$ given in \cite{Cattaneo2008} (a detailed review can be found also in \cite{mnev2019quantum},\cite{Wernli2022}), and extend this to the definition of the desynchronized partition function which uses as gauge fixing operator the codifferential $d^*_{A'}$ instead of $d^*_{A_0}$. Let $G$ be a simple, compact and simply connected Lie group and $\langle\cdot,\cdot\rangle$ an $\mathrm{ad}$-invariant pairing on $\g$. Let $P$ be a principal $G$-bundle on a 3-manifold $M$, we will assume that a trivialization\footnote{Our assumptions are such that trivializations are guaranteed to exist. See for instance \cite{Freed1995}.} of $P$ has been fixed: $P \cong M \times G$. We can therefore identify connections with $\g$-valued 1-forms. Our convention for the Chern-Simons action $S_{CS} \colon \Omega^1(M,\g) \to \R$ is 
\begin{equation}\label{S_CS in sec3}
    S_{CS}(A) = \int_M \frac12 \langle A, \dd A\rangle + \frac16\langle A,[A,A]\rangle.
\end{equation}
Its critical points are the flat connections, i.e. those 1-forms $A_0 \in \Omega^1(M,\g)$ satisfying 
\begin{equation}
    dA_0 + \frac{1}{2}[A_0,A_0] = 0.
\end{equation}
For a flat connection $A_0 \in \Omega^1(M,\g)$, we denote the twisted de Rham differential by 
\begin{equation}
    d_{A_0} = d + [A_0,\cdot] \colon \Omega^\bullet \to \Omega^{\bullet + 1}
\end{equation}
and the $A_0$-twisted de Rham cohomology by 
$H_{A_0}^\bt$.

\subsection{Perturbative partition function}\label{ss perturbative partition function}
We can now proceed with the definition of  the perturbative Chern-Simons partition function at $A_0$.  
Formally, we want to define it as the perturbative evaluation of the BV pushforward 
$$ Z_\sh(A_0,\sfa) = \int_{\alpha_\mr{fl} \in \operatorname{im} \sh}e^{\frac{i}{\hbar}S_{CS}(A_0 + \sfa+\alpha_\mr{fl})}\mu^{\frac{1}{2}}. $$
We first define the partition function with gauge fixing operator $\sh=d^*_{A_0}$, and then comment on changing the gauge fixing operator. 
\begin{defn} \label{def: Z pert} Let $A_0$ be a flat connection on $M$, and $g$ a Riemannian metric on $M$. The Chern-Simons partition function at $A_0$ with gauge fixing operator $d^*_{A_0}$ is 
defined by

\begin{multline}\label{eq: Z def}
    Z(A_0,\sfa;g)\colon=e^{\frac{i}{\hbar}S_{CS}(A_0)} \tau(A_0)^\frac{1}{2} e^{\frac{\pi i}{4}\psi(A_0;g)}\cdot\\
    \cdot \exp \left(\sum_\Gamma \frac{(-i\hbar)^{-\chi(\Gamma)}}{|\mr{Aut}(\Gamma)|}\Phi_{\Gamma,A_0;g}(\sfa)\right) \\ 
    \in 
    e^{\frac{i}{\hbar}\big(S_{CS}(A_0)+\sum_{n\geq 2}\frac{1}{(n+1)!}\langle \sfa,l_n(\sfa,\ldots,\sfa)\rangle\big)}\cdot\mr{Det}^\frac12(H^\bt_{A_0})\otimes \wh{\mr{Sym}}(H^\bt_{A_0}[1])^*[[\hbar]]
\end{multline}
-- a formal half-density on de Rham cohomology twisted by $A_0$.\footnote{Note that there is no sign ambiguity in the square root line bundle $\mr{Det}^{\frac12}(H^\bt_{A_0})$, since by Poincar\'e duality it can be expressed as $\mr{Det}(H^0)\otimes (\mr{Det}(H^1))^*$.} 
Here:
\begin{itemize}
   \item $S_{CS}(A_0)$ is the value of Chern-Simons action (\ref{S_CS in sec3}) on $A_0$.
    \item $\tau(A_0)\in \mr{Det}(H^\bt_{A_0})$ is the Ray-Singer torsion of $M$ with local system $A_0$. $\tau(A_0)^{\frac12}\in \mr{Det}^{\frac12}(H^\bt_{A_0})$ is its square root.
    \item $\psi(A_0;g)$ is the Atiyah-Patodi-Singer eta-invariant of the 
    operator ${L_-\colon=*d_{A_0}+d_{A_0}*}$ acting on forms of odd degree.
    \item The sum ranges over connected 3-valent graphs (``Feynman graphs'') $\Gamma$ with leaves (loose half-edges) allowed. $\chi(\Gamma)$ is the Euler characteristic of the graph and $\mr{Aut}(\Gamma)$ is the automorphism group. The weight of a graph $\Gamma$ is a polynomial in $\sfa$ defined as 
    \begin{multline}\label{eq: def Phi_Gamma}
        \Phi_{\Gamma,A_0;g}(\sfa)=\\
        \int_{\overline{\mr{Conf}}_V(M)} 
        \left\langle  \prod_{\mr{leaves}\; l} \pi^*_{v(l)}i_{A_0}(\sfa) \prod_{\mr{edges}\, e=(uv)} \pi_{uv}^*\eta_{A_0}
        \prod_{\mr{short\; loops}\; e=(vv)} \pi_v^* \eta_{A_0}^\Delta
        , \bigotimes_{\mr{vertices}} f \right\rangle,
    \end{multline}
    where: 
    \begin{itemize}
   \item  $\overline{\mr{Conf}}_V(M)$ is the Fulton-MacPherson-Axelrod-Singer compactification of the configuration space of $V=\#\{\mr{vertices}\}$ points on $M$. 
   \item $\pi_{uv}\colon \overline{\mr{Conf}}_V(M) \ra \overline{\mr{Conf}}_2(M)$   is the map forgetting the positions of all points except points $u$ and $v$; similarly,
   $\pi_v\colon \overline{\mr{Conf}}_V(M) \ra M$ is the map forgetting all points except $v$.
   \item The propagator $\eta_{A_0}\in \Omega^2(\overline{\mr{Conf}}_2(M),\g\otimes\g)$ is minus the integral kernel of the operator 
   \begin{equation}\label{eq: def K}
   K_{A_0}=d^*_{A_0}(\Delta_{A_0}+P_\mr{Harm})^{-1}
   \end{equation} -- the Hodge chain homotopy between $d_{A_0}$ and projection to harmonic forms.
   \item $\eta_{A_0}^\Delta\in \Omega^2(M,\g\otimes\g)$ is the appropriately renormalized evaluation of $\eta_{A_0}$ on the diagonal.\footnote{\label{footnote: tadpoles}
   It is the term $L^{cont}$ in \cite{Axelrod1991}, formula (PL5). It is the limit $\lim_{y\ra x} (\eta_{A_0}(x,y)-(\cdots))$ with $(\cdots)$ the singular part of $\eta$ at the diagonal. 
   }
   \item $i_{A_0}$ maps a cohomology class to its harmonic representative. 
   \item $f\in \g^{\otimes 3}$ is the structure tensor of the Lie algebra $\g$. 
   \item $\langle,\rangle$ is the inner product on $\g$ extended to $\g^{\otimes \#\{\mr{half-edges}\}}$.
   \item In (\ref{eq: def Phi_Gamma}), the first product is over leaves of $\Gamma$, with $v(l)$ the vertex incident to the leaf; the second product is over edges connecting \emph{distinct} vertices $u,v$; the third product is over ``short loops'' -- edges connecting a vertex $v$ to itself.
    \end{itemize}
\end{itemize}
\end{defn}
\begin{rem}
One can split the partition function according to ``loop number'' as $Z(A_0,\sfa)=Z^{(0)}(A_0,\sfa)Z^{(1)}(A_0,\sfa)Z^{(\geq 2)}(A_0,\sfa)$, 
where 
\begin{align*}
Z^{(0)}(A_0,\mathsf{a}) &:=  
\exp\left(\frac{i}{\hbar}\left(S_{CS}(A_0)+\sum_{\Gamma \in \mr{Gr_{conn}},\, l(\Gamma) =  0} \frac{1}{|\operatorname{Aut}(\Gamma)|}\Phi_{\Gamma,A_0;m}(\mathsf{a})
\right)\right) 
\\
&=\exp\left(\frac{i}{\hbar}\left(S_{CS}(A_0)+\sum_{n \geq 1}\frac{1}{(n+1)!}\langle \sfa, l_n(\sfa,\ldots,\sfa)\rangle\right)\right), \\
Z^{(1)}(A_0,\mathsf{a}) &:= \tau(A_0)^{\frac12}e^{\frac{\pi i }{4}\psi(A_0;g)} \cdot \exp\left(\sum_{\Gamma \in \mr{Gr_{conn}},\, l(\Gamma) =  1} \frac{1}{|\operatorname{Aut}(\Gamma)|}\Phi_{\Gamma,A_0;g}(\mathsf{a})\right)
\\ &\in \mr{Det}^{\frac12}( H^\bullet_{A_0})\otimes\widehat{\Sym}(H^\bullet_{A_0}[1])^*, \notag\\
Z^{(\geq 2)}(A_0,\mathsf{a}) &:= \exp\left(\frac{i}{\hbar}\sum_{\Gamma \in \mr{Gr_{conn}},\,  l(\Gamma) \geq 2} \frac{(-i\hbar)^{l(\Gamma)}}{|\operatorname{Aut}(\Gamma)|}\Phi_{\Gamma,A_0;g}(\mathsf{a})\right) \in \widehat{\Sym}(H^\bullet_{A_0}[1])^*[[\hbar]].
\end{align*}
Here $l(\Gamma)$ is the number of loops in a connected graph.

The reason for excluding tree (0-loop) diagrams from the sum in \eqref{eq: Z def} is that they come with a factor of $1/\hbar$ so after taking exponential we would obtain unbounded negative powers oh $\hbar$. Instead, they are included in the prefactor in the form of the induced $L_\infty$ operations $l_n$. 
\end{rem}
\begin{thm}
    The perturbative partition function is closed with respect to the BV Laplacian on zero modes, 
    \begin{equation} \label{QME in ss 3.2}
        \Delta_\sfa Z_{A_0}(\sfa) = 0.
    \end{equation}
\end{thm}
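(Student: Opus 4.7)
The plan is to invoke the BV pushforward interpretation from Section 1. Heuristically, $Z_{A_0}=P_*(e^{\frac{i}{\hbar}f}\mu_0)$ for $f$ the shifted Chern-Simons action satisfying the classical master equation $\{f,f\}=0$, so the finite-dimensional BV Stokes' theorem would give $\Delta_\sfa P_*=P_*\Delta$ and hence $\Delta_\sfa Z=0$. The task is to verify this at the level of Feynman diagrams, where the pushforward is only defined perturbatively.

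First I would rewrite \eqref{QME in ss 3.2} as the quantum master equation for the effective action. Writing $Z=\tau(A_0)^{1/2}e^{\frac{\pi i}{4}\psi(A_0;g)}\exp(\frac{i}{\hbar}S_\mr{eff}(\sfa))$, so that $S_\mr{eff}$ collects both the tree-level $L_\infty$ operations on cohomology and all loop corrections from connected graphs, the identity $\Delta_\sfa Z=0$ becomes
\begin{equation*}
\tfrac12\{S_\mr{eff},S_\mr{eff}\}-i\hbar\,\Delta_\sfa S_\mr{eff}=0,
\end{equation*}
where $\Delta_\sfa$ and $\{-,-\}$ are defined using the Poincar\'e duality pairing identifying $H^2_{A_0}\cong(H^1_{A_0})^*$ (and analogously $H^3\cong(H^0)^*$ when $A_0$ is reducible).

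Both sides are then expanded as sums over Feynman graphs with two distinguished leaves: the bracket term corresponds to contracting a pair of leaves lying on distinct components via the Poincar\'e pairing, while $\Delta_\sfa S_\mr{eff}$ contracts two leaves on a single connected component. For each graph $\Gamma$ I would apply Stokes' theorem to $\int_{\overline{\mr{Conf}}_V(M)}d(\cdots)=\int_{\partial\overline{\mr{Conf}}_V(M)}(\cdots)$, using that $d_{A_0}i_{A_0}(\sfa)=0$ (leaves are harmonic), the Jacobi identity at each trivalent vertex, and the key propagator identity
\begin{equation*}
d\,\eta_{A_0}=\pi^*\delta_\Delta-\pi^*(\text{integral kernel of }i_{A_0}p_{A_0}),
\end{equation*}
which encodes $[d_{A_0},K_{A_0}]=\mr{id}-i_{A_0}p_{A_0}$. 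Principal boundary faces (two points colliding) produce graphs with a contracted edge whose combinatorics match $\tfrac12\{S_\mr{eff},S_\mr{eff}\}$, while the harmonic-projector term in $d\eta_{A_0}$ converts an internal edge into a pair of leaves paired by Poincar\'e duality, matching $-i\hbar\,\Delta_\sfa S_\mr{eff}$.

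The main obstacle is the hidden-face analysis: collapses of three or more configuration points must be shown not to spoil the cancellation. This requires a Kontsevich/Axelrod--Singer-type vanishing lemma, together with the tadpole renormalization $\eta^\Delta_{A_0}$ for short loops, which collectively ensure the hidden contributions drop out. The prefactor $\tau(A_0)^{1/2}e^{\frac{\pi i}{4}\psi(A_0;g)}$ is $\sfa$-independent and is annihilated by $\Delta_\sfa$ by a separate determinant-line computation. Modulo these standard but technical inputs, the classical master equation for $S_{CS}$ combined with the structure of twisted Hodge theory for $d_{A_0}$ yields the desired identity.
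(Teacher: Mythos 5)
Your proposal follows the same route as the proof the paper points to (\cite{Wernli2022}, adapting Lemma 4.11 of \cite{Cattaneo2017} and Theorem 1 of \cite{Cattaneo2008}): rewrite the claim as the quantum master equation $\tfrac12\{S_\mr{eff},S_\mr{eff}\}-i\hbar\Delta_\sfa S_\mr{eff}=0$, apply Stokes' theorem on $\overline{\mr{Conf}}_V(M)$ using $[d_{A_0},K_{A_0}]=\mr{id}-i_{A_0}p_{A_0}$, invoke the Jacobi identity, and dispose of hidden faces and short loops by the Axelrod--Singer/Kontsevich vanishing lemmas and the $\eta^\Delta$ renormalization. So the architecture is right and matches the paper's.

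There is, however, a misattribution in the one step where the combinatorial matching actually happens, and as written the cancellation would not close. The principal boundary faces (equivalently, the $\pi^*\delta_\Delta$ term in $d\eta_{A_0}$) produce graphs in which an edge is \emph{contracted}, i.e.\ two trivalent vertices merge into a four-valent one; such graphs cannot reproduce $\tfrac12\{S_\mr{eff},S_\mr{eff}\}$, whose terms are graphs with an extra \emph{edge} decorated by the harmonic projector, obtained by gluing two leaves via the Poincar\'e pairing. In the correct bookkeeping the contracted-edge contributions cancel \emph{among themselves} by the Jacobi identity — this is precisely where the classical master equation $\{S_{CS},S_{CS}\}=0$ enters — while \emph{both} $\tfrac12\{S_\mr{eff},S_\mr{eff}\}$ and $-i\hbar\Delta_\sfa S_\mr{eff}$ arise from the $i_{A_0}p_{A_0}$-kernel term, distinguished by whether removing the $P$-decorated edge disconnects the graph (Poisson bracket) or not (BV Laplacian). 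This is exactly the dichotomy the paper spells out in the proof of Proposition \ref{prop: variation of Z wrt A'}, cases (A) and (B) around (\ref{proof of prop: variation of Z in A', eq1}). With that reassignment (and keeping track of the factor $(-i\hbar)^{-\chi(\Gamma)}$ under the inclusion--exclusion for $\chi$ when the $P$-edge is added back), your argument closes; the remaining ingredients you list are the standard ones.
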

We refer to \cite[Section 3.4.2]{Wernli2022} for the proof. It is in turn an adaptation of the proof of Lemma 4.11 from \cite{Cattaneo2017}, using Stokes' theorem for configuration space integrals representing Feynman weights.
Also, the case $A_0=0$ is a part of Theorem 1 in \cite{Cattaneo2008}.

\begin{rem} 
    \begin{enumerate}[(i)]
        \item If the flat connection $A_0$ is \emph{irreducible}, then $H^0_{A_0}=H^3_{A_0}=0$. 
        An elementary degree count then shows that $Z(A_0,\sfa)$ depends only on the 1-form component of $\sfa$. In particular, in this case (\ref{QME in ss 3.2}) holds trivially.
        \item If $[A_0]$ is a smooth point in the moduli space of flat connections, then operations $l_n$ on $H^\bt_{A_0}$ vanish (i.e., the tree graphs in (\ref{eq: Z def}) cancel out).
    \end{enumerate}
\end{rem}


\subsection{Desynchronized partition function}
Let $\sh$ be a good gauge fixing operator for $A_0$, and $r_\sh = (i_\sh,p_\sh,K_\sh)$ the corresponding SDR data. The goal of this subsection is to define the ``desynchronized'' perturbative partition function, heuristically given by the BV pushforward 
\begin{equation}\label{Z desync PI}
    Z_{A_0,\sh}(\sfa) = \sqrt{\mu'}\int_{\alpha \in \operatorname{im} \sh}e^{\frac{i}{\hbar}S_{CS}(A_0+i_\sh(\sfa)+\alpha)}\sqrt{\mu''}\bigg|_\calL
\end{equation}
where $\sqrt{\mu} = \sqrt{\mu'}\sqrt{\mu''}$ is the formal Lebesgue half-density on $\Omega^\bullet(M,\g)[1]$ and $\sqrt{\mu'},\sqrt{\mu''}$ the Lebesgue
half-densities on $H^\bullet_{A_0}[1]
 \cong \ker \HHH$ and $\im d_{A_0} \oplus \im \sh$ respectively. For the remainder of this section we fix $\sh = d^*_{A'}$, for some flat connection $A'$ close to $A_0$ in the sense of Definition \ref{def: close connections}.

We then define the desynchronized partition function analogously to the synchronized case: 
\begin{defn}
Let $(A,A')$ be a pair of close, smooth flat connections. Then we define the \emph{desynchronized partition function} $$Z_{A,A'} \in e^{\frac{i}{\hbar}S_{CS}(A)}\cdot \Det^\frac12 (H^\bullet_{A})\otimes\widehat{\Sym}(H^\bullet_{A}[1])^*[[\hbar]]$$ as the product 
\begin{equation}
    Z_{A,A'}(\sfa) = Z^{(0)}_{A,A'}\,Z^{(1)}_{A,A'}(\sfa)\, Z^{(\geq 2)}_{A,A'}(\sfa)   
\end{equation}
where  $Z_{A,A'}^{(0)} :=  e^{\frac{i}{\hbar}S_{CS}(A)} $ and 
\begin{multline} 
        Z_{A,A'}^{(1)}(\mathsf{a}) :=  
        e^{\frac{\pi i}{4}\psi_{A}} \tau_{A}^{1/2}
        \exp\left(\sum_{\Gamma \in Gr_{conn}, l(\Gamma) =1 } \frac{1}{|\operatorname{Aut}(\Gamma)|}\Phi_{\Gamma,A,A'}(\mathsf{a})\right) \\
    \in  \Det^\frac12 (H^\bullet_{A})\otimes\widehat{\Sym}(H^\bullet_{A}[1])^*, \label{eq: Z desy 1-loop}
\end{multline}
\begin{multline}
    Z_{A,A'}^{(\geq 2)}(\mathsf{a}) := \exp\left(\sum_{\Gamma \in Gr_{conn}, l(\Gamma) \geq 2} \frac{(-i\hbar)^{l(\Gamma)-1}}{|\operatorname{Aut}(\Gamma)|}\Phi_{\Gamma,A,A'}(\mathsf{a})\right)\\
    \in  
    \widehat{\Sym}(H^\bullet_{A}[1])^*[[\hbar]].
\end{multline}
The Feynman weights $\Phi_{A,A'}(\Gamma)$ are defined as in \eqref{eq: def Phi_Gamma}, where we replace the integral kernel \eqref{eq: def K} of $K_{A}$ by the integral kernel of 
\begin{equation}
    K_{A,A'} = d_{A'}^*\circ (\Delta_{A,A'} + P_{A,A'})^{-1} 
\end{equation} and the map $i_{A}$ with $i_{A,A'}$. 
\end{defn}
Notice that since $A$ is smooth, there are no trees in the zero-loop part -- their weights vanish by the smoothness assumption. 

\subsubsection{Digression: Path integral computation of desynchronized 1-loop part}
The abelian part of the path integral (\ref{Z desync PI}) is
\begin{equation}
    I_{A,A'} \colon = \sqrt{\mu'} \int_{\alpha \in \calL = \im d_{A'^*} }e^{\frac{i}{\hbar}\int_M\frac12\langle \alpha, d_{A}\alpha\rangle}\sqrt{\mu''}\bigg|_\calL .
    \label{eq: int I}
\end{equation}
Perturbative formula (\ref{eq: Z desy 1-loop}) corresponds to evaluating the path integral (\ref{eq: int I}) to
\begin{equation}
    I_{A,A'}\colon= \tau_Ae^{\frac{i\pi}{4}\psi_A}.
\end{equation}
In this digression we want to explain why this is a good definition of the r.h.s. of (\ref{eq: int I}).
Namely, naive evaluation of this path integral would go along the following lines.
For a subspace $V \subset \Omega^\bullet(M,\g)$ and an isomorphism $F \colon V \to V$ we set 
    $$ \int_{\alpha \in V} e^{\frac{i}{\hbar}\frac12(\alpha, F\alpha)_H}\mu_H = e^{\frac{i\pi}{4}\mr{sign}F}\sdet_V^\frac12F$$  
    where $(\cdot,\cdot)_H$ denotes the Hodge inner product 
    $$(\alpha_1, \alpha_2)_H = \int_M \langle \alpha_1 \stackrel{\wedge}{,} * \alpha_2 \rangle $$
    and the signature and superdeterminant have to be understood in a regularized sense. Looking at \eqref{eq: int I}, the map $*d_{A}$ maps $d_{A'}$-coexact forms to $d_{A}$-coexact forms, so it is not an endomorphism of $\calL = \im d_{A'}^*$. The orthogonal (with respect to the Hodge inner product) projector to $\im d_{A'}^*$ is $K_{A'}d_{A'}$, so we obtain 
    \begin{equation} \label{eq: I def PI} I_{A,A'} = \sqrt{\mu'}\, e^{\frac{i\pi}{4}\sign K_{A'}d_{A'}*d_{A}}\sdet_{\im d_{A'}^*}^\frac12(K_{A'}d_{A'}*d_{A}).
    \end{equation}
    We claim that this coincides with the following definition: 
    \begin{lem} For a pair of close flat connections $(A,A')$, $I_{A,A'}$ can be expressed as 
\begin{equation}\label{I def}
    I_{A,A'} = e^{\frac{i\pi}{4} \psi_{A'}}
    \det(\BB_{A\la A';A'})^{1/2}
    \tau_{A'}^{1/2}\sdet^{\frac{1}{2}}_{\im d_{A'}^*}\left(1 + K_{A'}\operatorname{ad}_\beta\right) \in \Det^{\frac12} H^\bullet_{A},
\end{equation}
where 
\begin{itemize}
    \item $\psi_{A'}$ is the eta-invariant of $*d_{A'} + d_{A'}*$, 
    \item $\BB_{A\la A';A'}\colon H^\bt_{A'}\ra H^\bt_{A}$ is the cohomology comparison map of Section \ref{sss cohomology comparison map}.
    \item $\sdet$ denotes a zeta-regularized superdeterminant, 
    \item $\beta = A - A'$ is the difference between the two flat connections. 
\end{itemize}
\end{lem}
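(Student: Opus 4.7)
The plan is to extract the $\beta$-dependence from the operator $K_{A'}d_{A'}*d_A$ in \eqref{eq: I def PI} by a purely algebraic factorization on $\calL=\im d_{A'}^*$, and then to identify the resulting $\beta$-independent factor with Schwarz's Gaussian expression for the Ray-Singer torsion at $A'$ together with the APS eta-invariant phase. The cohomology comparison map $\BB_{A\la A';A'}$ arises because Schwarz's formula naturally produces a half-density on $H^\bt_{A'}[1]$, whereas $I_{A,A'}$ is a half-density on $H^\bt_A[1]$.

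The key algebraic identity I would first establish is
\begin{equation}\label{eq: plan factorization}
    K_{A'}d_{A'}*d_A \;=\; (*d_{A'})\circ \bigl(1+K_{A'}\operatorname{ad}_\beta\bigr)\qquad\text{on }\im d_{A'}^*.
\end{equation}
To prove this, I would write $d_A = d_{A'}+\operatorname{ad}_\beta$ and note that, since $K_{A'}|_{\im d_{A'}^*}=0$, the homotopy identity $K_{A'}d_{A'} = 1 - P_{A'} - d_{A'}K_{A'}$ restricts on $\im d_{A'}^*$ to the Hodge projector $\Pi_{\mr{coex}}$ onto the coexact summand. For $\alpha \in \im d_{A'}^*$, $d_{A'}\alpha$ is exact and hence $*d_{A'}\alpha$ is coexact, so $K_{A'}d_{A'}(*d_{A'}\alpha) = *d_{A'}\alpha$. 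For the $\operatorname{ad}_\beta$-term, the fact that $*$ intertwines the summands of the Hodge decomposition gives $\Pi_{\mr{coex}}\,*=*\,\Pi_{\mr{ex}}$, and since $\Pi_{\mr{ex}}=d_{A'}K_{A'}$ one obtains $\Pi_{\mr{coex}}*\operatorname{ad}_\beta\alpha=*d_{A'}K_{A'}\operatorname{ad}_\beta\alpha$; summing the two contributions yields \eqref{eq: plan factorization}.

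Next I would plug this factorization into \eqref{eq: I def PI} and invoke multiplicativity of the zeta-regularized super-determinant, choosing the branch of $\sdet^{1/2}_{\im d_{A'}^*}(1+K_{A'}\operatorname{ad}_\beta)$ that is continuous in $\beta$ and equal to $1$ at $\beta=0$, so that its phase contribution is absorbed into the square root. The unperturbed piece $\sqrt{\mu'}\cdot\sdet^{1/2}_{\im d_{A'}^*}(*d_{A'})\,e^{\frac{i\pi}{4}\sign(*d_{A'})}$ is then the synchronized Gaussian at $A'$, which by Schwarz's formula for the Ray-Singer torsion --- together with the APS identification of $\sign(*d_{A'})|_{\im d_{A'}^*}$ with the eta-invariant of $L_-=*d_{A'}+d_{A'}*$ on odd forms --- equals $\tau_{A'}^{1/2}\,e^{\frac{i\pi}{4}\psi_{A'}}$ as a half-density on $H^\bt_{A'}[1]$. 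Converting this half-density to one on $H^\bt_A[1]$ via the linear isomorphism $\BB_{A\la A';A'}$ produces the factor $\det(\BB_{A\la A';A'})^{1/2}$, completing the formula \eqref{I def}.

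The main obstacle will be the infinite-dimensional regularization: matching $\sign(*d_{A'})|_{\im d_{A'}^*}$ with the APS eta-invariant $\psi_{A'}$, and Schwarz's regularized super-determinant identity with the Ray-Singer torsion $\tau_{A'}$, both require careful bookkeeping of the conventions (orientations, signs of the Hodge star, branches of the square roots) fixed by Definition \ref{def: Z pert}. The algebraic factorization \eqref{eq: plan factorization} itself is purely formal and should pose no difficulty.
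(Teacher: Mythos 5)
Your proposal is correct and follows essentially the same route as the paper: both rest on the operator identity $K_{A'}d_{A'}*d_A = *d_{A'}\circ(1+K_{A'}\operatorname{ad}_\beta)$ on $\im d_{A'}^*$, multiplicativity of the regularized superdeterminant, the identification $\sqrt{\mu'}\,\sdet^{1/2}_{\im d_{A'}^*}(*d_{A'}) = \det(\BB_{A\la A';A'})^{1/2}\tau_{A'}^{1/2}$, and a continuity-in-$\beta$ argument for the phase. The only (cosmetic) difference is that you derive the factorization from the projector identities $K_{A'}d_{A'}=\Pi_{\mathrm{coex}}$ and $\Pi_{\mathrm{coex}}*=*\Pi_{\mathrm{ex}}$, whereas the paper writes $K_{A'}=d^*_{A'}G_{A'}$ and commutes $G_{A'}$ past $*$ and $d_{A'}$.
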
 
\begin{proof}
    We have on $\im d_{A'}^*$ that $K_{A'}d_{A'} = \mr{id}$ and therefore
    $$ \mr{id} + K_{A'}\mathrm{ad_\beta} = \mr{id} + K_{A'}(d_{A} - d_{A'}) = K_{A'}d_{A}.$$  
    Also, 
    $$ \det(\BB_{A\la A';A'})^{\frac12}
    \tau_{A'}^\frac12 = \sqrt{\mu'}\sdet^\frac12_{\im d_{A'}^*} *d_{A'}.$$
    This implies that 
    \begin{align*} & 
    \det(\BB_{A\la A';A'})^{\frac12}
    \tau_{A'}^\frac12\sdet^{\frac{1}{2}}_{\im d_{A'}^*}\left(1 + K_{A'}\operatorname{ad}_\beta\right) \\
    &= \sqrt{\mu'}\sdet_{\im d_{A'}^*}^\frac12(*d_{A'}K_{A'}d_{A}) \\
    &= \sqrt{\mu'}\sdet_{\im d_{A'}^*}^\frac12(*d_{A'}* d_{A'}* G_{A'}d_{A}) \\
    &= \sqrt{\mu'}\sdet_{\im d_{A'}^*}^\frac12(*d_{A'}* G_{A'} d_{A'}* 
    d_{A})\\
    &= \sqrt{\mu'}\sdet_{\im d_{A'}^*}^\frac12(K_{A'}d_{A'}*d_{A})
    \end{align*}
    where we have used that the Green's function $G_{A'}$ commutes with both the the Hodge star and $d_{A'}$.\footnote{In principle there could be a multiplicative anomaly when combining the regularized superdeterminants, but here it is absent because the equality is trivially true for $\beta = 0$ and we are restricting to small $\beta$.}
    For the phase, we note that the spectrum of $K_{A'}d_{A'} * d_{A}$ is obtained from the spectrum of $*d_{A'}$ through continuous deformation where none of the 
  real parts of the eigenvalues  crosses zero, therefore any regularization of the signature will yield the same result. 
  \end{proof}
However, it turns out that we have the following: 
\begin{lem}
    The expression \eqref{I def} for $I_{A,A'}$ is independent of $A'$ and depends on $g$ only through $\psi_{A'}$. In particular, $I_{A,A'} = I_{A,A} = \tau^\frac12_Ae^{\frac{i\pi}{4}\psi_A}$. 
\end{lem}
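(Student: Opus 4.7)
The proof naturally splits into two parts: verifying the identity at the diagonal $A'=A$, and then establishing $A'$-independence in a neighborhood of $A$.

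For the diagonal case, set $A'=A$: then $\beta = A - A' = 0$, so $1 + K_{A'}\ad_\beta = \mr{id}$ and $\sdet^{1/2}_{\im d^*_{A'}}(1+K_{A'}\ad_\beta) = 1$; the cohomology comparison map $\BB_{A\la A;A}$ is the identity on $H^\bullet_A$, so $\det(\BB_{A\la A;A})^{1/2} = 1$; and of course $\tau_{A'} = \tau_A$, $\psi_{A'} = \psi_A$. Substituting into \eqref{I def} gives $I_{A,A} = \tau_A^{1/2}e^{\frac{i\pi}{4}\psi_A}$, matching the claim. Hence the ``$I_{A,A'}=I_{A,A}$'' half of the statement would follow from metric/$A'$-independence proven below.

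For $A'$-independence, I would differentiate $I_{A,A'}$ along a smooth path $A'_s = A' + s\gamma$ of flat connections starting at $A'$, with $\gamma \in \Omega^1_{d_{A'}\mr{-cl}}$, and show $\partial_s|_{s=0}\log I_{A,A'_s} = 0$. There are four contributions to compute and cancel: (i) variation of $\tau_{A'}^{1/2}$ via the Ray--Singer formula for variation of analytic torsion along a smooth family of flat connections; (ii) variation of $\det(\BB_{A\la A';A'})^{1/2}$ read off from the explicit form \eqref{nabla^Harm} of the connection $\nabla^\mathrm{Harm}$ governing the cohomology comparison map; (iii) variation of $\sdet^{1/2}_{\im d^*_{A'}}(1+K_{A'}\ad_\beta)$, combining the variation of $K_{A'}$ (via the variation-of-$\sh$ lemma applied with $\delta\sh = \ad^*_\gamma$) together with the variation of $\beta = A - A'$; and (iv) variation of $\psi_{A'}$ via the classical APS formula (with no spectral-flow contribution for small $\gamma$). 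The heuristic principle underlying the cancellation is that, via \eqref{eq: I def PI}, $I_{A,A'}$ is the BV pushforward $\sqrt{\mu'}\int_{\im d^*_{A'}} e^{\frac{i}{2\hbar}(\alpha,*d_A\alpha)_H}\sqrt{\mu''}$ of a BV-closed half-density whose value is independent of the gauge-fixing Lagrangian. A tidier alternative route is to use \eqref{eq: I def PI} directly and exhibit $I_{A,A'}/I_{A,A}$ as a ratio of $\zeta$-regularized determinants on the subspaces $\im d^*_{A'}$ and $\im d^*_A$, related by the map $K_{A'}d_A$ (which is an isomorphism for $A$, $A'$ close), and reduce to showing this ratio equals $1$.

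The main obstacle is the careful handling of the regularized determinants. Each variation in (i)--(iv) is naturally a regularized trace of an operator on either $\Omega^\bullet$ or $\im d^*_{A'}$, and the sum must be reorganized into a single trace on $\im d^*_{A'}$ and shown to vanish using the identities $K_{A'}d_{A'} = \mr{id}$ on $\im d^*_{A'}$, $d_{A'}K_{A'} = \mr{id} - P_{A'}$ on $\im d_{A'}$, and cyclicity. A subtle point is the potential multiplicative anomaly of $\zeta$-determinants when factoring $\sdet^{1/2}_{\im d^*_{A'}}(K_{A'}d_{A'} * d_A)$ in \eqref{eq: I def PI} as $\det(\BB)^{1/2}\tau_{A'}^{1/2}\sdet^{1/2}(1+K_{A'}\ad_\beta)$; I expect no such anomaly to arise here because the identity holds trivially at $A'=A$ and only infinitesimal variations from this basepoint are in play, so any anomaly would have to be of local/topological origin and thus would already be visible at $A'=A$.
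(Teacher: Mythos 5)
Your plan is essentially the paper's: the paper itself gives no written-out argument beyond "the proof is a long computation" and names the decisive mechanism — the non-flatness of $\nabla^{\mr{Harm}}$ makes $\BB_{A\la A';A'}$ depend on $A'$ and $g$, and that dependence is exactly what cancels the dependence of $\sdet^{1/2}_{\im d^*_{A'}}(1+K_{A'}\operatorname{ad}_\beta)$. Your diagonal check, your list of contributions (i)--(iv), your observation that (iv) vanishes by local constancy of $\psi$ on flat connections, and your remark on the multiplicative anomaly (matching the paper's own footnote) are all in line with this.

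Two caveats. First, you only sketch the $A'$-variation, while the statement also asserts that $I_{A,A'}$ depends on $g$ only through $\psi_{A'}$; this needs the parallel computation with $\operatorname{ad}^*_{\delta A'}$ replaced by $\lambda_{\delta g}=\star^{-1}\delta_g\star$, since the harmonic subspaces (hence $\BB_{A\la A';A'}$, and $\tau_{A'}^{1/2}$ once trivialized via the Hodge volume element) move with the metric and must cancel against $\delta_g K_{A'}$ inside the superdeterminant. Second, the cancellation among (i)--(iii) is the nontrivial part and your plan leaves it implicit. Varying the torsion together with the \emph{diagonal} (metric-isometric) comparison map, as in Lemma \ref{lemma: dependence of torsion on A}, produces $\mr{Str}(K_{A'}\operatorname{ad}_{\delta A'})$, which cancels only the leading term of $\delta_{A'}\sdet^{1/2}(1+K_{A'}\operatorname{ad}_\beta)$ coming from $\delta\beta=-\delta A'$; a residual supertrace built from blocks of the form $P\operatorname{ad}_\beta G\operatorname{ad}^*_{\delta A'}$ survives. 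It is killed precisely by the discrepancy between the diagonal comparison map and $\BB_{A\la A';A'}$, i.e.\ by the holonomy of $\nabla^{\mr{Harm}}$ around a small rectangle in $\FC'\times\FC'$, hence by its mixed curvature component \eqref{nabla^Harm curvature} — this is the computation carried out in the proof of Lemma \ref{lemma D.2}. Reading the variation of $\det(\BB_{A\la A';A'})$ "off the connection form \eqref{nabla^Harm}" alone will not produce this term, because $\nabla^{\mr{Harm}}$ is flat on each coordinate slice and the effect is genuinely a mixed $\delta A\,\delta A'$ (respectively $\delta A\,\delta g$) curvature contribution; once you make that explicit, your argument closes and coincides with the paper's.
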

The proof is a long computation. Crucially, the non-flatness of $\nabla^\mr{Harm}$ means that the cohomology comparison map $\BB_{A\la A';A'}$ depends both on $A'$ and $g$, this dependence precisely cancels the dependence of $\sdet^{\frac{1}{2}}_{\im d_{A'}^*}\left(1 + K_{A'}\operatorname{ad}_\beta\right)$ on $A'$ and $g$.

\section{Properties of the desynschronized partition function}
This section is devoted to the proof of Theorem \ref{thm: intro desy Z}  
which we split up in several subsections. Throughout this section $A$ and $A$ is a pair of close smooth irreducible flat connections. 
\subsection{Gauge invariance} 
We first discuss the impact of gauge transformations on $Z_{A,A'}(\sfa)$. Note that the gauge transformation $A \mapsto {}^\sfg A$ induces an isomorphism $H_{A}^\bullet \cong H_{{}^\sfg A}^\bullet$ by the adjoint action on cohomology classes.
\begin{prop}\label{prop: Z desync equivariance under diagonal gauge transf}
     We have that $Z_{A,A'}(\sfa)$ is invariant under ``diagonal'' gauge transformations $(A,A',\sfa) \mapsto ({}^\sfg A,{}^\sfg A',{}^\sfg\sfa)$. 
\end{prop}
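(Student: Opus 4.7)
The strategy is to verify, ingredient by ingredient, that every piece entering the construction of $Z_{A,A'}(\sfa)$ is equivariant under the diagonal gauge action, and then observe that the Feynman weights glue equivariant tensors with $\mathrm{ad}$-invariant contractions, so the $\sfg$'s cancel.

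\textbf{Step 1 (equivariance of the SDR data).} First I would check that the desynchronized Hodge data transforms by conjugation. The twisted differential and codifferential satisfy $d_{{}^\sfg A}=\mathrm{Ad}_\sfg\circ d_A\circ \mathrm{Ad}_{\sfg^{-1}}$ and $d^*_{{}^\sfg A'}=\mathrm{Ad}_\sfg\circ d^*_{A'}\circ\mathrm{Ad}_{\sfg^{-1}}$ (the Hodge star is gauge-independent). Therefore the Laplacian $\Delta_{A,A'}=[d_A,d^*_{A'}]$, its Green's function $G_{A,A'}$, the harmonic projector $P_{A,A'}$, and the chain homotopy $K_{A,A'}=d^*_{A'}G_{A,A'}$ all intertwine the $\mathrm{Ad}_\sfg$-actions. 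Consequently the SDR operators from Proposition~\ref{lem: Hodge decomposition close connections} satisfy
\begin{equation*}
  i_{{}^\sfg A,{}^\sfg A'}=\mathrm{Ad}_\sfg\circ i_{A,A'}\circ [\mathrm{Ad}_{\sfg^{-1}}],\quad
  p_{{}^\sfg A,{}^\sfg A'}=[\mathrm{Ad}_\sfg]\circ p_{A,A'}\circ \mathrm{Ad}_{\sfg^{-1}},\quad
  K_{{}^\sfg A,{}^\sfg A'}=\mathrm{Ad}_\sfg\circ K_{A,A'}\circ\mathrm{Ad}_{\sfg^{-1}},
\end{equation*}
where $[\mathrm{Ad}_\sfg]\colon H^\bullet_A\to H^\bullet_{{}^\sfg A}$ is the induced map on cohomology.

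\textbf{Step 2 (equivariance of the Feynman integrands).} The propagator $\eta_{A,A'}\in\Omega^2(\overline{\mathrm{Conf}}_2(M),\g\otimes\g)$, the renormalized diagonal tadpole $\eta^\Delta_{A,A'}$, and the harmonic representative $i_{A,A'}({}^\sfg\sfa)=\mathrm{Ad}_\sfg\,i_{A,A'}(\sfa)$ all transform pointwise by the diagonal adjoint action under $(A,A',\sfa)\mapsto({}^\sfg A,{}^\sfg A',{}^\sfg\sfa)$. (For $\eta^\Delta$ this uses that the renormalization scheme of footnote~\ref{footnote: tadpoles} is defined purely in terms of the short-distance expansion of $\eta_{A,A'}$, which is manifestly equivariant.) Plugging these into \eqref{eq: def Phi_Gamma}, each half-edge incident at a vertex $v$ acquires an $\mathrm{Ad}_{\sfg(x_v)}$. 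Since the structure tensor $f\in\g^{\otimes 3}$ is $\mathrm{Ad}$-invariant and the pairing $\langle\cdot,\cdot\rangle$ on $\g^{\otimes\#\mathrm{half\text{-}edges}}$ is $\mathrm{Ad}$-invariant, the adjoint factors cancel at every vertex. Hence $\Phi_{\Gamma,{}^\sfg A,{}^\sfg A'}({}^\sfg\sfa)=\Phi_{\Gamma,A,A'}(\sfa)$ for every graph $\Gamma$.

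\textbf{Step 3 (the prefactors).} It remains to identify the non-graph prefactors. The Chern-Simons action $S_{CS}(A)$ is invariant under (small) gauge transformations; since our gauge transformations live in the identity component of $C^\infty(M,G)$ in a neighborhood we are considering, no $2\pi$-ambiguity arises. The operator $L_-=*d_A+d_A*$ satisfies $L_{-,{}^\sfg A}=\mathrm{Ad}_\sfg\circ L_{-,A}\circ\mathrm{Ad}_{\sfg^{-1}}$, so its spectrum, and hence the APS $\eta$-invariant $\psi(A;g)$, is gauge invariant. Finally, the Ray-Singer torsion $\tau_A\in\mathrm{Det}(H^\bullet_A)$ is well-known to be gauge equivariant: the induced map $\det[\mathrm{Ad}_\sfg]\colon\mathrm{Det}(H^\bullet_A)\to\mathrm{Det}(H^\bullet_{{}^\sfg A})$ sends $\tau_A$ to $\tau_{{}^\sfg A}$ (and similarly for the half-density square root, using that the ambiguity in the square root line is resolved canonically via Poincar\'e duality).

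\textbf{Step 4 (assembly).} Combining Steps 1--3, each factor in $Z_{A,A'}^{(0)}Z_{A,A'}^{(1)}(\sfa)Z_{A,A'}^{(\geq 2)}(\sfa)$ transforms covariantly, and the identifications $\mathrm{Det}^{1/2}(H^\bullet_A)\otimes\widehat{\mathrm{Sym}}(H^\bullet_A[1])^*\cong \mathrm{Det}^{1/2}(H^\bullet_{{}^\sfg A})\otimes\widehat{\mathrm{Sym}}(H^\bullet_{{}^\sfg A}[1])^*$ induced by $[\mathrm{Ad}_\sfg]$ match the two expressions. Hence $Z_{{}^\sfg A,{}^\sfg A'}({}^\sfg\sfa)=Z_{A,A'}(\sfa)$, as claimed. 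The only mild subtlety in this plan is being careful about the square-root line bundle $\mathrm{Det}^{1/2}(H^\bullet_A)$ under the induced isomorphism; the Poincar\'e-duality trivialization mentioned after \eqref{eq: Z def} takes care of any sign ambiguity uniformly in a neighborhood of the identity gauge transformation.
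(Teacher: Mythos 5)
Your proof is correct and follows essentially the same route as the paper's (much terser) argument: all ingredients of $Z_{A,A'}$ — the SDR data $(i_{A,A'},p_{A,A'},K_{A,A'})$, the prefactors, and the propagators — are gauge equivariant, and the tensor contractions use the $\mathrm{Ad}$-invariant pairing on $\g$, so the conjugations cancel. Your version simply spells out the equivariance checks (including the prefactors $S_{CS}$, $\psi$, $\tau$ and the $\mathrm{Det}^{1/2}$ line) that the paper leaves implicit.
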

\begin{proof}
    This follows from the fact that all the ingredients of $Z_{A,A'}$ are gauge equivariant. I.e., we have $K_{{}^\sfg A}({}^g\omega) = {}^\sfg (K_{A}\omega)$ and $\iota_{{}^\sfg A}[{}^g\omega] = {}^\sfg (\iota_{A}\omega).$ Finally, we contract tensors using the $G$-invariant pairing on $\g$. 
\end{proof}
\subsection{Horizontality w.r.t. Grothendieck connection (changing the kinetic operator)}
In the next theorem we prove that a shift in the kinetic operator can be expressed as a shift of the zero mode (or vice versa). 

Let $\varphi_{A,A'}(\alpha)=A+\delta_{A,A'}(\alpha)\colon\; U \ra \FC'$ be the sum-over-trees exponential map (\ref{phi - exp map on moduli space}), determined by the 
SDR data
associated to the SDR data $r_\sh = (i_\sh,p_\sh,K_\sh)$ corresponding to the gauge-fixing operator $\sh=d^*_{A'}$. The map $\varphi_{A,A'}(\alpha)$, as a function of $\alpha$, is defined on some open neighborhood $U$ of zero in $H^1_{A}$.

In this section we will denote for brevity  
\begin{equation}\label{Atilde}
\til{A}\colon= \varphi_{A,A'}(\alpha).
\end{equation}

Denote
\begin{equation}\label{B for thm 4.2}
    B\colon = 
    \BB^1_{\til{A}\la A;A'}\colon H^1_{A}\ra H^1_{\til{A}}
\end{equation}
the cohomology comparison map in degree $1$.

\begin{rem}\label{rem: d underline phi = B}
The map (\ref{B for thm 4.2}) coincides with the differential of $\pi\circ \varphi_{A,A'}(\alpha)$ in the last argument, with $\pi\colon \FC' \ra \M'$ the quotient by gauge transformations. This follows from the fact that
\begin{equation}\label{i B = d phi}
\begin{aligned}
    i_{\til{A},A'}\circ B 
    &=&
    i_{A,A'}-K_{A,A'}\mr{ad}_{\delta_{A,A'}(\alpha)}i_{A,A'}+\cdots 
    \\
    &=& d_\alpha\varphi_{A,A'}(\alpha)\;\colon H_{A}^1\ra \mr{Harm}^1_{\til{A},A'},
\end{aligned}
\end{equation}
cf. Remarks \ref{rem 2.18}, \ref{rem 3.9}.
\end{rem}

\begin{thm}\label{thm 4.2}
We have that
the desynchronized partition function satisfies 
        \begin{equation}\label{eq thm 4.2}
            \det(B^\vee) \circ Z_{\varphi_{A,A'}(\alpha),A'}(B(\sfa)) = Z_{A,A'}(\alpha + \sfa) 
        \end{equation}
       where 
         $\sfa$ and $\alpha$ denote variables  in an open neighborhood of zero in $H^1_{A}$. 
\end{thm}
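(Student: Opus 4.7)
The plan is to prove (\ref{eq thm 4.2}) by combining a path-integral heuristic with a rigorous Feynman-diagram identity obtained via the homological perturbation lemma (HPL), supplemented by a 1-loop determinant identity for the prefactors.

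First, heuristically, both sides are perturbative evaluations of the same BV pushforward $\int_{\alpha_\mr{fl}\in\im d^*_{A'}}e^{\frac{i}{\hbar}S_{CS}(\mc{A})}$. On the LHS the total field is $\mc{A}=\til{A}+i_{\til{A},A'}(B(\sfa))+\alpha_\mr{fl}$; on the RHS, $\mc{A}=A+i_{A,A'}(\alpha+\sfa)+\alpha_\mr{fl}$. Their difference
\begin{equation}
\bigl(\delta_{A,A'}(\alpha)-i_{A,A'}(\alpha)\bigr)+\bigl(i_{\til{A},A'}(B(\sfa))-i_{A,A'}(\sfa)\bigr)
\end{equation}
lies in $\im K_{A,A'}\subset\im d^*_{A'}$: by the sum-over-trees formulas all $n\geq 2$ trees in $\delta_{A,A'}(\alpha)$ and in $i_{\til{A},A'}\circ B=d_\alpha\varphi_{A,A'}$ (Remark \ref{rem: d underline phi = B}) carry an outer $K_{A,A'}$. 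Thus a shift of $\alpha_\mr{fl}$ within the gauge-fixing Lagrangian $\im d^*_{A'}$ identifies the two integrands, and the factor $\det(B^\vee)$ arises from the Jacobian of the change of zero-mode basis $H^1_{\til{A}}\to H^1_A$.

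To make this rigorous at the perturbative level, I would apply HPL to the perturbation $d_A\to d_{\til{A}}=d_A+\ad_\delta$ (with $\delta=\delta_{A,A'}(\alpha)$) of the kinetic operator, keeping the SDR data $(i_{A,A'},p_{A,A'},K_{A,A'})$ fixed. HPL expresses $K_{\til{A},A'}$ as a geometric series in $K_{A,A'}\ad_\delta$ applied to $K_{A,A'}$, and $i_{\til{A},A'}\circ B$ as a similar series applied to $i_{A,A'}$, with the latter coinciding with $d_\alpha\varphi_{A,A'}$ by Remark \ref{rem: d underline phi = B}. Substituting into each Feynman weight $\Phi_{\Gamma,\til{A},A'}(B(\sfa))$, every $K_{\til{A},A'}$-edge splits into a chain of $K_{A,A'}$-edges interrupted by $\ad_\delta$-insertions, and every leaf opens into a tree carrying one $i_{A,A'}(\sfa)$-leaf and several $i_{A,A'}(\alpha)$-leaves. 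Since $\delta$ itself unfolds into trees with $i_{A,A'}(\alpha)$-leaves and $K_{A,A'}$-edges, every expanded graph is a Feynman graph contributing to $Z_{A,A'}(\alpha+\sfa)$.

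The core step will then be a combinatorial identification: every RHS graph $\Gamma'$, with leaves labeled by $\sfa$ or $\alpha$, admits a canonical ``undressing'' to an LHS graph $\Gamma$ by collapsing maximal connected $\alpha$-subtrees --- those attached to an internal edge of $\Gamma'$ convert into an $\ad_\delta$-insertion inside a $K_{\til{A},A'}$-edge, while those attached to an $\sfa$-leaf dress that leaf into $i_{\til{A},A'}(B(\sfa))$. I expect this bijection to preserve weights and match the automorphism factors $1/|\Aut\Gamma|$ via standard symmetry bookkeeping, in the same spirit as the proof of Proposition \ref{prop: desy exp map conv}. The main technical obstacles will be this combinatorial/sign matching, together with the 1-loop determinant identity
\begin{equation}
\det(B^\vee)\cdot\tau(\til{A})^{1/2}\cdot\sdet^{1/2}_{\im d^*_{A'}}\bigl(1+K_{A,A'}\ad_\delta\bigr)=\tau(A)^{1/2},
\end{equation}
a multiplicative property of regularized superdeterminants of the same flavor as the identity $I_{A,A'}=I_{A,A}$ established in the digression preceding this section. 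Local constancy of $S_{CS}$ and $\psi$ on $\M'$ disposes of the remaining phase prefactors.
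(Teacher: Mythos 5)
Your strategy is the same as the paper's: expand the $\til{A}$-gauge Feynman rules via the homological perturbation lemma, resum the resulting graphs into the $(A,A')$-gauge expansion of $Z_{A,A'}(\alpha+\sfa)$, and dispose of the prefactors with a one-loop torsion identity. Two linked points need correction, and they are really two faces of the same bookkeeping. First, the bijection you claim (``every RHS graph admits a canonical undressing to an LHS graph'') fails for exactly one family: one-loop graphs all of whose leaves carry $\alpha$. Collapsing all their $\alpha$-subtrees leaves a vertexless loop, which is not a legitimate Feynman graph of $Z_{\til{A},A'}$. Hence the right-hand side carries an extra multiplicative factor
$\exp\sum_{\gamma}\frac{1}{|\mr{Aut}(\gamma)|}\Phi_{\gamma,A,A'}(\alpha)=\sdet^{1/2}_{\im d^*_{A'}}(1+K_{A,A'}\ad_\delta)$
relative to the undressed left-hand side. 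Second, once this is accounted for, the determinant identity you actually need is
\begin{equation*}
\det(B^\vee)\circ\tau_{\til{A}}^{1/2}=\tau_A^{1/2}\cdot\sdet^{1/2}_{\im d^*_{A'}}\bigl(1+K_{A,A'}\ad_\delta\bigr),
\end{equation*}
i.e.\ with the superdeterminant on the \emph{other} side from where you placed it: your version is the reciprocal of the correct one, and carrying it through would leave an uncancelled $\sdet(1+K_{A,A'}\ad_\delta)$ in the final comparison. This identity is Proposition \ref{prop: tau Grothendieck horizontality} of the paper; its proof requires a separate argument (differentiating along the path $A_t=\varphi_{A,A'}(t\alpha)$ and controlling the holonomy of $\nabla^{\mr{Harm}}$, whose curvature enters nontrivially), so it is not merely ``of the same flavor'' as $I_{A,A'}=I_{A,A}$ --- it is the genuinely new analytic input of the theorem. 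With these two fixes your argument coincides with the paper's proof.
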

The proof of this theorem relies on the following fact about the dependence of the Ray-Singer torsion on the local system that we were unable to locate in the literature:  
\begin{prop}\label{prop: tau Grothendieck horizontality}
Ray-Singer torsion satisfies
\begin{equation}\label{tau G horizontality}
  \det(B^\vee)\circ \tau_{\til{A}}^{1/2}=\tau_{A}^{1/2} \exp \sum_{\gamma} \frac{1}{|\mr{Aut}(\gamma)|}\Phi_{\gamma,A,A'}(\alpha).
\end{equation}
    Here $\gamma$ runs over 1-loop connected trivalent graphs.
\end{prop}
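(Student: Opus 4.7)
The plan is to establish a finite-level ``horizontality'' identity for the torsion and then expand its right-hand side as a sum over Feynman graphs. The central formula I would prove is
\begin{equation*}
  \det(B^\vee) \circ \tau_{\til{A}}^{1/2} = \tau_A^{1/2} \cdot \sdet^{1/2}_{\mr{im}\,d^*_{A'}}\!\bigl(I + K_{A,A'}\,\ad_{\til{A}-A}\bigr).
\end{equation*}
The idea is that $\tau_A^{1/2}$ and $\tau_{\til{A}}^{1/2}$ both admit Hodge-theoretic representations in terms of zeta-regularized superdeterminants on $d^*_{A'}$-coexact forms (as in the digression in Section~3.2, applied once to $(A,A')$ and once to $(\til{A},A')$). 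These representations share the ``gauge-fixing half'' $d^*_{A'}$ but differ in the ``kinetic half'' ($d_A$ vs.\ $d_{\til{A}}$). Using $[d_A,K_{A,A'}]=I-P$ together with $K_{A,A'}\,d^*_{A'}=0$, one gets $K_{A,A'}\,d_A = I$ on $\mr{im}\,d^*_{A'}$, hence $K_{A,A'}\,d_{\til{A}} = I + K_{A,A'}\,\ad_{\til{A}-A}$ there, so the ratio of the two kinetic superdeterminants is precisely the factor above. The prefactor $\det(B^\vee)$ arises as the change of trivialization of the cohomology determinant line induced by the comparison isomorphism $\BB^1_{\til{A}\la A;A'}\colon H^1_A \xra{\sim} H^1_{\til{A}}$, its Poincar\'e dual in degree~$2$ contributing compatibly.

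With this identity in hand, set $\beta\colon= \til{A}-A = \delta_{A,A'}(\alpha)$ and expand
\begin{equation*}
  \log\sdet^{1/2}_{\mr{im}\,d^*_{A'}}\!\bigl(I + K_{A,A'}\,\ad_\beta\bigr) = -\tfrac12\sum_{n\geq 1}\tfrac{(-1)^n}{n}\str\!\bigl[(K_{A,A'}\,\ad_\beta)^n\bigr].
\end{equation*}
Each term is an $n$-gon ``wheel'' diagram with bivalent $\ad_\beta$-insertions joined by propagators $K_{A,A'}$. Substituting the sum-over-trees formula for $\beta$ (Proposition~\ref{prop: deformation of A0}), each insertion is replaced by a binary rooted tree with leaves decorated by $i_{A,A'}(\alpha)$, internal edges and root by $K_{A,A'}$, and internal vertices by $[\cdot,\cdot]$. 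Merging the wheel edge with each tree root promotes the junction to a trivalent vertex; summing over all such decompositions produces precisely the sum over connected 1-loop trivalent graphs $\gamma$ with leaves labeled by $\alpha$, evaluated as $\Phi_{\gamma,A,A'}(\alpha)$. A standard orbit-stabilizer argument collapses the accumulated prefactor $\frac{1}{2n}\cdot\prod_T \frac{1}{|\mr{Aut}(T)|}$ to $\frac{1}{|\mr{Aut}(\gamma)|}$, the cyclic and reflection symmetries of the wheel (factors $1/n$ and $1/2$) together with the tree automorphisms accounting for all automorphisms of $\gamma$. The sanity check at $n=1$, with a single leaf-bearing tree, yields the lollipop with prefactor $1/2 = 1/|\mr{Aut}(\mr{lollipop})|$.

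\textbf{Main obstacle.} The hard part will be rigorously establishing the finite horizontality identity of the first paragraph for the zeta-regularized torsion: one must control the multiplicative anomaly of the zeta-regularized superdeterminant under the change of kinetic operator $d_A \to d_{\til{A}}$ (mild here because $\ad_\beta$ is a bounded zeroth-order perturbation and $A'$ is held fixed) and match the induced identification of cohomology determinant lines precisely with $\det(B^\vee)$ rather than with some variant of the cohomology comparison maps from Proposition~\ref{prop 3.5}~(\ref{prop 3.5 (c)}). Once this identity is secured, the remaining steps are essentially algebraic and combinatorial.
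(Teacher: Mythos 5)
Your combinatorial half is exactly the paper's: the identity $\exp\sum_\gamma \frac{2}{|\mr{Aut}(\gamma)|}\Phi_{\gamma,A,A'}(\alpha)=\sdet_{\Omega^\bt}(1+K_{A,A'}\ad_{\til{A}-A})$ (wheels with trees plugged in, symmetry factors collapsing via orbit–stabilizer) is precisely equation (\ref{Prop 4.4 proof eq2}) in the paper's proof, and your derivation of it is fine.

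The genuine gap is the finite identity $\det(B^\vee)\circ\tau_{\til{A}}^{1/2}=\tau_A^{1/2}\,\sdet^{1/2}(1+K_{A,A'}\ad_\beta)$, which you correctly flag as the hard part but do not prove. Your proposed route — take the two superdeterminant representations of $\tau_A^{1/2}$ and $\tau_{\til{A}}^{1/2}$ on $\mr{im}\,d^*_{A'}$ and divide — runs into two problems you name but leave unresolved: the multiplicative anomaly of zeta-regularized superdeterminants, and the identification of the determinant-line prefactor with $\det(B^\vee)=\det(\BB^1_{\til{A}\la A;A'})^\vee$. The second is not a bookkeeping issue: because $\nabla^\mr{Harm}$ is curved, $\BB_{\til{A}\la A;A'}$ depends nontrivially on $A'$, and this dependence is exactly what makes the identity close up. Moreover, your route implicitly leans on the statement that $I_{A,A'}$ is independent of $A'$ (the second lemma of Section 3.2.1), which the paper itself only asserts. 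The paper's actual proof avoids all of this by a derivative-matching argument: set $A_t=\varphi_{A,A'}(t\alpha)$, check equality at $t=0$, and show both sides have the same logarithmic derivative $\mr{Str}\,K_{A_t,A'}\ad_{\dot A_t}$. For the left side this uses the classical variation formula for Ray–Singer torsion (Lemma \ref{lemma: dependence of torsion on A}, a zeta-function computation) together with Lemma \ref{lemma D.2}, which converts the diagonal comparison map into $\BB_{\cdot\la\cdot;A'}$ by computing the holonomy of $\nabla^\mr{Harm}$ around a small rectangle, i.e.\ by inserting the curvature (\ref{nabla^Harm curvature}); for the right side it uses $(1+K_{A,A'}\ad_{A_t-A})^{-1}K_{A,A'}=K_{A_t,A'}$. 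The matching of the two regularizations (zeta vs.\ point-splitting) is then handled by Remark \ref{rem: regularization of supertraces}. If you want to keep your one-shot superdeterminant comparison, you would need to supply both the anomaly control and the curvature computation that produces the $A'$-dependence of $\det(B^\vee)$; as written, those are precisely the missing steps.
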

We give the proof in Appendix \ref{Appendix: proof of Prop 4.4}.

\begin{proof}[Sketch of proof of Theorem \ref{thm 4.2}]
The r.h.s. of (\ref{eq thm 4.2}) is $e^{\frac{i}{\hbar}S_{CS}(A)}  e^{\frac{i}{\hbar}\psi_{A}}\tau_{A}^\frac12$ times 
 the exponential of the sum of connected Feynman graphs $\Gamma$ with $l\geq 1$ loops, with leaves decorated by either $i_{A,A'}(\sfa)$ or $i_{A,A'}(\alpha)$ and edges decorated by $K_{A,A'}$. Given such a graph $\Gamma$, one can represent it -- in a unique way -- as a smaller graph $\Gamma'$ with leaves decorated by subtrees $X_1,\ldots,X_m$ and $Y_1,\ldots,Y_n$ of the original graph $\Gamma$, where
 \begin{itemize}
     \item Subtrees $X_i$ have a single leaf decorated by $\sfa$; all the rest are decorated by $\alpha$.
     \item Subtrees $Y_j$ have all leaves decorated by $\alpha$.
     \item If a vertex of $\Gamma'$ has more than one incident leaves, they must all be decorated by $X$-subtrees. 
 \end{itemize}
 One can think of $\Gamma'$ as $\Gamma$ with subtrees $\{X_i\}$, $\{Y_j\}$ collapsed, each tree to its root. We will also denote $\Gamma''$ the graph obtained from $\Gamma'$ by removing all $Y$-leaves and merging the internal edges incident to them.

 Explicit construction of $\Gamma'$: $\Gamma$ can be thought of a trivalent graph $\til{\Gamma}$ with no leaves, with several rooted trees $T_1,\ldots,T_N$ plugged into the edges of $\til{\Gamma}$. Starting from each leaf of $\Gamma$ decorated by $\sfa$ (which belongs to some tree $T_k$), draw the shortest path along edges connecting it to the root of $T_k$, call it an ``$\sfa$-path.'' The graph $\Gamma'$ is obtained by taking all the edges of $\Gamma$ which are either (a) non-separating (cutting the edge does not make the graph disconnected) or (b) have at least $2$ $\sfa$-paths passing through them; together with each vertex involved we take its neighborhood in $\Gamma$, producing leaves. The graph $\Gamma\setminus \Gamma'$ is disjoint and consists of $X$-subtrees (those containing an $\sfa$-path) and $Y$-subtrees (those not containing an $\sfa$-path).
\begin{figure}
    \centering
    \includegraphics[width=1\linewidth]{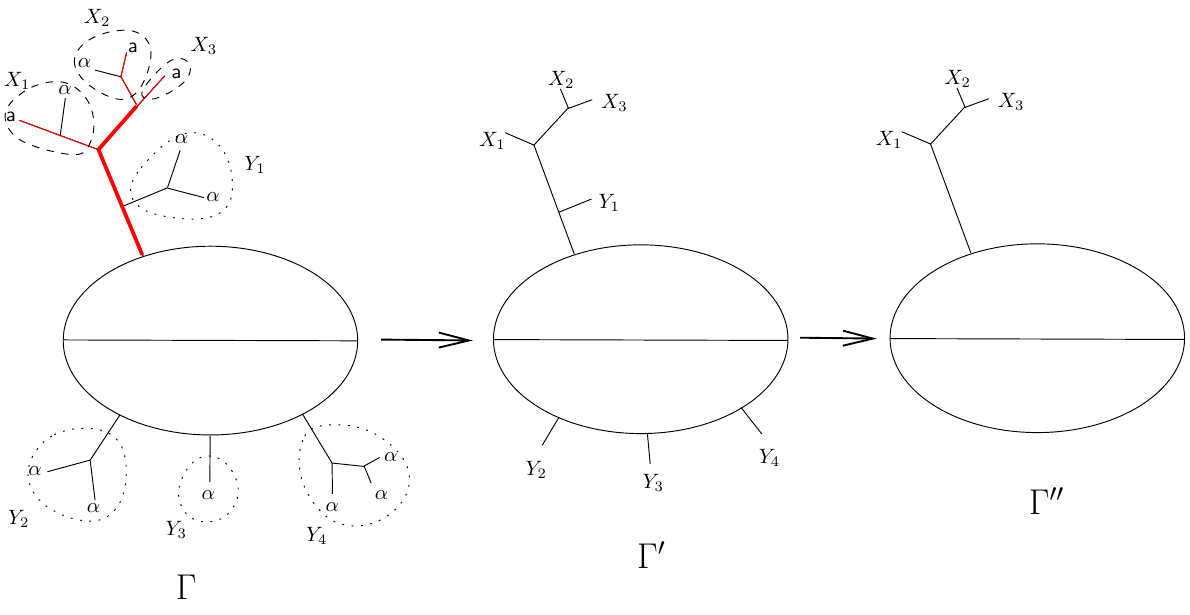}
    \caption{A Feynman graph $\Gamma$ with $X$- and $Y$-subtrees and the resulting $\Gamma'$ and $\Gamma''$ graphs. $\sfa$-paths are shown in red; edges belonging to several $\sfa$ paths are thick red edges.}
    \label{fig: horizontality proof}
\end{figure}

The sum over $\Gamma$ can be represented as a sum over graphs $\Gamma''$. 
Summation over possible subtrees $X$ 
on a leaf of $\Gamma''$ yields 
\begin{equation}\label{i deformation thm 4.2}
i_{\til{A},A'}(B(\sfa))=
i_{A,A'}(\sfa)-K_{A,A'}\mr{ad}_{\delta_{A,A'}(\alpha)} i_{A,A'}(\sfa)+\cdots,
\end{equation} 
cf. (\ref{i B = d phi}).
Summation over inserting $k\geq 0$ $Y$-subtrees into an edge $e$ of $\Gamma''$ results in decorating that edge with the chain homotopy 
\begin{equation}\label{K deformation thm 4.2}
K_{\til{A},A'}
 =K_{A,A'}-K_{A,A'}\mr{ad}_{\delta_{A,A'}(\alpha)} K_{A,A'}+\cdots
 \end{equation}
 
Formulae (\ref{i deformation thm 4.2}), (\ref{K deformation thm 4.2}) are the homological perturbation theory expressions for the deformation of an  SDR data $(i,p,K)$ for the deformation retraction $\Omega(M,\g)\ra H_{A}$  induced by a deformation of the differential from $d_{A}$ to $d_{\til{A}}$, 
see Appendix \ref{app: SDR} for details. 

Thus, the sum over Feynman graphs $\Gamma$ in the r.h.s. of (\ref{eq thm 4.2}) equals the sum over Feynman graphs $\Gamma''$ in the l.h.s. of (\ref{eq thm 4.2}). There is one correction: one-loop graphs $\Gamma$ with leaves decorated only by $\alpha$ (no $\sfa$) were omitted in this correspondence, since they result in $\Gamma''$ being a loop with no vertices, which is not a legitimate graph. Thus we have
\begin{multline}
\exp\sum_{\Gamma''}\frac{(-i\hbar)^{l(\Gamma'')-1}}{|\mr{Aut}(\Gamma)|}\Phi_{\Gamma'',\til{A},A'}(B(\sfa))\cdot 
\exp\sum_{\gamma\;\mr{1-loop}} \frac{1}{|\mr{Aut}(\gamma)|}\Phi_{\gamma,A,A'}(\alpha)
\\
=\exp\sum_{\Gamma}\frac{(-i\hbar)^{l(\Gamma)-1}}{|\mr{Aut}(\Gamma)|}\Phi_{\Gamma,A,A'}(\alpha+\sfa) .
\end{multline}
Together with (\ref{tau G horizontality}) this implies (\ref{eq thm 4.2}).
\end{proof}


\begin{cor}[Infinitesimal variation of kinetic operator]
\label{cor: variation of kin operator}
    Let $A_t$ be a curve of flat connections such that $\dot{A}_0 = i_{A_0,A'}(\alpha)$ and let $B_t =\BB^1_{A_t \la A_0;A'}\colon H^1_{A_0}\ra H^1_{A_t}$ be the cohomology comparison map in degree $1$. Then 
    \begin{equation}\label{variation of kin operator}
        \left.\frac{d}{dt}\right|_{t=0} \Big(\det(B_t^\vee)\circ Z_{A_t,A'}( B_t (\sfa))\Big) = \left\langle  \alpha, \frac{\partial}{\partial \sfa} \right\rangle Z_{A_0,A'}(\sfa).
    \end{equation}
\end{cor}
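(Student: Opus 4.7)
The corollary is the infinitesimal form of Theorem \ref{thm 4.2}. The plan is as follows. First, I would specialize Theorem \ref{thm 4.2} to the case where the kinetic flat connection is $A_0$ and the shift parameter is $t\alpha$ for small real $t$. Setting $\til{A}_t := \varphi_{A_0, A'}(t\alpha)$, which by Proposition \ref{prop: deformation of A0} satisfies $\til{A}_0 = A_0$ and $\dot{\til{A}}_0 = i_{A_0, A'}(\alpha)$, Theorem \ref{thm 4.2} yields
\begin{equation}\label{eq: proof plan cor 4.5}
    \det(\til B_t^\vee)\circ Z_{\til{A}_t, A'}\bigl(\til B_t(\sfa)\bigr) = Z_{A_0, A'}(t\alpha + \sfa),
\end{equation}
where $\til B_t = \BB^1_{\til{A}_t \la A_0; A'}$. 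Differentiating \eqref{eq: proof plan cor 4.5} in $t$ at $t=0$, the right-hand side immediately produces $\langle \alpha, \partial/\partial \sfa\rangle Z_{A_0, A'}(\sfa)$, which is the desired expression.

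Next, I would argue that the left-hand side of \eqref{variation of kin operator}, viewed as a function of the path $A_t$, depends only on the first-order data $\dot{A}_0$. Concretely, the assignment
\begin{equation*}
    A \longmapsto \det(B_A^\vee)\circ Z_{A, A'}\bigl(B_A(\sfa)\bigr), \qquad B_A := \BB^1_{A \la A_0; A'},
\end{equation*}
is smooth in a neighborhood of $A_0$ in $\FC'$: the cohomology comparison map $B_A$ is parallel transport for the smooth connection $\nabla^{\mathbb{H}, A'}$ (Section \ref{sss cohomology comparison map}); the desynchronized SDR data $(i_{A, A'}, p_{A, A'}, K_{A, A'})$ depend smoothly on $A$ in the open set $\mc{U}_{A'}$ (Proposition \ref{lem: Hodge decomposition close connections}); the Ray--Singer torsion $\tau_A$ depends smoothly on $A$; and the eta-invariant $\psi_A$ is locally constant on $\FC'$ and hence contributes nothing to the derivative. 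Therefore the $t$-derivative of the left-hand side is a linear functional of $\dot A_0 \in \ker d_{A_0}$ alone, so applying it to two curves with the same tangent yields the same result. This allows us to transfer the conclusion from the specific curve $\til{A}_t$ to an arbitrary smooth curve $A_t$ with $\dot{A}_0 = i_{A_0, A'}(\alpha)$.

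The only mildly delicate point is the observation that $B_t = \BB^1_{A_t \la A_0; A'}$ specified in the statement agrees with $\til B_t$ in \eqref{eq: proof plan cor 4.5} to the order needed, and that Remark \ref{rem: d underline phi = B} gives $\dot{\til B}_0(\sfa) = 0$ modulo harmonic representatives, so that no extra terms appear from differentiating the determinant factor or the argument of $Z_{\til A_t, A'}$ at $t = 0$ beyond what arises on the right-hand side of \eqref{eq: proof plan cor 4.5}. With this bookkeeping in place, the infinitesimal identity \eqref{variation of kin operator} follows directly from differentiating \eqref{eq: proof plan cor 4.5}.
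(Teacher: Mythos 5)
Your proposal coincides with the paper's first proof of this corollary: specialize Theorem \ref{thm 4.2} to $A_t=\varphi_{A_0,A'}(t\alpha)$ and differentiate both sides at $t=0$, the right-hand side immediately giving $\langle\alpha,\partial/\partial\sfa\rangle Z_{A_0,A'}(\sfa)$; your added observation that the derivative depends only on $\dot A_0$ (so the statement holds for any curve with that tangent) is a reasonable and correct supplement, while the bookkeeping in your final paragraph about $\dot{\til B}_0$ is unnecessary since the finite identity holds for all $t$ and one simply equates derivatives. The paper also records a second, standalone diagrammatic proof via the variation formulae for $i_{A_t,A'}$ and $K_{A_t,A'}$, which you do not need.
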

\begin{proof}[Proof 1]
Follows immediately from Theorem \ref{thm 4.2} by setting $A_t = \varphi_{A_0,A'}(t \alpha)$ and taking the derivative of both sides in $t$ at $t=0$.
\end{proof}

One can also prove (\ref{variation of kin operator}) by a standalone combinatorial argument.
\begin{proof}[Proof 2]
One has the following formulae for the infinitesimal variation of $i,K$:
\begin{equation}\label{var of i,K with A}
\begin{aligned}  
\left.\frac{d}{dt}\right|_{t=0}K_{A_t,A'}=&-K_{A_0,A'}\ad_{i_{A_0,A'}(\alpha)} K_{A_0,A'},\\ 
\left.\frac{d}{dt}\right|_{t=0} i_{A_t,A'}(B_t(\sfa))=& - K_{A_0,A'}\ad_{i_{A_0,A'}(\alpha)} i_{A_0,A'}(\sfa).
\end{aligned}
\end{equation}
These imply that the l.h.s. of (\ref{variation of kin operator}) is the sum over graphs $\Gamma$, where either (i) one edge is split into two by an insertion of leaf decorated by $\alpha$, or (ii) one $\sfa$-leaf is replaced by a subtree consisting of an $\sfa$-leaf meeting an $\alpha$-leaf and continuing with an edge (Figure \ref{fig:var of A}). 
\begin{figure}
    \centering
    \includegraphics[width=0.7\linewidth]{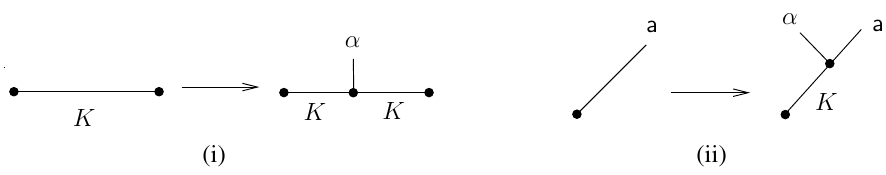}
    \caption{Variation of a Feynman graph $\Gamma$ under a harmonic shift of $A$: local picture on the graph. 
    }
    \label{fig:var of A}
\end{figure}

Additionally, one has a special graph -- the one-loop graph with a single $\alpha$-leaf (the ``tadpole''), arising from the variation of $\tau_{A_t}^{1/2}$ (Proposition \ref{prop: tau Grothendieck horizontality})
It is easy to see that the r.h.s. of (\ref{variation of kin operator}) yields exactly the same graphs.
\end{proof}

\begin{rem}
    The r.h.s. of (\ref{variation of kin operator}) can also be written as a BV-exact term
    \begin{equation}
        \Delta_\sfa\Big(\langle \alpha, \sfa\rangle Z_{A_0,A'}(\sfa)\Big).
    \end{equation}
The expression in brackets is a formal half-density on $H^\bt_{A_0}$ of ghost degree $-1$. Expanding $\sfa=\sfa^{1}+\sfa^{2}$, with $\sfa^{k}\in H^k_{A_0}[1-k]$, we can write the expression in brackets as 
$\langle \alpha, \sfa^{2}\rangle Z_{A_0,A'}(\sfa^{1})$.

Indeed, one has
\begin{multline}
    \Delta_\sfa\Big(\langle \alpha, \sfa\rangle Z_{A_0,A'}(\sfa)\Big)=\\
    =
   -\underbrace{\Delta_\sfa \langle \alpha, \sfa\rangle}_0 \cdot Z_{A_0,A'}(\sfa)-
    \langle \alpha, \sfa\rangle \cdot \underbrace{\Delta_\sfa Z_{A_0,A'}(\sfa)}_0 -
    \Big\{\left\langle \alpha, \sfa\right\rangle, Z_{A_0,A'}(\sfa)\Big\}
    \\=
   \left \langle  \alpha, \frac{\partial}{\partial \sfa} \right\rangle Z_{A_0,A'}(\sfa)
\end{multline}
-- the r.h.s. of (\ref{variation of kin operator}), as claimed.\footnote{ Here $$\{f,g\} = f\left\langle \frac{\overleftarrow\partial }{\partial \sfa^{1}}, \frac{\overrightarrow\partial }{\partial \sfa^{2}}\right\rangle g - f\left\langle \frac{\overleftarrow\partial }{\partial \sfa^{2}}, \frac{\overrightarrow\partial }{\partial \sfa^{1}}\right\rangle g $$ is the BV bracket of $f$ and $g$, see e.g. \cite[Section 4.4.3]{mnev2019quantum}. } 
\end{rem}



Consider the graded vector bundle $\mc{D}$ over $\FC'$ with fiber over $A$ being the space of formal half-densities on cohomology
\begin{equation}\label{D factorization}
    \mc{D}_{A}=\mr{Dens}^{\frac12,\mr{formal}}(H^\bt_{A}[1])\cong
    \mr{Det} (H^1_{A})^* \otimes \widehat{\mr{Sym}}(H^\bt_{A}[1])^*.
\end{equation}

The flat connection $\nabla^{\mathbb{H},A'}$ in the cohomology bundle (Section \ref{sss cohomology comparison map}) induces a flat connection in $\mc{D}$ which by abuse of notations we will also denote $\nabla^{\mathbb{H},A'}$.

Let $\mr{pr}_1$ be the projection onto the first factor in $\FC'\times \FC'$.

\begin{defn}[Partial Grothendieck connection]\label{def: partial Grothendieck connection}
One has a partial connection $\til{\nabla}^G$ on the bundle $\mr{pr}_1^* \mc{D}$ over $\mc{U}\subset \FC'\times \FC'$ defined by 
\begin{equation}\label{partial Grothendieck connection}
    (\til{\nabla}^G)_{(\chi,0)}\xi(\sfa)=\nabla^{\mathbb{H},A'}_\chi\xi(\sfa) -\left\langle [\chi],\frac{\partial}{\partial\sfa} \right\rangle\xi(\sfa)
\end{equation}
for any $\chi\in \mr{Harm}^1_{A,A'}$ and $\xi(\sfa)$ a section  of $\mr{pr}_1^* \mc{D}$.
Here $(\chi,0)$ is a tangent vector to $\FC'\times \FC'$ at a point $(A,A')$. 


Thus (\ref{partial Grothendieck connection}) allows to differentiate sections of $\mr{pr}_1^* \mc{D}$ in the direction of infinitesimal harmonic shifts of $A$.
\end{defn}

\begin{rem}\label{rem: partial Grothendieck connection}
The partial connection (\ref{partial Grothendieck connection}) is induced (via pushforward of half-densities) from the fiber bundle partial connection on the cohomology bundle $\mr{pr}_1^* \mathbb{H}$ over $\mc{U}$ 
defined by the parallel transport
\begin{equation}\label{nabla^G tilde parallel transport on H}
\begin{array}{ccc}
    H_A^\bt=\mr{pr}_1^* \mathbb{H}|_{A,A'} & \ra & H^\bt_{\til{A}}=\mr{pr}_1^* \mathbb{H}|_{\til{A},A'} \\
    \sfa& \mapsto & \til{\sfa}=\mathfrak{B}_{\til{A}\la A,A'} (\sfa-\alpha)
\end{array}
\end{equation}
with $\alpha\in H^1_A$ sufficiently small and $\til{A}$ as in (\ref{Atilde}). We remark that the parallel transport (\ref{nabla^G tilde parallel transport on H}) satisfies
\begin{equation}\label{rem: partial Grothendieck connection eq1}
    \varphi_{A,A'}(\sfa)=\varphi_{\til{A},A'}(\til{\sfa}).
\end{equation}
for $\sfa\in H^1_A$ sufficiently small.
\end{rem}

\begin{cor}[Horizontality with respect to the partial Grothendieck connection] 
\label{cor: horizontality wrt nabla_G}
One has
        \begin{equation}
            \widetilde{\nabla}^G Z_{A,A'} = 0. 
        \end{equation}
\end{cor}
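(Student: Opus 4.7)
The plan is to show that the vanishing $\widetilde{\nabla}^G Z_{A,A'}=0$ is essentially a repackaging of Corollary \ref{cor: variation of kin operator} (infinitesimal variation of the kinetic operator). Fix a tangent vector $(\chi,0)$ to $\mc{U}$ at $(A,A')$ with $\chi\in \mr{Harm}^1_{A,A'}$, and set $\alpha=[\chi]\in H^1_A$. Unfolding Definition \ref{def: partial Grothendieck connection},
\begin{equation*}
    \widetilde{\nabla}^G_{(\chi,0)} Z_{A,A'}
    = \nabla^{\mathbb{H},A'}_\chi Z_{A,A'}(\sfa)
      - \left\langle \alpha,\frac{\partial}{\partial \sfa}\right\rangle Z_{A,A'}(\sfa),
\end{equation*}
so it suffices to identify the first term with the second.

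Next I would compute $\nabla^{\mathbb{H},A'}_\chi Z_{A,A'}$ as an honest derivative along a path. Choose any path $A_t$ in $\FC'$ with $A_0=A$ and $\dot A_0=\chi$ (for definiteness, $A_t=\varphi_{A,A'}(t\alpha)$), and let $B_t=\mathfrak{B}^1_{A_t\leftarrow A;A'}\colon H^1_A\to H^1_{A_t}$ be the degree-$1$ cohomology comparison map. By Remark \ref{rem: partial Grothendieck connection}, $\nabla^{\mathbb{H},A'}$ on $\mathbb{H}$ has parallel transport $B_t$; the induced connection on the half-density bundle $\mc{D}$ from (\ref{D factorization}) has parallel transport given by the standard change-of-variables rule $\xi\mapsto \det(B_t^\vee)\circ \xi\circ B_t^{-1}$. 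Therefore
\begin{equation*}
    \nabla^{\mathbb{H},A'}_\chi Z_{A,A'}(\sfa)
    = \left.\frac{d}{dt}\right|_{t=0} \det(B_t^\vee)\circ Z_{A_t,A'}(B_t(\sfa)).
\end{equation*}

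Finally, Corollary \ref{cor: variation of kin operator} evaluates the right-hand side as $\langle\alpha,\partial/\partial\sfa\rangle Z_{A,A'}(\sfa)$, and substituting back into the decomposition above gives $\widetilde{\nabla}^G_{(\chi,0)} Z_{A,A'}=0$. Since every tangent vector in the domain of $\widetilde{\nabla}^G$ is of the form $(\chi,0)$ with $\chi\in\mr{Harm}^1_{A,A'}$, the corollary follows.

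The argument is essentially a direct translation, so I do not expect a substantial obstacle: the real content is already packaged in Theorem \ref{thm 4.2} and its infinitesimal version Corollary \ref{cor: variation of kin operator}. The only place where a genuine (but routine) computation intervenes is the formula for parallel transport on $\mc{D}$, namely the appearance of the factor $\det(B_t^\vee)$; this is the standard behavior of half-densities under a linear isomorphism of fibers and is compatible with the identification (\ref{D factorization}).
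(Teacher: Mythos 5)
Your proposal is correct and follows exactly the paper's route: the paper's entire proof is the remark that the corollary is ``just a rephrasing of Corollary \ref{cor: variation of kin operator},'' and your argument simply spells out that rephrasing by unfolding Definition \ref{def: partial Grothendieck connection} and identifying the covariant derivative along $\nabla^{\mathbb{H},A'}$ with the $t$-derivative of $\det(B_t^\vee)\circ Z_{A_t,A'}(B_t(\sfa))$. The details you supply (the choice of path, the $\det(B_t^\vee)$ factor from the half-density transformation rule) are the correct and standard way to make that identification precise.
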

This is just a rephrasing of Corollary \ref{cor: variation of kin operator}.

\subsection{Changing the gauge-fixing operator}
\begin{thm}[Changing the gauge-fixing operator]\label{thm: change gf}
We have that, for $A_0'$ and $A'_1$ flat connections close to a flat connection $A$ 
        \begin{equation}\label{thm change of A' eq}
            Z_{A,A_1'}(\sfa) = 
            Z_{A,A'_0}(\sfa) -i\hbar \Delta_\sfa R_{A,A'_0,A'_1}(\sfa). 
        \end{equation}
        with $R_{A,A'_0,A'_1}(\sfa)$ a formal half-density on $H^\bt_A[1]$ given by (\ref{R in change of Z with A'}) below. 
\end{thm}
This is an immediate consequence of Proposition \ref{prop: variation of Z wrt A'} below, by integrating over a path $A'_t$ from $A'_0$ to $A'_1$.

Let us consider the effect of an infinitesimal change of $A'\ra A'+\delta A'$ on $Z_{A,A'}$. 
Consider the following endomorphisms of $\Omega^\bt(M,\g)$:
\begin{equation}\label{Lambda, II, PP}
    \Lambda = K_{A,A'} \mr{ad}^*_{\delta A'} G_{A,A'}, \quad
    \mathbb{I} = G_{A,A'} \mr{ad}^*_{\delta A'}, \quad
    \mathbb{P} = \mr{ad}^*_{\delta A'} G_{A,A'}.
\end{equation}
They arise in the first-order deformation of the SDR data $(i,p,K)_{A,A'}$ resulting from the deformation of $A'$:
\begin{equation}\label{variation of ipK in A'}
\begin{aligned}
    \delta_{A'} K_{A,A'}=&[d_A,\Lambda]+P_{A,A'}\mathbb{P} +\mathbb{I} P_{A,A'}, \\
   \delta_{A'} i_{A,A'}  =& -d_A\mathbb{I} i_{A,A'},\\
   \delta_{A'} p_{A,A'}  =& -p_{A,A'}\mathbb{P} d_A,
\end{aligned}
\end{equation}
cf. Lemma \ref{lem: SDR deformations}. 
Here $P_{A,A'}=i_{A,A'}p_{A,A'}$ is the projection onto $(A,A')$-harmonic forms. We note that $\mathbb{I}$ and $\mathbb{P}$ are mutually transpose w.r.t. Poincar\'e pairing on forms and cohomology, while $\Lambda$ is symmetric w.r.t. Poincar\'e pairing.

\begin{prop}\label{prop: variation of Z wrt A'}
For $A,A'\in\FC'$ a pair of close flat connections, the variation of 
$Z_{A,A'}(\sfa)$ with respect to variation of $A'$ is given by
\begin{equation}\label{variation of Z_A,A' wrt A'}
\delta_{A'} Z_{A,A'}(\sfa)=
-i\hbar
\Delta_\sfa \left(r_{A,A';\delta A'}(\sfa) Z_{A,A'}(\sfa)\right)
\end{equation}
where $r_{A,A';\delta A'}(\sfa)$ is the sum of connected Feynman graphs with one marked edge decorated by $\Lambda$ or one marked leaf decorated by $\mathbb{I}$.
\end{prop}

Note that if $A'_t$ is a path from $A'_0$ to $A'_1$, integrating (\ref{variation of Z_A,A' wrt A'}) we obtain (\ref{thm change of A' eq}) with 
\begin{equation}\label{R in change of Z with A'}
    R_{A,A'_0,A'_1}(\sfa)=\int_0^1 dt\, r_{A,A'_t;\dot{A}'_t}(\sfa) Z_{A,A'_t}(\sfa).
\end{equation}

\begin{proof}[
Proof of Proposition \ref{prop: variation of Z wrt A'}]
Denote $Z^0_A=e^{\frac{i}{\hbar}S_{CS}(A)}e^{\frac{\pi i}{4}\psi_A}\tau_A^{\frac12}$.
We have
\begin{equation}
  (Z^0_A)^{-1}
  \delta_{A'} Z_{A,A'}(\sfa) =
\sum_{\Gamma} \delta_{A'}\overline{\Phi}_{\Gamma,A,A'}(\sfa)
\end{equation}
-- a sum over possibly disconnected trivalent graphs with leaves allowed,
where we denoted 
$$\overline{\Phi}_{\Gamma,A,A'}(\sfa)=\underbrace{\frac{(-i\hbar)^{-\chi(\Gamma)}}{|\mr{Aut}(\Gamma)|}}_{c_\Gamma}\Phi_{\Gamma,A,A'}(\sfa)=\int_{\overline{\mr{Conf}}_V(M)} \omega_\Gamma.  $$ 
Here $\omega_\Gamma$ is the integrand of (\ref{eq: def Phi_Gamma}) in $(A,A')$-gauge, normalized with an appropriate power of $\hbar$ and a combinatorial factor:
\begin{equation}
    \omega_\Gamma=c_\Gamma \Big\langle \bigwedge_{e=(uv)\in E} \pi^*_{uv} \eta\wedge \bigwedge_{l\in L} \pi^*_{v(l)}i(\sfa),\bigotimes_{v\in V}f \Big\rangle.
\end{equation}
Here:
\begin{itemize}
    \item $V,E,L$ are the sets of vertices, edges and leaves of $\Gamma$. 
    \item For a short loop $e=(vv)$, expression $\pi_{vv}^*\eta$ is interpreted as $\pi_v^* \eta^\Delta$ with $\eta^\Delta$ the regularized diagonal evaluation of $\eta$.
    \item $A,A'$ subscripts are suppressed.
\end{itemize}
We will also use shorthand notation $\eta_e= \pi^*_{uv}\eta$, $i(\sfa)_l=\pi_{v(l)}^* i(\sfa)$.

The rest of the proof is parallel to 
the arguments of \cite[Section 4.5, Appendix A]{Cattaneo2008}, \cite[Lemma 4.11]{Cattaneo2017}.\footnote{The technology of proving the properties of perturbative partition functions via Stokes' theorem on the configuration space was developed in particular in \cite[\S 6]{Axelrod1994}, \cite{Kontsevich1994}, \cite{Kontsevich2003}.
}

Variation of the value of a Feynman graph in $A'$ is the sum over edges and leaves of the graph of replacing that edge or leaf with its variation:
\begin{multline}\label{prop 4.11 delta_A' Phi}
    \delta_{A'}\overline\Phi_{\Gamma,A,A'}=\\
    \sum_{e_\mm\in E} \int_{\Conf} c_\Gamma \Big\langle \delta_{A'}\eta_{e_\mm}\wedge\bigwedge_{e\in E\setminus e_\mm}\eta_e\wedge \bigwedge_{l\in L}i(\sfa)_l,\bigotimes_{v\in V} f \Big\rangle
    \\+ \sum_{l_\mm\in L} \int_{\Conf} c_\Gamma \Big\langle 
    \bigwedge_{e\in E}\eta_e\wedge \delta_{A'}i(\sfa)_{l_\mm} \wedge\bigwedge_{l\in L\setminus l_\mm}i(\sfa)_l,\bigotimes_{v\in V} f
    \Big\rangle .
\end{multline}
(Subscript $\mm$ in $e_\mm,l_\mm$ is for ``marked'' edge/leaf.) We introduce the following notation: for $O\in \{\Lambda, P,\II P, P\PP\}$ an operator on $\Omega^\bt(M,\g)$, we will denote $\eta_O \in \Omega^p(\Conf[2],\g\otimes \g)$ its integral kernel, seen as a form on the compactified configuration space of two points of degree $p=1$ for $\Lambda$, $p=2$ for $\II P$ and $P\PP$, $p=3$ for $P$.
From (\ref{variation of ipK in A'}) we have
\begin{equation}\label{prop 4.11 delta_A' eta}
    \delta_{A'} \eta= -d\eta_\Lambda + 
    \eta_{\II P}+ \eta_{P \PP},
    \quad \delta_{A'}i(\sfa)=-d\mathbb{I}i(\sfa).
\end{equation}
Note that operators $\II P$ and $P\PP$ are mutually transpose.

Next, we substitute (\ref{prop 4.11 delta_A' eta}) in (\ref{prop 4.11 delta_A' Phi}) and use Stokes' theorem on the configuration space to move $d$ from the marked edge $e_\mm$ or leaf $l_\mm$ to another edge $e'_\mm$ (note that\footnote{\label{footnote: 1 as principal boundary stratum}
As a distributional form on $M\times M$, the integral kernel of the operator $[d,K]=1-P$ contains the integral kernel of the identity -- the delta-distribution on the diagonal in $M\times M$. On the  configuration space of pairs of \emph{distinct} points on $M$, ${M\times M\setminus \mr{Diag}}$, this delta-distribution disappears. However, in the formalism of configuration space integrals, it effectively reappears as the contribution of the principal boundary stratum of the configuration space, see (\ref{prop 4.11 delta_A' Phi conf space integral line 4}), (\ref{prop 4.11 delta_A' Phi conf space integral line 5}) and the discussion below it.
} $d\eta=\eta_P$, since $[d,K]=1-P$). In the process, Stokes' theorem produces additional contributions of the boundary of the configuration space.
\begin{align}
    &\delta_{A'}  \overline\Phi_{\Gamma,A,A'}=  \notag \\ 
    \label{prop 4.11 delta_A' Phi conf space integral line 1}
    &
    \sum_{e_\mm\neq e'_\mm\in E}\int_{\Conf} c_\Gamma \cdot \\ 
    \notag
    &\cdot \Big\langle (\eta_P)_{e'_\mm}\wedge (\eta_\Lambda)_{e_\mm}\wedge  \bigwedge_{e\in E\setminus\{e_\mm,e'_\mm\}}\eta_e \wedge \bigwedge_{l\in L} i(\sfa)_l,\bigotimes_{v\in V} f  \Big\rangle\\
    \label{prop 4.11 delta_A' Phi conf space integral line 2}
    &
    +\sum_{l_\mm\in L,\, e'_\mm\in E} \int_{\Conf} c_\Gamma \cdot \\ 
    \notag &\cdot
    \Big\langle  (\eta_P)_{e'_\mm} \wedge \bigwedge_{e\in E\setminus e'_\mm}\eta_e \wedge \II i(\sfa)_{l_\mm}\wedge \bigwedge_{l\in L\setminus l_\mm} i(\sfa)_l, \bigotimes_{v\in V} f\Big\rangle\\
    \label{prop 4.11 delta_A' Phi conf space integral line 3}
    &
    +\sum_{e_\mm\in E} \int_{\Conf} c_\Gamma \cdot \\ 
    \notag
    &\cdot \Big\langle (\eta_{\II P}+\eta_{P\PP})_{e_\mm}\wedge\bigwedge_{e\in E\setminus e_\mm}\eta_e\wedge \bigwedge_{l\in L}i(\sfa)_l,\bigotimes_{v\in V} f \Big\rangle\\
    \label{prop 4.11 delta_A' Phi conf space integral line 4}
    &+\sum_{e_\mm\in E} \int_{\partial \Conf} c_\Gamma \cdot \\
    \notag
    &\cdot \Big\langle 
    (\eta_\Lambda)_{e_\mm}\wedge\bigwedge_{e\in E\setminus e_\mm}\eta_e\wedge \bigwedge_{l\in L}i(\sfa)_l,\bigotimes_{v\in V} f
    \Big\rangle \\
    \label{prop 4.11 delta_A' Phi conf space integral line 5}
    &+ \sum_{l_\mm\in L} \int_{\partial \Conf} c_\Gamma \cdot \\
    \notag&\cdot \Big\langle 
    \bigwedge_{e\in E}\eta_e\wedge \II i(\sfa)_{l_\mm} \wedge\bigwedge_{l\in L\setminus l_\mm}i(\sfa)_l,\bigotimes_{v\in V} f
    \Big\rangle .
\end{align}

The boundary contributions here (\ref{prop 4.11 delta_A' Phi conf space integral line 4}), (\ref{prop 4.11 delta_A' Phi conf space integral line 5}) cancel out in the sum over graphs $\Gamma$ as follows. The boundary of the configuration space is stratified (see \cite{Axelrod1994}, \cite{Cattaneo2008}) into ``principal strata'' (differential geometric blow-ups of collapses of two points on $M$) and ``hidden strata'' (collapses of $\geq 3$ points). Contributions of principal strata cancel out in the sum over graphs $\Gamma$ as a consequence of Jacobi identity  (or, equivalently, as a consequence of the classical master equation on BV Chern-Simons action).

Hidden boundary strata of the configuration space do not contribute to the integrals in (\ref{prop 4.11 delta_A' Phi conf space integral line 4}), (\ref{prop 4.11 delta_A' Phi conf space integral line 5}); we defer the proof of this to Appendix \ref{Appendix: vanishing of hidden boundary strata}.


\begin{figure}
    \centering
    \includegraphics[width=0.8\linewidth]{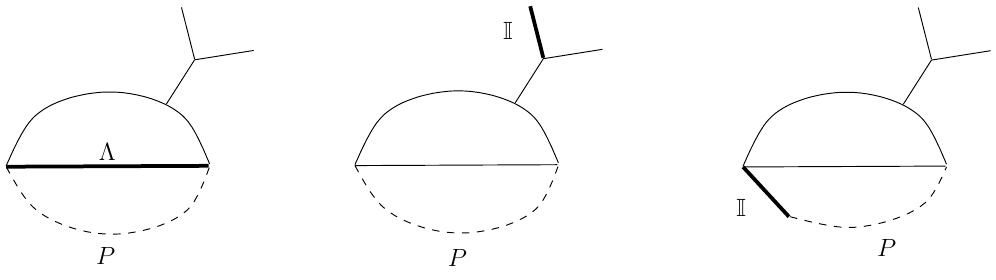}
    \caption{Examples of graphs contributing to (\ref{prop 4.11 delta_A' Phi conf space integral line 1}), (\ref{prop 4.11 delta_A' Phi conf space integral line 2}), (\ref{prop 4.11 delta_A' Phi conf space integral line 3}). 
    }
    \label{fig:variation of A'}
\end{figure}

Denote 
\begin{multline}\label{prop 4.11 zeta}
    \zeta=\\
    \sum_{\Gamma^\#} \sum_{e_\mm\in E^\#} \int_{\overline{\mr{Conf}}_{V^\#}(M)} c_{\Gamma^\#} 
    \Big\langle (\eta_\Lambda)_{e_\mm}\wedge \bigwedge_{e\in E^\#\setminus e_\mm}\eta_e\wedge \bigwedge_{l\in L^\#} i(\sfa)_l,\bigotimes_{v\in V^\#} f \Big\rangle\\
    +\sum_{\Gamma^\#} \sum_{l_\mm\in L^\#} \int_{\overline{\mr{Conf}}_{V^\#}(M)} c_{\Gamma^\#} 
    \Big\langle \bigwedge_{e\in E^\#}\eta_e\wedge \II i(\sfa)_{l_\mm} \wedge \bigwedge_{l\in L^\#\setminus l_\mm} i(\sfa)_l,\bigotimes_{v\in V^\#} f \Big\rangle.
\end{multline}
Here $\Gamma^\#$ runs over possibly disconnected trivalent graphs with leaves allowed; $V^\#,E^\#,L^\#$ stand for the sets of vertices/edges/leaves of $\Gamma^\#$.

Note that 
\begin{equation}
\zeta=(Z^0_A)^{-1} r_{A,A';\delta A'}(\sfa) Z_{A,A'}(\sfa)    
\end{equation}
with $r$ defined as in Proposition \ref{prop: variation of Z wrt A'}.

On the other hand, applying $\Delta_\sfa$ to each term in (\ref{prop 4.11 zeta}) yields a sum over  pairs of $i(\sfa)$ leaves, of replacement of that pair by a new edge adjoined to the graph $\Gamma^\#$ and decorated by $P$:
\begin{align}
    &\Delta_\sfa \zeta = \notag\\
    &\sum_{\Gamma^\#} \sum_{e_\mm\in E^\#,\, l_\mm\neq l'_\mm\in L^\#} \int_{\overline{\mr{Conf}}_{V^\#}(M)} c_{\Gamma^\#}\cdot  \label{prop 4.11 Delta zeta line 1} \\
    &\notag
     \cdot\Big\langle (\eta_P)_{(v(l_\mm) v(l'_\mm))} \wedge (\eta_\Lambda)_{e_\mm}\wedge 
    \bigwedge_{e\in E^\#\setminus e_\mm}\eta_e\wedge \bigwedge_{l\in L^\#\setminus\{l_\mm,l'_\mm\}} i(\sfa)_l,\bigotimes_{v\in V^\#} f \Big\rangle\\
    &+
    \sum_{\Gamma^\#} \sum_{l_\mm,l'_\mm,l''_\mm\in L^\#\,\mr{distinct}} \int_{\overline{\mr{Conf}}_{V^\#}(M)} c_{\Gamma^\#}\cdot \label{prop 4.11 Delta zeta line 2}\\
    & \notag  \cdot
    \Big\langle \bigwedge_{e\in E^\#}\eta_e\wedge \II i(\sfa)_{l_\mm} \wedge 
    (\eta_P)_{(v(l'_\mm) v(l''_\mm))} \wedge
    \bigwedge_{l\in L^\#\setminus \{l_\mm,l'_\mm,l''_\mm\}} i(\sfa)_l,\bigotimes_{v\in V^\#} f \Big\rangle\\
    &+\sum_{\Gamma^\#} \sum_{l_\mm\neq l'_\mm\in L^\#} \int_{\overline{\mr{Conf}}_{V^\#}(M)} c_{\Gamma^\#} \cdot \label{prop 4.11 Delta zeta line 3}
    \\ & \notag \cdot
    \Big\langle \bigwedge_{e\in E^\#}\eta_e\wedge (\eta_{\II P})_{(v(l_\mm) v(l'_\mm))}  \wedge \bigwedge_{l\in L^\#\setminus \{l_\mm, l'_\mm\} } i(\sfa)_l,\bigotimes_{v\in V^\#} f \Big\rangle.
\end{align}    
Comparing with the computation of $\delta_{A'}\overline\Phi_{\Gamma,A,A'}$, we find that:
\begin{itemize}
    \item $\sum_\Gamma (\ref{prop 4.11 delta_A' Phi conf space integral line 1})=-i\hbar(\ref{prop 4.11 Delta zeta line 1})$. Here $\Gamma$ is $\Gamma^\#$ with a new $P$-edge $e'_\mm=(v(l_\mm) v(l'_\mm))$ attached.  The extra $-i\hbar$ factor appears because attaching a new $P$-edge to $\Gamma^\#$ changes its Euler characteristic by $-1$ and thus changes the power of $(-i\hbar)$ in $c_\Gamma$ by $+1$. 
    \item $\sum_\Gamma (\ref{prop 4.11 delta_A' Phi conf space integral line 2})=-i\hbar (\ref{prop 4.11 Delta zeta line 2})$. Here $\Gamma$ is $\Gamma^\#$ with a new $P$-edge $e'_\mm=(v(l'_\mm)v(l''_\mm))$ attached. 
    \item $\sum_\Gamma (\ref{prop 4.11 delta_A' Phi conf space integral line 3})=-i\hbar (\ref{prop 4.11 Delta zeta line 3})$. Here $\Gamma$ is $\Gamma^\#$ with a new $\II P$-edge attached, $e_\mm=(v(l_\mm)v(l'_\mm))$, or new $P\PP$-edge attached, $e_\mm=(v(l'_\mm)v(l_\mm))$.
\end{itemize}
Therefore we have
\begin{equation}
    \underbrace{\sum_\Gamma \delta_{A'} \overline\Phi_{\Gamma,A,A'}}_{(Z^0_A)^{-1}\delta_{A'} Z_{A,A'}(\sfa)}=\underbrace{-i\hbar\Delta_\sfa\zeta}_{i\hbar (Z^0_A)^{-1} \Delta_\sfa(r_{A,A';\delta A'}(\sfa)Z_{A,A'}(\sfa))}
\end{equation}
Multiplying both sides by $Z^0_A$, we have (\ref{variation of Z_A,A' wrt A'}).

\end{proof}

\subsubsection{Total horizontality (modulo $\Delta$-exact terms) on $\FC\times \FC$}
\label{sss: T^I+T^II+T^III}
    One can summarize the results of Propositions \ref{prop: variation of Z wrt A'}, \ref{prop: Z desync equivariance under diagonal gauge transf}  and Corollary 
    \ref{cor: horizontality wrt nabla_G}
    as follows. One has splitting of the tangent bundle of $\mc{U}\subset \FC'\times \FC'$ into a direct sum of three integrable distributions
    \begin{equation}\label{T^I+T^II+T^III}
        T\mc{U}=T^I\oplus T^{II} \oplus T^{III}.
    \end{equation}
Here:
\begin{itemize}
    \item $T^I_{A,A'}=\{(\chi,0)\;|\; \chi\in \mr{Harm^1_{A,A'}}\}$ -- harmonic shifts of $A$.
    \item $T^{II}_{A,A'}=\{(0,\gamma)\;|\; \gamma\in \Omega^1_{A'\mr{-closed}}\}$ -- shifts of $A'$.
    \item $T^{III}_{A,A'}=\{(d_A\beta,d_{A'}\beta)\;|\; \beta\in \Omega^0\}$ -- diagonal gauge transformations of $(A,A')$.
\end{itemize}
For the bundle $\mc{D}$ of formal half-densities on $H_A$ over $\mc{U}$, one has three flat partial connections
$\til\nabla^G$ (\ref{partial Grothendieck connection}), $\nabla_{A'}=\delta_{A'}$, $\nabla^\mr{gauge}$ along these three distributions. Here $\nabla^\mr{gauge}$ is such that its parallel transport takes a formal half-density $\psi(\sfa)$ over $(A,A')$ to the half-density $\psi(\sfg\sfa \sfg^{-1})$ over $({}^\sfg A,{}^\sfg A')$ for any $\sfg\colon M\ra G$.
These three partial connections can be assembled into a total flat connection
\begin{equation} \label{nabla^tot}
\nabla^\mr{tot}= \til\nabla^G+\nabla_{A'}+\nabla^\mr{gauge}.
\end{equation}
The extended partition function then satisfies the total horizontality equation (modulo $\Delta$-exact terms):
\begin{equation}\label{nabla^tot Zhat = 0}
    \nabla^\mr{tot} Z=
    -i\hbar\Delta_\sfa\Big( r_{{\til{\delta A'}}}(\sfa) Z_{A,A'}(\sfa)\Big).
\end{equation}
Here $\delta A'$ is replaced in 
$r$ (as in Proposition \ref{prop: variation of Z wrt A'})  with the expression
\begin{equation}\label{tilde delta A'}
\til{\delta A'}=\delta A'-d_{A'}K_{A,A'}\delta A     
\end{equation} 
-- 
a 1-form on $\mc{U}$ valued in $\Omega^1(M,\g)$ which vanishes along $T^I$ and $T^{III}$ and coincides with $\delta A'$ on $T^{II}$. Put another way, $\til{\delta A'}$ is the projector onto $T^{II}$ in (\ref{T^I+T^II+T^III}).


\subsection{Extension of $Z_{A,A'}$ to a horizontal nonhomogeneous form in $A'$}\label{sec: ext Z A prime}
In this section we describe a refinement of Proposition \ref{prop: variation of Z wrt A'}: one combines the partition function $Z_{A,A'}(\sfa)$ with the BV generator in the r.h.s. of (\ref{variation of Z_A,A' wrt A'}) into a nonhomogeneous form in $A'$
$$ \wh{Z}_{A,A'}(\sfa)= Z_{A,A'}(\sfa)+ r(\sfa) Z_{A,A'}(\sfa)+\underbrace{\cdots}_{\mr{degree}\;\geq 2\;\mr{in}\; A'}$$
so that the full object satisfies horizontality condition
\begin{equation}\label{horizontality of Zhat}
\wh\nabla_{A'} \wh{Z}_{A,A'}(\sfa)=0
\end{equation}
with respect to the flat partial superconnection\footnote{
For the definition of a superconnection, see e.g. \cite[Definition 1.2]{igusa2009iterated}.
The superconnection (\ref{nabla_A' hat}) can be thought of as a correction of the trivial superconnection $\delta_{A'}$ by  $\Delta_\sfa$ -- a $0$-form in $A'$.
} 
\begin{equation} \label{nabla_A' hat}
\wh\nabla_{A'}=\delta_{A'}-i\hbar \Delta_{\sfa}
\end{equation}
in the direction of $A'$ in the bundle of formal half-densities in $H_A$ over $\mc{U}\subset \FC'\times \FC'$. 
In particular, in degree zero in $A'$, (\ref{horizontality of Zhat}) is the equation $\Delta_\sfa Z_{A,A'}(\sfa)=0$ (the BV quantum master equation) and in degree one in $A'$, it yields the equation (\ref{variation of Z_A,A' wrt A'}).

We proceed to the detailed construction.

Consider the following quadruple of nonhomogeneous forms in $A'$ valued in linear maps:
\begin{equation}\label{ihat, phat, Khat, Phihat}
\begin{aligned}
    \wh{i}&=\sum_{k\geq 0} (-G_{A,A'}\ad^*_{\delta A'})^k i_{A,A'}=i+\mathbb{I}i+\cdots \quad \in \Omega^{0,\bt}(\mc{U},\mr{Hom}(H_A^\bt,\Omega^\bt(M,\g))), \\
    \wh{p}&= \sum_{k\geq 0} p_{A,A'}(-\ad^*_{\delta A'}G_{A,A'})^k=p+p\mathbb{P}+\cdots\quad \in \Omega^{0,\bt}(\mc{U},\mr{Hom}(\Omega^\bt(M,\g),H_A^\bt)), \\
     \wh{K}&= \sum_{k\geq 0} K_{A,A'}(-\ad^*_{\delta A'}G_{A,A'})^k=K+\Lambda+\cdots\quad \in \Omega^{0,\bt}(\mc{U},\mr{End}(\Omega^\bt(M,\g))),\\
     \wh\Theta&=p_{A,A'}\ad^*_{\delta A'} d_A G_{A,A'}(-G_{A,A'}\ad^*_{\delta A'}) i_{A,A'}\quad  \in \Omega^{0,2}(\mc{U},\mr{End}(H^\bt_A)).
\end{aligned}
\end{equation}
Here for the first three maps, the 0-form component in $A'$ is  given by the usual maps $i,p,K$ in $(A,A')$-gauge and the 1-form component is given by the maps $\mathbb{I}i$, $p\mathbb{P}$, $\Lambda$, with $\mathbb{I},\mathbb{P},\Lambda$ as in (\ref{Lambda, II, PP}). In $\Omega^{i,j}(\mc{U})$, $(i,j)$ is the form bi-degree along the two factors in $\FC'\times \FC'$.

We remark that sums in (\ref{ihat, phat, Khat, Phihat}) are finite (stop at $k=2$)
since each factor $\ad^*_{\delta A'}$ drops the form degree on $M$ by one. 

The triple $(\wh{i},\wh{p},\wh{K})$ above can be thought of as a promotion of the $(i,p,K)$ triple (SDR data) associated to the $(A,A')$-gauge-fixing to a \emph{differential family} over $A'\in\FC'$, cf. Lemma \ref{lemma: ipK hat} below.

\begin{defn}
We define the extended desynchronized Chern-Simons perturbative partition function as
\begin{multline}\label{Zhat}
    \wh{Z}_{A,A'}(\sfa) 
    =e^{\frac{i}{\hbar}S_{CS}(A)} 
    e^{\frac{\pi i}{4}\psi_A}\tau_{A}^{1/2}
e^{\frac{i}{\hbar}\,\frac12\langle \sfa, \wh\Theta(\sfa)\rangle}\;\exp \sum_\Gamma \frac{(-i\hbar)^{l(\Gamma)-1}}{|\mr{Aut}(\Gamma)|}\wh{\Phi}_{\Gamma,A,A'}(\sfa)\\
    \in \Omega^{0,\bt}(\mc{U},\mr{Dens}^{\frac12,\mr{formal}}(H_A[1]))
\end{multline}
where $\Gamma$ runs over connected graphs with 
$l\geq 0$ loops and Feynman weights of graphs are defined as in (\ref{eq: def Phi_Gamma}), where we replace $K$ with $\wh{K}$ and $i$ with $\wh{i}$.
\end{defn}



\begin{thm}\label{thm: horizontality of Zhat}
    We have the horizontality relation
    \begin{equation}\label{dQME}
        \wh\nabla_{A'}\wh{Z}_{A,A'}(\sfa)=0,
    \end{equation}
    with $\wh\nabla_{A'}$ as in (\ref{nabla_A' hat}).
\end{thm}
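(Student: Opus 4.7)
The plan is to upgrade the proof of Proposition \ref{prop: variation of Z wrt A'} so that it works to all orders in $\delta A'$ simultaneously, by reinterpreting the quadruple $(\wh{i},\wh{p},\wh{K},\wh\Theta)$ as a ``differential'' or ``curved'' SDR packaging the variation (\ref{variation of ipK in A'}) of $(i,p,K)$ in $A'$ via homological perturbation, with $\wh\Theta$ playing the role of a connection-2-form/curvature on the cohomology bundle.

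\medskip

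\noindent\textbf{Step 1 (curved SDR relations).} I first verify that the quadruple in (\ref{ihat, phat, Khat, Phihat}), regarded as an element of $\Omega^{0,\bt}(\mc{U})$ with values in operators between $\Omega^\bt(M,\g)$ and $H_A^\bt$, satisfies the following identities:
\begin{equation*}
    \wh{p}\wh{i}=\mr{id},\quad \wh{K}\wh{i}=0,\quad \wh{p}\wh{K}=0,\quad \wh{K}^2=0,
\end{equation*}
\begin{equation*}
    [d_A+\delta_{A'},\wh{K}]=\mr{id}-\wh{i}\wh{p},\qquad (d_A+\delta_{A'})\wh{i}=\wh{i}\,\wh\Theta,\qquad \wh{p}(d_A+\delta_{A'})=-\wh\Theta\,\wh{p}.
\end{equation*}
These are the ``differential SDR'' relations obtained by applying the homological perturbation lemma to the deformation $d^*_{A'}\rightsquigarrow d^*_{A'}+\ad^*_{\delta A'}$, viewed as a perturbation of the Hodge SDR by a 1-form on $\mc{U}$ valued in degree~$-1$ operators. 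The geometric series in (\ref{ihat, phat, Khat, Phihat}) are exactly the standard HPL geometric series, truncated at $k=2$ because $\ad^*_{\delta A'}$ drops $M$-form degree by one. The 2-form $\wh\Theta$ measures the failure of $\wh i,\wh p$ to be ``closed'' under the total differential; comparing with (\ref{variation of ipK in A'}) at first order in $\delta A'$ one recovers $\delta_{A'} i=-d_A\mathbb{I}i$, $\delta_{A'} p=-p\mathbb{P}d_A$, $\delta_{A'}K=[d_A,\Lambda]+P\mathbb{P}+\mathbb{I}P$, which are the $\delta_{A'}$-components of the displayed identities.

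\medskip

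\noindent\textbf{Step 2 (Stokes on configuration spaces).} With the curved SDR relations in hand, I compute
\begin{equation*}
    \wh{Z}^{-1}_{A,A'}\,\delta_{A'}\wh{Z}_{A,A'}(\sfa)=\delta_{A'}\!\left(\tfrac{i}{2\hbar}\langle\sfa,\wh\Theta\sfa\rangle\right)+\sum_{\Gamma}\delta_{A'}\overline{\wh\Phi}_{\Gamma,A,A'}(\sfa),
\end{equation*}
using that the eta-invariant $\psi_A$, the torsion $\tau_A^{1/2}$ and $S_{CS}(A)$ are independent of $A'$. Acting by $\delta_{A'}$ on a Feynman weight produces, via the relation $[d_A+\delta_{A'},\wh K]=\mr{id}-\wh i\wh p$, two kinds of contributions: (a) a $d_A$ on one of the hatted propagators/leaves, which integrates by parts on $\overline{\mr{Conf}}_V(M)$ to a sum of principal boundary integrals (two points colliding), and (b) $\wh\Theta$-insertions from the failure of $\wh{i}$ to be $(d_A+\delta_{A'})$-closed. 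The principal boundary contributions split into: the IHX/Jacobi cancellations (encoded by the classical master equation for $S_{CS}$), plus a residual ``$\mr{id}-\wh i\wh p$'' contribution on the collapsed edge. The latter re-assembles, by the inclusion–exclusion on Euler characteristic used in the proof of Proposition \ref{prop: variation of Z wrt A'}, precisely to the action of $i\hbar\Delta_\sfa$ on $\wh{Z}$, while the $\wh\Theta$-insertions from Stokes are canceled by the variation $\delta_{A'}(\tfrac{i}{2\hbar}\langle\sfa,\wh\Theta\sfa\rangle)$ together with the $\wh\Theta$-insertions on leaves coming from (b). Hidden strata of $\overline{\mr{Conf}}_V(M)$ give no contribution by the standard argument of \cite{Axelrod1991,Cattaneo2008}.

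\medskip

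Putting these together yields $\delta_{A'}\wh Z_{A,A'}(\sfa)=i\hbar\Delta_\sfa \wh Z_{A,A'}(\sfa)$, i.e.\ (\ref{dQME}). As a consistency check, the $(0,0)$-component in $A'$ is $\Delta_\sfa Z_{A,A'}(\sfa)=0$ (trivial at irreducible $A$ for degree reasons) and the $(0,1)$-component reproduces Proposition \ref{prop: variation of Z wrt A'} once one reads off the generator $r$ from the degree-1 piece of $\wh Z$.

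The main obstacle is Step~1: one must check that the exact polynomial $\wh\Theta$ of (\ref{ihat, phat, Khat, Phihat}) (and not some other 2-form) is produced by the curvature of the HPL-deformed SDR, and that the truncation of the geometric series at $k=2$ is compatible with the identities above. This is a direct but careful bookkeeping of HPL for a 1-form–valued perturbation, using flatness of $A,A'$ and the identities $d_A G=G d_A$ on $\mr{im}\,\HHH$, $[d_A,\ad^*_{\delta A'}]=-\delta_{A'}\HHH$, and $\ad^*_{\delta A'}P=P\ad^*_{\delta A'}$ modulo $\mr{im}\,\HHH$.
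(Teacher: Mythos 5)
Your proposal follows essentially the same route as the paper: establish the extended $(i,p,K)$ relations of Lemma \ref{lemma: ipK hat} via the homological perturbation lemma, then run the diagrammatic/Stokes argument in which the $1$'s cancel by Jacobi, the $\wh{i}\wh{p}$ insertions assemble into $i\hbar\Delta_\sfa$, and the $\wh\Theta$ insertions cancel against the variation of the prefactor $e^{\frac{i}{\hbar}\frac12\langle\sfa,\wh\Theta(\sfa)\rangle}$. The only point your sketch leaves implicit is that this last cancellation requires the Maurer--Cartan equation $\delta_{A'}\wh\Theta+\wh\Theta^2=0$ (Remark \ref{rem: Thetahat squared}), which itself follows from the relations of Step 1.
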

An alternative name for equation (\ref{dQME}) is the \emph{differential quantum master equation}, cf. \cite{Bonechi2012}.

The proof is based on the fact that the triple $(\wh{i},\wh{p},\wh{K})$ satisfies the the relations of 
an $(i,p,K)$ triple, 
with the differential $d_A$ replaced by the total differential $\delta_{A'}+d_A$. 
Interestingly, in these relations cohomology $H_A$ as a family over $A'$ attains a nontrivial total differential\footnote{Or: flat (partial) superconnection.} $\delta_{A'}+\wh\Theta$, with $\wh\Theta$ of mixed degree along $\mc{U}$ and along $H_A$ but of total degree one. 

\begin{lem}\label{lemma: ipK hat}
The triple $(\wh{i},\wh{p},\wh{K})$ satisfies the following relations:
\begin{subequations}
\begin{eqnarray}
    (\delta_{A'}+[d_A,-])\wh{K} &=&1-\wh{i}\,\wh{p},  \label{ipK hat a} \\
    (\delta_{A'}+d_A)\wh{i} &=& \wh{i}\, \wh\Theta, \label{ipK hat b}\\
    \delta_{A'} \wh{p} - \wh{p}\, d_A &=&  -\wh\Theta\, \wh{p}, \label{ipK hat c}\\
    \wh{K}\, \wh{i} &=& 0,\label{ipK hat d}\\
    \wh{p}\, \wh{K} &=& 0, \label{ipK hat e}\\
    \wh{K}^2 &=& 0, \label{ipK hat f}\\
    \wh{p} \,\wh{i} &=& 1. \label{ipk hat pi=1}
\end{eqnarray}
\end{subequations}
\end{lem}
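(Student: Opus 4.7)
I plan to prove this by direct algebraic manipulation of the defining geometric series, followed by resummation. Writing $\alpha=\ad^*_{\delta A'}$, $\gamma=\alpha\, G_{A,A'}$, $\beta=G_{A,A'}\,\alpha$, the series in (\ref{ihat, phat, Khat, Phihat}) take the form $\hat{K}=K(1+\gamma)^{-1}$, $\hat{p}=p(1+\gamma)^{-1}$, $\hat{i}=(1+\beta)^{-1} i$ (up to an overall sign choice matching the alternative expansion $\hat{i} = i + \mathbb{I}i + \cdots$); they are finite because each application of $\alpha$ drops the $M$-form degree by $1$. The preparatory algebraic facts I would establish first are: (i) $[G, d_A] = [G, d^*_{A'}] = 0$, so $[G, K] = 0$, $pG = p$, $Gi = i$; (ii) $\{d^*_{A'}, \alpha\} = 0$, obtained by linearising $(d^*_{A'})^2 = 0$ along tangent directions to $\FC'$ (which are $d_{A'}$-closed, killing the curvature correction at first order); and (iii) the reduction formula $G\alpha\, i = -K\,\delta_{A'} i$, obtained by differentiating $d^*_{A'}i = 0$ and using $[G, d^*_{A'}] = 0$.

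The nilpotency identities (d), (e), (f) reduce to ``sandwich lemmas'' $K\gamma^a\beta^b i = 0$, $K\gamma^a K\gamma^b = 0$ and $p\gamma^a K\gamma^b = 0$, which I would prove by induction on $a+b$: the base case uses $Ki = 0$, $pK = 0$, $K^2 = 0$; the inductive step pushes $d^*_{A'}$ to the outside via (ii) and kills the term via $K\,d^*_{A'} = (d^*_{A'})^2\, G = 0$. Identity (g) $\hat{p}\hat{i} = 1$ is the cleanest: recursive substitution of (iii) inside the expansion of $\hat{p}\hat{i}$ converts each higher-degree contribution into something of the form $p(\cdots)K(\cdots)$, killed by $pK = 0$; the only survivor is the $\mc{U}$-degree-$0$ term $pi = 1$.

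The chain identities (a), (b), (c) are more involved. I would compute $\delta_{A'}$-derivatives of the geometric series using the first-order variation formulas for $(i, p, K)$ from the lemma preceding Proposition \ref{prop:varAtprime}, together with $\delta_{A'}(G) = -G\{d_A,\alpha\}G + (\text{terms from }\delta_{A'}P)$ obtained from differentiating $G(\HHH+P) = 1$. The $(\delta_{A'} + [d_A,-])$-action on $\hat{K}$ then telescopes (via the sandwich lemmas above) into $1-\hat{i}\hat{p}$, establishing (a). For (b), the action of $\delta_{A'} + d_A$ on $\hat{i}$ telescopes via cancellations of pairs $\delta_{A'}(\hat{i}^{(k)})$ against $d_A \hat{i}^{(k+1)}$ (which are equal up to sign after using (i) and the variation formulas), leaving a single remainder at $\mc{U}$-degree $2$ that I would identify with $\hat{i}\,\hat{\Theta}$; the explicit formula $\hat{\Theta} = p\,\alpha\, d_A\, G\,(-G\alpha)\, i$ emerges as the unique uncancelled contribution, and it encodes the curvature of the connection $\nabla^\mathbb{H}$ from Section \ref{sss nabla^Harm}. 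Identity (c) follows by a parallel computation, with the roles of $i$ and $p$ swapped (exploiting that $i$ and $p$ are transposes under the Poincar\'e pairing).

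The main obstacle, I anticipate, is the careful bookkeeping in (a)--(c): at each $\mc{U}$-form degree (ranging from $0$ to $3$, since $\dim M = 3$) there are several contributions to track, and organising them to telescope correctly while isolating exactly the $\hat{\Theta}$-term is delicate. The cleanest organising principle, I expect, is to recognise the quadruple $(\hat{i},\hat{p},\hat{K},\hat{\Theta})$ as the output of a Homological Perturbation Lemma adapted to the non-standard bidegree-$(1,-1)$ perturbation $\alpha$ of the trivially-extended SDR data; identities (a)--(g) are then manifestations of the standard SDR relations for the perturbed SDR between $(\Omega^\bullet(\mc{U})\otimes\Omega^\bullet(M,\g),\,\delta_{A'}+d_A)$ and $(\Omega^\bullet(\mc{U})\otimes H^\bullet_A,\,\delta_{A'}+\hat{\Theta})$, with $\hat{\Theta}$ playing the role of the induced differential on cohomology.
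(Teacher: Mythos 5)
Your proposal is correct, and its closing paragraph is in fact the paper's actual proof: Lemma \ref{lemma: ipK hat} is obtained in Appendix \ref{sss: extended ipK triples from families} by applying the homological perturbation lemma to the family of SDR data $(\ib,\pb,\Kb)$ over $\mathbb{GF}=\{A'\}$, perturbed by the de Rham differential $\delta_{A'}$, so that (\ref{ipK hat a})--(\ref{ipk hat pi=1}) are precisely the SDR relations for the perturbed contraction of $(\Omega^\bullet(M\times\mathbb{GF},\g),\,d_A+\delta_{A'})$ onto $(\Omega^\bullet(\mathbb{GF},H_A),\,\delta_{A'}+\wh\Theta)$. The bulk of your plan --- the direct telescoping computation built on the sandwich lemmas and your preparatory facts (i)--(iii) --- is a valid alternative route to the same identities (and your facts (i)--(iii) are exactly the ones such a computation needs). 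Three caveats. First, in the HPL framing the perturbation is $\delta_{A'}$, of bidegree $(1,0)$; the bidegree-$(1,-1)$ operator $\ad^*_{\delta A'}$ is \emph{not} itself a perturbation of the differential, and the $\ad^*_{\delta A'}$-form of the series \eqref{ihat, phat, Khat, Phihat} is only recovered afterwards by rewriting $[\delta_{A'},\Kb]=(\delta_{A'}\Qb)\Gb-\Qb(\delta_{A'}\Gb)$ and using $K_qQ_q=0$ (Proposition \ref{prop: extended ipK triple, App C}); your direct plan must absorb exactly this rewriting into the telescoping. Second, for (\ref{ipK hat c}) the ``parallel computation with $i$ and $p$ swapped'' naturally yields $-\wh\Theta'\,\wh{p}$ with $\wh\Theta'$ the Poincar\'e transpose of $\wh\Theta$, and one still has to show $\wh\Theta'=\wh\Theta$ --- this follows by differentiating (\ref{ipk hat pi=1}) and using (\ref{ipK hat b}), but it is a step your sketch omits. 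Third, $\wh\Theta$ is not the curvature of $\nabla^{\mathbb{H}}$ from Section \ref{sss nabla^Harm} (that curvature has only mixed $\delta A\,\delta A'$ components and vanishes on the $A$-fixed slice); it is the degree-two component of the induced flat superconnection $\delta_{A'}+\wh\Theta$ on the cohomology bundle, satisfying the Maurer--Cartan equation of Remark \ref{rem: Thetahat squared}. None of these affects the validity of your argument.
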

Formulae (\ref{ihat, phat, Khat, Phihat}) and Lemma \ref{lemma: ipK hat} are an application of homological perturbation lemma, see Appendix \ref{sss: extended ipK triples from families}.

\begin{rem}\label{rem: Thetahat squared}
    As an immediate consequence of Lemma \ref{lemma: ipK hat} we have that $\wh\Theta$ satisfies 
    \begin{equation}
        (\delta_{A'}+\wh\Theta)^2=0,
    \end{equation}
or, equivalently, $\wh\Theta$ satisfies the Maurer-Cartan equation 
\begin{equation}\label{Theta MC}
\delta_{A'}\wh\Theta+\wh\Theta^2=0.
\end{equation}
Indeed, one has
\begin{multline*} 
\wh\Theta^2\underset{(\ref{ipk hat pi=1})}{=}\wh{p}\wh{i}\wh\Theta\wh\Theta\underset{(\ref{ipK hat b})}{=}\wh{p} \big((\delta_{A'}+d_A)\wh{i}\big)\wh\Theta=
\wh{p} (\delta_{A'}+d_A)(\wh{i}\,\wh\Theta)-\underbrace{\wh{p}\,\wh{i}}_1\delta_{A'}\wh\Theta\\
\underset{(\ref{ipK hat b})}{=} 
\wh{p}\underbrace{(\delta_{A'}+d_A)^2}_0\wh{i}-\delta_{A'}\wh\Theta=-\delta_{A'}\wh\Theta.
\end{multline*}
\end{rem}

\begin{proof}[Proof of Theorem \ref{thm: horizontality of Zhat}]
The proof is similar to the proof of Proposition \ref{prop: variation of Z wrt A'}.

Consider the Feynman graph part of $\wh{Z}$, $\sum_\Gamma \wh{\Phi}_\Gamma$, with $\Gamma$ running over possibly disconnected graphs, with edges decorated by $\wh{K}$ and leaves decorated by $\wh{i}(\sfa)$. (We include powers of $\hbar$ and the symmetry factor in $\wh\Phi$.)  The action of $\delta_{A'}$ on $\wh\Phi_\Gamma$ can be computed as a sum of (a) decorations of one edge of $\Gamma$ with $(\delta_{A'}+[d_A,-])\wh{K}=1-\wh{i}\,\wh{p}$  (\ref{ipK hat a}), plus  (b) decorations of one leaf of $\Gamma$ with $(\delta_{A'}+d_A)\wh{i}=\wh{i}\wh\Theta$ (\ref{ipK hat b}).
Upon summing over graphs, contributions of $1$ cancel out;\footnote{Cf. cancellation of principal boundary strata and footnote \ref{footnote: 1 as principal boundary stratum} in the proof of Proposition \ref{prop: variation of Z wrt A'}. } contributions of 
$\wh{i}\wh{p}$ yield $-i\hbar\Delta_\sfa \sum_\Gamma \wh\Phi_\Gamma$; contributions of $\wh{i}\wh\Theta$ add up to 
$$-\Big\{\frac12 \langle \sfa,\wh\Theta(\sfa) \rangle,\sum_\Gamma \wh\Phi_\Gamma\Big\}.$$
Thus, we have
$$ (\delta_{A'}-i\hbar \Delta_\sfa
 ) \sum_\Gamma \wh\Phi_\Gamma =-\Big\{\frac12 \langle \sfa,\wh\Theta(\sfa) \rangle,\sum_\Gamma \wh\Phi_\Gamma\Big\}. $$
Next, we have
\begin{multline*}
(\delta_{A'}-i\hbar\Delta_\sfa)\Big(e^{\frac{i}{\hbar}\,\frac12\langle \sfa,\wh\Theta(\sfa) \rangle} \sum_\Gamma \wh\Phi_\Gamma\Big)=\\
=e^{\frac{i}{\hbar}\,\frac12\langle \sfa,\wh\Theta(\sfa) \rangle}\frac{i}{\hbar}\Big(\frac12\langle \sfa,(\delta_{A'}\wh\Theta+\wh\Theta^2)\sfa\rangle+
\Big\{\frac12\langle \sfa,\wh\Theta(\sfa) \rangle,-\Big\}-\Big\{\frac12\langle \sfa,\wh\Theta(\sfa) \rangle,-\Big\}
\Big) \sum_\Gamma \wh\Phi_\Gamma =0,
\end{multline*}
where we used (\ref{Theta MC}).
This proves the horizontality equation (\ref{dQME}).
\end{proof}

\subsubsection{A path integral formula for $\wh{Z}$}
    The extended partition function (\ref{Zhat}) 
    can be seen as a perturbative expansion of the following path integral:
\begin{multline}\label{Zhat as path integral}
    \wh{Z}_{A,A'}(\sfa)=\int_{\mc{L}=\Omega_{d^*_{A'}\mr{-ex}}[1]}\mc{D}\alpha_\mr{fl}
    \; \exp \frac{i}{\hbar}\Big( 
    S_{CS}(A+i(\sfa)+\alpha_\mr{fl})+\\
    +\int_M\frac12 \underbrace{\big\langle 
    \alpha_\mr{fl}, d_A G\,\ad^*_{\delta A'}\alpha_\mr{fl}
    \big\rangle}_\tau+
    \underbrace{\big\langle \alpha_\mr{fl},d_A G\,\ad^*_{\delta A'} i(\sfa) \big\rangle}_\sigma
    \Big)
\end{multline}
where we suppress the indices in $i_{A,A'},G_{A,A'}$. The addition of the second and third terms in the exponential generates Feynman diagrams with edges and leaves decorated by $\wh{K}$ and $\wh{i}$ instead of $K$ and $i$. Additionally, one has a diagram consisting of two source terms $\sigma$ connected by 
an edge
-- this accounts for the exponential prefactor containing $\wh\Theta$ in 
(\ref{Zhat}).
See Figure \ref{fig:sigma-tau}.
\begin{figure}
    \centering
    \includegraphics[width=0.7\linewidth]{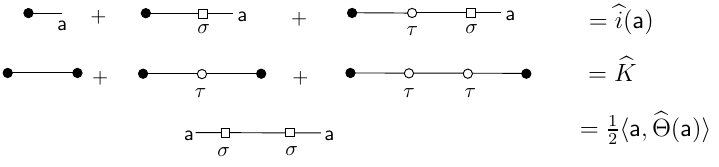}
    \caption{Feynman diagrams containing vertices $\tau$ and $\sigma$. Black dots are the usual internal vertices of Chern-Simons graphs. 
    }
    \label{fig:sigma-tau}
\end{figure}

\subsection{Metric dependence of the desynchronized partition function}
Our definition of the desynchronized partition function $Z_{A,A'}$ (and its extended version) rely on a Riemannian metric $g$ on $M$. In this subsection, we analyze the dependence of $Z$ on this metric. Changing the metric induces another deformation of the gauge-fixing operator $d_{A'}^*$. The goal of this subsection is to prove Theorem \ref{thm: metric dependence} below. Most of the discussion is analogous to the previous subsection and we only sketch the proofs.
\begin{rem}[Framing anomaly and renormalization] It is well known that the (synchronized) perturbative Chern-Simons partition function exhibits metric dependence known as \emph{framing anomaly}. Namely, the phase $e^{\frac{i\pi}{4}\psi_A}$ of the synchronized 1-loop part 
$$ 
I_A=e^{\frac{i\pi}{4}\psi_A}\tau^\frac12_A$$ 
depends on the metric through the eta invariant $\psi$, as already discussed in Witten's treatment \cite{Witten1989}.\footnote{Ray-Singer torsion, as an element of the determinant line $\mr{Det}\,H_A^\bt$, is invariant under changes of metric (this result does not require the flat connection to be acyclic).}
The dependence on the metric can be canceled by choosing a framing or 2-framing $\phi$ of $M$,i.e. a trivialization of $TM$ or $TM \oplus TM$\footnote{Working with 2-framings has the advantage that 3-manifolds have a canonical 2-framing \cite{Atiyah1990}. Choosing this canonical 2-framing both simplifies the 1-loop part and conjecturally agrees with asymptotics of WRT invariants, see \cite{Freed1991}.}, and multiplying $I_{A,A}$ by $e^{\frac{i\dim G}{24}\cdot \frac{S_\mr{grav}(g,\phi)}{2\pi}}$, where 
$S_\mr{grav}(g,\phi) = S_{CS}(\phi^*A_{g})$ denotes the evaluation of the Chern-Simons action on Levi-Civita connection $A_g$. For a 2-framing $\phi$, one defines $S_\mr{grav}(g,\phi) = \frac12S_{CS}(A_g \oplus A_g)$. Then, one has that 
\begin{equation}\label{eq: ren I}
    I_{A}^\mr{ren}:= e^{\frac{i\dim G}{24}\cdot \frac{S_\mr{grav}(g,\phi)}{2\pi}}I_{A}
\end{equation}
is invariant under variations of the metric. Axelrod and Singer \cite{Axelrod1991},\cite{Axelrod1994} showed that the anomaly persists at 
even loop orders: boundary strata of the compactified configuration spaces corresponding to the collapse of all vertices of a connected component of a Feynman graph yield potentially non-zero contributions. However, one can show that there exists a power series 
\begin{equation}
c(\hbar)  =  
\frac{\dim G}{24}\hbar  - \frac{h^\vee\dim G}{24\cdot (2\pi)}\hbar^2  + O(\hbar^4)
\in  \hbar\mathbb{R}[[\hbar]]\label{eq: c(hbar)}
\end{equation}
such that the renormalized perturbative partition function 
\begin{equation}\label{eq: ren Z}
    Z^\mr{ren}_{A}:=Z_A e^{\frac{i}{\hbar}c(\hbar)\frac{S_\mr{grav}(g,\phi)}{2\pi}}
\end{equation}
is independent of the metric $g$, up to an explicit BV exact term (also for $A$ non-smooth), see \cite{Cattaneo2008}, \cite{mnev2019quantum}, \cite{Wernli2019}.  Here $h^\vee$ is the dual Coxeter number of $\g$.  Comparison with the non-perturbative answer suggests that all higher order terms in \eqref{eq: c(hbar)} vanish, see e.g. the discussion below Eq. (6.124) in \cite{Axelrod1991}.
In our desynchronized setting, the same anomalies appear, and they can be canceled in the same way. 
\end{rem}
Let us now consider the effect of an infinitesimal deformation $g \mapsto g + \delta g$. We have 
\begin{equation}
    \delta_g d_{A'}^* = [d_{A'}^*,\lambda_{\delta g}] 
\end{equation}
where $\lambda_{\delta g} \in \Omega^1(\mr{Met}, \operatorname{End}(\Omega^\bullet(M))$ is given by (see Lemma \ref{lem: lambda formula}) 
\begin{equation}
    \lambda_{\delta g} = \star^{-1} \delta_g\star = \frac12 \tr g^{-1}\delta g - \iota_{g^{-1}\delta_g}. 
\end{equation}
We note that we have 
\begin{equation}
    \delta_g\lambda_{\delta g} = - \star^{-1}(\delta_g\star)\star^{-1}\delta_g = -\lambda_{\delta_g}^2. 
\end{equation}
The $(i,p,K)$ triple transforms as 
\begin{align}
    \delta_g i_{A,A'} &= -d_A\mathbb{I}_{\delta g}i_{A,A'}, \\ 
    \delta_g p_{A,A'} &= -p_{A,A'}\mathbb{P}_{\delta g} d_A, \\
    \delta_g K_{A,A'} &= [d_A,\Lambda_{\delta g}] + P_{A,A'} \mathbb{P}_{\delta g} + \mathbb{I}_{\delta g}P_{A,A'}
\end{align}
where 
\begin{equation}
    \Lambda_{\delta g} = K_{A,A'}\lambda_{\delta g} K_{A,A'},\qquad \mathbb{I}_{\delta g} = K_{A,A'}\lambda_{\delta g}, \qquad \mathbb{P}_{\delta g} = \lambda_{\delta g}  K_{A,A'}
\end{equation} are the endomorphisms of $\Omega^\bullet(M,\g)$ analogous to $\Lambda,\mathbb{I},\mathbb{P}$ defined in \eqref{Lambda, II, PP}.

We then have the following theorem:
\begin{thm}\label{thm: metric dependence}
For $A,A' \in \FC$ a pair of close flat connections, we have 
\begin{equation}\label{eq: metric dependence}
    \delta_g Z_{A,A'}^\mr{ren}(\sfa) = -i\hbar\Delta_\sfa \left(r_{\delta g}(\sfa)Z^\mr{ren}_{A,A'}(\sfa)\right)
\end{equation}
where $r_{\delta g}(\sfa)$ is given by the sum of connected Feynman diagrams with one edge marked by $\Lambda_{\delta g}$ or one leaf marked by $\mathbb{I}_{\delta g}$. 
\end{thm}
\begin{proof}[Sketch of the proof]
    This proof is analogous to the proof of Proposition \ref{prop: variation of Z wrt A'}, using 
    the fact that the renormalization cancels potential contributions from hidden boundary strata.\footnote{
    By Corollary \ref{cor: eta_A,A' boundary behavior}, the restriction of the propagator to the boundary of configuration space in the desynchronized theory agrees modulo regular terms to the propagator in the case $A=A'$, so the analysis of the hidden boundary strata of \cite{Axelrod1994} carries over without change.
    } 
\end{proof}

\subsubsection{Extension of $Z$ to a horizontal nonhomogeneous form in $g$.}
\label{sss Zhat g}
One can perform a construction analogous to the one in Section \ref{sec: ext Z A prime} and extend the desynchronized partition function to a horizontal non-homogeneous differential form in the metric direction. This generalizes the construction of Axelrod and Singer \cite{Axelrod1994} to the desynchronized case and arbitrary kinetic operators.\footnote{Axelrod and Singer always work under the assumption that $d_A$ is acyclic.} Namely, one has the following analog of \eqref{ihat, phat, Khat, Phihat}: 
\begin{equation}\label{ihatg, phagt, Khatg, Phihatg}
\begin{aligned}
    \wh{i}_{\delta g}&=\sum_{k\geq 0} (K_{A,A'}\lambda_{\delta g})^k i_{A,A'}=i+\mathbb{I}_{\delta g}i+\cdots \quad \in \Omega^{0,0,\bt}(\mc{U}\times \mr{Met},\mr{Hom}(H_A^\bt,\Omega^\bt(M,\g))), \\
    \wh{p}_{\delta g}&= \sum_{k\geq 0} p_{A,A'}(\lambda_{\delta g} K_{A,A'})^k=p+p\mathbb{P}_{\delta g}+\cdots\quad \in \Omega^{0,0,\bt}(\mc{U}\times \mr{Met},\mr{Hom}(\Omega^\bt(M,\g),H_A^\bt)), \\
     \wh{K}_{\delta g}&= \sum_{k\geq 0} K_{A,A'}(\lambda_{\delta g} K_{A,A'})^k=K+\Lambda_{\delta g}+\cdots\quad \in \Omega^{0,0,\bt}(\mc{U}\times \mr{Met},\mr{End}(\Omega^\bt(M,\g))),\\
     \wh\Theta_{\delta g}&=-p_{A,A'}\lambda_{\delta g} K_{A,A'}\lambda_{\delta g} i_{A,A'}\quad  \in \Omega^{0,0,2}(\mc{U},\mr{End}(H^\bt_A)).
\end{aligned}
\end{equation}
Here $\mr{Met}$ denotes the Fr\'echet manifold of metrics and we are using its exterior calculus. Alternatively, one can restrict to an arbitrary finite-dimensional submanifold thereof.
Again, we note that since $K$ reduces the form degree along $M$, the sums in \eqref{ihatg, phagt, Khatg, Phihatg} are finite: The sum stops at $k=2$ for $\wh{i}_{\delta g}$, $\wh{p}_{\delta g}$ and  $\wh{K}_{\lambda_{\delta g}}$. 
\begin{rem} 
Another way to construct the triple $(\wh{i}_{\delta g},\wh{p}_{\delta g},\wh{K}_{\delta g})$ is to consider the operator\footnote{This is the construction used by Axelrod and Singer \cite{Axelrod1994}.} 
\begin{equation}
    \widehat{H}_g = [d_A + \delta_g,d^*_{A',g}] = \wh{H}_g + (\delta_g d_{A',g}^*) = \wh{H}_g + [\lambda_{\delta g}, d^*_{A',g}].
\end{equation}
The operator $H_g + P_{A,A',g} + [\lambda_{\delta g}, d^*_{A',g}]$ is invertible and its Green's function can be computed as
$\wh{G} = G - G[\lambda_{\delta g}, d^*_{A',g}]G + \ldots$. 
Upon applying $d^*_{A',g}$, one recovers \eqref{eq: deformed induction}. In particular, $\wh{K}_{\delta g}$ coincides, for $A = A'$ an acyclic flat connection, with the extended propagator of Axelrod and Singer \cite[Section 4]{Axelrod1994}. 
\end{rem}
Similarly to Lemma \ref{lemma: ipK hat} we have: 
\begin{lem}\label{lem: ipk hat met}
    The triple $(\wh{i}_{\delta g},\wh{p}_{\delta g},\wh{K}_{\delta g})$ satisfies the following relations:
\begin{subequations}
\begin{eqnarray}
    (\delta_{g}+[d_A,-])\wh{K}_{\delta g} &=&1-\wh{i}_{\delta g}\,\wh{p}_{\delta g},  \label{ipK hat a met} \\
    (\delta_{g}+d_A)\wh{i}_{\delta g} &=& \wh{i}_{\delta g}\, \wh\Theta_{\delta g}, \label{ipK hat b met}\\
    \delta_{g} \wh{p}_{\delta g} - \wh{p}_{\delta g} \, d_A &=&  -\wh\Theta_{\delta g}\, \wh{p}_{\delta g}, \label{ipK hat c met}\\
    \wh{K}_{\delta g}\, \wh{i}_{\delta g} &=& 0,\label{ipK hat d met}\\
    \wh{p}_{\delta g}\, \wh{K}_{\delta g} &=& 0, \label{ipK hat e met}\\
    \wh{K}_{\delta g}^2 &=& 0, \label{ipK hat f met}\\
    \wh{p}_{\delta g} \,\wh{i}_{\delta g} &=& 1. \label{ipk hat pi=1 met}
\end{eqnarray}
\end{subequations}
\end{lem}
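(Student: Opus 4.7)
My plan is to derive Lemma \ref{lem: ipk hat met} as an instance of the general construction of extended $(i,p,K)$ triples from families outlined in Section \ref{sss: extended ipK triples from families}, applied to the family of good gauge-fixing operators $Q_g := d^*_{A',g}$ parametrized by the space $\mathrm{Met}$ of Riemannian metrics. The basic inputs are the variation formula $\delta_g Q_g = [Q_g,\lambda_{\delta g}]$ together with $\delta_g \lambda_{\delta g} = -\lambda_{\delta g}^2$ (which follows from differentiating $\star\star^{-1} = \mathrm{id}$), and the fact that $\delta_g d_A = 0$.

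First I would invoke the homological perturbation lemma to deform the differential $d_A$ on $\Omega^\bullet(M\times \mathrm{Met},\g)$ to the total differential $d_A + \delta_g$, exactly as in Section \ref{sss: extended ipK triples from families} with $q$ replaced by the metric. The general formulas \eqref{eq: deformed induction 2} specialize to
\[
\widetilde{K} \;=\; \sum_{k\geq 0} K_{A,A'}\bigl(-(\delta_g Q_g)G_{A,A'}\bigr)^k,
\]
and analogously for $\widetilde{i},\widetilde{p},\widetilde{\delta}_g$. Substituting $\delta_g Q_g=[Q_g,\lambda_{\delta g}]$ and using the identities $K_{A,A'}Q_g=0$, $Q_g G_{A,A'}=K_{A,A'}$, $G_{A,A'}i_{A,A'}=i_{A,A'}$ and $Q_g i_{A,A'}=0$, the commutators collapse term by term and one recovers precisely the series in \eqref{ihatg, phagt, Khatg, Phihatg}. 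For $\widetilde\delta_g$ the only surviving piece lands in a sum over $k\geq 1$ of $p_{A,A'}\lambda_{\delta g} K_{A,A'}\lambda_{\delta g}\cdots i_{A,A'}$, which, upon collapsing using $\delta_g \lambda_{\delta g} = -\lambda_{\delta g}^2$, truncates to give $\wh\Theta_{\delta g}=-p_{A,A'}\lambda_{\delta g}K_{A,A'}\lambda_{\delta g}i_{A,A'}$.

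Once the identifications $\widetilde K=\wh{K}_{\delta g}$, $\widetilde i=\wh{i}_{\delta g}$, $\widetilde p=\wh{p}_{\delta g}$, $\widetilde\delta_g=\delta_g+\wh\Theta_{\delta g}$ are established, the relations \eqref{ipK hat a met}--\eqref{ipk hat pi=1 met} are a direct consequence of the fact that the output of homological perturbation is again a strong deformation retract of $(\Omega^\bullet(M\times\mathrm{Met},\g),\,d_A+\delta_g)$ onto $(\Omega^\bullet(\mathrm{Met},H^\bullet_{A}),\,\delta_g+\wh\Theta_{\delta g})$: the equations \eqref{ipK hat d met}--\eqref{ipk hat pi=1 met} are the annihilation/side conditions of the deformed SDR, while \eqref{ipK hat a met}--\eqref{ipK hat c met} are the chain-homotopy and chain-map identities, which after separating bi-degrees along $(M,\mathrm{Met})$ split into the stated forms. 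As in Remark \ref{rem: Thetahat squared} one obtains as a corollary the Maurer--Cartan equation $\delta_g\wh\Theta_{\delta g}+\wh\Theta_{\delta g}^2=0$, which verifies internal consistency.

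The main obstacle is purely combinatorial bookkeeping of the commutator $[Q_g,\lambda_{\delta g}]$: unlike the $A'$-case where $\delta_{A'}Q_{A'}=\mathrm{ad}^*_{\delta A'}$ acts as a bare operator, here each $\delta_g Q_g$ contributes two terms and one has to repeatedly use $Q_g K_{A,A'}=0$, $Q_g i_{A,A'}=0$ and $p_{A,A'}Q_g=0$ to kill the ``wrong'' orderings and to telescope the surviving terms into the closed-form sums of \eqref{ihatg, phagt, Khatg, Phihatg}. As an alternative to invoking the general machinery, one could transcribe the direct computation in the proof of Lemma \ref{lemma: ipK hat} line by line, substituting $\lambda_{\delta g}$ for $\mathrm{ad}^*_{\delta A'}$ and using $\delta_g\lambda_{\delta g}=-\lambda_{\delta g}^2$ in place of the fact that $\mathrm{ad}^*_{\delta A'}$ is a closed 1-form on $\FC'$; this second route is more tedious but avoids any appeal to infinite-dimensional extensions of the perturbation lemma.
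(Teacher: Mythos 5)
Your proposal is correct and is essentially the paper's own proof: the paper disposes of this lemma in one line by declaring it a special case of Proposition~\ref{prop: extended ipK triple, App C} (the family construction of Appendix~\ref{sss: extended ipK triples from families}) with $\mathbb{GF}=\mr{Met}$ and $Q_g=d^*_{A',g}$, which is exactly the route you take. One small correction: the truncation of $\widetilde{\delta}_g$ to the $k=1$ term giving $\wh\Theta_{\delta g}$ is a form-degree count along $M$ (each $K$ lowers the degree by one), not a consequence of $\delta_g\lambda_{\delta g}=-\lambda_{\delta g}^2$; the latter identity enters only when verifying $(\delta_g+\wh\Theta_{\delta g})^2=0$.
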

This is a special case of Proposition \ref{prop: extended ipK triple, App C} for $\mathbb{GF}=\mr{Met}$.
The extended partition function 
\begin{multline}\label{Zhat g}
    \wh{Z}_{A,A';\delta g}(\sfa) 
    =e^{\frac{i}{\hbar}S_{CS}(A)} 
    e^{\frac{\pi i}{4}\psi_A}\tau_{A}^{1/2}
e^{\frac{i}{\hbar}\,\frac12\langle \sfa, \wh\Theta(\sfa)\rangle}\;\exp \sum_\Gamma \frac{(-i\hbar)^{l(\Gamma)-1}}{|\mr{Aut}(\Gamma)|}\wh{\Phi}_{\Gamma,A,A';\delta g}(\sfa)\\ 
    \in \Omega^{0,0,\bt}(\mc{U}\times \mr{Met},\mr{Dens}^{\frac12,\mr{formal}}(H_A [1]))
\end{multline}
then satisfies the differential Master Equation 
\begin{equation}\label{dQME on Zhat g}
    (\delta_g - i\hbar\Delta_\sfa)\wh{Z}_{A,A';\delta g}^\mr{ren} = 0,
\end{equation}
which one can prove analogously to Theorem \ref{thm: horizontality of Zhat}; the superscript ``ren'' means that we include the renormalization factor as in (\ref{eq: ren Z}). Again, one can view $\wh{Z}_{A,A';\delta g}$ as the perturbative expansion of the path integral
\begin{multline}\label{Zhat g as path integral}
    \wh{Z}_{A,A';\delta g}(\sfa)=\int_{\mc{L}=\Omega_{d^*_{A'}\mr{-ex}}[1]}\mc{D}\alpha_\mr{fl}
    \; \exp \frac{i}{\hbar}\Big( 
    S_{CS}(A+i(\sfa)+\alpha_\mr{fl})
    - \\ -\int_M\frac12 \big\langle 
    \alpha_\mr{fl}, \lambda_{\delta g}\alpha_\mr{fl}
    \big\rangle+\big\langle \alpha_\mr{fl},\lambda_{\delta g} i(\sfa) \big\rangle
    \Big)
\end{multline}
with the last two terms generating additional vertices that sum up to $K_{\delta g}, i_{\delta g}$ and $\Theta_{\delta g}$ respectively.

\subsection{Partition function extended to a nonhomogeneous form 
on $\FC\times \FC\times \mr{Met}$. 
}
\label{ss: full extension of Z}
\subsubsection{Connection $\nabla^\mr{Hodge}$}
In this section, $\mr{Met}$ will denote the space of Riemannian metrics on $M$ and $\UU=\{(A,A',g)\}$ will stand for a sufficiently 
thin open neighborhood of $\mr{Diag}\times\mr{Met}$ in $\FC'\times\FC'\times \mr{Met}$, where each fixed-$g$ slice consists of pairs $(A,A')$ that are close w.r.t. $g$.

Let
\begin{equation}\label{H}
H=H_{\delta A}+H_{\delta A'}+H_{\delta g}\quad \in \Omega^1(\UU,\mr{End}(\Omega^\bt(M,\g))),
\end{equation}
where
\begin{subequations}
\begin{eqnarray}
    H_{\delta A} &=&  -\Big(K\ad_{\delta A} dK+K\ad_{\delta A} P + P\ad_{\delta A} K\Big), 
    \label{H delta A}
    \\
    H_{\delta A'} &=& -\Big(dG\ad^*_{\delta A'}Kd + dG\ad^*_{\delta A'}P+P\ad^*_{\delta A'}Gd \Big), \label{H delta A'} \\
    H_{\delta g} &=& dK\lambda_{\delta g} Kd+  dK \lambda_{\delta g} P+ P\lambda_{\delta g} Kd. \label{H delta g}
\end{eqnarray}
\end{subequations}
We are suppressing the subscripts in $d_A, d^*_{A'}, P_{A,A'}, K_{A,A'}$. Furthermore, let
\begin{equation}
    \Psi=\int_M \frac12\langle B, H(B) \rangle\quad \in \Omega^1(\UU)\otimes \mr{Sym}^2(\Omega^\bt(M,\g)[1])^*.
\end{equation}
Here $B\in \Omega^\bt(M,\g)[1]$ is the field. Also, we consider the connection
\begin{equation}
    \nabla^\mr{Hodge}=
    \delta^\mr{tot}+ H = \delta^\mr{tot} + \{\Psi,-\}_B
\end{equation}
on the trivial bundle $\ul\Omega$ over $\UU$ with fiber $\Omega^\bt(M,\g)[1]$. Here $\{,\}_B$ stands for the Poisson bracket in the fiber and $\delta^\mr{tot}=\delta_A+\delta_{A'}+\delta_g$ is the de Rham differential on $\UU$.

We have the following.
\begin{lem}\label{lemma: nabla^Hodge}
    \begin{enumerate}[(a)]
        \item \label{lemma: nabla^Hodge (a)} 
        The connection $\nabla^\mr{Hodge}$ preserves the harmonic, exact and coexact subbundles in $\ul\Omega$.\footnote{
        Put another way:
        the parallel transport of the connection $\nabla^\mr{Hodge}$ along a path $(A_t,A'_t,g_t)$, $t\in [0,1]$ maps the desynchronized $(A_0,A'_0)$-Hodge decomposition 
        $\Omega=\mr{Harm}_{A_0,A'_0}\oplus \mr{im}(d_{A_0})\oplus \mr{im}(d^*_{A'_0})$ (with metric $g_0$) to the desynchronized $(A_1,A'_1)$-Hodge decomposition  $\Omega=\mr{Harm}_{A_1,A'_1}\oplus \mr{im}(d_{A_1})\oplus \mr{im}(d^*_{A'_1})$ (with metric $g_1$) term-to-term.}
        \item\label{lemma: nabla^Hodge (b)}  $\nabla$ is symplectic, i.e., its parallel transport along any path in the base is a symplectomorphism between fibers w.r.t. the BV symplectic form $\omega=\int_M \langle - \stackrel{\wedge}{,} -\rangle$.
        \item \label{lemma: nabla^Hodge (c)} The curvature of $\nabla^\mr{Hodge}$ is
        \begin{multline}\label{F_nabla^Hodge}
            F_{\nabla^\mr{Hodge}}=P\ad_{\delta A}(K\lambda_{\delta g}-G\ad^*_{\delta A'})P+
        P(\lambda_{\delta g} K-\ad^*_{\delta A'}G)\ad_{\delta A} P+\\
        +d(K\lambda_{\delta g}-G\ad^*_{\delta A'})(Kd+P)\ad_{\delta A}K+K\ad_{\delta A}(dK+P)(\lambda_{\delta g} K-\ad^*_{\delta A'}G)d.
        \end{multline}
        In particular, the curvature has vanishing $(\delta A)^2$, $(\delta A')^2$, $(\delta g)^2$ and $\delta A' \delta g$ terms; only $\delta A \delta A'$ and $\delta A \delta g$ terms are nonzero.
        \item \label{lemma: nabla^Hodge (d)} For a fixed metric $g$ and restricted to harmonic forms, $\nabla^\mr{Hodge}$ coincides with $\nabla^\mr{Harm}$ of Section \ref{sss nabla^Harm}.
    \end{enumerate}
\end{lem}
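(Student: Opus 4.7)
The strategy is to verify each of the four claims by direct algebraic computation, leveraging the explicit block structure of $H$ together with the standard identities $dK+Kd+P=1$, $K^2=0$, $[G,d]=[G,d^*]=0$, $GP=PG=P$, $dP=Pd=0$, and $d^*P=Pd^*=0$ that hold for the desynchronized Hodge SDR data. Parts \textnormal{(b)} and \textnormal{(d)} follow almost directly, while \textnormal{(a)} and \textnormal{(c)} are systematic but longer.

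For \textnormal{(b)}, note that $H=\{\Psi,-\}_B$ is, pointwise on $\UU$, a Hamiltonian vector field for the fiberwise BV symplectic form $\omega=\int_M\langle\cdot\stackrel{\wedge}{,}\cdot\rangle$, which is independent of $(A,A',g)$. Hence $\delta^\mr{tot}\omega=0$, and the parallel transport of $\nabla^\mr{Hodge}$ is the time-$1$ flow of a time-dependent Hamiltonian, so it is a symplectomorphism. For \textnormal{(d)}, substitute a harmonic $\chi$ into $H$ at fixed $g$: the relations $dK\chi=0$, $P\chi=\chi$, $K\chi=0$, $d\chi=0$ collapse $H_{\delta A}\chi=-K\ad_{\delta A}\chi$ and $H_{\delta A'}\chi=-dG\ad^*_{\delta A'}\chi$, reproducing the formula (\ref{nabla^Harm}) for $\nabla^\mr{Harm}$.

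For \textnormal{(a)}, I would show that each of the three fiberwise Hodge projectors $P$, $dK$, $Kd$ is horizontal for the induced connection $X\mapsto\delta^\mr{tot}X+[H,X]$ on the endomorphism bundle; since the projectors sum to the identity, horizontality of any two implies the third. The infinitesimal variations of $P$ and $K$ in each of the directions $\delta A$, $\delta A'$, $\delta g$ are computed from the deformation formulas for harmonic forms (Proposition \ref{prop: deformation harmonic}, Remark \ref{rem: change of harm form under change of gf operator}) and the SDR deformation formulas (Lemma \ref{lem: SDR deformations}); the three mixed-block terms comprising each $H_X$ are precisely those required to make $\delta^\mr{tot}P+[H,P]=0$ and $\delta^\mr{tot}(dK)+[H,dK]=0$, with the verification reducing, via $[d,K]=1-P$ and $K^2=0$, to identities such as $\ad^*_{\delta A'}\chi$ being coexact (Lemma \ref{lem:ad_alpha}) and the analogous statement for $\lambda_{\delta g}$.

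For \textnormal{(c)}, compute $F=\delta^\mr{tot}H+H\wedge H$ component by component. The diagonal vanishings $F_{\delta A,\delta A}=0$, $F_{\delta A',\delta A'}=0$, $F_{\delta g,\delta g}=0$ use the Maurer--Cartan equations $d_A\delta A+\tfrac12[\delta A,\delta A]=0$, $d_{A'}\delta A'+\tfrac12[\delta A',\delta A']=0$ and the identity $\delta_g\lambda_{\delta g}=-\lambda_{\delta g}^2$ respectively; the mixed $F_{\delta A',\delta g}=0$ follows from the fact that $\ad^*_{\delta A'}$ and $\lambda_{\delta g}$ both act via their commutators with $d^*_{A'}$, with the cross-terms canceling by $K^2=0$ and $d^*P=0$. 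The remaining components $F_{\delta A,\delta A'}$ and $F_{\delta A,\delta g}$ reduce after these cancellations to the stated expression (\ref{F_nabla^Hodge}). The main obstacle is the volume of bookkeeping required to track all the cross-terms in $H\wedge H$ and identify the cancellations; each individual step is elementary, but organizing them so that the Maurer--Cartan and $K^2=0$ cancellations align properly is the delicate part.
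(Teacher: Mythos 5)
Your proposal is essentially the paper's own argument: the paper records nothing beyond ``(Proven by an explicit computation.)'', and your outline --- horizontality of the three Hodge projectors for (a) (which is exactly the content of the paper's ``shift-and-project'' remark immediately following the lemma), the observation that $H=\{\Psi,-\}_B$ is Hamiltonian for the base-independent $\omega$ for (b), direct evaluation of $F=\delta^\mr{tot}H+H\wedge H$ using the variation formulas of the appendix for (c), and the collapse of $H$ on harmonic forms at fixed $g$ for (d) --- is a correct organization of that computation. The one caveat is that, like the paper, you leave the bookkeeping in (a) and (c) unexecuted, so the check that the signs in (\ref{H delta A})--(\ref{H delta g}) literally reproduce $-\sum_i(\delta^\mr{tot}\pi_i)\pi_i$ and hence annihilate $\delta^\mr{tot}\pi_i+[H,\pi_i]$ is still owed.
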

(Proven by an explicit computation.)

\begin{rem}\footnote{We thank S. Stolz for 
this remark. 
}
    $\nabla^\mr{Hodge}$ can be seen as a sum of three ``shift-and-project connections'' (cf. Remark \ref{rem: nabla^Harm as shift-and-project connection}) induced on the harmonic, exact and coexact subbundles in $\ul\Omega$ from the trivial connection on $\ul\Omega$. Put another way, one can write
    \begin{equation}
        \nabla^\mr{Hodge}=\delta^\mr{tot}-\delta^\mr{tot} (P)\, P-
        \delta^\mr{tot}(dK)\, dK - \delta^\mr{tot}(Kd)\, Kd.
    \end{equation}
    Here $P,dK,Kd$ are the fiberwise projections onto the three terms in the Hodge decomposition.
\end{rem}

By restricting $\nabla^\mr{Hodge}$ to harmonic forms and projecting to cohomology $H_A$, $\nabla^\mr{Hodge}$ induces the connection $\nabla^\mathbb{H}$ on the cohomology bundle bundle  $\HH$ over $\UU$ with fiber $H_A$.\footnote{
These objects are a natural extension of the corresponding objects of Section
\ref{sss cohomology comparison map} by allowing variation of metric. By an abuse of notations, we use $\mathbb{H}$, $\nabla^\mathbb{H}$ for the extension.} 
Notice that the cohomology bundle is trivial along $\mr{GF} = \FC_{A'} \times \mr{Met}$  directions and the connection is also trivial in these directions, i.e. 
\begin{equation}
 \nabla^\HH = \nabla^\HH_{A} + \delta_{\mr{GF}}.
\end{equation} 
The curvature of $\nabla^\mathbb{H}$ corresponds to the harmonic-harmonic block of (\ref{F_nabla^Hodge}):
\begin{equation}\label{F_nabla^H}
    F_{\nabla^\mathbb{H}}=p\Big((\ad_{\delta A}(K\lambda_{\delta g}-G\ad^*_{\delta A'}))+
    (\lambda_{\delta g} K-\ad^*_{\delta A'}G)\ad_{\delta A}\Big)i.
\end{equation}
Furthermore, we will denote by $\nabla^\mc{D}$ the connection induced by $\nabla^\HH$ on the bundle $\mc{D}$ of formal half-densities on $H_A{ [1]}$ over $\UU$.

\subsubsection{Extended partition function}
Denote\footnote{This term is completely analogous to the term in the extended action denoted $S_R$ in \cite{Bonechi2012} and is the Hamiltonian (in an appropriate sense) for  
the Grothendieck connection, hence the superscript $G$.} 
$$ \Psi^G= \int_M\langle \delta A,B \rangle. $$
Let 
\begin{equation}
    \check{S}(B)=S_{CS}(A+B)-\Psi^G - \Psi
     \quad \in \Omega^{\bt}(\underline{\mc{U}})\otimes \mr{Sym}(\Omega^\bt(M,\g)[1])^*.
\end{equation}
-- a form on $\underline{\mc{U}}$ valued in polynomials in $B$. We split the field as $B=i_{A,A'}(\sfa)+\alpha_\mr{fl}$ and consider the perturbative path integral
\begin{equation}\label{Z check}
    \check{Z}(\sfa)=\int_{\mr{im}(d^*_{A'})} \mc{D}\alpha_\mr{fl} \; e^{\frac{i}{\hbar} \check{S}(i_{A,A'}(\sfa)+\alpha_\mr{fl})}\quad \in \Omega^\bt(\underline{\mc{U}}, \mr{Dens}^{\frac12,\mr{formal}}(H_A[1]) ).
\end{equation}

Perturbative evaluation of (\ref{Z check}) yields the following:
\begin{multline}\label{Z check perturbative}
    \check{Z}(\sfa)=e^{\frac{i}{\hbar}S_{CS}(A)} e^{\frac{\pi i}{4}\psi_A}\tau_A^{1/2}
    e^{\frac{i}{\hbar} (-\langle [\delta A],\sfa \rangle
    +\frac12\langle \sfa, \check\Theta(\sfa)\rangle)}\cdot \\
    \cdot \exp \sum_\Gamma \frac{(-i\hbar)^{l(\Gamma)-1}}{|\mr{Aut}(\Gamma)|}\check\Phi_\Gamma.
\end{multline}
Here:
\begin{itemize}
    \item $\Gamma$ runs over connected trivalent graphs with leaves, as usual.
    \item $\check\Phi_\Gamma$ is the Feynman weight of the graph $\Gamma$, where an edge is assigned the extended propagator
    \begin{equation}\label{eq: extended K Z check}
        \check{K}=\sum_{k=0}^2 K(HK)^k 
    \end{equation}
    -- a nonhomogeneous form on $\UU$ valued in $\mr{End}(\Omega^\bt(M,\g))$;
    here $H$ is the 1-form (\ref{H}) of the connection $\nabla^\mr{Hodge}$.
    A leaf is assigned the expression
    \begin{equation}\label{eq: extended leaf Z check}
        \check{i}(\sfa)=\sum_{k= 0}^2 (KH)^k i(\sfa)+K\delta A. 
    \end{equation}
    Note that $\check{i}(\sfa)$ is affine-linear in $\sfa$, rather than just linear. Both \eqref{eq: extended K Z check} and \eqref{eq: extended leaf Z check} stop at $k=2$ because $K$ decreases form degree by 1. 
    \item $\check\Theta(\sfa)$ stands for
    \begin{equation}
        \check\Theta(\sfa)=
       -p H K H i(\sfa).
    \end{equation}
    It is a 2-form on $\UU$ with values in endomorphisms of $H_A[1]$. 
\end{itemize}
\begin{rem}\label{rem: feynman rules check Z }
    To elucidate the relationship between $\check{Z}$ and $Z$, one can express \eqref{Z check perturbative} in terms of the regular Feynman rules, where an edge is assigned $K$ and a leaf is assigned $i(\sfa)$, by adding extra vertices carrying the form degree along $\UU$: bivalent black and white vertices and gray univalent vertices. White vertices are assigned $H_{\delta A'}$, black vertices are assigned $H_{\delta g}$, gray vertices are assigned $\delta A$. In addition, edges decorated by more than two bivalent vertices vanish automatically. See Figure \ref{fig: check Z diagram A}.  
    When sandwiched between $K$ and $K$ (or $K$ and $i$) only a single term in $H_{\delta A'}$ and $H_{\delta g}$ survives. Therefore, one can evaluate black vertices with $\lambda_{\delta g}$ and white vertices to $dG\mr{ad}^*_{\delta A'}$. The latter effectively acts by applying ``$(d^*)^{-1}$'' on one of the internal edges incident to the vertex. See for instance the diagram in Figure \ref{fig: check Z diagram B}.
\end{rem}
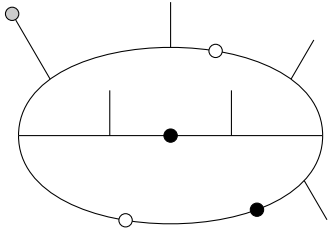
\begin{figure}
\centering

   \begin{tikzpicture}[scale = 2]
       \node[coordinate] (v1) at (0,0) {}; 
       \node[coordinate] (v2) at (2,0) {}
        edge[bend left=90] 
        node[coordinate, pos=.15] (d) {}
        node[black, pos=.3] {} 
       node[white, pos=.6] {}
       (v1) 
       edge[bend right=90] node[coordinate,pos=.2] (a) {}
       node[white,pos=.4] {}
       node[coordinate, pos=.5] (b) {} 
       node[coordinate, pos=.8] (c) {}
       (v1) 
       edge
       node[coordinate, pos=.3] (g) {}
       node[black] {} 
       node[coordinate,pos=.7] (h) {}
       (v1);
     \draw (a) -- ($(a) + (60:.3)$);
     \node[coordinate] (b1) at ($(b) + (90:.3)$) {} 
     edge (b);
     \node[coordinate]  at ($(g) + (90:.3)$) {} 
     edge (g);
     \node[coordinate]  at ($(h) + (90:.3)$) {} 
     edge (h);
     \node[coordinate]  at ($(d) + (-60:.3)$) {} 
     edge (d);
     \node[grey] (c1) at ($(c) + (120:.5)$) {} 
     edge (c);
   \end{tikzpicture}
   \caption{This graph evaluates to an element of $\Omega^{1,2,2}(\UU,\mr{Dens}^{\frac12,\mr{formal}}(H_A[1])$ (notice gray/white/black vertices carry form degree 1 along $A/A'/g$). The ghost number of this graph is $-5$. }
    \label{fig: check Z diagram A}
\end{figure}
   \begin{figure}
      \centering
      \begin{tikzpicture}
          \draw (1,0) arc (0:360:1) 
          node[pos=0,coordinate, label=left:{$x_6$}] (b) {}
          node[pos=.25,coordinate,label=below:{$x_2$}] (a) {}
            node[pos=.5,coordinate, label=right:{$x_3$}] (c) {}
          node[pos=.6,black,label=below:{$x_4$}] {} node[pos=.88,white,label=below:{$x_5$}] {}; 
          \draw (a) to node[pos=.6, white,label=left:{$x_1$}] {} ($(a) + (0,.6)$);
          \draw (b) to ($(b) + (0:.5)$);
          \draw (c) to ($(c) + (180:.5)$);
      \end{tikzpicture}
      \captionsetup{singlelinecheck=off}
          \caption[.]{This diagram evaluates to 
          $$
          \operatorname{Str} \mr{ad}_{G\mr{ad}^*_{\delta A'}i(\sfa)}K \mr{ad}_{i(\sfa)}K\lambda_{\delta_g}G\mr{ad}^*_{\delta A}K \mr{ad}_{i(\sfa)}K \in \Omega^{0,2,1}(\UU,\mr{Dens}^{\frac12,\mr{formal}}(H_A[1]))
          $$
          (it can be written as a supertrace because it is a 1-loop graph).}         \label{fig: check Z diagram B}
    \end{figure}
\subsubsection{Differential quantum master equation}
\begin{thm} \label{thm dQME on Zcheck}
The following differential quantum master equation holds:
    \begin{equation}\label{dQME on Zcheck}
        \Big(\nabla^\mc{D}
        -i\hbar \Delta_\sfa -\frac{i}{\hbar} \frac12 \langle\sfa, F_{\nabla^\HH}\sfa \rangle
        \Big) (e^{\frac{i}{\hbar} c(\hbar)\frac{S_\mr{grav}(g,\phi)}{2\pi}} \check{Z}) =0,
    \end{equation}
    with $F_{\nabla^\HH}$ as in (\ref{F_nabla^H}).
\end{thm}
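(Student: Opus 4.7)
The plan is to extend the proof of Theorem \ref{thm: horizontality of Zhat} to simultaneous variations in $A$, $A'$, and $g$, combining it with the metric argument of Section \ref{sss Zhat g} and the Grothendieck-horizontality of Corollary \ref{cor: horizontality wrt nabla_G}. The genuinely new ingredient is the curvature term $\tfrac12\langle\sfa, F_{\nabla^\mathbb{H}}\sfa\rangle$, which reflects the non-flatness of $\nabla^\mathbb{H}$.

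First I would verify that the Feynman rules \eqref{eq: extended K Z check}--\eqref{eq: extended leaf Z check} and the expression for $\check\Theta$ give precisely the output of homological perturbation theory applied as in Appendix \ref{sss: extended ipK triples from families} with $\mathbb{GF} = \UU$, the deformation of $d_A$ being the connection 1-form $H$ of $\nabla^\mr{Hodge}$. By Lemma \ref{lemma: nabla^Hodge}, $\nabla^\mr{Hodge}$ preserves the desynchronized Hodge decomposition (part (\ref{lemma: nabla^Hodge (a)})) but has curvature $F_{\nabla^\mr{Hodge}}$ (part (\ref{lemma: nabla^Hodge (c)})), so $(d_A + \nabla^\mr{Hodge})^2 = F_{\nabla^\mr{Hodge}}$. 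The HPL output is then a superconnection $\nabla^\mathbb{H} + \check\Theta$ on the cohomology bundle $\HH$ whose square equals the induced curvature $F_{\nabla^\mathbb{H}}$; expanding this gives the Maurer-Cartan equation
\begin{equation}
    d_{\nabla^\mathbb{H}}\check\Theta + \check\Theta^2 = 0,
\end{equation}
together with the structural identities for $(\check{i}, \check{p}, \check{K})$ analogous to Lemma \ref{lemma: ipK hat}, where $d_A$ is replaced by $d_A + \nabla^\mr{Hodge}$. The affine part $K\delta A$ of $\check{i}$ is produced by the source term $-\Psi^G$ in $\check{S}$; one checks that it satisfies a companion identity involving $\delta A$.

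Next, I would apply $\nabla^\mc{D} - i\hbar\Delta_\sfa$ to $\check{Z}$ and mimic the proof of Theorem \ref{thm: horizontality of Zhat}. The action on each Feynman weight $\check\Phi_\Gamma$ distributes over marked edges (via $(\nabla^\mr{Hodge}+[d_A,-])\check{K} = 1 - \check{i}\check{p}$) and marked leaves (via the corresponding identity for $\check{i}$). The ``$1$''-contributions on edges cancel by Stokes on compactified configuration spaces together with the BV classical master equation; the ``$\check{i}\check{p}$''-contributions combine with $-i\hbar\Delta_\sfa$ applied to graphs. The remaining terms are Poisson brackets of $\check{Z}$ with $\tfrac12\langle\sfa, \check\Theta\sfa\rangle$ and with $\langle[\delta A], \sfa\rangle$, which are exactly killed by differentiating the exponential prefactor of \eqref{Z check perturbative} via the Maurer-Cartan equation for $\check\Theta$ together with the companion identity for $\delta A$. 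Because HPL produces a superconnection on $\HH$ whose curvature equals $F_{\nabla^\mathbb{H}}$, the operator $\nabla^\mc{D}$ acting on the quadratic-in-$\sfa$ part of the prefactor leaves an uncancelled residue precisely equal to $\tfrac{i}{\hbar}\tfrac12\langle\sfa,F_{\nabla^\mathbb{H}}\sfa\rangle\cdot\check{Z}$, accounting for the curvature term on the left-hand side of \eqref{dQME on Zcheck}.

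Finally, anomalous contributions from hidden boundary strata of the compactified configuration spaces --- those corresponding to the collapse of all internal vertices of a connected subgraph --- are absorbed by the framing renormalization $e^{\frac{i}{\hbar}c(\hbar)S_\mr{grav}(g,\phi)/2\pi}$, by the same argument as in the proof of \eqref{dQME on Zhat g}; this is where the ``ren'' superscript becomes essential. The main obstacle I anticipate is the combinatorial bookkeeping in the simultaneous presence of all three variations: the mixed $\delta A\wedge\delta A'$ and $\delta A\wedge\delta g$ components of $F_{\nabla^\mathbb{H}}$ (formula \eqref{F_nabla^H}) have to arise correctly from the iterated HPL intertwining the finite sums of \eqref{ihat, phat, Khat, Phihat} and \eqref{ihatg, phagt, Khatg, Phihatg}, while signs and form-degrees along $\UU$ must be tracked consistently through the Stokes' theorem step.
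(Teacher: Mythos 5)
Your proposal follows essentially the same route as the paper's diagrammatic proof: expand $\check{Z}$ over graphs with the extended decorations, distribute the differential over edges and leaves via the structural identities for $(\check{i},\check{p},\check{K},\check\Theta)$, cancel the ``$1$'' contributions by Stokes' theorem on configuration spaces together with the classical master equation, let the $\check{i}\,\check{p}$ contributions assemble into $-i\hbar\Delta_\sfa$, cancel the $\check\Theta$ terms against the exponential prefactor, and absorb the hidden-boundary anomalies into the $S_\mr{grav}$ counterterm. The one organizational difference is that you derive the structural identities by a single application of homological perturbation theory over all of $\UU$ with perturbation the connection $1$-form $H$ of $\nabla^\mr{Hodge}$, whereas the paper splits the computation into the $A$-direction (where $\delta_A\check{K}=\check{K}\mr{ad}_{\delta A}\check{K}$ and $\nabla^\HH_A\check{i}=\check{K}\mr{ad}_{\delta A}\check{i}$) and the gauge-fixing direction (where the standard HPL identities of Appendix \ref{sss: extended ipK triples from families} apply). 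Your framing requires a \emph{curved} perturbation lemma, since $(d_A+\nabla^\mr{Hodge})^2=F_{\nabla^\mr{Hodge}}\neq 0$ and Lemma \ref{lem: HPL} as stated does not apply; correspondingly, your asserted Maurer--Cartan equation $d_{\nabla^\mathbb{H}}\check\Theta+\check\Theta^2=0$ must be checked against the curvature correction. The paper sidesteps this by locating the surviving residue in a concrete small graph (one $\delta A$-vertex, one trivalent vertex, two leaves) whose value is exactly $\tfrac12\langle\sfa,F_{\nabla^\HH}\sfa\rangle+\nabla^\HH_A\check\Theta(\sfa)$, so that the curvature and the failure of the naive MC equation are accounted for simultaneously; you attribute the residue instead to $\nabla^\mc{D}$ acting on the quadratic prefactor. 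Either bookkeeping works, but the curved-HPL step is the one place in your argument that needs to be carried out carefully rather than cited.
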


\textbf{Heuristic Path Integral Argument:}
Denote $\mc{L}=\mr{im}(d^*_{A'})$ the gauge fixing Lagrangian. We have
\begin{multline}\label{dQME on Zcheck proof eq1}
    \nabla^\mc{D} \int_{\mc{L}} e^{\frac{i}{\hbar} \check{S}(B)}=\\=
    \int_{\mc{L}} e^{\frac{i}{\hbar} \check{S}(B)} \frac{i}{\hbar}(\delta_A S_{CS}(A+B)-\delta^\mr{tot}\Psi+\{\check{S},\Psi\}_B-i\hbar \Delta_B\Psi).
\end{multline}
Here in the brackets in the r.h.s., the first two terms account for $\delta^\mr{tot}$ acting on the integrand and the last two terms account for the change of gauge-fixing induced by an infinitesimal change of $A,A',g$, cf. \cite[Proposition 2]{Cattaneo2008}.
Also, note that we can write the first term in the r.h.s. of (\ref{dQME on Zcheck proof eq1}) as 
$$\delta_A S_{CS}(A+B)=\{S_{CS}(A+B), \Psi^G \}_B.$$ 
Next, applying the BV Laplacian in zero-modes to $\check{Z}$, we have, using the BV-Stokes' theorem,
\begin{multline}\label{dQME on Zcheck proof eq2}
    -i\hbar\Delta_\sfa \int_{\mc{L}} e^{\frac{i}{\hbar} \check{S}(B)}=
    \int_{\mc{L}} -i\hbar\Delta_B e^{\frac{i}{\hbar} \check{S}(B)}
    =\\=
     \int_{\mc{L}}e^{\frac{i}{\hbar} \check{S}(B)} \frac{i}{\hbar}(\frac12\{\check{S},\check{S}\}_B-i\hbar \Delta_B \check{S})
     \\
     =\int_{\mc{L}}e^{\frac{i}{\hbar} \check{S}(B)} \frac{i}{\hbar}(
     -\{S_{CS},\Psi^G\}_B-\{\check{S},\Psi\}_B-\frac12 \{\Psi,\Psi\}_B+i\hbar \Delta_B \Psi
     )\\
     =\int_{\mc{L}}e^{\frac{i}{\hbar} \check{S}(B)} \frac{i}{\hbar}(
     -\delta_AS_{CS}-\{\check{S},\Psi\}_B+\delta^\mr{tot}\Psi-\Psi^F+i\hbar \Delta_B \Psi
     )
     .
\end{multline}
In the last transition we used that
$\delta^\mr{tot}\Psi+\frac12\{\Psi,\Psi\}_B=\Psi^F$, with $\Psi^F=\frac12 \langle B,F_{\nabla^\mr{Hodge}} B\rangle$ -- the quadratic form associated with the curvature (\ref{F_nabla^Hodge}) of $\nabla^\mr{Hodge}$. Note that $\Psi^F|_{B=i(\sfa)+\alpha_\mr{fl}}=\frac12 \langle \sfa,F_{\nabla^\HH} \sfa\rangle$ for $\alpha_\mr{fl}\in\mc{L}$.

Thus, (\ref{dQME on Zcheck proof eq1}) and (\ref{dQME on Zcheck proof eq2}) differ by 
$\check{Z}\frac{i}{\hbar}\frac12 \langle \sfa,F_{\nabla^\HH} \sfa\rangle$.

\begin{proof}[Sketch of diagrammatic proof of Theorem \ref{thm dQME on Zcheck}] 
For the purpose of the proof 
we expand the partition function $\check{Z}$ as a sum over graphs with trivalent and univalent vertices and leaves. Edges are assigned the extended propagator $\check{K}= \sum_{k=0}^2 K(HK)^k$, leaves the extended inclusion $\check{i}_{\mr{GF}}(\sfa) = \sum_{k=0}^2 (KH)^ki(\sfa)$, and univalent vertices are assigned $\delta A$. 
Notice that we have 
\begin{align}
    \delta_A
    \check{K} &= \check{K}\mr{ad}_{\delta A}\check{K}, \\
    \nabla^\HH_A \check{i}_{GF} &= \check{K}\mr{ad}_{\delta A}\check{i}_\mr{GF}, \\
    \delta_{\mr{GF}}\check{K} &= - [d,\check{K}] + \mr{id} - \check{i}_{\mr{GF}}\check{p}_\mr{GF}, \\
    \delta_\mr{GF}\check{i}_\mr{GF} &= - d\check{i}_\mr{GF} + \check{i}_{\mr{GF}}\check{\Theta}.
\end{align}
Now the proof follows closely the proof of Theorem \ref{thm: horizontality of Zhat}. When computing $\delta_{\mr{GF}}\check{Z}$, there are now additional terms when $\delta_\mr{GF}$ hits the edge incident to a univalent vertex. Here, the terms $\mr{id} - \check{i}_{\mr{GF}}\check{p}_\mr{GF}$ survive. The first term is canceled by a graph with a $\delta A$ leaf that is produced when applying $\nabla^\HH_A$. A special case occurs for the graph consisting of a graph with one univalent vertex, one trivalent vertex and two leaves, here one such term survives and is canceled by the curvature \eqref{F_nabla^H} and $\nabla^\HH_A\check{\Theta}$, see Figure \ref{fig: curvature}. The second term involving $\check{i}_\mr{GF}\check{p}_\mr{GF}$ (which applied to $\delta A$ is simply $P$) is canceled by a term in  $\Delta_\sfa(e^{-\frac{i}{\hbar}\langle [\delta A], \sfa\rangle}\Phi_\Gamma (\sfa))$. 
\begin{figure}
\begin{tikzpicture}
    \node[] at (-1,0) {$\delta_\mr{GF}$} ;
    \node[grey] at (0,1) {} edge node[left] {$K$} (0,0); 
    \node[circle, draw, inner sep=1pt] at (-120:1) {$\sfa$} edge[double=black] (0,0); 
     \node[circle, draw, inner sep=1pt] at (-60:1) {$\sfa$} edge[double=black] (0,0); 
     \node at (1,0) {$=$};
     \begin{scope}[xshift=-1cm]
       \node[circle, draw, inner sep=1pt] at ($(3,0)+(-60:1)$) {$\sfa$} edge[double=black] (3,0); 
     \node[circle, draw, inner sep=1pt] at ($(3,0)+(-120:1)$) {$\sfa$} edge[double=black] (3,0);
     \node[above] at (3,0.3) {$\delta A$} edge (3,0);
     \node at (4,0) {$+\ldots$};
     \end{scope}
\end{tikzpicture}
\caption{The graph on the right evaluates to $\frac12\langle \sfa, F_\nabla(\sfa)\rangle + \nabla^\HH_A\check{\Theta}(\sfa)$, with $F_\nabla$ given in \eqref{F_nabla^H}. Thick edge stands for $\check{i}$, thin edges stand for $K$ (between vertices) or $i$ (at leaves). }\label{fig: curvature}
\end{figure}
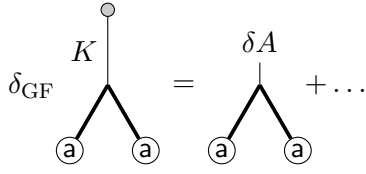
 As always, there are extra terms in the metric dependence of the partition function due to the total collapse of a connected component of a Feynman graph (canceled by $S_\mr{grav}$ counterterm). Incorporating that, we get (\ref{dQME on Zcheck}).
\end{proof}

\begin{rem}
    Note that while $\nabla^\mc{D}$ is not flat, the superconnection 
    \begin{equation}
    \check\nabla=\nabla^\mc{D}-i\hbar \Delta_\sfa -\frac{i}{\hbar} \frac12 \langle\sfa,F_{\nabla^\HH}\sfa \rangle
    \end{equation}
        on the bundle of formal half-densities on $H_A$, appearing in (\ref{dQME on Zcheck}), is flat. In more detail: one has
        \begin{equation}\label{nabla^D squared}
            (\nabla^\mc{D})^2=\langle F_{\nabla^\HH}(\sfa),\frac{\partial}{\partial\sfa} \rangle+\frac12 \mr{Str}_{H_A[1]} F_{\nabla^\HH}
        \end{equation}
        and
        \begin{multline}
            \check\nabla^2= (\nabla^\mc{D})^2 + \underbrace{[\nabla^\mc{D},-i\hbar\Delta_\sfa]}_0 - \frac{i}{\hbar} \frac12\underbrace{[\nabla^\mc{D}, \langle\sfa,F_{\nabla^\HH}\sfa \rangle]}_{=\langle \sfa, (\nabla^\HH F_{\nabla^\HH})\sfa\rangle=0 \,\mr{by\,Bianchi\, identity}}+\\
            +
            \underbrace{(-i\hbar \Delta_\sfa)^2}_0-[\Delta_\sfa,\frac12 \langle\sfa,F_{\nabla^\HH}\sfa \rangle]+\underbrace{(-\frac{i}{\hbar} \frac12 \langle\sfa,F_{\nabla^\HH}\sfa \rangle)^2}_0\\
            =
            (\nabla^\mc{D})^2+\Big\{\frac12 \langle\sfa,F_{\nabla^\HH}\sfa \rangle,-\Big\}_\sfa-\Delta_\sfa\Big(\frac12 \langle\sfa,F_{\nabla^\HH}\sfa \rangle\Big)\underset{(\ref{nabla^D squared})}{=}0.
        \end{multline}
\end{rem}

\subsubsection{Partial extensions of $Z$ along $A$, $A'$ and $g$ -- comparison with previous results}\leavevmode

\textbf{Varying $A'$.}
    If we consider the slice of $\UU$ with fixed $A,g$ and varying $A'$, 
    the path integral (\ref{Z check}) reduces to (\ref{Zhat as path integral}). The corresponding restriction of (\ref{dQME on Zcheck}) to $\delta A=\delta g=0$ is equivalent to (\ref{dQME}). Also, in this case $\check{K},\check{i},\check\Theta$ reduce to $\wh{K},\wh{i}$ and $\wh\Theta$ of Section \ref{sec: ext Z A prime}.
    
\textbf{Varying $g$.}
Similarly, if we consider the slice of $\UU$ with fixed $A,A'$ and varying $g$,
$\check{Z}$ reduces to (\ref{Zhat g as path integral}), (\ref{Zhat g}); dQME (\ref{dQME on Zcheck}) becomes (\ref{dQME on Zhat g}); $\check{K},\check{i},\check\Theta$ become $\wh{K}_{\delta g},\wh{i}_{\delta g}$ and $\wh\Theta_{\delta g}$ of Section \ref{sss Zhat g}, respectively.

\textbf{Varying $A$.}
One can also consider varying $A$ while keeping $A',g$ fixed. 
In this case (\ref{Z check perturbative}) simplifies to
\begin{equation}\label{Z check delta A}
    \check{Z}_{\delta A}=e^{\frac{i}{\hbar} S_{CS}(A)} e^{\frac{\pi i}{4}\psi_A}\tau_A^{1/2} e^{-\frac{i}{\hbar}
    \langle [\delta A],\sfa \rangle
    }
    \exp \sum_\Gamma \frac{(-i\hbar)^{l(\Gamma)-1}}{|\mr{Aut}(\Gamma)|} \check\Phi_\Gamma
\end{equation}
where in $\check\Phi_\Gamma$, edges are decorated with the usual non-extended propagator $K$ and leaves are decorated with $i(\sfa)-K\delta A$. By (\ref{dQME on Zcheck}), $\check{Z}_{\delta A}$ satisfies the dQME in the direction of shifts of $A$:
\begin{equation}
    (\nabla^{\HH,A'}-i\hbar \Delta_\sfa)\check{Z}_{\delta A}=0
\end{equation}
which implies
\begin{equation}\label{tilZ dQME}
     (\nabla^{\HH,A'}-\langle [\delta A],\frac{\partial}{\partial \sfa} \rangle 
     +\langle p\, \ad_{K\delta A} i(\sfa),\frac{\partial}{\partial \sfa}\rangle 
     + O((\delta A)^2) 
     -i\hbar \Delta_\sfa)\til{Z}_{\delta A}=0
\end{equation}
where $\til{Z}_{\delta A}$ is (\ref{Z check delta A}) with two modifications: (i) factor  $e^{-\frac{i}{\hbar}\langle [\delta A],\sfa \rangle}$ is removed, (ii) the  graph consisting of the single cubic vertex is removed from the sum over $\Gamma$.

We note that the result (\ref{tilZ dQME}) restricted to degree $1$ in $\delta A$ is equivalent to (\ref{nabla^tot Zhat = 0}) restricted to $\delta A'=0$ as one has, from inspection of Feynman diagrams,
\begin{equation}
    \til{Z}_{\delta A}=Z+r_{\til{\delta A'}=-d_{A'}K\delta A}Z+\underbrace{\cdots}_{\mr{degree}\;\geq 2\;\mr{in}\;\delta A},
\end{equation}
with $r$ and $\til{\delta A'}$ as in (\ref{nabla^tot Zhat = 0}), (\ref{tilde delta A'}).\footnote{
This is a consequence of the following observations: (i) an edge with a $K\delta A$ leaf plugged in contributes to $\check\Phi$ as $K\ad_{K\delta A} K=K[d^*_{A'},\ad^*_{K\delta A}]G= K \ad^*_{d_{A'}K\delta A}G = \Lambda_{\til{\delta A'}}$; (ii) a $K\delta A$ leaf joining an $i(\sfa)$ leaf and continuing with an edge contributes $K\ad_{K\delta A}i(\sfa)=G[d^*_{A'},\ad^*_{K\delta A}]i(\sfa)=G\ad^*_{d_{A'}K\delta A}i(\sfa)=\mathbb{I}_{\til{\delta A'}}i(\sfa)$. Thus, allowing one $K\delta A$ leaf in a graph results in graphs of $rZ$. Here $\Lambda$ and $\mathbb{I}$ are as in (\ref{Lambda, II, PP}).
}
To elucidate this equivalence, we note for harmonic shifts $\delta A$, the first two terms in (\ref{tilZ dQME}) yield the connection $\til\nabla^G$ (\ref{partial Grothendieck connection}); for $\delta A$ exact, there is a discrepancy between $\nabla^{\HH,A'}$ and $\nabla^\mr{tot}$ (\ref{nabla^tot}) compensated by the third term in (\ref{tilZ dQME}).\footnote{This discrepancy arises from the fact that the connection $\nabla^\mr{Harm}$, for an infinitesimal exact shift $\delta A= d\beta$ (and $\delta A'=0$), shifts a harmonic form $\chi$ to $\chi-K\ad_{d\beta}\chi= \chi-\ad_\beta\chi+dK\ad_\beta \chi + P\ad_\beta\chi$. Here $\ad_\beta\chi$ corresponds to $\nabla^\mr{gauge}+\nabla_{A'}$ (Section \ref{sss: T^I+T^II+T^III}), the term $d(\cdots)$ is irrelevant in cohomology and $P\ad_\beta\chi=P\ad_{K\delta A}\chi$ is the discrepancy.
}

\section{The global partition function} \label{sec: global partition function}
In this section we will discuss the properties of the ``synchronized'' partition function $Z(A_0,\sfa) = Z_{A_0,A_0}(\sfa)$ defined in Definition \ref{def: Z pert} seen as a family $\ul{Z}$ over the moduli space $\M'$. We introduce the \emph{global partition function} $Z^\mr{glob}$  
-- a volume form on the moduli space $\M'$ arising from modifying $\ul{Z}$ to a global ($\nabla^G$-horizontal) object by a BV-exact term and then restricting it to $\sfa=0$. Finally, we study the dependence of $Z^\mr{glob}$ on metric.

\subsection{Perturbative partition function $\ul{Z}$ on the moduli space}
 \subsubsection{Bundles over $\FC'$ and $\M'$.}
We recall that $\FC$ denotes the space of all flat connections, $\mathcal{G} = C^\infty(M,G)$ the gauge group and 
\begin{equation}
    \M = \FC/\mathcal{G}
\end{equation}
the moduli space of flat connections, with $\pi \colon \FC \to \mathcal{M}$ the projection. Consider the subsets
\begin{equation}
   \FC'\subset \FC, \qquad \M' \subset \mathcal{M}
\end{equation}
of smooth irreducible points, by the results of Section \ref{sec: formal geometry}, they are smooth manifolds. 
Over $\FC'$, we have the cohomology bundle $\mathbb{H} \to \FC'$, with fiber over $A_0$ given by $H^\bullet_{A_0}$. The gauge group acts on $\mathbb{H}$ by conjugation, we have ${}^\sfg (H^\bullet_{A_0}) = H_{{}^\sfg A_0}$. The quotient
    $\mathbb{H}[1]/\mathcal{G} \to \M'$ is isomorphic to the bundle $T\M' \oplus T^*[-1]\M'$. 
    \subsubsection{Perturbative partition function}
    Restricted to  $\FC'$, the perturbative partition function  $Z$ defined in \eqref{eq: Z def} defines a section 
\begin{equation}\label{eq: Z pert section}
    Z \in e^{\frac{i}{\hbar}S_{CS}(-)} \Gamma(\FC', \wh{\Sym}\,\mathbb{H}[1]^* \otimes \Det^\frac12 \mathbb{H} 
    )[[\hbar]].
\end{equation}
From \cite{Cattaneo2008} we have the following result. 
\begin{lem}\label{lem: 1 form part}
Suppose $A_0$ is an irreducible flat connection. Then 
$$Z_{A_0}(\sfa) \in \wh{\Sym}\,H^1_{A_0} \otimes \Det^\frac12(H^\bullet_{A_0})[[\hbar]]$$ 
depends only on the 1-form part $\sfa^{1}$ of $\sfa$.  
\end{lem}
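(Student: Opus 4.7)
My plan is a form-degree counting argument on the Feynman and tree diagrams appearing in the definition \eqref{Zpert} of $Z_{A_0}(\sfa)$; for $A_0 = 0$ this is essentially the argument given in \cite{Cattaneo2008}, and I expect the same bookkeeping to extend verbatim to any irreducible flat connection. Since $A_0$ is irreducible we have $H^0_{A_0} = 0$ and, by Poincar\'e duality on the oriented 3-manifold $M$, $H^3_{A_0} = 0$, so $\sfa = \sfa^1 + \sfa^2$ with $\sfa^k \in H^k_{A_0}[1-k]$. The prefactors $e^{\frac{i}{\hbar}S_{CS}(A_0)}$, $\tau(A_0)^{1/2}$, $e^{\frac{\pi i}{4}\psi(A_0;g)}$ carry no $\sfa$ dependence at all, so it is enough to show that each summand in the exponent depends only on $\sfa^1$.

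For the Feynman-graph part of the exponent I will fix a connected trivalent graph $\Gamma$ with $V$ vertices, $E$ ordinary edges, $S$ short loops, $L_1$ leaves decorated by $\sfa^1$, and $L_2$ leaves decorated by $\sfa^2$. The half-edge count gives $3V = 2E + 2S + L_1 + L_2$. Reading off \eqref{eq: def Phi_Gamma}, the integrand on $\overline{\mr{Conf}}_V(M)$ is a product of 2-forms (for ordinary edges via $\eta_{A_0}$ and for short loops via $\eta_{A_0}^\Delta$) together with a $k$-form $\pi_{v(l)}^* i_{A_0}(\sfa^k)$ at each leaf, so it has total form degree $2E + 2S + L_1 + 2L_2$. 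For this form not to vanish identically on the $3V$-dimensional compactified configuration space one needs $2E + 2S + L_1 + 2L_2 = 3V$; subtracting the half-edge count forces $L_2 = 0$, so only leaves decorated by $\sfa^1$ can contribute.

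For the tree part $\sum_{n\geq 2}\tfrac{1}{(n+1)!}\langle \sfa, l_n(\sfa,\ldots,\sfa)\rangle$ of the exponent (appearing in $Z^{(0)}$) the same form-degree bookkeeping applies, using the sum-over-trees formula for $l_n$ from Section \ref{sec: formal geometry}. A rooted binary tree with $n$ leaves labelled $\sfa^{k_1},\ldots,\sfa^{k_n}$, internal edges labelled by $K_{A_0}$ (of form degree $-1$), and root labelled by $p_{A_0}$, outputs an element of $H^\bullet_{A_0}$ of form degree
\[
\ell = \sum_i k_i - (n-2) = L_2 + 2,
\]
where $L_2$ is the number of $\sfa^2$-inputs. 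The outer pairing with $\sfa$ then picks out its $(3-\ell)=(1-L_2)$-form component via Poincar\'e duality; since $\sfa$ has no degree $0$ or $3$ component, this forces $L_2 = 0$ and the outer slot to be filled by $\sfa^1$. Hence the entire tree exponent depends only on $\sfa^1$.

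Both the Feynman and tree exponents therefore depend only on $\sfa^1$, and so does their exponential, which proves the lemma. The only place where one needs to be slightly careful is the half-edge bookkeeping in the presence of short loops — each short loop consumes two half-edges of a single vertex while still contributing one 2-form via $\eta_{A_0}^\Delta$ — and in verifying that $\eta_{A_0}^\Delta$ is genuinely a 2-form on $M$ (as recalled in the footnote after \eqref{eq: def K}). I do not anticipate any further obstacle.
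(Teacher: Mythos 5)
Your proof is correct and is exactly the argument the paper has in mind: the lemma is quoted from \cite{Cattaneo2008}, and the earlier remark following the QME theorem states that it follows from "an elementary degree count" using $H^0_{A_0}=H^3_{A_0}=0$ — which is precisely the form-degree bookkeeping on $\overline{\mr{Conf}}_V(M)$ (forcing $L_2=0$ via $2E+2S+L_1+2L_2=3V=2E+2S+L_1+L_2$) and on the trees that you carry out. Your additional care with short loops and with the tree part (needed since irreducible points need not be smooth) is consistent with the paper's conventions.
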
 
Finally, we have the following:
\begin{prop}\label{prop: equivariance}
The Chern-Simons partition function $Z$ is equivariant with respect to the  action of the gauge group on $\FC'$ and $\mathbb{H}$.
\end{prop}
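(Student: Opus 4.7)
The plan is to observe that $Z(A_0,\sfa) = Z_{A_0,A_0}(\sfa)$ is the diagonal specialization of the desynchronized partition function and invoke the equivariance statement of Proposition \ref{prop: Z desync equivariance under diagonal gauge transf}. Applying a gauge transformation $(A_0,A_0,\sfa) \mapsto ({}^\sfg A_0, {}^\sfg A_0, {}^\sfg \sfa)$ is a diagonal transformation in the sense of that proposition, so the equivariance follows immediately.

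To make the proof self-contained, I would verify equivariance of each ingredient in the formula \eqref{eq: Z def} separately. First, the Chern--Simons invariant $S_{CS}(A_0)$ is invariant under the identity component of the gauge group; under large gauge transformations it changes by integer multiples of $4\pi^2$, which we either restrict to small transformations or absorb into a well-defined phase (this is the standard Chern--Simons story). The Ray--Singer torsion is manifestly natural: a gauge transformation $\sfg$ induces a chain isomorphism of the twisted de Rham complexes $\Omega^\bullet_{A_0} \xrightarrow{\sim} \Omega^\bullet_{{}^\sfg A_0}$ by conjugation, hence $\tau({}^\sfg A_0)^{1/2}$ corresponds to $\tau(A_0)^{1/2}$ under the induced isomorphism of determinant lines. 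The eta invariant $\psi(A_0;g)$ is similarly gauge-invariant, since $L_-({}^\sfg A_0) = \sfg\, L_-(A_0)\, \sfg^{-1}$ has the same spectrum as $L_-(A_0)$.

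For the Feynman graph weights $\Phi_{\Gamma,A_0;g}(\sfa)$ I would use that the Hodge SDR data $(i_{A_0},p_{A_0},K_{A_0})$ is gauge-equivariant: conjugation by $\sfg$ intertwines $d_{A_0}$ and $d_{{}^\sfg A_0}$, intertwines the formal adjoints $d^*_{A_0}$ and $d^*_{{}^\sfg A_0}$ (as $\sfg$ is an isometry w.r.t.\ the Hodge inner product built from the fixed metric $g$ and the $\mathrm{ad}$-invariant pairing on $\g$), and hence intertwines the Laplacian $\Delta_{A_0}$, its Green's function $G_{A_0}$, the harmonic projector $P_{A_0}$, and therefore $K_{A_0}$, $i_{A_0}$, $p_{A_0}$. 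Consequently the propagator kernel $\eta_{A_0}$ and its diagonal renormalization $\eta^\Delta_{A_0}$ transform by conjugation on the $\g\otimes\g$ coefficients. Finally, the contraction in \eqref{eq: def Phi_Gamma} with the $\mathrm{ad}$-invariant pairing $\langle\cdot,\cdot\rangle$ and the Lie-algebra structure tensor $f$ (which is $\mathrm{Ad}$-invariant) produces an expression on which the gauge action cancels pointwise on $M$; integrating over the configuration space then yields
\[
\Phi_{\Gamma,{}^\sfg A_0;g}({}^\sfg \sfa) = \Phi_{\Gamma,A_0;g}(\sfa).
\]

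Assembling these pieces gives the claimed equivariance. I do not expect any real obstacle: the only delicate point is keeping track of how gauge transformations act on the determinant line $\mathrm{Det}^{\frac12}(H^\bullet_{A_0})$, and this is handled by noting that the isomorphism $H^\bullet_{A_0} \cong H^\bullet_{{}^\sfg A_0}$ induced by $\mathrm{Ad}_\sfg$ is a chain isomorphism of Hodge structures, so the induced map on determinant lines matches the gauge action used in the definition of the section \eqref{eq: Z pert section}.
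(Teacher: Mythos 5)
Your proposal is correct and follows the paper's own route: the paper likewise deduces the statement immediately from Proposition \ref{prop: Z desync equivariance under diagonal gauge transf} by restricting to the diagonal $A=A'$, and the ingredient-by-ingredient equivariance check you spell out (of $K$, $i$, $p$, the torsion, the eta invariant, and the $\mathrm{Ad}$-invariant contractions) is exactly the content of that proposition's proof.
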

\begin{proof}
Follows immediately from Proposition \ref{prop: Z desync equivariance under diagonal gauge transf} by restricting to the diagonal $A=A'$.
\end{proof}

\begin{cor}
    The perturbative partition function defines a section 
    \begin{equation}
        \underline{Z} \in  e^{\frac{i}{\hbar}S_{CS}(-)} \Gamma(\M', \wh{\Sym}\,T^*\M' \otimes \Det^\frac12(T\M'\oplus T^*[-1]\M')^*)[[\hbar]].
    \end{equation}
\end{cor}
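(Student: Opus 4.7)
The plan is to combine Lemma \ref{lem: 1 form part} (reducing the polynomial dependence of $Z$ to the $H^1$-component of $\sfa$), Proposition \ref{prop: equivariance} (gauge equivariance), and Poincar\'e duality on twisted cohomology to descend the $\FC'$-section \eqref{eq: Z pert section} to the bundle over $\M'$ claimed in the statement.

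First I would apply Lemma \ref{lem: 1 form part}: since every point of $\FC'$ is irreducible, $H^0_{A_0}=H^3_{A_0}=0$, so $\sfa=\sfa^1\in H^1_{A_0}$ and $Z_{A_0}(\sfa)$ factors through $\wh{\Sym}(H^1_{A_0})^*$. This refines \eqref{eq: Z pert section} to a section of $\wh{\Sym}(H^1)^*\otimes \Det^{1/2}\mathbb{H}$ over $\FC'$. Next I would invoke Proposition \ref{prop: equivariance}: gauge equivariance of $Z$ allows descent to $\M'=\FC'/\mc{G}$. The descent is well-posed because the stabilizer of an irreducible flat connection in $\mc{G}$ is the center $Z(G)$, which acts trivially on all of $H^\bullet_{A_0}$; in particular the quotient bundle $\mathbb{H}/\mc{G}\to\M'$ has well-defined fibers.

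It then remains to identify the resulting bundle over $\M'$. Smoothness of $[A_0]\in\M'$ identifies $H^1_{A_0}$ canonically with $T_{[A_0]}\M'$, so $\wh{\Sym}(H^1)^*$ descends to $\wh{\Sym}\,T^*\M'$. Poincar\'e duality, via the non-degenerate pairing $H^1_{A_0}\otimes H^2_{A_0}\to H^3(M,\R)\cong\R$ given by the Killing form and integration, identifies $H^2_{A_0}\cong(H^1_{A_0})^*\cong T^*_{[A_0]}\M'$, whence $\mathbb{H}[1]/\mc{G}\cong T\M'\oplus T^*[-1]\M'$ as graded vector bundles. For the determinant factor, the super-determinant gives
\[
\Det\mathbb{H}\cong \det(H^1)\otimes\det(H^2)^{-1}\cong \det(T\M')^{\otimes 2},
\]
while analogously $\Det(T\M'\oplus T^*[-1]\M')\cong\det(T\M')^{\otimes 2}$; using the canonical orientation on $\M'$ from \cite[Theorem 4.5]{joyce2020orientations} to identify $\det(T\M')$ with $\det(T^*\M')^{-1}$, one obtains a canonical isomorphism $\Det^{1/2}\mathbb{H}\cong\Det^{1/2}(T\M'\oplus T^*[-1]\M')^*$. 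The prefactor $e^{\frac{i}{\hbar}S_{CS}(-)}$ descends to $\M'$ because $S_{CS}$ is locally constant on $\M$.

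The main obstacle is the bookkeeping of the determinant lines: keeping the degree shifts, super-determinant sign conventions, Poincar\'e duality, and the orientation-induced dualization mutually coherent. Once these identifications are nailed down, the combination of Lemma \ref{lem: 1 form part} and Proposition \ref{prop: equivariance} yields the stated expression for $\underline{Z}$ immediately.
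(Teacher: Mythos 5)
Your argument is correct and takes essentially the same route as the paper's proof, which likewise combines \eqref{eq: Z pert section}, Lemma \ref{lem: 1 form part} and Proposition \ref{prop: equivariance} together with the identification $\mathbb{H}^1[1]/\mathcal{G}\cong T\M'$. The only cosmetic difference is that with the paper's convention $\Det^{\frac12}(H^\bt_{A_0})\cong (\Det H^1_{A_0})^*$ (see the footnote to Definition \ref{def: Z pert}) the identification of the determinant lines is canonical, so the appeal to the orientation of $\M'$ is not needed.
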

\begin{proof}
    This follows from \eqref{eq: Z pert section} together with Lemma \ref{lem: 1 form part} and Proposition \ref{prop: equivariance} (notice that the quotient bundle $\mathbb{H}^1[1]/\mathcal{G} \cong T\M'$.) 
\end{proof}
\subsubsection{Naive global partition function} 
Restricting $Z, \underline{Z}$ to $\sfa=0$, we obtain the naive global partition functions 
\begin{equation}
    \begin{aligned}
        Z^\mr{glob,naive}_A &= Z_A(0) \in e^{\frac{i}{\hbar}S_{CS}(A)}\Gamma(\FC', \Det^\frac12\mathbb{H} )[[\hbar]], \\
        \underline{Z}^\mr{glob,naive}_{A} &= \underline{Z}_{A}(0) \in e^{\frac{i}{\hbar}S_{CS}(A)}\Gamma(\M', \Det^\frac12(T\M'\oplus T^*[-1]\M')^*)[[\hbar]] .
    \end{aligned}
\end{equation}
\begin{rem}
    The bundle $\Det^\frac{1}{2}(T\M'\oplus T^*[-1]\M')^*$ is canonically isomorphic to the bundle of top forms $\Det(T^*\M')$ over $\M'$. From the BV viewpoint, it is also natural to identify this bundle with the subbundle of half-densities on $T^*[-1]\M'$ which do not depend on the fiber coordinates. 
\end{rem}

In Section \ref{ss Zglob} below we will construct the non-naive global partition function -- a modification of $\ul{Z}^\mr{glob,naive}$ yielding an invariant of $M$ as a framed 3-manifold (Theorem \ref{thm: coho class invariant}).


\subsection{A formula for the Grothendieck connection}
First we introduce some notations that we will be using in the remainder of this section.

Let $\Xi$ be the one-form component (as a form on $\UU$) of the tree part of the extended partition function $\check{Z}$ (\ref{Z check perturbative}):
\begin{equation}
    \Xi= -\langle [\delta A],\sfa \rangle
    +\sum_T \frac{1}{|\mr{Aut}(T)|}\check{\Phi}_{T}(\sfa) \Big|_{\Omega^1(\UU)}\quad \in 
    \Omega^1(\UU,\wh{\mr{Sym}} (H_A[1])^*)
\end{equation}
with $T$ running over trivalent trees, with notations as in (\ref{Z check perturbative}). $\Xi$ splits as a 1-form along $A$ plus a 1-form along $A'$ plus a 1-form along $g$: 
\begin{equation}
    \Xi=\Xi_{\delta A}+\Xi_{\delta A'}+\Xi_{\delta g}.
\end{equation}
Furthermore, let
\begin{equation}
    I^*\colon \Omega^\bt(\mc{U})\ra \Gamma (\FC',\wedge^\bt (H^1_A)^*)
\end{equation}
be the map which restricts a form on $\mc{U}\subset \FC'\times \FC'$ to the diagonal $A'=A$ and evaluates it on \emph{$(A,A)$-harmonic} tangent vectors $\delta A'=\delta A$.

Then, given a form $\omega$ on $\mc{U}$ valued in the bundle $\mc{D}=\mr{Dens}^{\frac12,\mr{formal}}(H_A[1])$ and invariant under gauge transformations, we can construct a $\mc{D}$-valued form on $\M'$ by evaluating $\omega$ on the diagonal $A'=A$ and harmonic tangent vectors $\delta A'=\delta A$, and passing to gauge-equivalence classes. I.e., we have a map
\begin{equation}
    \pi_*I^*\colon \Omega^\bt(\mc{U},\mc{D})^\mr{Gauge}\ra \Omega^\bt(\M',\mc{D}).
\end{equation}
Here $\pi\colon \FC\ra \M$ is the projection to gauge-equivalence classes. The superscipt ``Gauge'' means ``invariants under the action of the gauge group.'' 

Denote 
\begin{multline}
\xi=\pi_*I^*(\Xi_{\delta A}+\Xi_{\delta A'})\\
= -\langle [\delta A],\sfa \rangle+
\vcenter{\hbox{\includegraphics[scale=0.6]{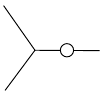}}}+
\vcenter{\hbox{\includegraphics[scale=0.6]{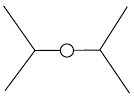}}}+
\vcenter{\hbox{\includegraphics[scale=0.6]{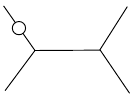}}}+\cdots
\\
=-\langle [\delta A],\sfa \rangle+\frac12\langle [i(\sfa),i(\sfa)],G\mr{ad}^*_{i[\delta A]}i(\sfa)  \rangle +O(\sfa^4) \quad
 \in \Omega^1(\M',\wh{\mr{Sym}} (H_A[1])^*)
\end{multline}
-- it is  the term $-\langle [\delta A],\sfa \rangle$ plus the sum of trees with one white binary vertex $H_{\delta A'=i[\delta A]}$, in terms of Remark \ref{rem: feynman rules check Z }.

\begin{prop}\label{prop formula for nabla^G}
    Grothendieck connection associated to the formal exponential map $\underline{\varphi}$ on $\M'$, seen as an operator on $\Omega^\bt(\M',\mc{D})$, can be written as
    \begin{equation}\label{nabla^G formula}
        \nabla^G=\nabla^{\mathbb{H}}_\M+\{\xi,-\}+\Delta_\sfa \xi.
    \end{equation}
\end{prop}
Here $\nabla^{\mathbb{H}}_\M=\pi_*I^* \nabla^{\mathbb{H}}$ is the connection on the bundle $\mc{D}\ra \M'$ obtained by restricting the connection in the cohomology bundle $\nabla^{\mathbb{H}}$ over $\mc{U}$ (induced by $\nabla^\mr{Harm}$ of Section \ref{sss nabla^Harm}) to the diagonal $A'=A$ and harmonic tangent vectors, and passing to gauge-equivalence classes.

For the proof of Proposition \ref{prop formula for nabla^G} we will need the following lemma.
\begin{lem}\label{lemma 5.6}
    Consider a path $(A,A'_t)$ in $\mc{U}$ passing through $(A,A')$ at $t=0$. Then the sum-over-trees map $\varphi_{A,A'_t}\colon H^1_A\supset U\ra \FC'$ satisfies
    \begin{equation}\label{variation of phi wrt A'}
        \frac{d}{dt}\Big|_{t=0}\varphi_{A,A'_t}(\sfa)=d_{\varphi_{A,A'}(\sfa)}\gamma + (d_\sfa \varphi_{A,A'})(\beta)
    \end{equation}
    for sufficiently small $\sfa$. Here:
    \begin{itemize}
        \item $\gamma$ is given by the sum over trees with $K$ on the root and either one edge (or the root) marked by $\Lambda_{\dot{A}'}$ or one leaf marked by $\mathbb{I}_{\dot{A}'}i(\sfa)$. One also allows for  $\gamma$ a special graph consisting of a single marked leaf with value $\mathbb{I}_{\dot{A}'}i(\sfa)$.
        \item $\beta=\{\Xi_{\delta A'=\dot{A}'},-\}$ is given by the sum over trees with $p$ on the root, with one edge marked by $\Lambda_{\dot{A}'}$ or one leaf marked by $\mathbb{I}_{\dot{A}'}i(\sfa)$ or the root marked by $p\mathbb{P}_{\dot{A}'}$.
    \end{itemize}
\end{lem}
Note that the first term in the r.h.s. of (\ref{variation of phi wrt A'}) is an infinitesimal gauge transformation.

\begin{proof}[Sketch of proof]
We have $\varphi_{A,A'_t}(\sfa)=A+\delta_{A,A'_t}(\sfa)$, with $\delta$ being the sum over trees with $K$ on the root. Therefore, the l.h.s. of (\ref{variation of phi wrt A'}) is the sum over trees with one edge (or the root) marked by $[d,\Lambda] +\mathbb{I}P+P\mathbb{P}$  
or one leaf marked by $d\mathbb{I}i$ (we suppress the subscripts, in particular, $d=d_A$). As in the proof of Proposition \ref{prop: variation of Z wrt A'}, using the Stokes' theorem on the configuration space to move $d$ from the marked edge or leaf to other edges, we obtain a sum of:
\begin{enumerate}
    \item Trees with one edge (or root) marked by $\Lambda$ or one leaf marked by $\mathbb{I}i$ and one other edge marked by $[d,K]=1-P$.
    \item Trees with one edge marked by $\Lambda$ or one leaf marked by $\mathbb{I}i$ and the root marked by $Kd=1-P-dK$.
    \item Trees with the root marked by $d\Lambda$.
    \item Trees with one edge (or root) decorated by $\mathbb{I}P$.
    \item Trees with one edge (or root) decorated by $P\mathbb{P}$.
\end{enumerate}
Contributions of 1 on the edge from (1) cancel out in the sum over graphs by the classical master equation (IHX relation). Contributions of $P$ on the edge from (1) and (2), together with (4) and (5), add up to the second term in the r.h.s. of (\ref{variation of phi wrt A'}): (a) if the subtree between $P$ and the leaves does not contain $\mathbb{I},\Lambda$, such subtrees add up to zero by smoothness, and the same applies to (4); (b) if the subtree does contain $\mathbb{I}$ or $\Lambda$, then such subtrees, as well as the ones coming from with (5), add up to $\beta$. The remaining contributions are: $dK$ from (2) and $d\Lambda$ from (3) on the root add up to $d\gamma$; $1$ from (2) yields $[\delta_{A,A'}(\sfa),\gamma]$  (from pairs of trees joined at the root, one tree containing $\Lambda$ or $\mathbb{I}$ and one not). Thus, we obtain
\begin{equation}
    \frac{d}{dt}\Big|_{t=0}\varphi_{A,A'_t}(\sfa)=d\gamma+[\delta_{A,A'}(\sfa),\gamma]+(d_\sfa \varphi_{A,A'})(\beta)= d_{\varphi_{A,A'}(\sfa)}\gamma+(d_\sfa \varphi_{A,A'})(\beta).
\end{equation}
\end{proof}

\begin{proof}[Proof of Proposition \ref{prop formula for nabla^G}]
    Fix $A\in \FC'$ and $\sfa\in H^1_A$ sufficiently small. Consider the first-order deformation $A\ra \til{A}=A+\epsilon i(\alpha)$ with $\alpha\in H^1_A$. Then the infinitesimal parallel transport of $\nabla^G$, seen as a fiber bundle connection on a neighborhood of the zero-section in $T\M'$ (Definition \ref{def: Grothendieck connection}), maps $\sfa\ra \til{\sfa}$ where $\til\sfa$ is determined by
    \begin{equation}\label{Prop 5.5 proof eq1}
        \ul\varphi_A(\sfa)=\ul\varphi_{\til{A}}(\til{\sfa}).
    \end{equation}
    We will write $\til{\sfa}=B(\sfa+\epsilon \sfb)$ (all computations are $\bmod \,\epsilon^2$), with $B=\mathfrak{B}^1_{\til{A}\la A;\til{A}}\colon H^1_A\ra H^1_{\til{A}}$ the cohomology comparison map.\footnote{Note that in the first order in $\epsilon$ it does not matter whether we choose $A'=A$ or $A'=\til{A}$ in the cohomology comparison map $\mathfrak{B}_{\til{A}\la A;A'}$.} The r.h.s. of equation (\ref{Prop 5.5 proof eq1}) reads
    \begin{align}
        &[\varphi_{\til{A},\til{A}}(\til\sfa)] 
        \underset{\mr{Prop.\;}\ref{prop: desy exp map conv}}{=} 
        [\varphi_{A,\til{A}}(B^{-1}(\til\sfa)+\epsilon\alpha)]
        \\&=
        [\varphi_{A,\til{A}}(\sfa)+\epsilon (d_\sfa\varphi_{A,A})(\alpha+\sfb)]\\
        &\underset{\mr{Lemma\;}\ref{lemma 5.6}}{=}[\varphi_{A,A}(\sfa)+\epsilon (d_\sfa\varphi_{A,A})(\alpha+\sfb+\beta)+\underbrace{\epsilon d_{\varphi_{A,A}(\sfa)}(\gamma)}_{\mr{gauge\;transf.}}]\\
        &=[\varphi_{A,A}(\sfa)+\epsilon (d_\sfa\varphi_{A,A})(\alpha+\sfb+\beta)].
    \end{align}
    Here $\beta,\gamma$ are as in Lemma \ref{lemma 5.6} with $A'=A$ and $\dot{A}'=i(\alpha)$. Thus,  (\ref{Prop 5.5 proof eq1}) holds if the $O(\epsilon)$ term in the last line vanishes, i.e., if $\sfb=-\alpha-\beta$ and thus
    \begin{equation}
        \til\sfa=B(\sfa-\epsilon (\alpha+\beta)).
    \end{equation}
    Therefore, as a fiber bundle connection on (a neighborhood of the zero-section in) $T\M'$, 
    \begin{equation}\label{Prop 5.5 proof eq2}
    \nabla^G= \nabla^\mathbb{H}_{\M}+\langle [\delta A],\frac{\partial}{\partial \sfa} \rangle+\{\pi_* I^* \Xi_{\delta A'=i[\delta A]},-\}.
    \end{equation}
    Its action on half-densities is given by (\ref{nabla^G formula}).
\end{proof}
\begin{rem}
    When we consider $\nabla^G$ as a fiber bundle connection on $T\M'$, $\sfa=\sfa^1$ is always an element of $H^1_A$; second and third terms in (\ref{Prop 5.5 proof eq2}) give a 1-form on $\M'$ valued in formal vector fields on $H^1_A$. On the other hand, when we allow $\nabla^G$ to act on half-densities on $H^\bt_A[1]$, we have the cotangent lift of that vector field, 
    $\{\xi,-\}$ (which involves $\sfa^2$ -- the component $H^2_A$ of $\sfa$) appearing in (\ref{nabla^G formula}). The term $\Delta_\sfa\xi$ arises from the natural action of a hamiltonian vector field on $H^\bt_{A}[1]$ on a half-density.
\end{rem}
}

\subsection{Almost horizontality of $\ul{Z}$ w.r.t. Grothendieck connection}

\begin{prop}\label{prop ulZ finite horizontality}
    Let $\underline\varphi \colon U \subset T\M'\to \M'$ be the sum-over-trees exponential map (\ref{phi underline}) induced by the SDR $(i_{A},p_{A},K_{A})$. Fix $A$ and small $\alpha\in H^1_{A}$, 
    and let $B=d_\alpha\ul\varphi_{A}(\alpha)\colon H^1_{A}\ra H^1_{\ul\varphi_A(\alpha)}$. Then, for $\sfa\in H^1_{A}$ small, we have
    \begin{equation}\label{Z underline finite almost-Grothendieck-horizontality}
        \det(B^\vee)\circ \underline{Z}_{\underline{\varphi}_{A}(\alpha)} (B(\sfa))=\underline{Z}_{A}(\alpha + \sfa)  -i\hbar \Delta_\sfa R(A,\alpha,\sfa) ,
    \end{equation}
    where 
    \begin{multline}\label{R for Z underline finite almost-Grothendieck-horizontality}
        R(A,\alpha,\sfa)=\det (B^\vee)\circ R_{\til{A},A,\til{A}}(\mathfrak{B}(\sfa))\\
        = \int_0^1 dt\, r_{\til{A},A_t;\dot{A}_t}(\mathfrak{B}(\sfa))\cdot \det(B^\vee)\circ Z_{\til{A},A_t}(B(\sfa)).
    \end{multline}
    Here: $R_{\til{A},A,\til{A}}$ 
    is as in (\ref{R in change of Z with A'}),
    $\til{A}=\varphi_{A,A}(\alpha)$ (hence $[\til{A}]=\ul\varphi_{A}(\alpha)$), $A_t=\varphi_{A,A}(t\alpha)$ is a path from $A$ to $\til{A}$,  $\mathfrak{B}=\mathfrak{B}_{\til{A}\la A,A}\colon H^\bt_A\ra H^\bt_{\til{A}}$ is the promotion of $B$ to a map between full cohomology, $r$ is as in Proposition \ref{prop: variation of Z wrt A'}.
\end{prop}
\begin{proof}
This is an immediate consequence of Theorems \ref{thm 4.2} and \ref{thm: change gf}.
Indeed, we have
\begin{multline}
 \det(B^\vee)\circ \underline{Z}_{\underline{\varphi}_{A}(\alpha)} (B(\sfa))=\det(B^\vee)\circ Z_{\til{A},\til{A}}(B(\sfa))
\\ 
\underset{\mr{Theorem\,}\ref{thm: change gf}}{=}
\det(B^\vee)\circ \Big( Z_{\til{A},A}(B(\sfa))-i\hbar (\Delta_\sfb R_{\til{A},A,\til{A}}(\sfb))\Big|_{\sfb=B(\sfa)} \Big)\\
\underset{\mr{Theorem\,}\ref{thm 4.2}}{=}
Z_{A,A}(\alpha+\sfa)-i\hbar \Delta_\sfa \Big(\det(B^\vee)\circ R_{\til{A},A,\til{A}}(\mathfrak{B}(\sfa))\Big).
\end{multline}
\end{proof}

\begin{cor}\label{cor ulZ nabla^G horizontality}
    Partition function $\ul{Z}$ is horizontal w.r.t. Grothendieck connection modulo a BV-exact term:
    \begin{equation}\label{Z underline almost-horizontality}
        \nabla^G \ul{Z}=-i\hbar\Delta_\sfa 
        (\ul{r}\, \ul{Z}),
    \end{equation}
    where $\ul{r}$ is $r_{A,A;i[\delta A]}$ of Proposition \ref{prop: variation of Z wrt A'} with trees removed. I.e., $\ul{r}$ is the sum of connected Feynman graphs with $\geq 1$ loops, with one edge marked by $\Lambda_{i[\delta A]}$ or one leaf marked by $\mathbb{I}_{i[\delta A]}i(\sfa)$.
\end{cor}

\begin{proof}
    Taking the derivative of (\ref{Z underline finite almost-Grothendieck-horizontality}) in $\alpha$ at $\alpha=0$ one obtains the following:
    \begin{equation}
        \nabla^{G,\mr{triv}}\ul{Z}= -i\hbar\Delta_\sfa( 
        \pi_*I^*(r)\,
        \ul{Z}),
    \end{equation}
    where $r$ is as in Proposition \ref{prop: variation of Z wrt A'} and $\nabla^{G,\mr{triv}}\colon=\nabla^{\mathbb{H}}_\M+\langle [\delta A],\frac{\partial}{\partial \sfa}\rangle$. Adding 
    $
    \Delta_\sfa (
\pi_* I^* \Xi_{\delta A'}\, \ul{Z})$ to both sides and using Proposition \ref{prop formula for nabla^G}, we obtain (\ref{Z underline almost-horizontality}).
\end{proof}

\subsection{Correcting $\ul{Z}$ to a global object. Definition of 
$Z^\mr{glob}$.}\footnote{Main statements of this section are a Chern-Simons counterpart of Theorem 6.1 in \cite{Bonechi2012}.}
\label{ss Zglob}
\begin{thm}\label{thm 5.15}
$\ul{Z}$ can be modified by a BV-exact term (pointwise on the moduli space) to a global object. I.e., there exists a degree $-1$ element $\rho\in \Gamma(\M',\mr{Dens}^{\frac12,\mr{formal}}(H^\bt_{A}[1]))$ such that 
\begin{equation}\label{Zmod}
Z^\mr{mod}\colon= \ul{Z}+i\hbar\Delta_\sfa \rho
\end{equation}
satisfies
    \begin{equation}
        \nabla^G Z^\mr{mod}=0.
    \end{equation}
\end{thm}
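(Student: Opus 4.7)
The strategy is to construct $\rho$ from the extended partition function $\check Z$ of Section \ref{ss: full extension of Z} restricted to the diagonal, using a fiberwise Fourier-type transform analogous to (\ref{Zglob formula intro}) to convert form degree on $\M'$ into ghost degree on the cohomology fiber.

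First, I fix a metric $g_0$ and restrict $\check Z$ to the slice $\{(A,A,g_0) : A \in \FC'\} \subset \UU$. By gauge equivariance and the 1-extended smoothness assumption, which via Proposition \ref{prop 5.8} and Corollary \ref{cor 5.9} ensures that the ingredients descend coherently through the quotient by gauge transformations, this restriction descends to a nonhomogeneous differential form $\til Z$ on $\M'$ with values in $\mr{Dens}^{\frac12,\mr{formal}}(\mathbb{T}\M')$. Its form-degree-$0$ component equals $\ul Z$, and its form-degree-$1$ component recovers (a representative of) the generator $R_1$ in Corollary \ref{cor ulZ nabla^G horizontality}.

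Second, by the differential quantum master equation (\ref{dQME on Zcheck}) combined with Corollary \ref{cor: partial G = nabla G} (which identifies the restriction of the partial Grothendieck connection to the diagonal with the Grothendieck connection $\nabla^G$ on $\M'$ under extended smoothness), $\til Z$ satisfies a descended equation of the shape
\[
  \bigl(\nabla^G - i\hbar\Delta_\sfa - \tfrac{i}{\hbar}\tfrac12\langle \sfa, F \sfa\rangle\bigr)\til Z = 0,
\]
where $F$ is the restriction of the cohomology-bundle curvature to the diagonal.

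Third, I apply the fiberwise Fourier-type operation
\[
  \mc{S} = \exp\Bigl(i\hbar\, \bigl\langle \tfrac{\partial}{\partial \sfa^2}, \tfrac{\partial}{\partial [\delta A]}\bigr\rangle\Bigr),
\]
analogous to (\ref{Zglob formula intro}), which exchanges $1$-form degree on $\M'$ with ghost degree $+1$ in the fiber via the pairing between $H^2_A$ and $T^*\M' \cong (H^1_A)^*$. Define $Z^\mr{mod}$ as the form-degree-$0$ part of $(\mc{S}\til Z)\big|_{[\delta A]=0}$. A direct computation then yields (i) $Z^\mr{mod} = \ul Z + i\hbar\Delta_\sfa \rho$ with $\rho$ assembled from the higher-form-degree components of $\til Z$, and (ii) $\nabla^G$-horizontality of $Z^\mr{mod}$ from the descended dQME, since $\mc{S}$ is designed precisely so that upon restriction to $[\delta A] = 0$ the quadratic curvature term and the successive form-degree contributions telescope to zero.

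The main obstacle will be verifying that the dQME descends cleanly to $\M'$ under 1-extended smoothness, and that the curvature correction above is exactly cancelled by the Fourier shift $\mc{S}$. Should the cancellation hold only up to further $\Delta_\sfa$-exact contributions, one iterates the construction, using that BV cohomology of formal half-densities on $H_A[1]$ is concentrated in a single ghost degree (Remark \ref{rem: Poincare lemma}) so that each residual obstruction is trivializable, absorbing the resulting corrections into a redefinition of $\rho$.
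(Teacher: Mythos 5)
Your Step 1 matches the paper: restrict $\check{Z}$ to the diagonal with $\delta A'=\delta A$ harmonic and fixed $g$, descend to a nonhomogeneous form $\til{Z}$ on $\M'$, and invoke the dQME. (One small point: under 1-extended smoothness the relevant curvature term vanishes for harmonic $\delta A$ on the diagonal, so the descended equation is simply $(\nabla^G-i\hbar\Delta_\sfa)\til{Z}=0$; you do not need to carry $\frac{i}{\hbar}\frac12\langle\sfa,F\sfa\rangle$ along.)

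The gap is in how you pass from $\til{Z}$ to $Z^\mr{mod}$. The paper's mechanism is a homological-perturbation contraction: one starts from SDR data $(\mathsf{i},\mathsf{p},\mathsf{K})$ of $(\Omega^\bt(\M',\mr{Dens}^{\frac12,\mr{formal}}(H^\bt_A[1])),\nabla^G)$ onto global half-densities, with $\mathsf{i}=T\ul\varphi^*$ and $\mathsf{p}$ the evaluation at $\sfa^1=0$ in form degree zero, perturbs the differential by $-i\hbar\Delta_\sfa$ to get $(\mathsf{i},\mathsf{p}',\mathsf{K}')$, and sets $Z^\mr{mod}=\mathsf{i}\mathsf{p}'\til{Z}$ and $\rho=(\mathsf{K}'\til{Z})|_{\Omega^0}$. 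Horizontality is then automatic \emph{because $Z^\mr{mod}$ lies in the image of $\mathsf{i}=T\ul\varphi^*$}, i.e.\ it is the Taylor expansion of a genuine global object; and the identity $Z^\mr{mod}=\ul{Z}+i\hbar\Delta_\sfa\rho$ is the degree-zero component of the chain-homotopy relation $\mr{id}=\mathsf{i}\mathsf{p}'+[\nabla^G-i\hbar\Delta_\sfa,\mathsf{K}']$ applied to the cocycle $\til{Z}$. Your operator $\mc{S}=\exp(i\hbar\langle\partial_{\sfa^2},\partial_{[\delta A]}\rangle)$ reproduces only the projection $\mathsf{p}'$, and only in conjunction with the restriction $\sfa^1=0$ (which you omit): without setting $\sfa^1=0$ the simple Fourier formula is not the correct perturbed projection, since the full homotopy $\mathsf{K}$ raises the polynomial degree in $\sfa^1$ and only its lowest-order part is captured by $\langle\sfa^1,\partial_{[\delta A]}\rangle$. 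More importantly, even the correct $\mathsf{p}'\til{Z}$ is the \emph{global} object $Z^\mr{glob}$, not $Z^\mr{mod}$; to obtain a $\nabla^G$-horizontal formal half-density one must re-include it via $\mathsf{i}=T\ul\varphi^*$. Your claimed "telescoping" of form-degree contributions under $\nabla^G$ does not go through: $\nabla^G$ of the form-degree-zero object you define is not controlled by the dQME components alone, because $\mc{S}$ and the restriction $[\delta A]=0$ do not commute with $\nabla^G$. Finally, your fallback — absorbing residual obstructions using that BV cohomology is concentrated in a single ghost degree — is the pointwise triviality warned against in Remark \ref{rem: Poincare lemma}: it does not produce a smooth section $\rho\in\Gamma(\M',\mr{Dens}^{\frac12,\mr{formal}}(H^\bt_A[1]))$, which is what the theorem requires; the coherent choice of $\rho$ is exactly what the homotopy $\mathsf{K}'$ supplies.
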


For the proof (and for the proof of Proposition \ref{prop: properties of Zcheckglob} below) we will need the following lemma.

\begin{lem}\label{lemma 5.11}
\begin{enumerate}[(a)]
    \item 
    Let 
    \begin{equation}\label{olZ def}
        \overline{Z}\colon=\pi_*I^* (e^{-\frac{i}{\hbar}\Xi}\check{Z}) \quad \in \Omega^\bt(\mr{Met}\times \M',\mc{D})
    \end{equation}
    and 
    \begin{equation}\label{olZren def}
    \overline{Z}^\mr{ren}\colon= e^{\frac{i}{\hbar}c(\hbar) \frac{S_\mr{grav}(g,\phi)}{2\pi}} \overline{Z}.
    \end{equation}
    Then  $\overline{Z}^\mr{ren}$ satisfies
    \begin{equation}\label{olZ equation}
        (\delta_g+\nabla^G+\{\mu,-\}+\Delta_\sfa \mu-i\hbar \Delta_\sfa) \overline{Z}^\mr{ren}=0.
    \end{equation}
    Here $\mu=\pi_*I^* \Xi_{\delta g}\in \Omega^{1,0}(\mr{Met}\times \M',\wh{\mr{Sym}}(H_A[1])^*)$.
   \item  Let
    \begin{equation}\label{tilZ def}
        \til{Z}\colon= \overline{Z}|_{g\mr{\;fixed,\;}\delta g=0}\; \in \Omega^\bt(\M',\mc{D})
    \end{equation}
    -- the restriction of $\overline{Z}$ to a $g$-fixed slice of $\mr{Met}\times \M'$.\footnote{
    Note that $\til{Z}=\ul{Z}+\ul{r}\,\ul{Z}+\cdots$ with $\ul{r}\,\ul{Z}$ the generator in the r.h.s. of (\ref{Z underline almost-horizontality}) and $\cdots$ being a sum of forms of degree $\geq 2$ on $\M'$. 
    } Then $\til{Z}$ satisfies
    \begin{equation}\label{tilZ equation}
        (\nabla^G-i\hbar\Delta_\sfa)\til{Z}=0.
    \end{equation}
\end{enumerate}
\end{lem}

\begin{proof}
    Denote $\xi^\mr{tot}\colon = \pi_*I^* \Xi=\xi+\mu$. The dQME (\ref{dQME on Zcheck}) implies 
    \begin{multline}\label{lemma 5.11 proof eq1}
    0=\pi_*I^* e^{-\frac{i}{\hbar}\Xi}\underbrace{(\nabla^\mc{D}-i\hbar\Delta_\sfa-\frac{i}{\hbar}\frac12 \langle \sfa, F_{\nabla^\mathbb{H}}\sfa\rangle)}_{\nabla^{G-M}} (e^{\frac{i}{\hbar}\Xi}(e^{-\frac{i}{\hbar}\Xi} \check{Z}^\mr{ren}) )\\
    =
    \Big(\nabla^\mathbb{H}_{\mr{Met}\times \M'}+\{\xi^\mr{tot},-\}+\Delta_\sfa \xi^\mr{tot}-i\hbar\Delta_\sfa+\\
    +\frac{i}{\hbar}(\nabla^\mathbb{H}_{\mr{Met}\times \M'}\xi^\mr{tot}+\frac12 \{\xi^\mr{tot},\xi^\mr{tot}\}-\frac12 \langle \sfa, F_{\nabla^\mathbb{H}}\sfa\rangle +\underbrace{\frac12\langle [\sfa,\sfa],KHi[\delta A] \rangle}_Y) 
    \Big) \overline{Z}^\mr{ren}.
    \end{multline}
    Here  the differential operator in brackets in lines 2 and 3 -- except the term $Y$ -- arises as a conjugation of the Gauss-Manin superconnection by $e^{\frac{i}{\hbar}\Xi}$, restricted to $A'=A$ and $\delta A'=\delta A$ harmonic and reduced modulo gauge transformations. 
    
    Next we comment on the $Y$ term appearing in (\ref{lemma 5.11 proof eq1}). First consider for simplicity the case of $g$ fixed. In fact, $I^* (\nabla^{G-M} \check{Z}^\mr{ren})$ cannot be computed a priori as a differential operator (denote it $I^*\nabla^{G-M}$) acting on $I^*\check{Z}^\mr{ren}$. The reason is that when we move infinitesimally on $\FC' \stackrel{\mr{Diag}}{\hookrightarrow}\FC'\times \FC'$, 
    the notion of harmonic forms changes, so, in order to evaluate $I^* (\nabla^{G-M} \check{Z}^\mr{ren})$ on a tuple of harmonic tangent vectors, it is insufficient to know $I^*(\check{Z}^\mr{ren})$ -- one needs also the value of $\check{Z}^\mr{ren}$ on a collection of harmonic tangent vectors and one non-harmonic (arising as a variation of a harmonic tangent vector). 
    More precisely: given harmonic tangent vectors $\alpha_0,\ldots,\alpha_p \in T_{(A,A)}\FC'_\mr{Diag}$, we extend them to harmonic vector fields $\til\alpha_0,\ldots,\til\alpha_p$ on $\FC'_\mr{Diag}$. Then we have
    \begin{multline}\label{lemma 5.11 proof eq3}
        (\nabla^\mc{D}\check{Z}^\mr{ren})(\til\alpha_0,\ldots,\til\alpha_p)=
        \sum_{i=0}^p 
        \nabla^\mc{D}_{\til\alpha_i}\check{Z}^\mr{ren}(\til\alpha_0,\ldots\wh{\til\alpha_i}\ldots, \til\alpha_p)\\
        -
        \sum_{i,j=0}^p \frac12 
        \check{Z}^\mr{ren}([\til\alpha_i,\til\alpha_j],\til\alpha_0,\ldots\wh{\til\alpha_i}\ldots\wh{\til\alpha_j}\ldots,\til\alpha_p).
    \end{multline}
    Then, choosing the extensions $\til\alpha_i|_{A+\epsilon \beta}=\alpha_i-\epsilon (K \mr{ad}_\beta \alpha_i+dG \mr{ad}^*_{\beta}\alpha_i)+O(\epsilon^2)$ (cf. (\ref{nabla^harm infinitesimal horizontal transport}), (\ref{nabla^harm infinitesimal vertical transport})), we have 
    $[\til\alpha_i,\til\alpha_j]|_{(A,A)}=2dG*[\alpha_i,*\alpha_j]$. Hence, we have
    \begin{equation}
        I^* \nabla^{G-M} \check{Z}^\mr{ren}=(I^* \nabla^{G-M}) (I^* \check{Z}^\mr{ren})+I^*(\langle -d G \ad^*_{\delta A}\delta A,\frac{\delta}{\delta(\delta A)} \rangle \check{Z}^\mr{ren} )
    \end{equation}
    Here the two terms on the right corresponds to the two terms on the right in (\ref{lemma 5.11 proof eq3}).
    Allowing $g$ to vary (and so allowing the tangent vectors to have component along $T_g\mr{Met}$), the obvious extension of this computation yields 
    \begin{equation}\label{lemma 5.11 proof eq2}
        I^* \nabla^{G-M} \check{Z}^\mr{ren}=(I^* \nabla^{G-M}) (I^* \check{Z}^\mr{ren})+I^*(\langle d K\underbrace{H}_{H_{\delta A'}+H_{\delta g}}\delta A,\frac{\delta}{\delta(\delta A)} \rangle \check{Z}^\mr{ren} )
    \end{equation}
    In terms of the Feynman diagram expansion of $\check{Z}$ of Remark \ref{rem: feynman rules check Z }, the rightmost term in (\ref{lemma 5.11 proof eq2}) has the following contributions:
    \begin{enumerate}[(i)]
        \item $Hi[\delta A]$ replacing a gray vertex in a graph (from the derivation acting on $\delta A$ in a gray vertex),
        \item replacing a white vertex in a graph by $\ad_{KHi[\delta A]}$ (from the derivation acting on $\delta A'=\delta A$ in a white vertex).
    \end{enumerate}
    Contributions (i) and (ii) above mutually cancel, except for one special graph in (i) -- a cubic corolla with one leaf attached to a gray vertex, replacing the latter with $Hi[\delta A]$ as in (i) above, see Figure \ref{fig: lemma 5.11}. This yields $Y \check{Z}^\mr{ren}$, resulting eventually in the $Y$ term correction in (\ref{lemma 5.11 proof eq1}).
    \begin{figure}
        \centering
        \includegraphics[width=1\linewidth]{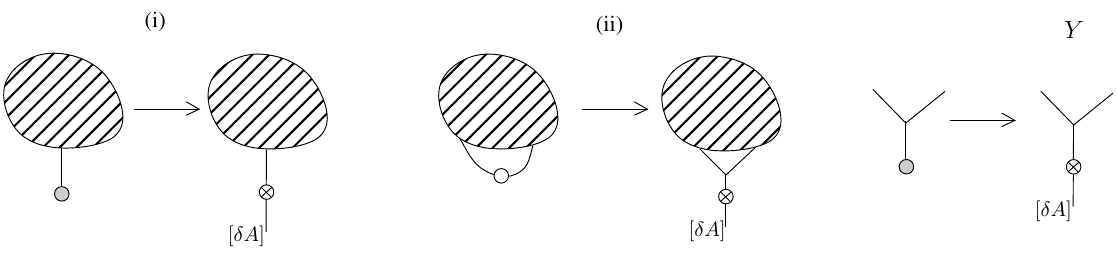}
        \caption{Cancellation mechanism in the rightmost term in (\ref{lemma 5.11 proof eq2}). Circle vertex with a cross stands for either white or black vertex. Dashed region represents some graph.}
        \label{fig: lemma 5.11}
    \end{figure}

    Now that (\ref{lemma 5.11 proof eq1}) is proven, we show that it can be simplified. Consider the dQME (\ref{dQME on Zcheck}) on $\check{Z}^\mr{ren}$, restrict it to lowest order in $\hbar$ (tree diagram contributions) and to 2-forms on $\UU$. Then, reducing to $\mr{Met}\times \M'$ by applying $\pi_*I^*$, we obtain the equation
    \begin{equation}
        \nabla^\mathbb{H}_{\mr{Met}\times \M'}\xi^\mr{tot}+\frac12 \{\xi^\mr{tot},\xi^\mr{tot}\}-\frac12 \langle \sfa, F_{\nabla^\mathbb{H}}\sfa\rangle +Y =0
    \end{equation}
    where the $Y$ term appears by the same mechanism as above.
    Hence, (\ref{lemma 5.11 proof eq1}) simplifies to 
    \begin{equation}
        (\delta_g+\underbrace{\nabla^\mathbb{H}_\M+\{\xi,-\}+\Delta_\sfa \xi}_{\nabla^G} +\{\mu,-\}+\Delta_\sfa\mu -i\hbar\Delta_\sfa)\overline{Z}^\mr{ren}=0,
    \end{equation}
    which proves (\ref{olZ equation}).

    Finally, (\ref{tilZ equation}) is an immediate consequence of (\ref{olZ equation}) by restricting to a $g$-fixed slice of $\mr{Met}\times \M'$.
\end{proof}

\begin{proof}[Proof of Theorem \ref{thm 5.15}]

    \textbf{Step 1.} (Building a chain contraction for $\nabla^G-i\hbar\Delta_\sfa$.)
    The cohomology of the complex $(\Omega^\bt(\M', 
    \mc{D}
    ),\nabla^G)$ is concentrated in form degree $0$ and is isomorphic to global half-densities $\mr{Dens}^{\frac12}(T^*[-1]\M')$.\footnote{
    See \cite[Section 2]{Bonechi2012}.
    }
    More precisely, one has SDR data $(\mathsf{i},\mathsf{p},\mathsf{K})$ with inclusion $\mathsf{i}=T\ul\varphi^*$ and projection $\mathsf{p}$ given by evaluating a formal half-density at $\sfa^1=0$,
    \begin{equation}
        \mathsf{p}: \psi([A],\sfa^1,\sfa^2)D^{\frac12} \sfa^1 D^{\frac12}\sfa^2\mapsto \psi([A],0,\sfa^2) D^{\frac12} [A] D^{\frac12}\sfa^2
    \end{equation}
    where $\sfa^1,\sfa^2$ are the components of $\sfa$ in $H^1_A=T_{[A]}\M'$ and $H^2_A=T^*_{[A]}[-1] \M'$. It is also understood that $\mathsf{p}$ sends forms of positive degree on $\M'$ to zero.

    Next, deform the differential on $\Omega^\bt(\M', 
    \mc{D}
    )$ from $\nabla^G$ to $\nabla^G-i\hbar\Delta_\sfa$. By homological perturbation lemma (Lemma \ref{lem: HPL}), one has deformed SDR data
    \begin{equation}\label{(i',p',K')}
    (\mathsf{i}',\mathsf{p}',\mathsf{K}')\colon (\Omega^\bt(\M', 
    \mc{D}
    ), \nabla^G -i\hbar\Delta_\sfa) \leadsto (\mr{Dens}^{\frac12}(T^*[-1]\M'),\delta).
    \end{equation} 
    Moreover, the fact that $\mathsf{K}$ lowers form degree along $\M'$ by one, implies that 
    \begin{itemize}
    \item the induced differential $\delta=-i\hbar\Delta$ is the BV Laplacian on (global) half-densities $T^*[-1]\M'$,
    \item $\mathsf{i}'=\mathsf{i}$.
    \end{itemize}
    
\textbf{Step 2.} Using the fact that $K'$ defined above satisfies the chain homotopy property $\mr{id}=\mathsf{i}\mathsf{p}'+[\nabla^G-i\hbar\Delta_\sfa,\mathsf{K}']$, we have
\begin{equation}\label{thm 5.17 proof eq1}
\til{Z}=\mathsf{i}\mathsf{p}'\til{Z}+(\nabla^G-i\hbar\Delta_\sfa) \mathsf{K}' \til{Z}+
   \mathsf{K}'\underbrace{(\nabla^G-i\hbar\Delta_\sfa)\til{Z}}_{=0\mr{\; by\; (\ref{tilZ equation})}} ,
\end{equation}
where $\til{Z}$ is as in (\ref{tilZ def}).
Denote the first term on the r.h.s. by $Z^\mr{mod}\colon= \mathsf{i}\mathsf{p}'\til{Z}$. Since it is in the image of $\mathsf{i}$, it is $\nabla^G$-closed, and hence a global object. Restricting (\ref{thm 5.17 proof eq1}) to form degree zero along $\M'$ and denoting 
\begin{equation}\label{rho from Zmod=ulZ-Delta(rho)}
\rho=(\mathsf{K}' \til{Z})|_{\Omega^0(\M',\mc{D})},
\end{equation}
we obtain (\ref{Zmod}).
\end{proof}

\begin{lem}\label{lem Zmod = Z (1+O(hbar))}
    We have the following ansatz for $\hbar$-dependence of $\rho$ and $Z^\mr{mod}$: 
    \begin{eqnarray}
        \rho&=&\ul{Z}\cdot f(\hbar), \label{rho hbar dependence}\\
        Z^\mr{mod}&=&\ul{Z}\cdot (1+\hbar\, g(\hbar)), \label{Zmod hbar dependence}
    \end{eqnarray}
    where $f,g\in C^\infty(\M',\underbrace{\wh{\mr{Sym}}(H_A[1])^*}_{\mathbb{F}})[[\hbar]]$ are some formal power series in $\hbar$ and $\sfa$.
\end{lem}
\begin{proof}
    Both properties follow from the fact that 
    \begin{equation}
    \til{Z}=\underbrace{e^{\frac{i}{\hbar}S_{CS}(A)}\tau_A^{\frac12}e^{\frac{\pi i}{4}\psi_A}}_{\ul{Z}_0}\,e^{\frac{i}{\hbar}\omega+h}
    \end{equation} 
    with $\omega\in \Omega^{\geq 2}(\M',\mathbb{F})$ independent of $\hbar$ and with $h\in \Omega^{\geq 0}(\M',\mathbb{F})[[\hbar]]$. Hence, the $p$-form component of $\til{Z}$ along $\M'$ is
    \begin{equation}
        \til{Z}^{(p)}=\ul{Z}_0\cdot \hbar^{-\left[\frac{p}{2}\right]}F^{(p)}=\ul{Z}\cdot \hbar^{-\left[\frac{p}{2}\right]}F^{'(p)}
    \end{equation}
    with $F^{(p)},F^{'(p)}\in \Omega^p(\M',\mathbb{F})[[\hbar]]$. Therefore,
    \begin{equation}
        \rho=\mathsf{K}'\til{Z}\big|_{\Omega^0(\M',\mc{D})}=\sum_{k\geq 0}(\mathsf{K} i\hbar \Delta_\sfa)^k \mathsf{K}\left(\til{Z}|_{\Omega^{k+1}(\M',\mc{D})}\right)=
        \ul{Z}_0 \cdot \sum_{k\geq 0}\hbar^{k-\left[\frac{k+1}{2}\right]}G_k
    \end{equation}
    with $G_k\in \Omega^0(\M',\mathbb{F})[[\hbar]]$. This implies (\ref{rho hbar dependence}) and -- using (\ref{Zmod}) -- implies also (\ref{Zmod hbar dependence}).
\end{proof}

\begin{defn}
    We define the global partition function as the degree zero half-density on $T^*[-1]\M'$ (or, equivalently, a volume form on $\M'$) given by restriction $Z^\mr{mod}(A,\sfa)$ to $\sfa=0$:
    \begin{equation}
    Z^\mr{glob}\colon= Z^\mr{mod}\Big|_{\sfa=0} \in \mr{Dens}^{\frac12}(T^*[-1]\M').
    \end{equation}
\end{defn}
In the notations of the proof of Theorem \ref{thm 5.15}, we have 
\begin{equation}
Z^\mr{glob}=\mathsf{p}'\til{Z}. 
\end{equation}
Lemma \ref{lem Zmod = Z (1+O(hbar))} above implies
    \begin{equation}
        Z^\mr{glob}=\ul{Z}\big|_{\sfa=0}\,(1+O(\hbar)).
    \end{equation}

\begin{prop}\label{prop: Zglob formula}
    One has
    \begin{equation}\label{Zglob formula}
    \begin{aligned}
        Z^\mr{glob}&=\sum_{k=0}^{\dim \M'} \frac{(-i\hbar)^k}{k!} \left\langle \frac{\partial}{\partial \sfa^2},\frac{\partial}{\partial [\delta A]}\right\rangle^k \til{Z}^{(k)}\Big|_{\sfa^1=0}\\
        &=\left(e^{-i\hbar \left\langle \frac{\partial}{\partial \sfa^2},\frac{\partial}{\partial [\delta A]}\right\rangle} \til{Z} \right) \Big|_{\sfa^1=[\delta A]=0}
        .
        \end{aligned}
    \end{equation}
    Here $\dim \M'=\dim H^1_A$ is the dimension of the connected component of $\M'$ containing $[A]$; $\til{Z}^{(k)}$ is the $k$-form component of $\til{Z}$, as a form on $\M'$.
\end{prop}
\begin{proof}
In the notations of the proof of Theorem \ref{thm 5.15}, we have
\begin{equation}\label{prop 5.17 proof eq1}
    Z^\mr{glob}=\mathsf{p}'\til{Z}=\sum_{k\geq 0} \mathsf{p}(i\hbar\Delta_\sfa \mathsf{K})^k \til{Z}.
\end{equation}
The chain homotopy $\mathsf{K}$ increases the polynomial degree in $\sfa^1$, and in the lowest degree in $\sfa^1$ is given by 
\begin{equation}\label{prop 5.17 proof eq2}
    \mathsf{K}\omega|_{\sfa^1\ra 0}  \sim -\frac{1}{\deg\omega} \left\langle \sfa^1,\frac{\partial}{\partial [\delta A]}\right\rangle \omega|_{\sfa^1=0},
\end{equation}
cf. the homotopy $\delta^*$ in \cite[Section 2]{Bonechi2012}. On the other hand $\Delta_\sfa$ lowers the degree in $\sfa^1$ by one and $p$ sets $\sfa^1$ to zero. So, in the r.h.s. of (\ref{prop 5.17 proof eq1}), only the constant term in $\sfa^1$ contributes, and for the purpose of evaluating the r.h.s., $\mathsf{K}$ can be replaced by its asymptotics (\ref{prop 5.17 proof eq2}). Formula (\ref{Zglob formula}) follows.
\end{proof}


\begin{cor}\label{cor: Tphi^* Zglob=Z+Delta(...)}
    Global partition function $Z^\mr{glob}$ is related to the perturbative partition function $\ul{Z}$ by
    \begin{equation}
        (T\ul\varphi^* Z^\mr{glob})(A,\sfa)=\ul{Z}_A(\sfa)+i\hbar\Delta_\sfa\rho(A,\sfa).
    \end{equation}
\end{cor}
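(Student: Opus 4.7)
The plan is to invoke the universal characterization of the Grothendieck connection associated with $\ul\varphi$: a section of the bundle of formal fiberwise half-densities on $\mathbb{T}\M'$ over $\M'$ is $\nabla^G$-horizontal if and only if it is the Taylor expansion $T\ul\varphi^* F$ of a global half-density $F$ on $T^*[-1]\M'$, and in this case $F$ is recovered by restricting the section to the zero-section $\sfa=0$. This is the half-density version of the characterization recalled in the remark following Definition \ref{def: partial Grothendieck connection} (cf.\ \cite{Bonechi2012}); it is applicable here since, under the 1-extended smoothness assumption, Corollary \ref{cor: partial G = nabla G} identifies the connection used to build $Z^\mr{mod}$ with the Grothendieck connection on $\M'$, and in the $H^2_A = T^*[-1]_{[A]}\M'$ direction the cohomology bundle is canonically trivial so that no additional transport data enter.

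First I apply this universal property to $Z^\mr{mod}$ from Theorem \ref{thm 5.15}. By construction $\nabla^G Z^\mr{mod} = 0$, so there exists a unique $F \in \mr{Dens}^{\frac12}(T^*[-1]\M')$ with $Z^\mr{mod} = T\ul\varphi^* F$.

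Second, I identify $F$ by restricting both sides of $Z^\mr{mod} = T\ul\varphi^* F$ to the zero-section $\sfa=0$. Since $\ul\varphi$ restricts to the identity on the zero-section and its differential there is the identity on $T\M'$, the Taylor pullback $T\ul\varphi^* F$ evaluated at $\sfa=0$ is simply $F$. Thus $F = Z^\mr{mod}\big|_{\sfa=0} = Z^\mr{glob}$ by the very definition of the global partition function.

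Combining these two steps with the defining identity $Z^\mr{mod} = \ul{Z} + i\hbar\Delta_\sfa \rho$ of Theorem \ref{thm 5.15} yields
\begin{equation*}
  (T\ul\varphi^* Z^\mr{glob})(A,\sfa) \;=\; Z^\mr{mod}(A,\sfa) \;=\; \ul{Z}_A(\sfa) + i\hbar\Delta_\sfa\rho(A,\sfa),
\end{equation*}
which is the claim. The only step that deserves genuine scrutiny is the first one — verifying that the Taylor-expansion bijection between $\nabla^G$-horizontal formal sections and global sections carries through for half-densities on $\mathbb{T}\M'$ rather than merely functions — but this is a standard adaptation and, given Proposition \ref{prop: Zglob formula} (which explicitly realizes the inverse operation $F = \mathsf{p}'\til Z$ on the level of Feynman-diagrammatic representatives), no further computation is required.
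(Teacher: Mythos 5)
Your proof is correct and follows essentially the same route as the paper: the paper simply reads off the identity from equation (\ref{thm 5.17 proof eq1}) restricted to form degree zero, using that $Z^\mr{mod}=\mathsf{i}\mathsf{p}'\til{Z}$ with $\mathsf{i}=T\ul\varphi^*$ and $Z^\mr{glob}=\mathsf{p}'\til{Z}$, while you recover the same fact by invoking the bijection between $\nabla^G$-horizontal sections and Taylor expansions of global objects and then identifying the global representative by restriction to the zero section. Both arguments rest on exactly the same input (Theorem \ref{thm 5.15} and the SDR data $(\mathsf{i},\mathsf{p}',\mathsf{K}')$), so there is no substantive difference.
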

\begin{proof}
    This is an immediate consequence of (\ref{thm 5.17 proof eq1}) restricted to form degree zero along $\M'$.
\end{proof}

\begin{rem}[A path integral formula for $Z^\mr{glob}$]\label{rem: PI formula for Zglob}
    Formula (\ref{Zglob formula}) can be seen as the perturbative evaluation of the following path integral:\footnote{
    The sign convention for the Berezinian in the outer integral is: $\mc{D}\sfa^2\,\mc{D}\zeta=\prod_a \mc{D}\sfa^2_a\,\mc{D}\zeta^a$, for $(\sfa^2_a,\zeta^a)$ coordinates on $H^2_A[-1]\oplus H^1_A[1]$ associated to some basis $\chi_a$ in $H^1_A$ and the dual basis $\chi^a$ in $H^2_A$.
    }
\begin{multline}\label{Zglob path integral formula}
    Z^\mr{glob}(A)=
    \int_{H^2_A[-1]\oplus H^1_A[1]
    }
    \mc{D} \sfa^2\, \mc{D} \zeta 
    \int_{\mc{L}=\Omega_{d_A^*\mr{-ex}}[1]} \mc{D} \alpha_\mr{fl}\\
    \exp \frac{i}{\hbar}\Bigg(S_{CS}(A+i(\sfa^2)+\alpha_\mr{fl})+
    \langle \zeta,\sfa^2\rangle 
    \\
    + \int_M \frac12 \langle \alpha_\mr{fl},d_A G \mr{ad}^*_{i(\zeta)} \alpha_\mr{fl} \rangle +\langle \alpha_\mr{fl},d_A G \mr{ad}^*_{i(\zeta)} i(\sfa^2) \rangle\Bigg).
\end{multline}
The last two terms can also be written as 
\begin{equation}\label{Zglob interaction vertex}
\int_M \frac12 \langle i(\sfa^2)+\alpha_\mr{fl}, H_{\delta A'=i(\zeta)}(i(\sfa^2)+\alpha_\mr{fl}) \rangle,
\end{equation}
with $H_{\delta A'}$ as in (\ref{H delta A'}). Note that in the integral formula (\ref{Zglob path integral formula}), $\sfa^2$ and $\zeta=[\delta A]$ become dynamical variables (integrated over).
\begin{figure}
    \centering
    \includegraphics[scale=0.75]{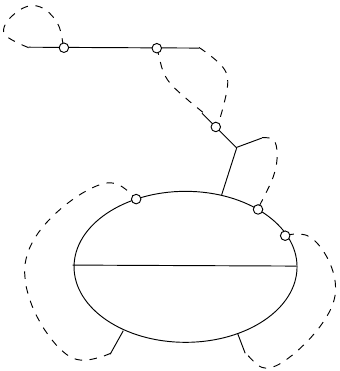}
    \caption{
    Example of a Feynman graph for $Z^\mr{glob}$ (\ref{Zglob path integral formula}). Dashed edges correspond to $\sfa^2$--$\zeta$ propagators; white vertices correspond to (\ref{Zglob interaction vertex}). Selection rules: $\leq 2$ white vertices on a solid edge, $\leq \dim \M'$ dashed edges in total. A solid edge not incident to Chern-Simons cubic vertices should have exactly two white vertices (as in the top part of the picture).
    }
    \label{fig:Zglob graph}
\end{figure}


The Feynman graph expansion of $Z^\mr{glob}$ has the form
\begin{multline}\label{Zglob Feynman diagram expansion}
    Z^\mr{glob}(A)=
    e^{\frac{i}{\hbar}S_{CS}(A)}e^{\frac{\pi i}{4}\psi_A}\tau_A^{1/2}\Big(1+
    \vcenter{\hbox{\includegraphics[scale=0.4]{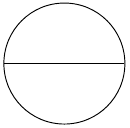}}}+
    \vcenter{\hbox{\includegraphics[scale=0.55]{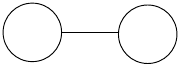}}}\\
    +
     \vcenter{\hbox{\includegraphics[scale=0.4]{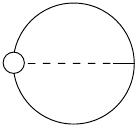}}}+
    \vcenter{\hbox{\includegraphics[scale=0.55]{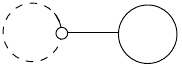}}}+
     \vcenter{\hbox{\includegraphics[scale=0.4]{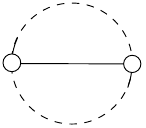}}}+
    \vcenter{\hbox{\includegraphics[scale=0.55]{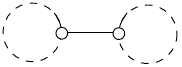}}}+
    \cdots
    \Big).
\end{multline}
The graphs shown contribute in the order $O(\hbar)$ and $\cdots$ is of order $\geq 2$ in $\hbar$, with graphical conventions as in Figure \ref{fig:Zglob graph}. 
\end{rem}

\subsection{Metric dependence of the global partition function}\label{ss Zglob metric independence}

To define $Z^\mr{glob}$ we needed to choose a metric $g$ on $M$. In this section we analyze the dependence of $Z^\mr{glob}$ on this metric. We have shown previously that $Z^\mr{glob}$ can be interpreted either as a top form on $\M'$ or, equivalently, a half-density on $T^*[-1]\M'$ that does not depend on the fiber coordinates (has degree zero). As such it is trivially closed (w.r.t. de Rham differential or BV Laplacian).  

Let 
\begin{equation}
    Z^\mr{glob,ren}_{g,\phi}(A)\colon= e^{\frac{i}{\hbar}c(\hbar)\frac{S_\mr{grav}(g,\phi)}{2\pi}} Z^\mr{glob}_g(A)
\end{equation}
be the renormalized global partition function, with $c(\hbar)$ as in (\ref{eq: ren Z}) and $\phi$ a framing of $M$.


The main result of this section is the following theorem. 
\begin{thm}\label{thm: coho class invariant}
    The cohomology class of $Z_{g,\phi}^\mr{glob,ren}$ is independent of the choice of metric $g$.
\end{thm}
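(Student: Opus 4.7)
The plan is to extend the construction of $Z^\mr{glob}$ from Section \ref{ss Zglob} to incorporate metric variations, producing a nonhomogeneous differential form $\check{Z}^\mr{glob}$ on $\mr{Met}$ valued in top forms on $\M'$, whose degree-zero part along $\mr{Met}$ recovers $Z^\mr{glob,ren}_{g,\phi}$ and whose dQME gives the metric invariance of its cohomology class.

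First, I would restrict the extended partition function $\check{Z}$ of Section \ref{ss: full extension of Z} to the submanifold of $\UU$ cut out by $A' = A$ with $\delta A'$ a harmonic shift of $A$, allowing $g$ and $\delta g$ to vary freely. Using Corollary \ref{cor: partial G = nabla G}, the partial Grothendieck connection $\til\nabla^G$ reduces modulo gauge to the Grothendieck connection $\nabla^G$ on $\M'$; moreover, the curvature (\ref{F_nabla^H}) of $\nabla^\mathbb{H}$ evaluated on harmonic $\delta A$ plus arbitrary $\delta g$ vanishes by Lemma \ref{lemma: nabla^Hodge}(\ref{lemma: nabla^Hodge (c)}) together with the metric-extended smoothness assumption. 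Thus the (gauge-reduced) renormalized restriction $\til{\check Z}^\mr{ren}$ is a form on $\M' \times \mr{Met}$ valued in formal half-densities on $H_A[1]$ satisfying
\begin{equation}
    \big((\nabla^G + \delta_g) - i\hbar\,\Delta_\sfa\big)\,\til{\check Z}^\mr{ren} = 0.
\end{equation}

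Next I would run the chain-homotopy construction of Steps 2--3 in the proof of Theorem \ref{thm 5.15} in this enlarged setting. The SDR data $(\mathsf{i},\mathsf{p},\mathsf{K})$ contracting $(\Omega^\bt(\M',\mr{Dens}^{\frac12,\mr{formal}}(H^\bt_A[1])),\nabla^G)$ onto $\mr{Dens}^{\frac12}(T^*[-1]\M')$ extends trivially along $\mr{Met}$. Deforming the differential by $\delta_g - i\hbar\,\Delta_\sfa$ and applying the homological perturbation lemma yields deformed SDR data $(\mathsf{i},\mathsf{p}',\mathsf{K}')$ contracting the enlarged complex onto $(\Omega^\bt(\mr{Met},\mr{Dens}^{\frac12}(T^*[-1]\M')), \delta_g - i\hbar\,\Delta_{\M'})$, with $\mathsf{i}$ preserved because $\mathsf{K}$ only lowers form degree along $\M'$. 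Setting
\begin{equation}\label{Zcheckglob formula}
    \check{Z}^\mr{glob} := \mathsf{p}'\,\til{\check Z}^\mr{ren}
\end{equation}
produces a form on $\mr{Met}$ valued in half-densities on $T^*[-1]\M'$, closed with respect to $\delta_g - i\hbar\,\Delta_{\M'}$; by the argument of Proposition \ref{prop: Zglob formula}, its $\mr{Met}$-degree-zero component is exactly $Z^\mr{glob,ren}_{g,\phi}$.

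Finally, reading the closedness of $\check{Z}^\mr{glob}$ in $\mr{Met}$-degree $1$ gives $\delta_g Z^\mr{glob,ren}_{g,\phi} = i\hbar\,\Delta_{\M'} R^\mr{glob}$, with $R^\mr{glob}$ the $\mr{Met}$-degree-$1$ component of $\check{Z}^\mr{glob}$. Under the identification $\mr{Dens}^{\frac12}_\mr{base}(T^*[-1]\M')\cong \Omega^\mr{top}(\M')$ (cf. (\ref{Zglob naive intro})), this reads $\delta_g Z^\mr{glob,ren}_{g,\phi} = i\hbar\, d_{\M'} R^\mr{glob}$, so $[Z^\mr{glob,ren}_{g,\phi}]\in H^\mr{top}(\M')$ is locally constant in $g$; since $\mr{Met}$ is contractible, integrating along any path $g_t$ from $g_0$ to $g_1$ shows the class is independent of the metric. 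The main obstacle will be the first step: one must verify that the curvature term $\frac{i}{\hbar}\frac12\langle \sfa, F_{\nabla^\mathbb{H}}\sfa\rangle$ in (\ref{dQME on Zcheck}) genuinely drops out on the restricted slice, which is precisely what metric-extended smoothness (Definition \ref{def: metric-extended smoothness}) is designed to guarantee; a secondary technical point is ensuring that this condition is preserved along paths of metrics so that the homological perturbation step can be performed globally, and otherwise one must subdivide the path and glue.
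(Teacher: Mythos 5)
Your proposal is correct and follows essentially the same route as the paper: restrict $\check{Z}$ to the diagonal slice $A'=A$ with harmonic $\delta A'=\delta A$ (with the curvature term killed by metric-extended smoothness), obtain $(\delta_g+\nabla^G-i\hbar\Delta_\sfa)\overline{Z}^{\mathrm{ren}}=0$, push forward through the HPL-deformed contraction $\mathsf{p}'$ tensored with $\Omega^\bullet(\mathrm{Met})$, and read off the degree-one component in $\mathrm{Met}$ as the exactness statement $\delta_g Z^{\mathrm{glob,ren}}=i\hbar\Delta_{\M'}\check{Z}^{\mathrm{glob,ren}(1)}$. This is exactly the content of Proposition \ref{prop: properties of Zcheckglob} and its proof; your closing worry about preserving metric-extended smoothness along paths is handled in the paper simply by taking it as a standing assumption throughout Section \ref{ss Zglob metric independence}.
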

\begin{defn}
    We call the cohomology class $[Z^\mr{glob,ren}_{g,\phi}] \in H^\mr{top}(\M')$ the \emph{Chern-Simons volume class} on $\M'$. 
\end{defn}


Consider $\overline{Z}^\mr{ren}$ defined by (\ref{olZren def}) 
and let 
\begin{equation}
    \check{Z}^\mr{glob,ren}\colon=\mathsf{p}' \overline{Z}^\mr{ren} 
    \quad \in \Omega^\bt(\mr{Met})\otimes\mr{Dens}^{\frac12}(T^*[-1]\M')
\end{equation}
with $\mathsf{p}'$ as in (\ref{(i',p',K')}). Note that $\check{Z}^\mr{glob,ren}$ is an extension of $Z^\mr{glob,ren}$ to a nonhomogeneous form on $\mr{Met}$; we denote its $k$-form component by $\check{Z}^{\mr{glob,ren}(k)}$. 

\begin{prop}\label{prop: properties of Zcheckglob}
$\check{Z}^\mr{glob,ren}$ satisfies the following. 
\begin{equation}\label{Zcheckglob killed by delta_g+Delta}
(\delta_g-i\hbar \Delta)\check{Z}^\mr{glob,ren}=0,
\end{equation}
\begin{equation}\label{delta_g Zglob = Delta(...)}
\delta_g Z^\mr{glob,ren} = i\hbar\Delta \check{Z}^{\mr{glob,ren}(1)},
\end{equation}
\begin{equation}\label{Zcheckglob formula}
\begin{aligned}
    \check{Z}^\mr{glob,ren} &= \sum_{k=0}^{\dim \M'}\frac{(-i\hbar)^k}{k!} \left\langle \frac{\partial}{\partial \sfa^2},\frac{\partial}{\partial [\delta A]} \right\rangle^k \overline{Z}^{\mr{ren}(\bt,k)}\Big|_{\sfa^1=0} \\
    &= \left(e^{-i\hbar \left\langle \frac{\partial}{\partial \sfa^2},\frac{\partial}{\partial [\delta A]} \right\rangle}\overline{Z}^\mr{ren}\right)\Big|_{\sfa^1=[\delta A]=0}.
\end{aligned}
\end{equation}
Here $\Delta$ is the BV Laplacian on half-densities on $T^*[-1]\M'$. Superscript $(\bt,k)$ means the component of de Rham degree $k$ along $\M'$ (and arbitrary degree along $\mr{Met}$).
\end{prop}
Note that (\ref{delta_g Zglob = Delta(...)}) immediately implies Theorem \ref{thm: coho class invariant}.

\begin{proof}
First, recall from Lemma \ref{lemma 5.11} that $\overline{Z}^\mr{ren}$ satisfies
\begin{equation}
    (\delta_g+\nabla^G+\{\mu,-\}+\Delta_\sfa\mu-i\hbar \Delta_\sfa) \overline{Z}^\mr{ren}=0.
\end{equation}
Next, consider the differential perturbation of the contraction
\begin{multline}
    (\mathsf{i},\mathsf{p},\mathsf{K})\colon (\Omega^\bt(\mr{Met}\times \M',\mc{D},\nabla^G)  \\ \leadsto 
    (\Omega^\bt(\mr{Met})\otimes \mr{Dens}^{\frac12}(T^*[-1]\M'),\mr{zero\; differential}),
\end{multline}
deforming $\nabla^G\ra \nabla^G+\underbrace{\delta_g+\{\mu,-\}+\Delta_\sfa\mu-i\hbar \Delta_\sfa}_\varkappa$. By homological perturbation lemma, we obtain the deformed contraction
\begin{multline}\label{(i'',p'',K'')}
    (\mathsf{i}'',\mathsf{p}''=\mathsf{p}',\mathsf{K}'')\colon (\Omega^\bt(\mr{Met}\times \M',\mc{D}),
    \nabla^G+\varkappa
    )  \\ \leadsto 
    (\Omega^\bt(\mr{Met})\otimes \mr{Dens}^{\frac12}(T^*[-1]\M'),\delta_g-i\hbar\Delta).
\end{multline}
We remark that:
\begin{enumerate}[(i)]
    \item The induced differential in (\ref{(i'',p'',K'')}) is 
    $\sum_{k\geq 0}\mathsf{p}\varkappa (-\mathsf{K}\varkappa)^k \mathsf{i}$. Since the image of $\mathsf{i}$ is in zero-forms on $\M'$, $\mathsf{K}$ reduces the form degree on $\M'$ by one and $\varkappa$ does not change the form degree on $\M'$, all terms with $k>0$ vanish and the induced differential is $\mathsf{p}\varkappa \mathsf{i}= \delta_g-i\hbar\Delta$.
    \item One has $\mathsf{p}''=\mathsf{p'}$ -- the chain projection in (\ref{(i',p',K')}) extended trivially along $\mr{Met}$. Indeed, one has
    \begin{equation}\label{p''}
    \mathsf{p}''=\sum_{k\geq 0}\mathsf{p} (-\varkappa \mathsf{K})^k.
    \end{equation}
    Since $\mathsf{p}$ evaluates the constant term in $\sfa^1$, $\mathsf{K}$ increases the degree in $\sfa^1$ and $\varkappa=-i\hbar\Delta_\sfa$ plus terms that do not decrease the degree in $\sfa^1$, in (\ref{p''}) one can replace $\varkappa$ with $-i\hbar\Delta_\sfa$, leading to $\mathsf{p}''=\mathsf{p}'$.
\end{enumerate}

Since $\mathsf{p}'$ is a chain map, it sends the cocycle $\overline{Z}^\mr{ren}$ of the complex upstairs in (\ref{(i'',p'',K'')}) to a cocycle $\check{Z}^\mr{glob,ren}$ of the complex downstairs. This proves (\ref{Zcheckglob killed by delta_g+Delta}).

Equation (\ref{delta_g Zglob = Delta(...)}) is the restriction of (\ref{Zcheckglob killed by delta_g+Delta}) to form degree $1$ on $\mr{Met}$.

Formula (\ref{Zcheckglob formula}) for $\check{Z}^\mr{glob,ren}$ is proven similarly to Proposition \ref{prop: Zglob formula}.
\end{proof}

\begin{rem}
    The path integral formula for $Z^\mr{glob}$ from Remark \ref{rem: PI formula for Zglob} extends -- via (\ref{Zcheckglob formula}) -- to the extended global partition function $\check{Z}^\mr{glob,ren}$ as follows:
\begin{multline}\label{Zcheckglob path integral formula}
    \check{Z}^\mr{glob,ren}(A,\sfa^2_0)=
    e^{\frac{i}{\hbar}c(\hbar)\frac{S_\mr{grav}(g,\phi)}{2\pi}}
    \int_{H^2_A[-1]\oplus H^1_A[1]
    }
    \mc{D} \sfa^2_{\mr{fl}}\, \mc{D} \zeta 
    \int_{\mc{L}=\Omega_{d_A^*\mr{-ex}}[1]} \mc{D} \alpha_\mr{fl}\\
    \exp \frac{i}{\hbar}\Big(S_{CS}(A+i(\sfa^2)+\alpha_\mr{fl})+\langle \zeta,\sfa^2_{\mr{fl}}\rangle \\
    + \int_M 
    \underbrace{\frac12 \langle \alpha_\mr{fl},d_A G \mr{ad}^*_{i(\zeta)} \alpha_\mr{fl} \rangle +\langle \alpha_\mr{fl},d_A G \mr{ad}^*_{i(\zeta)} i(\sfa^2) \rangle}_{-\frac12\langle i(\sfa^2)+\alpha_\mr{fl},H_{\delta A'=i(\zeta)}(i(\sfa^2)+\alpha_\mr{fl}) \rangle}
    \underbrace{-\frac12 \langle\alpha_\mr{fl},\lambda_{\delta g} \alpha_\mr{fl}\rangle - 
    \langle\alpha_\mr{fl},\lambda_{\delta g} i(\sfa^2)\rangle}_{-\frac12\langle i(\sfa^2)+\alpha_\mr{fl},H_{\delta g}(i(\sfa^2)+\alpha_\mr{fl}) \rangle}
\Big)\Bigg|_{\sfa^2=\sfa^2_0+\sfa^2_\mr{fl}}
\end{multline}
with $H_{\delta A'}$ and $H_{\delta g}$ as in (\ref{H delta A'}), (\ref{H delta g}); $\sfa^2_0$ is interpreted as a vector in $T^*_A[-1]\M'$.

In particular, the Feynman diagram expansion of the generator $\check{Z}^{\mr{glob,ren}(1)}$ in the r.h.s. of (\ref{delta_g Zglob = Delta(...)}) is
\begin{multline}\label{Zglobren (1) Feynman diagram expansion}
    \check{Z}^{\mr{glob,ren}(1)}=\\
    e^{\frac{i}{\hbar}c(\hbar)\frac{S_\mr{grav}(g,\phi)}{2\pi}}\cdot
    e^{\frac{i}{\hbar}S_{CS}(A)}e^{\frac{\pi i}{4}\psi_{A,g}}\tau_A^{1/2}\left(
    \vcenter{\hbox{\includegraphics[scale=0.45]{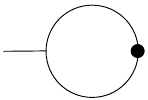}}}+
    \vcenter{\hbox{\includegraphics[scale=0.45]{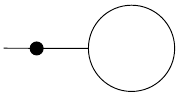}}}+
     \vcenter{\hbox{\includegraphics[scale=0.45]{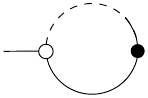}}}+
    \vcenter{\hbox{\includegraphics[scale=0.45]{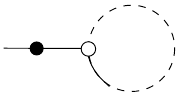}}}+
    \cdots
    \right).
\end{multline}
Here the graphical conventions are as in Figure \ref{fig:Zglob graph}; loose half-edges are decorated by $i(\sfa^2_0)$; black circle vertex is decorated by $H_{\delta g}$. The graphs shown contribute in zeroth order in $\hbar$ and $\cdots$ is of order $\geq 1$ in $\hbar$.

\end{rem}


\subsubsection{Relation to the asymptotic expansion conjecture}
Now fix $\g = \mathfrak{su}(N)$ and denote $\tau_{k,N}$ the $(SU(N)$-Reshetikhin-Turaev invariants \cite{Reshetikhin1991}.
We recall the statement of the asymptotic expansion conjecture, which we cite from \cite[Conjecture 7.7]{Andersen2002}
\begin{conj}
 Let $\{c_0,\ldots,c_m\}$ be the Chern-Simons invariants of $M$. Then there exist  $d_j \in \mathbb{Q}, \tilde{I}_j 
\in \mathbb{Q}/\mathbb{Z}, v_j \in \mathbb{R}_+$ and $a_j^e$ for $j = 0,\ldots,m$ and $e \in \mathbb{N}$ such that for $r = k + h^\vee$:

    \begin{equation} 
        \tau_{k,N} \underset{k\ra\infty}{\sim}\sum_{j=0}^m e^{\frac{i r c_j}{ 2\pi}}r^{d_j}e^{\frac{i \pi}{4} \tilde{I}_j}v_j 
        \exp\sum_{e=1}^\infty a_j^e \left( \frac{r}{2\pi}\right)^{-e}. 
    \end{equation}
\end{conj}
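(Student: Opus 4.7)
The plan is to realize the right-hand side of the conjectured asymptotic expansion as the sum over connected components of $\M'$ (and strata in $\M\setminus\M'$ corresponding to reducible/singular flat connections) of the integrals $\int_{\M'_\alpha} Z^{\mr{glob,ren}}$, with $k + h^\vee$ playing the role of $\frac{2\pi}{\hbar}$. The first step would be to
set up a rigorous stationary phase analysis for the Chern-Simons path integral that defines $\tau_{k,N}$ in such a way that, around each $[A_0]\in \M$, the contribution is identified with the BV pushforward $P_*(e^{\frac{i}{\hbar}S_{CS}}\mu_0)$ restricted to a tubular neighborhood of $[A_0]$. Using the BV-pushforward philosophy of Section 1 together with Theorem \ref{thm: coho class invariant}, each connected component $\M'_\alpha\subset \M'$ should then contribute a summand of the form $e^{\frac{ic_\alpha r}{2\pi}}\cdot r^{d_\alpha}\cdot e^{\frac{i\pi}{4}\til{I}_\alpha}\cdot v_\alpha\cdot \exp\sum_{e\geq 1} a_\alpha^e r^{-e}$, where $r=k+h^\vee$, $c_\alpha=S_{CS}|_{\M'_\alpha}$, $\til{I}_\alpha$ is a combination of the eta invariant and the framing correction (locally constant on $\M'_\alpha$), $v_\alpha=\int_{\M'_\alpha}\tau_{A_0}^{1/2}\cdot (\mr{normalization})$, and the $a_\alpha^e$ come from the higher-loop Feynman diagram corrections in \eqref{Zglob Feynman diagram expansion}.

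The second step is to verify that this expansion is well-defined: that is, that $\int_{\M'_\alpha} Z^{\mr{glob,ren}}$ is \emph{finite}, and independent of auxiliary choices. Finiteness is an analytic question about the behavior of the integrand at the boundary $\partial \M'_\alpha\subset \M\setminus\M'$, where either $H^0_{A_0}\neq 0$ (reducible locus) or the obstruction $l'_2,l'_3,\ldots$ are nonzero (singular locus); near such strata the torsion $\tau_{A_0}^{1/2}$ generically blows up but one would like to show that — after including the Euler-characteristic/volume corrections of type $\int e^{k\omega_{AB}}\mr{Td}$ — the Feynman weights compensate.  Independence of the metric is provided directly by Theorem \ref{thm: coho class invariant} on the cohomology class, and independence of SDR data by Theorem \ref{intro thm A}(c)/Corollary \ref{cor: Tphi^* Zglob=Z+Delta(...)}.

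The third step is to match the perturbative and non-perturbative normalizations, taking the motivating computation of Section \ref{sec:intro example} as a template. One would need to pin down (a) the overall prefactor $|Z(G)|^{-1}(2\pi\hbar)^{-N_\alpha}$ (with $N_\alpha=\frac12\dim\M_\alpha'$), (b) the correct 2-framing (canonical) entering $S_\mr{grav}$, and (c) the power series $c(\hbar)$ in \eqref{eq: c(hbar)}, which experimentally truncates at one loop. Matching the leading order against the Verlinde formula (as in \eqref{eq: intro ex 1}) in more examples ($\Sigma\times S^1$ being done, the next targets being Seifert fibered spaces and rational homology spheres) gives strong evidence for the identifications.

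The hard part will be steps (2) and the contribution from the \emph{non-smooth} strata of $\M$. For irreducible components of $\M'$ the theory developed here applies directly, but the globally defined trivial connection, flat $U(1)$-reductions, and singular points where the $L_\infty$-obstructions $l'_n$ are nonzero require a genuine extension of the BV formalism — presumably via a resolution of singularities of $\M$ by Kuranishi models, and a careful study of the boundary contributions from hidden strata in the Fulton–MacPherson compactification as one approaches these loci. In parallel, the analytic convergence of $\int_{\M'_\alpha}$ near reducible/singular strata, and the comparison with the Kontsevich–Kuperberg–Thurston–Lescop / LMO approach on rational homology spheres, will require genuinely new input, and should be the main obstacle to obtaining an unconditional proof of the conjecture from the construction given here.
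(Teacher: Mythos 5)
The statement you are asked to prove is not a theorem of the paper at all: it is the Asymptotic Expansion Conjecture, which the authors explicitly \emph{cite} from the literature (Andersen's survey) and leave open. The paper's contribution is to \emph{construct} the candidate right-hand side — the cohomology class $[Z^{\mr{glob,ren}}_{g,\phi}]$ on $\M'$ and its integral — and to \emph{conjecture} that the coefficients $a_j^e$ are the $(e+1)$-loop contributions to $\log\int_{\M_j}Z^{\mr{glob,ren}}_{g,\phi}$, with comparison deferred to future work. So your text cannot be compared against "the paper's proof"; there is none, and your proposal does not supply one either.

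The central gap, which your plan elides in step one, is that $\tau_{k,N}$ is defined combinatorially (via quantum groups and surgery presentations), not as a path integral, so there is no rigorous stationary-phase analysis to "set up": relating the Reshetikhin--Turaev invariant to \emph{any} BV pushforward is precisely the open problem, and no amount of machinery from this paper (horizontality of $Z$, metric independence of $[Z^{\mr{glob,ren}}]$, the $\Sigma\times S^1$ check) bridges that. Your steps two through four then correctly enumerate the remaining obstacles — finiteness of $\int_{\M'_\alpha}$ near the reducible and singular strata, contributions of the trivial connection and other non-smooth points, normalization and framing — but these are exactly the issues the authors state they defer to future work, and you offer no argument for any of them. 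What you have written is a reasonable research program consistent with the authors' own stated intentions, but it establishes nothing; every step that would constitute mathematical content is either the known open problem restated or an acknowledged gap.
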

Other forms of this conjecture have appeared in the literature, for instance in  \cite[Section 6]{Reshetikhin2010}. 
We conjecture that if $c_j$ comes from a union of smooth, irreducible components of the moduli space, then the $j$-th summand above coincides with  the integral of Chern-Simons volume class over that preimage. More precisely, fix $\phi$ to be the canonical 2-framing of $M$ and denote $\M_j = S_{CS}^{-1}(c_j)$. If $\M_j \subset \M'$,  then we conjecture that $a_j^e$ is given by the  coefficient of $\hbar^e$ (contribution of connected $(e+1)$-loop graphs) in  $\log\int_{\M_j}Z^\mr{glob,ren}_{g,\phi}$. 
This assumes that at higher loop orders one has to identify $\hbar =
\frac{2\pi}{k + h^\vee}
$ (as discussed by Axelrod-Singer \cite[Section 6]{Axelrod1991}).

An interesting class of examples where this conjecture could potentially be checked are Seifert fibered homology spheres. For these, the only reducible connection is the trivial one, all other components of the moduli space are closed manifolds (\cite{Fintushel1990}). For this class, the asymptotic expansion conjecture has recently been proven in \cite{Andersen2025}, where the authors show that the asymptotic expansion is given in terms of integrals over the smooth components of the moduli space. Comparison with this and other results will be addressed in future work.






\appendix

\section{SDR data and homological perturbation lemma}\label{app: SDR}
Here for reader's convenience we review the definition of SDR (strong deformation retraction)  data and the homological perturbation lemma, both well-known in the literature -- see e.g. \cite{gugenheim1989perturbation}, \cite{Crainic2004}.

\begin{defn}
Let $(V^\bt,d_V)$ and $(W^\bt,d_W)$ be a pair of cochain complexes. \emph{SDR data} (or an \emph{$(i,p,K)$ triple}) is a triple of maps 
\begin{equation}
    i\colon W^\bt \ra V^\bt,\quad p\colon V^\bt\ra W^\bt,\quad K\colon V^\bt\ra V^{\bt-1}
\end{equation}
such that:
\begin{itemize}
    \item $i$ and $p$ are chain maps: $d_V i=i d_W$, $d_W p=p d_V$.
    \item $i$ is an inclusion and $p$ a projection, satisfying $pi=\mr{id}_W$.
    \item $K$ is a chain homotopy between $ip$ and $\mr{id}_V$: $d_V K+K d_V=\mr{id}-ip$.
    \item The following side conditions hold: $K^2=Ki=pK=0$.
\end{itemize}
\end{defn}
In particular, existence of SDR data implies that complexes $(V^\bt,d_V)$ and $(W^\bt,d_W)$ are quasi-isomorphic (with $i$ and $p$ quasi-isomorphisms); one calls $W^\bt$ a \emph{deformation retract} of $V^\bt$. 

An important special case is when $(W^\bt,d_W)=(H^\bt(V),0)$ is the cohomology of $V$. 

A choice of SDR data induces a Hodge-like decomposition
\begin{equation}\label{App A Hodge decomp}
    V=i(W)\oplus \Big(\mr{im}(d_V)\cap \mr{ker}(p)\Big) \oplus \mr{im}(K).
\end{equation}
Here $d_V$ acts on the first term and maps the third term to the second isomorphically, with $K$ the inverse.

\begin{lem}[Homological perturbation lemma]\label{lem: HPL}
Let $(V^\bt,d_V)$ and $(W^\bt,d_W)$ be a pair of complexes with SDR data $(i,p,K)$.
Consider a perturbation of the differential on $V$, $d_V \ra \til{d}_V=d_V+\delta$, for some $\delta\colon V^\bt\ra V^{\bt+1}$ such that $(d_W+\delta)^2=0$. Then the perturbed complex $(V^\bt,d_V+\delta)$ is quasi-isomorphic to $(W^\bt,\til{d}_W 
)$ with SDR data
$(\til{i},\til{p},\til{K})$, where
\begin{eqnarray}
    \til{d}_W&=& d_W+p\delta i-p\delta K \delta i + p\delta K\delta K \delta i-\cdots,\\
    \til{i}&=& i-K\delta i+K\delta K\delta i-\cdots, \\
    \til{p}&=& p-p\delta K + p\delta K\delta K - \cdots, \\
    \til{K}&=& K-K\delta K+K\delta K\delta K-\cdots,
\end{eqnarray}
under the assumption that the geometric progressions above converge.
\end{lem}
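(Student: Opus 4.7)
The plan is to repackage the geometric series as formal inverses, establish two ``collapse'' identities that make the side conditions and $\tilde{p}\tilde{i} = \mr{id}_W$ almost tautological, and then verify the chain map and homotopy axioms by a telescoping computation in which the perturbed master equation $(d_V+\delta)^2 = 0$ is the only non-trivial input.

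First I would introduce the operators $X := (\mr{id}+K\delta)^{-1} = \sum_{n\geq 0}(-K\delta)^n$ and $Y := (\mr{id}+\delta K)^{-1} = \sum_{n\geq 0}(-\delta K)^n$ acting on $V^\bt$. The identity $(\mr{id}+K\delta)K = K(\mr{id}+\delta K)$ immediately gives $XK = KY$, so in this notation $\tilde{i} = Xi$, $\tilde{p} = pY$, $\tilde{K} = XK = KY$, and $\tilde{d}_W = d_W + p\delta\tilde{i}$. The two collapse identities $pX = p$ and $Yi = i$ follow from $pK = 0$ and $Ki = 0$, since these annihilate every term of the series past the zeroth. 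From them all the side conditions are immediate: $K\tilde{i} = KXi = Ki = 0$, so $\tilde{K}\tilde{i} = X(K\tilde{i}) = 0$; dually $\tilde{p}\tilde{K} = 0$; the identity $K^2 = 0$ yields $\tilde{K}^2 = XK \cdot KY = 0$; and unitality reads $\tilde{p}\tilde{i} = pYXi = pXi = pi = \mr{id}_W$.

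Second I would verify the chain map property $(d_V+\delta)\tilde{i} = \tilde{i}\tilde{d}_W$ (and the dual $\tilde{p}(d_V+\delta) = \tilde{d}_W\tilde{p}$) and the homotopy $[d_V+\delta,\tilde{K}] = \mr{id} - \tilde{i}\tilde{p}$. The key commutation inputs are $[d_V,K] = \mr{id} - ip$ (from the original SDR) and $[d_V,\delta] = -\delta^2$ (forced by the perturbed master equation). Combining them one obtains $[d_V+\delta, K\delta] = \delta(\mr{id}+K\delta) - ip\delta$, and pushing this commutator through the geometric series defining $X$ telescopes into a closed expression for $(d_V+\delta)X - X(d_V+\delta)$; composing with $i$ and recognizing the combination $d_W + p\delta Xi$ on the right yields the desired identity. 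The same telescoping applied to $\tilde{p} = pY$ and to $\tilde{K} = XK$ produces the remaining axioms, and $\tilde{d}_W^2 = 0$ then follows automatically from the formula $\tilde{d}_W = p(d_V+\delta)\tilde{i}$ (a consequence of $pX = p$) together with $(d_V+\delta)^2 = 0$ and the chain map property just established.

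The main technical obstacle is the bookkeeping in this telescoping step. In the unperturbed case one has $[d_V,\delta] = 0$ and the argument is essentially formal; here the term $-\delta^2$ produces $K\delta^2$ corrections at every order of the series, and one must verify that these corrections resum coherently into precisely the $p\delta\tilde{i}$ piece of $\tilde{d}_W$, with signs matching across the alternation $(-K\delta)^n$. The convergence hypothesis is used only to legitimize the rearrangements; in a formal setting (e.g.\ $\delta$ nilpotent, or a formal power series in a parameter, as in all applications in this paper) the identities hold coefficient-by-coefficient with no analytic input.
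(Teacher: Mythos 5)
The paper does not prove this lemma at all: it is stated in Appendix \ref{app: SDR} as a standard fact, with the proof deferred to the cited references (Gugenheim--Lambe--Stasheff, Crainic). Your plan is precisely the standard argument from those references, and it is correct. The repackaging $\til{i}=Xi$, $\til{p}=pY$, $\til{K}=XK=KY$ with $X=(\mr{id}+K\delta)^{-1}$, $Y=(\mr{id}+\delta K)^{-1}$ is right; the collapse identities $pX=p$, $Yi=i$, $KX=K$, $YK=K$ (each a one-line consequence of the side conditions $pK=Ki=K^2=0$) do give all the side conditions of the perturbed data; and your commutator identity $[d_V+\delta,K\delta]=\delta(\mr{id}+K\delta)-ip\delta$ checks out and, via $DX-XD=-X[D,X^{-1}]X=-X\delta+Xip\delta X$, yields the chain-map and homotopy axioms exactly as you describe. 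Two minor points: in the step $\til{p}\,\til{i}=pYXi=pXi$ you are implicitly using $YX=X+Y-\mr{id}$ (the cross terms die by $K^2=0$), or equivalently you could just invoke $pYi=pi$ directly — worth one sentence. Also, the hypothesis ``$(d_W+\delta)^2=0$'' in the statement is a typo for $(d_V+\delta)^2=0$, which you correctly read as the actual input $[d_V,\delta]=-\delta^2$.
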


\subsection{First-order deformations of SDR data}
Consider a deformation retraction of a cochain complex $(V^\bt,d_V)$ onto its cohomology $(W^\bt=H^\bt(V),0)$ and fix SDR data $(i,p,K)$. The Hodge-like decomposition (\ref{App A Hodge decomp}) in this case is
$V=i(W)\oplus V_{d\mr{-exact}}\oplus V_{K\mr{-exact}}$.
\begin{lem}\footnote{See \cite{Mnev2008}, \cite{Cattaneo2008}, \cite{cattaneo2020cellular}.} \label{lem: SDR deformations}
    A general infinitesimal deformation of $(i,p,K)$, in the class of SDR data where $p|_{V_{d\mr{-closed}}}$ is the standard projection of closed elements to cohomology classes, has the form
    \begin{eqnarray}
        i & \ra & i-\varepsilon d_V \mathsf{I} , \\
        p & \ra & p-  \varepsilon \mathsf{P} d_V, \\
        K &\ra & K+\varepsilon ([d_V,\Lambda] + i\mathsf{P} +\mathsf{I}  p),
    \end{eqnarray}
    with $\mathsf{I},\mathsf{P},\Lambda$ arbitrary maps
    \begin{eqnarray}
        \mathsf{I}\colon W^\bt
        &\ra& V^{\bt-1}_{K\mr{-exact}},\\
         \mathsf{P}\colon V^\bt_{d\mr{-exact}}&\ra& W^{\bt-1},\\
         \Lambda\colon V^\bt_{d\mr{-exact}} &\ra& V^{\bt-2}_{K\mr{-exact}}
    \end{eqnarray}
    and $\varepsilon$ the deformation parameter.
\end{lem}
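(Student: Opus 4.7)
The plan is to prove the lemma by verifying two directions: sufficiency, that the stated ansatz produces admissible first-order deformations of SDR data within the stated normalization class; and necessity, that every such deformation can be written in this form for a unique triple $(\mathsf{I},\mathsf{P},\Lambda)$. Throughout the argument I will use the Hodge-like decomposition
\begin{equation*}
V \;=\; i(W) \,\oplus\, V_{d\text{-exact}} \,\oplus\, V_{K\text{-exact}}
\end{equation*}
in which $d_V$ restricts to an isomorphism $V_{K\text{-exact}} \xrightarrow{\sim} V_{d\text{-exact}}$, $ip$ is the projector onto the first summand, $d_V K$ onto the second, and $Kd_V$ onto the third.

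For the sufficiency direction I plug the ansatz into the linearized SDR axioms. The chain-map conditions $d_V(\delta i)=0$ and $(\delta p)d_V=0$ follow immediately from $d_V^{2}=0$; the normalization $\delta(pi)=0$ follows from $d_V i=0$ and $pd_V=0$. The chain-homotopy equation $d_V\,\delta K+(\delta K)d_V+(\delta i)p+i(\delta p)=0$ is the core computation: the contribution of $[d_V,\Lambda]$ kills itself via $d_V^{2}=0$, while $d_V(i\mathsf{P})=0$ and $(\mathsf{I}p)d_V=0$, so that the remaining cross-terms $d_V\mathsf{I}p+i\mathsf{P}d_V$ coming from $d_V(\mathsf{I}p)+ (i\mathsf{P})d_V$ exactly cancel $(\delta i)p+i(\delta p) = -d_V\mathsf{I}p - i\mathsf{P}d_V$. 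The side-condition variations $\delta(K^{2})$, $\delta(Ki)$, $\delta(pK)$ reduce to zero after invoking the range/support conventions: $\mathsf{I}$ lands in $\operatorname{im}K$ so $K\mathsf{I}=0$ by $K^{2}=0$; $\mathsf{P}$ vanishes outside $V_{d\text{-exact}}$, while $K$ lands in $V_{K\text{-exact}}$, so $\mathsf{P}K=0$; and $\Lambda$ is supported on $V_{d\text{-exact}}$ with image in $V_{K\text{-exact}}$, giving $\Lambda i=p\Lambda=\Lambda K=K\Lambda=0$ on the nose.

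For the necessity direction I recover $(\mathsf{I},\mathsf{P},\Lambda)$ from an arbitrary admissible $(\delta i,\delta p,\delta K)$. The normalization $p|_{V_{d\text{-closed}}}=(\text{standard projection})$ forces $\delta p$ to vanish on $V_{d\text{-closed}}=i(W)\oplus V_{d\text{-exact}}$, so $\delta p$ is determined by its restriction to $V_{K\text{-exact}}$; the $d_V$-isomorphism $V_{K\text{-exact}}\xrightarrow{\sim} V_{d\text{-exact}}$ then yields a unique $\mathsf{P}\colon V_{d\text{-exact}}\to W$ with $\delta p=-\mathsf{P}d_V$. Dually, the chain-map condition $d_V(\delta i)=0$ plus $p(\delta i)+(\delta p)i=p(\delta i)=0$ (using $d_V i=0$) forces $\delta i$ to land in $V_{d\text{-closed}}\cap \ker p=V_{d\text{-exact}}$, and the same isomorphism produces $\mathsf{I}\colon W\to V_{K\text{-exact}}$ with $\delta i=-d_V\mathsf{I}$. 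Setting $\delta K':=\delta K-i\mathsf{P}-\mathsf{I}p$, the computations of the first part show that $\delta K'$ satisfies $\{d_V,\delta K'\}=0$, $p\,\delta K'=\delta K'\,i=0$, and $K\delta K'+\delta K'K=0$; hence $\delta K'$ is supported on $V_{d\text{-exact}}$ and takes values in $V_{K\text{-exact}}$, and its anticommutation with $d_V$ forces it to be of the form $[d_V,\Lambda]=d_V\Lambda-\Lambda d_V$ for a unique $\Lambda\colon V_{d\text{-exact}}\to V_{K\text{-exact}}$, explicitly recoverable from the block of $\delta K'$ sending $V_{d\text{-exact}}\to V_{K\text{-exact}}$.

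I expect the principal bookkeeping obstacle to be the clean handling of the range/support conventions for $\mathsf{I}$, $\mathsf{P}$, $\Lambda$ in the sufficiency calculations --- one must keep track of which compositions vanish because an output lies in an ``orthogonal'' summand of the Hodge decomposition --- and, in the necessity direction, the reconstruction of $\Lambda$ from $\delta K'$, which requires showing that the remaining content of $\delta K'$ after subtracting $i\mathsf{P}+\mathsf{I}p$ is pure $[d_V,-]$ on $V_{d\text{-exact}}$. Once these verifications are done, the uniqueness of $(\mathsf{I},\mathsf{P},\Lambda)$ is manifest from the decomposition.
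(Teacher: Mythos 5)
The paper does not actually prove this lemma --- it only cites \cite{Mnev2008}, \cite{Cattaneo2008}, \cite{cattaneo2020cellular} --- so there is no in-paper argument to compare against; I am judging your proof on its own. Your overall strategy (work blockwise in the splitting $V=i(W)\oplus V_{d\mr{-ex}}\oplus V_{K\mr{-ex}}$, check sufficiency by linearizing the axioms, then reconstruct $\mathsf{I},\mathsf{P},\Lambda$ for necessity) is the right one, and the sufficiency direction is essentially correct. One caveat there: the side conditions do not ``reduce to zero'' purely from vanishing compositions. For instance $\delta(Ki)=(\delta K)i+K\delta i=\mathsf{I}pi-Kd_V\mathsf{I}$, which vanishes only because $Kd_V$ acts as the identity on $\mr{im}\,K\supseteq \mr{im}\,\mathsf{I}$; similarly $\delta(K^2)=0$ needs $Kd_V\Lambda=\Lambda=\Lambda d_VK$. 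These projector identities should be stated alongside your list of vanishing products.

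The genuine problem is in the necessity step for $\Lambda$. Your claim that $\delta K':=\delta K-i\mathsf{P}-\mathsf{I}p$ ``is supported on $V_{d\mr{-exact}}$ and takes values in $V_{K\mr{-exact}}$,'' with $\Lambda$ read off from the block $V_{d\mr{-ex}}\to V_{K\mr{-ex}}$, is false and internally inconsistent with your own conclusion: for $\Lambda\colon V_{d\mr{-ex}}\to V_{K\mr{-ex}}$ one has $[d_V,\Lambda]=d_V\Lambda-\Lambda d_V$, whose only nonzero blocks are $V_{d\mr{-ex}}\to V_{d\mr{-ex}}$ (namely $d_V\Lambda$) and $V_{K\mr{-ex}}\to V_{K\mr{-ex}}$ (namely $-\Lambda d_V$); the block $V_{d\mr{-ex}}\to V_{K\mr{-ex}}$ is identically zero. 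Indeed, if $\delta K'$ had only an $V_{d\mr{-ex}}\to V_{K\mr{-ex}}$ component, the relation $d_V\delta K'+\delta K'd_V=0$ together with injectivity of $d_V$ on $V_{K\mr{-ex}}$ would force $\delta K'=0$. The correct argument: the constraints $\delta K'i=0$, $p\,\delta K'=0$, $\{d_V,\delta K'\}=0$ and $\{K,\delta K'\}=0$ kill all blocks of $\delta K'$ except the two diagonal ones $(\delta K')_{EE}$ and $(\delta K')_{CC}$ (writing $E=V_{d\mr{-ex}}$, $C=V_{K\mr{-ex}}$), and tie them together by $(\delta K')_{EE}d_V=-d_V(\delta K')_{CC}$. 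Setting $\Lambda:=K\circ(\delta K')_{EE}=(d_V|_C)^{-1}(\delta K')_{EE}$ then gives $\delta K'=[d_V,\Lambda]$ uniquely, and the $\{K,\delta K'\}=0$ constraint is automatically satisfied rather than imposing a further condition. With this correction the proof is complete.
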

For the applications of this paper, we parametrize the maps $\mathsf{I},\mathsf{P}$ above as 
\begin{equation}
    \mathsf{I}=\mathbb{I}i,\quad \mathsf{P}=p\mathbb{P},
\end{equation}
with
\begin{equation}
    \mathbb{I}\colon V^\bt \ra V^{\bt-1}_{K\mr{-exact}}, \quad 
    \mathbb{P}\colon V^\bt_{d\mr{-exact}}\ra V^{\bt-1}
\end{equation}
arbitrary maps.

\section{Variation of desynchronized Hodge SDR data}

In this section we consider the variation of the SDR data $(i_{A,A'},p_{A,A'},K_{A,A'})$ given by the Hodge decomposition associated to a pair of close flat connections $(A,A')$ and the metric $g$, in the direction of the three parameters $(A,A',g)$.  

Recall that the the metric induces on the complex of $\g$-valued differential forms the pairing
\begin{equation}
\langle \alpha, \beta \rangle_{\Omega^\bullet(M,\g)} = \int_M \langle \alpha, *\beta\rangle_\g
\end{equation}
and associated with it the operator $\dd^*_{A'}$, the formal adjoint of $\dd_{A'}$, the twisted, desynchronized Hodge-de Rham Laplacian $\Delta_{A,A'} := (\dd_{A} + \dd^*_{A'})^2$, the projection $P_{A,A'}$ to $\ker \Delta_{A,A'}$ along $\operatorname{im} \Delta_{A,A'}$, and the Green's operator of the Hodge-de Rham Laplacian, $G_{A,A'} = (\Delta_{A,A'} + P_{A,A'})^{-1}$, satisfying 
$\Delta_{A,A}G_{A,A'} = G_{A,A'}\Delta_{A,A'} = \mathrm{id} - P_{A,A'}$.
Recall that the SDR data  $(i_{A,A'},p_{A,A'},K_{A,A'})$ specified by the Hodge decomposition of the twisted de Rham complex is given by 
\begin{align} 
&i_{A,A'} \colon H^\bullet_A \to \Omega^\bullet,  &i_{A,A'}[\alpha] = P_{A,A'}\alpha \\
&p_{A,A'}\colon \Omega^\bullet \to H^\bullet_{A},  &p_{A,A'}\beta = \left[P_{A,A'}\beta\right] \\ 
&K_{A,A'} \colon \Omega^\bullet \to \Omega^{\bullet-1}, &K_{A,A'}= \dd^*_{A,A'}  G_{A,A'}
\end{align}
for $\alpha$ a $d_{A}$-closed form and $\beta$ any $\g$-valued form. 
\subsection{Changing the kinetic operator} 

\begin{lem}[Changing the kinetic operator] \label{lem:varAt} 
Let $A_t\colon (-\epsilon,\epsilon) \to \Omega^1(M,\g)$ a path of smooth flat connections such that $(A_t,A')$ is close for all $t$ and $A_0 = A$. Denote $\dot{A_0} = \alpha \in \Omega^1_\mr{cl}(M,\g)$. 
Then we have
\begin{align}
\restr{\frac{d}{dt}}{t=0}\Delta_{A_t,A'} &= \left\lbrace \dd^*_{A'},\mathrm{ad}_\alpha\right\rbrace , \label{eq:varLaplacekin} \\
\restr{\frac{d}{dt}}{t=0} P_{A_t,A'} &= -  K_{A,A'} \mathrm{ad}_\alpha  P_{A,A'} - P_{A,A'}  \mathrm{ad}_\alpha    K_{A,A'}, \label{eq:varPharmkin} \\
\restr{\frac{d}{dt}}{t=0}{K}_{A_t,A'} &= - K_{A,A'}\mathrm{ad}_\alpha K_{A,A'}.  \label{eq:varKkin}
\end{align}
\end{lem}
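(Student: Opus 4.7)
The plan is to derive the three formulas in order: \eqref{eq:varLaplacekin} is immediate from direct differentiation, \eqref{eq:varKkin} is obtained via the homological perturbation lemma, and \eqref{eq:varPharmkin} follows from \eqref{eq:varKkin} and the Hodge decomposition identity $P = \mr{id} - d_A K - K d_A$. I start with \eqref{eq:varLaplacekin}: expanding $\Delta_{A_t,A'} = d_{A_t}d^*_{A'} + d^*_{A'}d_{A_t}$, noting that $d^*_{A'}$ is independent of $t$ and that $\restr{\frac{d}{dt}}{t=0}d_{A_t} = \mr{ad}_\alpha$, differentiation immediately produces the anticommutator $\{d^*_{A'},\mr{ad}_\alpha\}$.

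For \eqref{eq:varKkin}, I would view $d_{A_t} = d_A + \delta_t$ with $\delta_t = t\,\mr{ad}_\alpha + O(t^2)$ as a perturbation of $d_A$ on $\Omega^\bullet(M,\g)$. Applying the homological perturbation lemma (Lemma \ref{lem: HPL}) to the SDR data $(i_{A,A'},p_{A,A'},K_{A,A'})$ yields deformed SDR data $(\tilde{i}_t,\tilde{p}_t,\tilde{K}_t)$ for the perturbed complex, with $\tilde{K}_t = K_{A,A'} - tK_{A,A'}\mr{ad}_\alpha K_{A,A'} + O(t^2)$. The remaining step is to identify $\tilde{K}_t$ with the desynchronized Hodge chain homotopy $K_{A_t,A'}$. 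Both endomorphisms satisfy (a) image contained in $\mr{im}\,d^*_{A'}$, (b) squaring to zero, (c) vanishing on $\mr{im}\,d^*_{A'}$, and (d) the chain homotopy identity for $d_{A_t}$ with projector onto $\mr{Harm}_{A_t,A'}$. These properties pin down $\tilde{K}_t$ uniquely via the desynchronized Hodge decomposition $\Omega^\bullet = \mr{Harm}_{A_t,A'}\oplus\mr{im}\,d_{A_t}\oplus\mr{im}\,d^*_{A'}$: the chain homotopy must vanish on the first and third summands, and on the second summand it acts as the inverse of $d_{A_t}\colon \mr{im}\,d^*_{A'}\xra{\sim}\mr{im}\,d_{A_t}$. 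Differentiating the identification at $t=0$ then gives \eqref{eq:varKkin}.

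For \eqref{eq:varPharmkin}, I would differentiate the Hodge identity $P_{A_t,A'} = \mr{id} - d_{A_t}K_{A_t,A'} - K_{A_t,A'}d_{A_t}$ at $t=0$, obtaining $\dot{P} = -\mr{ad}_\alpha K - d_A\dot{K} - \dot{K}d_A - K\mr{ad}_\alpha$. Substituting \eqref{eq:varKkin} and applying $d_A K = \mr{id} - P - Kd_A$ together with its dual $Kd_A = \mr{id} - P - d_A K$, the cross terms collapse to
\begin{equation*}
\dot{P} = -P\mr{ad}_\alpha K - K\mr{ad}_\alpha P - K\,\mr{ad}_{d_A\alpha}\,K,
\end{equation*}
where I used $d_A\mr{ad}_\alpha + \mr{ad}_\alpha d_A = [d_A,\mr{ad}_\alpha] = \mr{ad}_{d_A\alpha}$. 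The last term vanishes because $\alpha$ is $d_A$-closed as a tangent vector to $\FC$ at $A$, yielding \eqref{eq:varPharmkin}.

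The main obstacle is step (d) of the identification $\tilde{K}_t = K_{A_t,A'}$ in the proof of \eqref{eq:varKkin}: the four properties must be verified on both sides and the uniqueness via the desynchronized Hodge decomposition must be spelled out carefully. An alternative route is to differentiate $K_{A_t,A'} = d^*_{A'}G_{A_t,A'}$ directly via $\dot{G} = -G(\dot{\Delta}+\dot{P})G$, exploiting that $d^*_{A'}$ commutes with $G_{A,A'}$ so that $Kd^*_{A'} = (d^*_{A'})^2 G = 0$; however, this path couples the proofs of \eqref{eq:varPharmkin} and \eqref{eq:varKkin} and I would avoid it.
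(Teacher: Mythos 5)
Your proposal is correct, but it runs in the opposite direction from the paper's proof and uses a different key tool. The paper proves \eqref{eq:varPharmkin} \emph{first}, by differentiating the defining algebraic relations $\Delta_{A_t,A'}P_{A_t,A'}=P_{A_t,A'}\Delta_{A_t,A'}=0$ and $P_{A_t,A'}^2=P_{A_t,A'}$ (splitting $\dot P=(\mathrm{id}-P)\dot P+P\dot P$), and then obtains \eqref{eq:varKkin} by differentiating $G=(\Delta+P)^{-1}$ via $\dot G=-G(\dot\Delta+\dot P)G$ and applying $d^*_{A'}$, under which only the $\dot\Delta$ term survives. This is exactly the "alternative route" you mention at the end and elect to avoid; note that it does not actually entangle the two proofs in a circular way, since $\dot P$ is computed independently beforehand. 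You instead get \eqref{eq:varKkin} from the homological perturbation lemma plus a uniqueness-of-chain-homotopy argument, and then deduce \eqref{eq:varPharmkin} by differentiating $P=\mathrm{id}-[d_{A_t},K_{A_t,A'}]$, using $d_A\alpha=0$ to kill the cross term $K\,\mathrm{ad}_{d_A\alpha}K$ (the paper's argument never needs closedness of $\alpha$ for these two identities, only the hypothesis supplies it). Your route has the virtue of making the link to the HPL deformation formulas \eqref{i deformation thm 4.2}--\eqref{K deformation thm 4.2} used later in Theorem \ref{thm 4.2} explicit.

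The one place where your write-up is thinner than it should be is the identification $\tilde K_t=K_{A_t,A'}$, and within it specifically the claim that the HPL projector $\tilde\imath_t\tilde p_t$ equals the spectral projector $P_{A_t,A'}$ onto $\ker\Delta_{A_t,A'}$ along $\operatorname{im}\Delta_{A_t,A'}$ — your uniqueness argument via the decomposition $\Omega^\bullet=\mathrm{Harm}_{A_t,A'}\oplus\operatorname{im}d_{A_t}\oplus\operatorname{im}d^*_{A'}$ only pins down the chain homotopy once both sides satisfy $[d_{A_t},K]=\mathrm{id}-P$ with the \emph{same} $P$. This is fixable: $\tilde\imath_t(\sfa)=i(\sfa)+(\text{coexact correction})$ is both $d_{A_t}$-closed (induced differential vanishes by smoothness, to the order needed) and $d^*_{A'}$-closed, hence lands in $\mathrm{Harm}_{A_t,A'}$, and a rank-constancy argument (Proposition \ref{lem: Hodge decomposition close connections}) finishes it. Since you only need the first-order statement, you could also bypass the identification entirely by checking that your candidate first-order variations satisfy the differentiated defining relations of $K_{A_t,A'}=d^*_{A'}G_{A_t,A'}$ — which is in effect what the paper does.
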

\begin{proof}
Since $\dd_{A_t} = d + \mathrm{ad}_{A_t}$ 
, we have  $\restr{\frac{d}{dt}}{t=0}{\dd}_{A_t} = \mathrm{ad}_\alpha$. 
Equation \eqref{eq:varLaplacekin} then follows directly from rewriting the Laplacian as $\Delta_{A,A'} = \left\lbrace \dd^*_{A'},\dd_{A}\right\rbrace$. \\
To prove equation $\eqref{eq:varPharmkin}$, we differentiate the equations $\Delta_{A_t,A'}P_{A_t,A'} = P_{A_t,A'}\Delta_{A_t,A'} = P^2_{A_t,A'}-P_{A_t,A'} =0$. Differentiating the first one we obtain 
$$\Delta_{A,A'}\dot{P}_{A,A'} + \dot{\Delta}_{A,A'}P_{A,A'} = 0,$$
which yields, after composing with $G_{A,A'}$, 
\begin{align}(\mathrm{id} - P_{A,A'})\dot{P}_{A,A'} &= - G_{A,A'} \dot{\Delta}_{A,A'}P_{A,A'}  & \notag \\
 &= -G_{A,A'}\left(\left\lbrace \dd^*_{A'},\mathrm{ad}_\alpha\right\rbrace \right) P_{A,A'} &\text{(using \eqref{eq:varLaplacekin})} \notag\\
 &= -G_{A,A'}\dd^*_{A,A'}\mathrm{ad}_{\alpha}P_{A,A'}  &\text{(since } \dd^*_{A'}P_{A,A'}  = 0) \notag \\
 &= - K_{A,A'}\mathrm{ad}_{\alpha}P_{A,A'}  
 &\text{(since } \dd^*_{A'} 
 \text{ commutes with } G_{A,A'}).\label{eq:1-PdotP}
\end{align}
Similarly, differentiating $P_{A_t,A'}\Delta_{A_t,A'} = 0$ we obtain 
\begin{equation} \dot{P}_{A,A'}(\mathrm{id} - P_{A,A'}) = - P_{A,A'}\mathrm{ad}_{\alpha}K_{A,A'}.\label{eq:dotP1-P}
\end{equation}
Differentiating $P^2_{A_t,A'} - P_{A_t,A'} = 0$ we obtain 
$$P_{A,A'}\dot{P}_{A,A'} + \dot{P}_{A,A'}P_{A,A'} - \dot{P}_{A,A'} = 0 $$ or 
\begin{equation}P_{A,A'}\dot{P}_{A,A'} = \dot{P}_{A,A'}(\mathrm{id} - P_{A,A'}) .\label{eq:PdotP}
\end{equation}
Finally, we can compute, using \eqref{eq:1-PdotP},\eqref{eq:dotP1-P}, \eqref{eq:PdotP}
$$\dot{P}_{A,A'} = (\mathrm{id} - P_{A,A'})\dot{P}_{A,A'} + P_{A,A'}\dot{P}_{A,A'} = -\left(K_{A,A'}\mathrm{ad}_\alpha 
\right)P_{A,A'} - P_{A,A'}\left(\mathrm{ad}_\alpha K_{A,A'} \right)$$
which proves equation \eqref{eq:varPharmkin}. \\ 
Finally, let us prove \eqref{eq:varKkin}. Remember that we have $K_{A_t,A'} = \dd^*_{A'}   G_{A_t,A'}$ and hence 
$$\restr{\frac{d}{dt}}{t=0}{K}_{A_t,A'} =  \dd^*_{A'}   \left(\restr{\frac{d}{dt}}{t=0}{G_{A_t,A'}}\right).$$
 On the other hand, using that $G_{A_t,A'} = (\Delta_{A_t,A'} + P_{A_t,A'})^{-1}$, we have 
$$\restr{\frac{d}{dt}}{t=0}G_{A_t,A'} = - G_{A_0,A'}   \restr{\frac{d}{dt}}{{t=0}}(\Delta_{A_t,A'} + P_{A_t,A'})   G_{A_0,A'}.$$
Using \eqref{eq:varLaplacekin} and \eqref{eq:varPharmkin} we obtain 
\begin{multline}\label{variation of G wrt A_0}
\dot{G}_{A,A'} = - G_{A,A'}\big( \left\lbrace \dd^*_{A'},\mathrm{ad}_\alpha\right\rbrace  
- K_{A,A'} \mathrm{ad}_\alpha  P_{A,A'} - P_{A ,A'}  \mathrm{ad}_\alpha    K_{A,A'} \big)G_{A,A'}.
\end{multline}
After applying $\dd^*_{A'}$, only the first term survives and yields  
$$\dot{K}_{A,A'} = \dd^*_{A'}\dot{G}_{A,A'}-\dd^*_{A'}G_{A,A'} \left\lbrace \dd^*_{A,A'},\mathrm{ad}_\alpha\right\rbrace G_{A,A'} = - K_{A,A'}\mathrm{ad}_\alpha K_{A,A'}$$
since $\left(\dd^*_{A'}\right)^2 = d^*_{A'}P_{A,A'} = 0$ and $\dd^*_{A'}$ and $G_{A,A'}$ commute. 
\end{proof}
\subsection{Changing the gauge-fixing operator}
\begin{prop}[Changing the gauge-fixing operator]\label{prop:varAtprime} 
Let $A_t'\colon (-\epsilon,\epsilon) \to \Omega^1(M,\g)$ a path of smooth flat connections with $A'_0 = A'$ such that $(A,A'_t)$ is flat for all $t$. Denote $\dot{A_0} = \alpha \in \Omega^1_\mr{cl}(M,\g)$. We denote by $\mathrm{ad}_\alpha^*$ the formal adjoint of $\mathrm{ad}_\alpha$ and $K^*_{A,A'}= \dd_{A}  G_{A,A'}$. 
Then we have
\begin{gather}
\restr{\frac{d}{dt}}{t=0}\Delta_{A,A'_t} =  \left\lbrace \dd_{A},\mathrm{ad}^*_\alpha\right\rbrace, \label{eq:varLaplaceGF} \\
\restr{\frac{d}{dt}}{t=0} P_{A,A'_t} = - K^*_{A,A'} \mathrm{ad}^*_\alpha   P_{A,A'} - P_{A,A'}  \mathrm{ad}^*_\alpha   K^*_{A,A'},  \label{eq:varPharmGF} \\
\restr{\frac{d}{dt}}{t=0}{K}_{A,A'_t} =  \left[d_{A}, K_{A,A'}\mathrm{ad}_\alpha^*G_{A,A'}\right] + P_{A,A'}\mathrm{ad}^*_\alpha G_{A,A'} + G_{A,A'} \mathrm{ad}^*_\alpha P_{A,A'}.  \label{eq:varKGF}
\end{gather}
\end{prop}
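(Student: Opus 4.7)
The strategy is to mimic the proof of Lemma~\ref{lem:varAt}, exchanging the roles of $\dd_A$ (now constant in $t$) and $\dd^*_{A'}$ (now varying). The starting point is the identity $\dd^*_{A'_t} = d^* + \mathrm{ad}^*_{A'_t}$, from which one immediately reads off
\[
\restr{\tfrac{d}{dt}}{t=0}\dd^*_{A'_t} = \mathrm{ad}^*_\alpha .
\]
Using $\Delta_{A,A'_t} = \{\dd_A, \dd^*_{A'_t}\}$ and the fact that $\dd_A$ does not depend on $t$, formula \eqref{eq:varLaplaceGF} follows in one line.

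For \eqref{eq:varPharmGF}, I would follow the three-step procedure from the proof of Lemma~\ref{lem:varAt}: (i) differentiate $\Delta_{A,A'_t}P_{A,A'_t}=0$ and apply $G_{A,A'}$ on the left to obtain $(\mathrm{id}-P_{A,A'})\dot P_{A,A'}$; (ii) differentiate $P_{A,A'_t}\Delta_{A,A'_t}=0$ and apply $G_{A,A'}$ on the right to obtain $\dot P_{A,A'}(\mathrm{id}-P_{A,A'})$; (iii) differentiate $P^2=P$ to get $P_{A,A'}\dot P_{A,A'} = \dot P_{A,A'}(\mathrm{id}-P_{A,A'})$. In step (i), only the $\dd_A\mathrm{ad}^*_\alpha$ piece of $\{\dd_A,\mathrm{ad}^*_\alpha\}$ survives composition with $P_{A,A'}$ on the right (since $\dd_A P_{A,A'}=0$), and commuting $\dd_A$ past $G_{A,A'}$ yields the factor $K^*_{A,A'} = \dd_A G_{A,A'}$; symmetrically in step (ii) one gets $P_{A,A'}\mathrm{ad}^*_\alpha K^*_{A,A'}$. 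Adding the two pieces via (iii) produces \eqref{eq:varPharmGF}.

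The computation of $\dot K_{A,A'}$ in \eqref{eq:varKGF} is the main obstacle, because now \emph{both} factors in $K_{A,A'_t}=\dd^*_{A'_t}G_{A,A'_t}$ depend on $t$, producing an inhomogeneous term absent from \eqref{eq:varKkin}. Starting from $\dot K = \mathrm{ad}^*_\alpha G_{A,A'} + \dd^*_{A'}\dot G$ and using the analogue of \eqref{variation of G wrt A_0} with the right-hand side of \eqref{eq:varLaplaceGF} and \eqref{eq:varPharmGF} in place of the corresponding kinetic expressions, I would expand
\[
\dot G_{A,A'} = -G_{A,A'}\bigl(\{\dd_A,\mathrm{ad}^*_\alpha\} - K^*_{A,A'}\mathrm{ad}^*_\alpha P_{A,A'} - P_{A,A'}\mathrm{ad}^*_\alpha K^*_{A,A'}\bigr) G_{A,A'}.
\]
Composing on the left with $\dd^*_{A'}$ makes the $\dd^*_{A'}\dd_A$ piece collapse to $\Delta_{A,A'}$ (since $(\dd^*_{A'})^2=0$), and $G_{A,A'}\Delta_{A,A'}=\mathrm{id}-P_{A,A'}$. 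Assembling all surviving pieces and using $\dd^*_{A'}G_{A,A'}=K_{A,A'}$, $\dd_A G_{A,A'}=K^*_{A,A'}$, $P_{A,A'}\dd^*_{A'}=0=\dd^*_{A'}P_{A,A'}$, a careful bookkeeping of the terms rewrites the result as the graded commutator $[\dd_A,K_{A,A'}\mathrm{ad}^*_\alpha G_{A,A'}]$ plus the two ``anomalous'' pieces $P_{A,A'}\mathrm{ad}^*_\alpha G_{A,A'}$ and $G_{A,A'}\mathrm{ad}^*_\alpha P_{A,A'}$, giving \eqref{eq:varKGF}.

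A useful sanity check at the end is to verify that the formula is consistent with the general variation formula of Lemma~\ref{lem: SDR deformations}, i.e.\ that the right-hand side of \eqref{eq:varKGF} has the form $[\dd_A,\Lambda]+P\mathbb{P}+\mathbb{I}P$ with $\Lambda = K_{A,A'}\mathrm{ad}^*_\alpha G_{A,A'}$, $\mathbb{I}=G_{A,A'}\mathrm{ad}^*_\alpha$, $\mathbb{P}=\mathrm{ad}^*_\alpha G_{A,A'}$; this matches the maps appearing in \eqref{Lambda, II, PP} and confirms that the answer fits the homological-perturbation picture used later in the paper.
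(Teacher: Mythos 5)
Your plan follows the paper's proof essentially verbatim: the same one-line computation for $\dot\Delta_{A,A'}$ from the anticommutator presentation, the same three-identity argument ($\Delta P=0$, $P\Delta=0$, $P^2=P$) for $\dot P_{A,A'}$, and the same decomposition $\dot K_{A,A'} = \mathrm{ad}^*_\alpha G_{A,A'} + \dd^*_{A'}\dot G_{A,A'}$ with the identical list of operator identities for the final assembly. The one step hidden in your ``careful bookkeeping'' that does not follow from pure operator algebra is the vanishing of $P_{A,A'}\mathrm{ad}^*_\alpha P_{A,A'}$ and of $\dd^*_{A'}\mathrm{ad}^*_\alpha P_{A,A'}$, which the paper obtains from Lemma \ref{lem:ad_alpha} and which therefore uses the smoothness hypothesis on the connections; without invoking this, the term $\dd^*_{A'}G_{A,A'}K^*_{A,A'}\mathrm{ad}^*_\alpha P_{A,A'}G_{A,A'}$ leaves residual pieces and \eqref{eq:varKGF} does not close.
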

\begin{proof}
Again, \eqref{eq:varLaplaceGF} follows directly from writing the Laplacian as $\Delta_{A_t} =\left\lbrace \dd_A,\dd^*_{A'_t}\right\rbrace$. In exactly the same way as above, we then obtain 
$$\dot{P}_{A,A'} = (\mr{id} - P_{A,A'})\dot{P}_{A,A'} + \dot{P}_{A,A'}(\mr{id}-P_{A,A'}) =  - K^*_{A,A'} \mathrm{ad}^*_\alpha   P_{A,A'} - P_{A,A'}  \mathrm{ad}^*_\alpha   K^*_{A,A'}, $$ proving \eqref{eq:varPharmGF}. Using \eqref{eq:varLaplaceGF},\eqref{eq:varPharmGF}, we obtain
\begin{multline}\label{variation of G wrt A prime}
\dot{G}_{A,A'} =  -G_{A,A'}\big( \left\lbrace \dd_{A'},\mathrm{ad}^*_\alpha\right\rbrace  
- K^*_{A,A'} \mathrm{ad}^*_\alpha  P_{A,A'} - P_{A ,A'}  \mathrm{ad}^*_\alpha    K^*_{A,A'} \big)G_{A,A'}.
\end{multline} 
After applying $\dd^*_{A'}$, the third term vanishes. The first one is

\begin{multline*}
\dd^*_{A'}G_{A,A'}\left\lbrace \dd_{A},\mathrm{ad}^*_\alpha \right\rbrace G_{A,A'} = \dd^*_{A'}G_{A,A'}\dd_{A}\mathrm{ad}^*_\alpha G_{A,A'} + K_{A,A'}\mathrm{ad}^*_\alpha G_{A,A'} \dd_{A} \\
= (\mathrm{id} - P_{A,A'} - \dd_{A,A'}\dd^*_{A,A'}G_{A,A'})\mathrm{ad}^*_\alpha  G_{A,A'} + K_{A,A'}\mathrm{ad}^*_\alpha G_{A,A'}\dd_{A,A'}  \\ 
= \mathrm{ad}^*_\alpha G_{A,A'} - P_{A,A'}\mathrm{ad}^*_\alpha G_{A,A'} - \left[\dd_{A,A'},K_{A,A'}\mathrm{ad}^*_\alpha G_{A,A'}\right].
\end{multline*}
The second one is
\begin{multline*}    \dd_{A'}^*G_{A,A'} K^*_{A,A'} \mathrm{ad}^*_\alpha   P_{A,A'} G_{A,A'} = (\mathrm{id} - P_{A,A'} - \dd_{A}\dd_{A'}^*) G_{A,A'}\mathrm{ad}^*_\alpha P_{A,A'}G_{A,A'} 
    \\ = G_{A,A'}\mr{ad}^*_\alpha P_{A,A'} - P_{A,A'} \mr{ad}^*_\alpha P_{A,A'} - \dd_{A}\dd^*_{A'}G_{A,A'}\mr{ad}^*_\alpha P_{A,A'}
\end{multline*} 
where we have used that $P_{A,A'}G_{A,A'} = P_{A,A'}$. 
Using Lemma \ref{lem:ad_alpha} below, the compositions $\dd^*_{A'}\mathrm{ad}^*_\alpha P_{A,A'} =  P_{A,A'}\mathrm{ad}^*_\alpha P_{A,A'} 0$.

The variation of $K_{A,A'}$ is finally given by 
\begin{multline*}\dot{K}_{A,A'} = \mr{ad}^*_\alpha G_{A,A'} + \dd^*_{A'}\dot{G}_{A,A'} \\
= P_{A,A'} \mr{ad}^*_\alpha G_{A,A'} + \left[\dd_{A,A'},K_{A,A'}\mathrm{ad}^*_\alpha G_{A,A'}\right] + G_{A,A'}\mr{ad}^*_\alpha P_{A,A'} 
\end{multline*}
proving \eqref{eq:varKGF}.
\end{proof}

\begin{lem}\label{lem:ad_alpha}
Let $A_0$ be a flat connection and $\alpha$ $d_{A_0}$-closed 1-form. Then 
\begin{itemize}
\item The map $\mathrm{ad}_\alpha$ maps $\dd_{A_0}$-closed forms to $\dd_{A_0}$-closed forms. 
\item If $[A_0]$ defines a smooth point in the moduli space, $\mr{ad}_\alpha$ maps all $\dd_{A_0}$-closed forms to $\dd_{A_0}$-\emph{exact} forms. 
\item Dually, $\mathrm{ad}^*_\alpha$ maps $\dd^*_{A_0}$-closed form to $\dd^*_{A_0}$-closed forms, and $\dd^*_{A_0}$-exact forms if $[A_0] $ is a smooth point. 
\end{itemize}
\end{lem}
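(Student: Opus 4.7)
\textbf{Proof proposal for Lemma \ref{lem:ad_alpha}.}

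My plan is to prove the three bullets in order, with the first following from the graded Leibniz rule, the second from the smoothness assumption plus Hodge decomposition, and the third by Hodge-star duality.

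For the first bullet, I would simply unpack the graded Leibniz rule for $d_{A_0}$ applied to the bracket. If $\beta$ is $d_{A_0}$-closed, then
\[
d_{A_0}[\alpha,\beta] = [d_{A_0}\alpha,\beta] \pm [\alpha,d_{A_0}\beta] = 0
\]
since both $\alpha$ and $\beta$ are $d_{A_0}$-closed by hypothesis. Hence $\mathrm{ad}_\alpha\beta$ is $d_{A_0}$-closed. This is a one-line computation.

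For the second bullet, assume $[A_0]$ is smooth. Pick any good gauge-fixing (say Hodge) and write $\alpha = h_\alpha + d_{A_0}\gamma_\alpha$, $\beta = h_\beta + d_{A_0}\gamma_\beta$ using the Hodge decomposition, where $h_\alpha,h_\beta$ are the harmonic parts and $\gamma_\alpha = K_{A_0}\alpha$, $\gamma_\beta = K_{A_0}\beta$. Expanding $[\alpha,\beta]$ into four terms: the term $[h_\alpha,h_\beta]$ is a representative of $l'_2([h_\alpha],[h_\beta])\in H^\bullet_{A_0}$, which vanishes by the definition of smoothness (Definition \ref{def: smooth}), so this term is $d_{A_0}$-exact; and each of the three mixed/exact-exact terms is manifestly $d_{A_0}$-exact by the Leibniz rule, using $d_{A_0}h_\alpha = d_{A_0}h_\beta = 0$. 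For instance, $[h_\alpha,d_{A_0}\gamma_\beta] = \pm d_{A_0}[h_\alpha,\gamma_\beta]$, and $[d_{A_0}\gamma_\alpha,d_{A_0}\gamma_\beta] = \pm d_{A_0}[\gamma_\alpha,d_{A_0}\gamma_\beta]$. Summing, $\mathrm{ad}_\alpha\beta$ is $d_{A_0}$-exact.

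For the third bullet, I would transfer by Hodge duality. Using $\mathrm{ad}$-invariance of $\langle\cdot,\cdot\rangle_\g$ and the definition of the Hodge inner product, one computes $\mathrm{ad}^*_\alpha = \pm *^{-1}\mathrm{ad}_\alpha *$, and similarly $d^*_{A_0} = \pm *^{-1}d_{A_0}*$. Then $\beta$ being $d^*_{A_0}$-closed (resp.\ exact) is equivalent to $*\beta$ being $d_{A_0}$-closed (resp.\ exact), so the first two bullets applied to $*\beta$ yield the third bullet for $\beta$.

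I do not anticipate any real obstacle: the only non-trivial input is the translation of smoothness into the vanishing of $l'_2$ on harmonic representatives, which is exactly Definition \ref{def: smooth}. The main care point is to bookkeep Koszul signs in the Leibniz rule and in the formula for $\mathrm{ad}^*_\alpha$, but these do not affect the statements.
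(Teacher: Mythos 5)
Your proof is correct and follows essentially the same route as the paper's: Leibniz rule for the first bullet, and for the second the observation that $\mathrm{ad}_\alpha$ sends exact forms to exact forms while smoothness (vanishing of $l_2'$) forces the harmonic--harmonic contribution to be exact. For the third bullet you conjugate by the Hodge star via $\mathrm{ad}^*_\alpha=\pm *^{-1}\mathrm{ad}_\alpha *$, whereas the paper argues directly with the adjointness pairing $\langle\mathrm{ad}^*_\alpha\beta,\gamma\rangle=\langle\beta,\mathrm{ad}_\alpha\gamma\rangle$ and orthogonality of the Hodge decomposition; the two mechanisms are equivalent and both valid.
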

\begin{proof}
The first point is obvious since the bracket is compatible with the differential. For the second point, notice that $\mathrm{ad}_\alpha$ always maps exact forms to exact forms. The smoothness assumption implies that for any close $A'$,  $l_2([\alpha],\bullet) = p_{A_0,A'}\alpha$ vanishes on harmonic forms, hence $\mathrm{ad}_\alpha$ maps harmonic forms into exact forms. To prove the last point, let $\beta$ be a coclosed form and $\gamma$ an exact form. Then 
$$\langle\mathrm{ad}^*_\alpha\beta, \gamma \rangle = \langle \beta,\mathrm{ad}_\alpha\gamma\rangle = 0$$ 
since coclosed forms are orthogonal to exact forms. Hence $\mathrm{ad}^*_\alpha\beta$ is also coclosed. If $[A_0]$ is smooth we can let $\gamma$ be any closed form, hence $\langle \mathrm{ad}_\alpha^*\beta\rangle$ is coexact in this case.  
\end{proof}
We are also interested in the variations of $i_{A_t,A'}$ and $p_{A_t,A'}$. However, remember that $i_{A_t,A'} \colon H_{A_t}(M,\g) \to \Omega^\bullet(M,\g)$, so all the $i_{A_t,A'}$ are defined a priori on different spaces. Notice that we have the maps  
   \begin{equation}
 \begin{tikzcd}[every arrow/.append style={shift left}]
 {H^\bullet_{A_0}(M,\mathfrak{g})} \arrow{r}{p_{A_t,A'}i_{A_0,A'}} &{H^\bullet_{A_t}(M,\mathfrak{g})}  \arrow{l}{p_{A_0,A'} i_{A_t,A'}} 
 \end{tikzcd}\label{eq:Hmaps}
   \end{equation}
   
Using these maps to compare the different $i_{A_t,A'}$ and $p_{A_t,A'}$, we have the following result:
   \begin{lem}
   With notation as in Lemma \ref{lem:varAt}, we have 
   \begin{align}
     \restr{\frac{d}{dt}}{t=0} i_{A_t,A'}p_{A_t,A'}i_{A,A'} &= -K_{A,A'}\mathrm{ad}_\alpha  i_{A,A'} \label{eq:var iota}, \\ 
    \restr{\frac{d}{dt}}{t=0} p_{A,A'}i_{A_t,A'}p_{A_t,A'} &= -p_{A,A'}(\mathrm{ad}_\alpha K_{A,A'})\label{eq:var P}.
\end{align}    
   \end{lem}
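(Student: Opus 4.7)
The strategy is to recognize that both quantities on the left-hand sides are, up to the ``sandwiching'' maps $i_{A,A'}$ or $p_{A,A'}$, simply the harmonic projector $P_{A_t,A'} = i_{A_t,A'} p_{A_t,A'}$ (this identity is part of the Hodge SDR structure: since $p_{A,A'} i_{A,A'} = \mathrm{id}$ and $i_{A,A'} p_{A,A'}$ is a projector with image the harmonic forms and kernel $\operatorname{im}\Delta_{A,A'}$, it must coincide with $P_{A,A'}$). Consequently
\[
\restr{\tfrac{d}{dt}}{t=0} i_{A_t,A'}p_{A_t,A'}i_{A,A'} \;=\; \dot{P}_{A,A'}\, i_{A,A'},
\qquad
\restr{\tfrac{d}{dt}}{t=0} p_{A,A'} i_{A_t,A'}p_{A_t,A'} \;=\; p_{A,A'}\, \dot{P}_{A,A'}.
\]

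Next, I plug in the formula for $\dot{P}_{A,A'}$ coming from the variation of the kinetic operator, namely \eqref{eq:varPharmkin}:
\[
\dot{P}_{A,A'} \;=\; -\,K_{A,A'}\mathrm{ad}_\alpha\, P_{A,A'} \;-\; P_{A,A'}\,\mathrm{ad}_\alpha\, K_{A,A'}.
\]
For the first identity, I then reduce using the SDR side conditions: $P_{A,A'}\, i_{A,A'} = i_{A,A'}$ (since $p_{A,A'} i_{A,A'} = \mathrm{id}$) and $K_{A,A'}\, i_{A,A'} = 0$. The two terms of $\dot{P}_{A,A'}\, i_{A,A'}$ therefore collapse to
\[
-K_{A,A'}\mathrm{ad}_\alpha\, i_{A,A'} \;-\; \underbrace{P_{A,A'}\mathrm{ad}_\alpha K_{A,A'} i_{A,A'}}_{=0},
\]
which is \eqref{eq:var iota}. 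The second identity is symmetric: using $p_{A,A'} K_{A,A'} = 0$ and $p_{A,A'} P_{A,A'} = p_{A,A'}$, the quantity $p_{A,A'}\, \dot{P}_{A,A'}$ reduces to $-p_{A,A'}\,\mathrm{ad}_\alpha\, K_{A,A'}$, giving \eqref{eq:var P}.

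There is no genuine obstacle here; the only thing to verify with care is the identification $i_{A_t,A'} p_{A_t,A'} = P_{A_t,A'}$ for $A_t$ near $A$, which follows from Proposition \ref{lem: Hodge decomposition close connections} guaranteeing that the desynchronized Hodge decomposition persists along the family and that $i_{A_t,A'}$ continues to be the isomorphism $H^\bullet_{A_t}\cong \mathrm{Harm}_{A_t,A'}$ followed by inclusion. Once this is in hand, the proof is a single line of SDR bookkeeping.
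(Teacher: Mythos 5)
Your proof is correct and is essentially the paper's own argument: the paper simply notes that $i_{A_t,A'}p_{A_t,A'}=P_{A_t,A'}$ and says the formulae follow immediately from \eqref{eq:varPharmkin}, which is exactly the reduction you carry out (your version just spells out the SDR side conditions $K_{A,A'}i_{A,A'}=0$, $p_{A,A'}K_{A,A'}=0$, $P_{A,A'}i_{A,A'}=i_{A,A'}$, $p_{A,A'}P_{A,A'}=p_{A,A'}$ explicitly).
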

\begin{proof}
Notice that $i_{A_t ,A'}p_{A_t,A'} = P_{A_t,A'}$. Then the formulae follow immediately from equation \eqref{eq:varPharmkin}. 
\end{proof}
In general, the maps \eqref{eq:Hmaps} are neither injective nor surjective. However, at smooth points, the following is true. 
\begin{prop}
Suppose $[A_0]$ is smooth. Then for small $t$, the maps \eqref{eq:Hmaps} are 
isomorphisms. 
\end{prop}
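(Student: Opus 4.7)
The plan is to exploit the fact that both maps in \eqref{eq:Hmaps} coincide with the identity at $t=0$ and depend continuously on $t$; the only genuine content is verifying that the dimensions $\dim H^\bullet_{A_t}$ do not jump. I would proceed in three steps.

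First I would establish that $\dim H^\bullet_{A_t}$ is locally constant in $t$. Since smoothness of a flat connection is an open condition (Theorem \ref{thm: banach smoothness}), $A_t$ remains smooth for all small $t$. By Proposition \ref{prop: deformation harmonic} together with Remark \ref{rem: change of harm form under change of gf operator}, every $(A_0,A')$-harmonic form $\chi$ at $t=0$ admits a formal deformation $\chi_t = \sum_{k\geq 0} t^k \chi^{(k)}$ satisfying $d_{A_t}\chi_t = d^*_{A'}\chi_t = 0$; a convergence argument entirely analogous to the one used in Section 2.1.3 to establish convergence of $\tilde{\delta}_{A_0}$ shows that these power series converge in the Banach norm for $|t|$ small. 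This produces an explicit injection $H^\bullet_{A_0} \hookrightarrow H^\bullet_{A_t}$, so $\dim H^\bullet_{A_0} \leq \dim H^\bullet_{A_t}$; applying the same construction to $A_t$ (which is itself smooth) gives the reverse inequality, and hence equality of dimensions.

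Second, I would consider the composition
\[
F(t) := p_{A_0,A'} i_{A_t,A'} \circ p_{A_t,A'} i_{A_0,A'} \colon H^\bullet_{A_0} \to H^\bullet_{A_0}.
\]
This is a smooth family of endomorphisms of the fixed finite-dimensional vector space $H^\bullet_{A_0}$, since it is expressible in terms of the projections $P_{A_t,A'}$ and Green's operators $G_{A_t,A'}$ (viewed as operators on $\Omega^\bullet(M,\g)$), whose smooth dependence on $t$ was established in Lemma \ref{lem:varAt} and Proposition \ref{prop:varAtprime}. At $t=0$ one has $F(0) = \mathrm{id}_{H^\bullet_{A_0}}$, so $F(t)$ remains invertible for all sufficiently small $t$. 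In particular the first map in \eqref{eq:Hmaps} admits a left inverse and is thus injective.

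Finally, combining injectivity with the dimension equality from the first step, the map $p_{A_t,A'} i_{A_0,A'}$ is an isomorphism. The statement for $p_{A_0,A'} i_{A_t,A'}$ follows by the symmetric argument (equivalently, by observing that $F(t)$ close to the identity forces the other composition to be close to the identity as well).

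The main obstacle is really step one: one must verify that the formal harmonic deformation converges for small finite $t$ in the appropriate Banach norm, and that the resulting convergent sections of the ``cohomology bundle'' genuinely realize the fibers $H^\bullet_{A_t}$ (so that they give a continuous injection of cohomology). Once local constancy of $\dim H^\bullet_{A_t}$ is in place, the remaining arguments reduce to elementary linear algebra applied to a smooth family of endomorphisms that equals the identity at $t=0$.
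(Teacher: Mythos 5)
Your proof is correct and follows essentially the same route as the paper's (very terse) proof: the paper reduces the claim to showing that $P_{A_t,A'}$ restricted to $A_0$-harmonic forms is an isomorphism and cites Proposition \ref{prop: deformation harmonic}, which is exactly what your dimension count (via the convergent deformation of harmonic representatives) plus the observation that $F(t)=p_{A_0,A'}P_{A_t,A'}i_{A_0,A'}$ is a perturbation of the identity accomplish. The only small slip is the attribution in Step 1: Remark \ref{rem: change of harm form under change of gf operator} concerns deforming the gauge-fixing connection $A'$, which is not what is varying here; what you actually need is that the corrections $\chi^{(n)}$, $n\geq 1$, lie in $\mathrm{im}(K_{A_0,A'})\subset\mathrm{im}(d^*_{A'})\subset\ker d^*_{A'}$, which gives $d^*_{A'}\chi_t=0$ directly from the recursion of Proposition \ref{prop: deformation harmonic} run with the desynchronized SDR data.
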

\begin{proof}
It is sufficient to show that for small $t$ the restriction of $P_{A_t}$ to $A_0$-harmonic forms is an isomorphism. This follows from Proposition \ref{prop: deformation harmonic}.
\end{proof}

We remark that the maps (\ref{eq:Hmaps}) coincide with the cohomology comparison maps $\mathfrak{B}_{A_t\la A_0,A'}$, $\mathfrak{B}_{A_0\la A_t,A'}$, cf. Section \ref{sss cohomology comparison map}. This follows from comparing (\ref{eq:var iota}) with the connection $\nabla^\mr{Harm}$ (\ref{nabla^harm infinitesimal horizontal transport}).

\begin{rem}
    The formulae (\ref{eq:varPharmkin}), \eqref{eq:varPharmGF} for $\frac{d}{dt}\Big|_{t=0}P_{A_t,A'_t}$ can also be obtained as follows:
    One has 
    \begin{equation}\label{P=lim s 3.1.1}
    P=\lim_{T\ra \infty} e^{-T\Delta_{A_t,A'_t}}, 
    \end{equation}
    hence
    \begin{equation}
         \dot{P}= \lim_{T\ra \infty} \int_0^T dt\, e^{-t \Delta} (-\dot\Delta)
        e^{-(T-t)\Delta}  .
    \end{equation}
The $T\ra \infty$ asymptotics of the integral in the r.h.s. comes from two regions (a) $t\ll T$, (b) $T-t \ll T$ -- neighborhoods of the endpoints of the integration interval $[0,T]$  (the bulk of the interval does not contribute since $P\dot\Delta P=0$):
    \begin{multline}\label{delta P s.3.1.2}
    \dot P
    =\left(\int_0^\infty dt\, e^{-t\Delta}\right)(-\dot\Delta) e^{-\infty\cdot \Delta}+e^{-\infty\cdot \Delta}(-\dot\Delta)\left(\int_0^\infty dt\, e^{-t\Delta}\right) \\=-G \dot\Delta P- P \dot\Delta G .
    \end{multline}
    Here we are suppressing the subscripts $A,A'$ for $P$, $\Delta$, $G$; $e^{-\infty \cdot\Delta}$ is a shorthand for the r.h.s. of (\ref{P=lim s 3.1.1}). 
    \begin{figure}[h]
        \centering
        \includegraphics[scale=0.5]{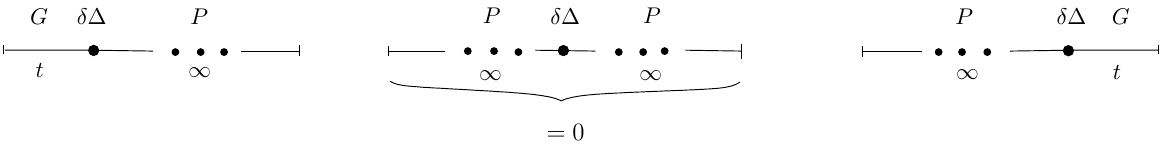}
        \caption{Terms in the formula (\ref{delta P s.3.1.2}) for $\delta P$ correspond to splitting the interval by a point  (a) close to the left endpoint, (c) far from both endpoints (the respective contribution is zero), (b) close to the right endpoint.}
    \end{figure}
\end{rem}
\subsection{Metric dependence}\label{sec: metric}
Let $g_t, t \in (-\epsilon,\epsilon)$ be a smooth 1-parameter family of Riemannian metrics and denote $\dot{g} = \frac{d}{dt}\big|_{t=0}g_t \in \Gamma(\Sym^2(T^*M))$. By a partial contraction with $g^{-1} \in \Sym^2(TM)$ we obtain a endomorphism of the tangent bundle, i.e. a vector-field valued 1-form
\begin{equation}
    \mu = g^{-1}\dot{g}\in\Gamma(\mr{End}(TM)) \cong \Gamma(TM \otimes T^*M) = \Omega^1(M,TM).
\end{equation}
\begin{lem}\label{lem: lambda formula}
 Denote $\lambda = \star^{-1} \dot{\star} \colon \Omega^p(M) \to \Omega^p(M)$, then 
 \begin{equation}
     \lambda = \frac12\tr\mu - \iota_\mu.
 \end{equation}
\end{lem}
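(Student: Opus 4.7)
The plan is to reduce the identity to a pointwise computation in specially chosen coordinates. Both $\lambda = \star^{-1}\dot\star$ and $\tfrac12\tr\mu - \iota_\mu$ are $C^\infty(M)$-linear (zeroth-order, tensorial) endomorphisms of $\Omega^p(M)$, so it suffices to check the identity pointwise on a basis of $p$-forms. At a fixed point $x_0\in M$ I would choose local coordinates with $g_{ij}(x_0)=\delta_{ij}$, and then perform a further orthogonal rotation of coordinates at $x_0$ to simultaneously diagonalize the symmetric tensor $\dot g(x_0)$: $\dot g_{ij}(x_0)=\lambda_i\delta_{ij}$. In these coordinates $\mu(x_0)=\mathrm{diag}(\lambda_1,\dots,\lambda_n)$ and $\tr\mu(x_0)=\sum_i\lambda_i$.

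Next I would evaluate the left-hand side on $dx^I:=dx^{i_1}\wedge\cdots\wedge dx^{i_p}$ using the standard coordinate formula for $\star_t$ in terms of $\sqrt{|g_t|}$ and $g_t^{ij}$. Since $g_t$ is diagonal at $x_0$ to first order in $t$, a direct calculation, using $\dot{\sqrt{|g|}}\big|_0=\tfrac12\tr\mu$ and $\dot g_t^{ii}\big|_0=-\lambda_i$, gives
\[
\dot\star\, dx^I \;=\; \Bigl(\tfrac12\tr\mu \;-\; \sum_{k=1}^{p}\lambda_{i_k}\Bigr)\,\star_0 dx^I,
\]
so that $\lambda(dx^I)=\bigl(\tfrac12\tr\mu-\sum_k\lambda_{i_k}\bigr)dx^I$. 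For the right-hand side, writing $\mu=\sum_i\lambda_i\,\partial_i\otimes dx^i$ at $x_0$, the operator $\iota_\mu$ acts as $\sum_i\lambda_i\,dx^i\wedge\iota_{\partial_i}$. The elementary identity $dx^i\wedge\iota_{\partial_i}dx^I=dx^I$ when $i\in I$ (and vanishes otherwise) then yields $\iota_\mu\,dx^I=\bigl(\sum_k\lambda_{i_k}\bigr)dx^I$, and the two sides match on the basis, hence everywhere.

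I do not anticipate any serious obstacle: the whole argument is routine once $g$ and $\dot g$ are simultaneously diagonalized at a point, the only content being the bookkeeping of the $\tfrac12\tr\mu$-contribution from $\dot{\sqrt{|g|}}$ against the $-\sum_k\lambda_{i_k}$-contribution from $\dot g_t^{i_ki_k}$, and its coincidence with the action of $\iota_\mu$. A coordinate-free alternative would be to differentiate the defining relation $\alpha\wedge\star\beta=\langle\alpha,\beta\rangle_g\,\mathrm{vol}_g$, using $\dot{\mathrm{vol}}_g=\tfrac12\tr\mu\cdot\mathrm{vol}_g$ together with the identity $\dot{\langle\alpha,\beta\rangle}_g=-\langle\iota_\mu\alpha,\beta\rangle_g-\langle\alpha,\iota_\mu\beta\rangle_g$; but verifying the latter identity also reduces to the same pointwise computation, so little is gained.
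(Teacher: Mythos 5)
Your proof is correct and is exactly the ``straightforward computation in local coordinates'' that the paper invokes without writing out: the paper's proof of this lemma consists of the single sentence ``Straightforward computation in local coordinates.'' Your pointwise simultaneous diagonalization of $g$ and $\dot g$, the bookkeeping of $\dot{\sqrt{|g|}}=\tfrac12\tr\mu\,\sqrt{|g|}$ against $\dot g^{ii}=-\lambda_i$, and the identification of $\sum_k\lambda_{i_k}$ with the action of $\iota_\mu$ supply precisely the omitted details.
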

\begin{proof}
Straightforward computation in local coordinates.
\end{proof}
It is well known (e.g. \cite{Ray1971}) that we have 
    \begin{equation} \dot{\dd}^*_{A'} = [\dd^*_{A'},\lambda].\label{eq: var d star metric}\end{equation}
Analogously to Proposition \ref{prop:varAtprime}, we then have the following statements:
\begin{prop}
    Let $g_t$ be a smooth family of Riemannian metrics on $M$, and $\lambda = \star^{-1} \dot{\star}$ as above, extended to act on Lie-algebra valued differential forms by tensoring with the identity on $\g$. Also, let $(A,A')$ be a pair of close flat connections on $M$. Then, we have 
\begin{align}
    \dot{\Delta}_{A,A'} &= [\dd^*_{A'},\lambda]\dd_A + \dd_A[\dd^*_{A'},\lambda], \label{eq: var Delta metric} \\
    \dot{P}_{A,A'} &= - [\dd_A, K_{A,A'}\lambda P_{A,A'} - P_{A,A'}\lambda K_{A,A'}], \label{eq: var P metric} \\
    \dot{K}_{A,A'} &= -[\dd_A,K_{A,A'}\lambda K_{A,A'}] - P_{A,A'}\lambda K_{A,A'} + K_{A,A'} \lambda P_{A,A'}. \label{eq: var K metric}
\end{align}
\end{prop}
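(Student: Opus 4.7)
The strategy is to mirror \emph{verbatim} the proof of Proposition \ref{prop:varAtprime}, with $\mathrm{ad}^*_\alpha$ replaced by $\lambda$ everywhere and using the identity $\dot{\dd}^*_{A'} = [\dd^*_{A'},\lambda]$ from \eqref{eq: var d star metric} in place of the corresponding formula for variation of $A'$. The key point is that under a change of metric only $\dd^*_{A'}$ (and hence $\Delta_{A,A'}$, $P_{A,A'}$, $G_{A,A'}$, $K_{A,A'}$) varies, while $\dd_A$ is metric-independent. The answers exhibit a commutator with $\dd_A$ because these objects are intrinsic to the Hodge decomposition and the projection/chain homotopy must change covariantly with respect to $\dd_A$-cohomology.

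First, I would establish \eqref{eq: var Delta metric} directly from $\Delta_{A,A'} = \{\dd_A,\dd^*_{A'}\}$ and \eqref{eq: var d star metric}, which gives $\dot{\Delta}_{A,A'} = \{\dd_A,[\dd^*_{A'},\lambda]\}$.

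Next, for \eqref{eq: var P metric} I would proceed as in Proposition \ref{prop:varAtprime}: differentiate $\Delta P = 0$, $P\Delta = 0$ and $P^2 = P$. Applying $G_{A,A'}$ to the first relation yields
\[(\mathrm{id}-P_{A,A'})\dot{P}_{A,A'} = -G_{A,A'}\dot{\Delta}_{A,A'}P_{A,A'},\]
and using $\dd^*_{A'}P_{A,A'} = 0$, $P_{A,A'}\dd_A = 0$, the right-hand side simplifies to $-\dd_A K_{A,A'}\lambda P_{A,A'}$ after expanding the commutator and using $G_{A,A'}\dd_A\dd^*_{A'} = (1-P_{A,A'}) - G_{A,A'}\dd^*_{A'}\dd_A$ and properties of $\lambda$. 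Analogously, $\dot{P}_{A,A'}(\mathrm{id}-P_{A,A'}) = -P_{A,A'}\lambda K_{A,A'}\dd_A$. Combining via $\dot{P}_{A,A'} = (\mathrm{id}-P)\dot{P} + P\dot{P}$ and $P\dot{P} = \dot{P}(\mathrm{id}-P)$ (from $P^2 = P$) reassembles the two pieces into the single commutator $-[\dd_A,K_{A,A'}\lambda P_{A,A'} - P_{A,A'}\lambda K_{A,A'}]$.

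Finally, for \eqref{eq: var K metric} I would use $K_{A,A'} = \dd^*_{A'}G_{A,A'}$ and $G_{A,A'} = (\Delta_{A,A'} + P_{A,A'})^{-1}$, hence
\[\dot{K}_{A,A'} = [\dd^*_{A'},\lambda]G_{A,A'} - \dd^*_{A'}G_{A,A'}(\dot{\Delta}_{A,A'} + \dot{P}_{A,A'})G_{A,A'}.\]
The main obstacle will be the bookkeeping of simplifications: inserting $\dot{\Delta}$ and $\dot{P}$ from the previous steps, one expands each commutator, uses $G_{A,A'}\dd_A\dd^*_{A'}G_{A,A'} = (1-P_{A,A'})G_{A,A'} - K_{A,A'}^*K_{A,A'}\dd_A\cdots$ type identities, applies $\dd^*_{A'}P_{A,A'}=0$ and $P_{A,A'}\dd_A = 0$ at every opportunity, and checks that all contributions to $(\dd_A\cdots)$ reassemble into the single commutator $-[\dd_A, K_{A,A'}\lambda K_{A,A'}]$ while the $P$-capped terms give $-P_{A,A'}\lambda K_{A,A'} + K_{A,A'}\lambda P_{A,A'}$. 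The computation is formally identical to the one carried out in the proof of \eqref{eq:varKGF}, with $\mathrm{ad}^*_\alpha$ replaced by $\lambda$; the only real difference is the absence of any transpose structure on $\lambda$, but this plays no role since $\lambda$ is used only as a zeroth-order endomorphism of $\Omega^\bullet(M,\g)$.
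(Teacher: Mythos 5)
Your plan is exactly the paper's proof: establish \eqref{eq: var Delta metric} from $\Delta_{A,A'}=\{\dd_A,\dd^*_{A'}\}$ and \eqref{eq: var d star metric}, get $\dot P$ by differentiating $\Delta P = P\Delta = P^2-P=0$ and composing with $G_{A,A'}$, and get $\dot K$ from the decomposition $\dot K = \dot{\dd}^*_{A'}G - \dd^*_{A'}G\dot\Delta G - \dd^*_{A'}G\dot P G$, all of which is the proof of Proposition \ref{prop:varAtprime} with $\mathrm{ad}^*_\alpha$ replaced by $\lambda$. One small slip: the second piece of $\dot P$ is $\dot P(\mathrm{id}-P) = -P\dot\Delta G = +P_{A,A'}\lambda K_{A,A'}\dd_A$ (the term $-P\dd^*_{A'}\lambda\dd_A G$ dies by $P\dd^*_{A'}=0$, leaving $+P\lambda\dd^*_{A'}\dd_A G$), not $-P\lambda K\dd_A$ as you wrote; with the correct sign the two pieces assemble into the graded commutator $-[\dd_A,K\lambda P - P\lambda K]$ as stated.
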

\begin{proof}
     Equation \eqref{eq: var Delta metric} follows immediately from \eqref{eq: var d star metric}
    To prove \eqref{eq: var P metric}, we proceed as above in noticing that 
    \begin{equation}
        \dot{P}_{A,A'} = (\mr{id}- P_{A,A'})\dot{P}_{A,A'} + P_{A,A'}\dot{P}_{A,A'} =  - G_{A,A'}\dot{\Delta}_{A,A'} P_{A,A'} - P_{A,A'}\dot{\Delta}_{A,A'}G_{A,A'}.
    \end{equation}
    By using \eqref{eq: var Delta metric}, we obtain 
    \begin{equation} \label{eq: var P metric 2}
        \dot{P}_{A,A'} = - G_{A,A'} \dd_A\dd^*_{A'}\lambda P_{A,A'} + P_{A,A'} \lambda \dd^*_{A'}\dd_AG_{A,A'} = - [\dd_A, K_{A,A'}\lambda P_{A,A'} - P_{A,A'}\lambda K_{A,A'}]
    \end{equation}
    where have also used that $\dd_A$ and $\dd^*_{A'}$ commute with $G_{A,A'}$ and annihilate $P_{A,A'}$. 
    Finally, for the variation of $K_{A,A'}$ \eqref{eq: var K metric}  we obtain 
    \begin{equation}
        \dot{K}_{A,A'} = \underbrace{\dot{d}^*_{A,A'}G_{A,A'}}_{=:I} - \underbrace{\dd^*_{A'}G_{A,A'}\dot{\Delta}_{A,A'}G_{A,A'}}_{=:II} -\underbrace{\dd^*_{A'}G_{A,A'}\dot{P}_{A,A'}G_{A,A'}}_{=:III} = I - II - III. \label{eq: proof var K metric}
    \end{equation}
    Let us look at the three terms separately. From \eqref{eq: var d star metric}, we get 
    \begin{equation}
        I = [\dd^*_{A'},\lambda] G_{A_{0}}. \label{eq: proof var K metric I}
    \end{equation}
    For the second term, we obtain
    \begin{multline}
        II = \dd^*_{A'}G_{A,A'}
        ([\dd^*_{A'},\lambda]\dd_A + \dd_A[\dd^*_{A'},\lambda])
        G_{A,A'} \\
        = -K_{A,A'}\lambda K_{A,A'}\dd_A + \dd^*_{A'}\dd_AG_{A,A'}[\dd^*_{A'},\lambda]G_{A,A'} \label{eq: proof var K metric II}.
    \end{multline}
    Notice that $\Delta_{A,A'}G_{A,A'} = \mr{id} - P_{A,A'}$ implies $\dd^*_{A'}\dd_AG_{A,A'} = - \dd_A\dd^*_{A'}G_{A,A'} + \mr{id} - P_{A,A'}$ and therefore 
    $$ \dd^*_{A'}\dd_AG_{A,A'}[\dd^*_{A'},\lambda]G_{A,A'}  = \dd_AK_{A,A'}\lambda K_{A,A'} + [\dd^*_{A'},\lambda]G_{A,A'} + P_{A,A'}\lambda K_{A,A'},$$ so that 
    \begin{equation}
        II= [\dd_A,K_{A,A'}\lambda K_{A,A'}] + [\dd^*_{A'},\lambda]G_{A,A'} + P_{A,A'}\lambda K_{A,A'}. 
    \end{equation}
    Finally, by using \eqref{eq: var P metric 2} we can rewrite $III$ as 
    \begin{equation}
        III = - K_{A,A'}\dd_AK_{A,A'}\lambda P_{A,A'} = -K_{A,A'}\lambda P_{A,A'} \label{eq: proof var K metric III}
    \end{equation}
    since, suppressing indices, $KdK = K (Kd + \mr{id} - P) = K$ by $K^2 = KP = 0$. Now \eqref{eq: var K metric} follows from \eqref{eq: proof var K metric} by using \eqref{eq: proof var K metric I},\eqref{eq: proof var K metric II},\eqref{eq: proof var K metric III}.
\end{proof}

\section{Construction of extended $(i,p,K)$ triples from families}
\label{sss: extended ipK triples from families}
One can obtain formulae \eqref{ihat, phat, Khat, Phihat} and Lemma \ref{lemma: ipK hat} from homological perturbation theory, as follows. Suppose $Q_q$ is a good gauge fixing operator for $d_A$ for $q \in \mathbb{GF}$, a smooth (but possibly infinite-dimensional) manifold. For fixed $q \in \mathbb{GF}$, one has the SDR data $(i_q,p_q,K_q)$ from \eqref{eq: SDR good gf}. These assemble into SDR data 
$(\ib,\pb,\Kb)$ for $d_A$, considered as a differential on $\Omega^\bullet(M\times\mathbb{GF},\g)$: 
\begin{equation}
\begin{aligned}
    \Kb \colon  \Omega^\bullet(M \times \mathbb{GF},\g;d_A) &\to \Omega^\bullet(M \times \mathbb{GF},\g;d_A), \\
    \ib \colon  \Omega^\bullet(\mathbb{GF},H_A(M,\g)) &\to \Omega^\bullet(M \times \mathbb{GF},\g;d_A),  \\
   \pb\colon \Omega^\bullet(M \times \mathbb{GF},\g;d_A) &\to \Omega^\bullet(\mathbb{GF},H_A(M,\g)) .
\end{aligned}
\end{equation}
Similarly, $Q_q$ and the Green's function $G_q$ assemble into operators $\Qb,\Gb$ on $\Omega^\bt(M\times \mathbb{GF},\g)$.

We can now deform the differential $d_A$ to the (twisted) de Rham differential on $M \times \mathbb{GF}$, by the de Rham differential $\delta_q$ in the direction of $\mathbb{GF}$. Note that since $\delta_q$ increases the de Rham degree in $\mathbb{GF}$ by 1, the map 
$1 + \delta_q  \Kb$ is invertible, we denote 
\begin{equation} X := (1 + \delta_q  \Kb)^{-1}  \delta_q = \sum_{k\geq 0}(-\delta_q  \Kb)^k \delta_q = \delta_q - \delta_q\Kb\delta_q + \cdots 
\end{equation} 
(this sum is finite since $\Kb$ decreases the form degree along $M$ by 1). 
We then obtain perturbed SDR data (see Appendix \ref{app: SDR})
\begin{equation}\label{eq: deformed induction}
    \begin{aligned}
        \widetilde{i} &=  \ib - \Kb X \ib = \ib - \widetilde{K}  \delta_q  \ib, \\
        \widetilde{p} &=  \pb - \pb X  \Kb = \pb - \pb  \delta_q  \widetilde{K},\\
        \widetilde{K} &=  \Kb - \Kb X  \Kb ,\\
        \widetilde{\delta_q} &= \pb X  \ib = \pb  \delta_q  \ib - \pb  \delta_q  \widetilde{K} \delta_q  \ib.
    \end{aligned}
\end{equation} 
By Lemma \ref{lem: HPL},  $(\widetilde{i},\widetilde{p},\widetilde{K})$ form SDR data between the complexes $\Omega^\bullet(M\times \mathbb{GF},\g;d_A + \delta_q)$ and $\Omega^\bullet(\mathbb{GF}, H_A(M,\g);\widetilde{\delta}_q).$ 
\begin{prop}\label{prop: extended ipK triple, App C}
    One can rewrite formulae \eqref{eq: deformed induction} as follows: 
    \begin{equation}\label{eq: deformed induction 2}
    \begin{aligned}
        \widetilde{i} &=  \sum_{k \geq 0} (-\Gb (\delta_q\Qb))^k \ib, \\
        \widetilde{p} &=  \sum_{k \geq 0}\pb(-(\delta_q\Qb)  \Gb)^k, \\
        \widetilde{K} &=  \sum_{k \geq 0} \Kb(-(\delta_q \Qb) \Gb)^k, \\
        \widetilde{\delta_q} &= \delta_q + \sum_{k \geq 1} \pb(\delta_q\Qb) d_A  \Gb  (-\Gb(\delta_q\Qb))^k \ib.
    \end{aligned}
\end{equation} 
\end{prop}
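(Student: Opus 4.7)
The plan is to derive each of the four identities in \eqref{eq: deformed induction 2} from the standard HPL formulas \eqref{eq: deformed induction} by means of elementary algebraic manipulations, exploiting the extra structure provided by the splitting $\Kb=\Qb\Gb$. The key identities we will use repeatedly are:
\begin{itemize}
\item $\Kb=\Qb\Gb=\Gb\Qb$, i.e.\ $[\Qb,\Gb]=0$ (since $[\Qb,\Hb]=0$ and $\Gb$ is a function of $\Hb+P$);
\item $\Qb^2=0$, hence $\Kb\Qb=\Qb\Kb=0$, and by differentiating $\Qb^2=0$,  $(\delta_q\Qb)\Qb=\Qb(\delta_q\Qb)$;
\item $\Gb\ib=\ib$, $\pb\Gb=\pb$, $\Qb\ib=0$, $\pb\Qb=0$, $\pb d_A=0$, $d_A\ib=0$;
\item $\Gb\Hb=1-P$ with $\Hb=[d_A,\Qb]$.
\end{itemize}

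\textbf{Step 1 (formula for $\widetilde K$).} Expanding $X=\sum_{k\geq 0}(-\delta_q\Kb)^k\delta_q$ and using $\Kb^2=0$ (which implies $\Kb(\delta_q\Kb)=(\delta_q\Kb)\Kb$) one first rewrites $\widetilde K=\sum_{k\geq 0}\Kb(-\delta_q\Kb)^k$. Now substitute $\delta_q\Kb=(\delta_q\Qb)\Gb-\Qb(\delta_q\Gb)$. The second summand is annihilated when placed to the right of a $\Kb$ (since $\Kb\Qb=0$), and more generally we have the crucial vanishing lemma
\[
\Kb\cdot P(\Gb,(\delta_q\Qb))\cdot\Kb=0
\]
for any noncommutative polynomial $P$, because $P$ commutes with $\Qb$ and $\Kb\cdot\Qb=\Qb\cdot\Kb=0$. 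An induction on $k$, using this vanishing to kill all terms of the form $\Kb((\delta_q\Qb)\Gb)^{j}\Qb(\delta_q\Gb)$, yields $\Kb(\delta_q\Kb)^k=\Kb((\delta_q\Qb)\Gb)^k$, which is the desired third line of \eqref{eq: deformed induction 2}.

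\textbf{Step 2 (formula for $\widetilde i$).} From $\widetilde i=\ib-\widetilde K\delta_q\ib$, compute $\Kb\delta_q\ib=\Gb\Qb\delta_q\ib=\Gb\big((\delta_q\Qb)-\delta_q\Qb\big)\ib=\Gb(\delta_q\Qb)\ib$ (using $\Qb\ib=0$). Combining with Step~1,
\[
\Kb((\delta_q\Qb)\Gb)^k\delta_q\ib=\Kb((\delta_q\Qb)\Gb)^{k-1}(\delta_q\Qb)\Kb\delta_q\ib\cdot(\ldots)
=(\Gb(\delta_q\Qb))^{k+1}\ib
\]
by pushing $\Qb$ to the right through $(\delta_q\Qb)$ and applying the $k=0$ identity iteratively. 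Summing over $k$ and reindexing gives $\widetilde K\delta_q\ib=-\sum_{k\geq 1}(-\Gb(\delta_q\Qb))^k\ib$, hence $\widetilde i=\sum_{k\geq 0}(-\Gb(\delta_q\Qb))^k\ib$.

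\textbf{Step 3 (formula for $\widetilde p$).} This is strictly dual to Step~2: starting from $\widetilde p=\pb-\pb\delta_q\widetilde K$, one uses $\pb\Qb=0$ and $\pb\Gb=\pb$ to show $\pb\delta_q\Kb=-\pb(\delta_q\Qb)\Gb$ (up to terms that vanish when composed with the subsequent factors), and the analogous induction produces $\widetilde p=\sum_{k\geq 0}\pb(-(\delta_q\Qb)\Gb)^k$.

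\textbf{Step 4 (formula for $\widetilde{\delta_q}$).} Using $\pb\ib=\mathrm{id}$ one checks by a direct Leibniz computation that $\pb\delta_q\ib=\delta_q$, which produces the first term. For the correction $-\pb\delta_q\widetilde K\delta_q\ib$, combine the formulas from Steps~1 and~2 and then use $\pb\delta_q\Gb=\delta_q\pb-(\delta_q\pb)\Gb$ together with the consequences of $\pb d_A=0$ (namely $(\delta_q\pb)d_A=0$) and of $[d_A,\Qb]=\Hb$, $\Gb\Hb=1-P$, to rewrite the expression that naturally involves $(\delta_q\pb)$ as one involving $\pb(\delta_q\Qb)d_A\Gb$. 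This is where the factor $d_A$ appearing in the fourth line of \eqref{eq: deformed induction 2} comes from.

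\textbf{Main obstacle.} Steps~1--3 are clean inductions driven by the vanishing lemma $\Kb\cdot P(\Gb,(\delta_q\Qb))\cdot\Kb=0$. The bookkeeping in Step~4 is the most delicate part: one must insert and cancel $P$ (using $P=1-\Gb\Hb$) to convert a derivative $(\delta_q\pb)$ that is produced by moving $\delta_q$ past $\Gb$ into the final symmetric form $\pb(\delta_q\Qb)d_A\Gb$. All signs are graded and need to be tracked carefully, but no new analytic input is required beyond the basic identities listed above.
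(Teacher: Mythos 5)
Your proposal is correct and follows essentially the same route as the paper's proof: both rewrite $\widetilde K$ via $\Kb^2=0$ and the substitution $\delta_q\Kb=(\delta_q\Qb)\Gb-\Qb(\delta_q\Gb)$ with $\Kb\Qb=0$, derive $\widetilde i,\widetilde p$ from the identities $\Qb(\delta_q\ib)=(\delta_q\Qb)\ib$ and $\pb(\delta_q\Qb)=-(\delta_q\pb)\Qb$, and obtain the $d_A$ factor in $\widetilde{\delta_q}$ from the key relation $\delta_q\pb=-\pb(\delta_q\Qb)d_A\Gb$ proved using $[d_A,\Kb]=1-\ib\pb$ and $(\delta_q\pb)d_A=(\delta_q\pb)\ib=0$. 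The only cosmetic difference is that you package the cancellations into a general vanishing lemma $\Kb\,P(\Gb,\delta_q\Qb)\,\Kb=0$, whereas the paper applies $\Kb\Qb=0$ term by term.
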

Note that, for degree reasons, only $k=0,1,2$ terms survive in $\til{i},\til{p},\til{K}$ and only $k=1$ term survives in $\til\delta_q$.

In particular, for 
\begin{equation}\label{GF as A' close to A}
\mathbb{GF} = \{A' \in \FC'| (A,A') \text{ close}\},    
\end{equation} 
we have $\widetilde{i} = \wh{i}, \widetilde{p} = \wh{p}, \widetilde{K} =\wh{K}, \widetilde{\delta_q} =\delta_{A'} + \wh{\Theta}$, with $(\wh{i},\wh{p},\wh{K},\wh{\Theta})$ given by \eqref{ihat, phat, Khat, Phihat}. Lemma \ref{lemma: ipK hat} then follows from the fact that the deformed $(i,p,K)$ triple is again an $(i,p,K)$ triple.  \\
\begin{proof}
One can simplify formulae \eqref{eq: deformed induction} by noticing that \begin{equation} [\delta_q, \Kb] = \delta_q\Kb \in \Omega^1(\mathbb{GF},\operatorname{End}(\Omega^\bullet(M,\g)))
\end{equation} 
where the right hand side acts as a multiplication operator on differential forms in the $\mathbb{GF}$ direction. 
We are using the notations where for an operator $x\in \{\Kb,\ib,\pb,\Qb\}$, $(\delta_q x)$ stands for $[\delta_q,x]$. By induction, one proves that 
\begin{equation}
\widetilde{K} = \sum_{k \geq 0} (-\Kb  \delta_q)^k \Kb = \sum_{k \geq 0}\Kb  (-\delta_q \Kb)^k = \sum_{k \geq 0} (-\delta_q \Kb)^k \Kb
\end{equation}
where in the second equality we have used that $\delta_q \Kb$ and $\Kb$ commute as a consequence of $\Kb^2=0$.
 Using further that  $(\Kb)_q = Q_q G_q$, we have $(\delta_q \Kb)_q = (\delta_q\Qb)_q G_q - Q_q (\delta_q \Gb)_q$, but using that $K_q  Q_q = 0$ we then obtain
 \begin{equation}\label{eq: K tilde}
     \widetilde{K}=\sum_{k \geq 0} \Kb ( -(\delta_q \Qb) \Gb)^k  = \sum_{k \geq 0} \Gb (- (\delta_q \Qb)  \Gb)^k \Qb.
 \end{equation}
 This proves the third equation in \eqref{eq: deformed induction 2}
 We can then also rewrite the first two equations in \eqref{eq: deformed induction} by realizing that 
 \begin{equation}\label{eq: Q delta i}
     \Qb \delta_q \ib = \Qb  (\delta_q\ib) =  (\delta_q \Qb)  \ib
 \end{equation}
 and 
 \begin{equation}\label{eq: p delta Q}
     \pb  \delta_q  \Qb  = -(\delta_q\pb) \Qb = \pb(\delta_q\Qb),
 \end{equation}
 combining \eqref{eq: deformed induction}, \eqref{eq: K tilde},\eqref{eq: Q delta i} we get 
 \begin{multline*}
 \widetilde{i} = \ib - \sum_{k \geq 0} \Gb  (-(\delta_q\Qb)  \Gb)^k  \Qb\delta_q \ib \\= \ib + \sum_{k \geq 0} \Gb  (-(\delta_q\Qb)  \Gb)^k  (-\delta_q\Qb) \ib = \sum_{k \geq 0} (-\Gb  (\delta_q\Qb))^k \ib
 \end{multline*} which proves the first equation in \eqref{eq: deformed induction 2}. Combining \eqref{eq: deformed induction}, \eqref{eq: K tilde} and \eqref{eq: p delta Q}, we obtain 
 \begin{multline*}
      \widetilde{p} = \pb - \pb  (\delta_q  \Qb)  \Gb  \sum_{k \geq 0} (-(\delta_q\Qb)  \Gb)^k  \\
      = \pb - \pb  (\delta_q\Qb)  \Gb  \sum_{k \geq 0}(-(\delta_q\Qb)  \Gb)^k = \pb\sum_{k \geq 0}(-(\delta_q\Qb)  \Gb)^k,
 \end{multline*} which proves the second equation in \eqref{eq: deformed induction 2}. Finally, we focus on the last equation. The first term is simply 
 $$ \pb  \delta_q  \ib = \underbrace{\pb \ib}_{=1_{H^\bullet(M,\g)}}  \delta_q + \underbrace{\pb  (\delta_q\ib)}_{=0} = \delta_q.  $$ 
For the second term, notice that we have $\pb  \widetilde{K} = 0$ and therefore
$$ \pb  \delta_q  \widetilde{K}  \delta_q  \ib = -(\delta_q \pb)  \widetilde{K}  \delta_q \ib = (\delta_q\pb)  \sum_{k\geq 1} (-\Gb (\delta_q \widehat{Q}_0))^k  \ib.$$  
The proof of the last equation in \eqref{eq: deformed induction 2} now follows from 
$$\delta_q\pb = -\pb(\delta_q\Qb) d_A  \Gb,$$ 
which in turn can be proved by deriving the identity $\pb  \Qb = 0$: 
$$(\delta_q\pb)\Qb = -\pb (\delta_q\Qb)$$ and then composing both sides with $d_A\Gb$ on the left: 
$$ -\pb (\delta_q\Qb)  d_A  \Gb = (\delta_q\pb) \Qb  d_A  \Gb = - (\delta_q\pb) (d_A  \Qb\,  \Gb - \mathrm{id} + \ib \pb) = \delta_q\pb,$$
because $(\delta_q\pb) d_A = \delta_q(\pb  d_A) = 0$ and $(\delta_q\pb) \ib = - \pb ( \delta_q\ib) = 0$, because changing $q$ shifts representatives of cohomology by $d_A$-exact terms and $\pb d_A=0$.
 \end{proof}

\section{Some technical proofs}
\subsection{Proof of Propositions \ref{prop: kuranishi} and \ref{prop: bij}}
\subsubsection{Proof of Proposition \ref{prop: kuranishi}}\label{app: proofs 1}
\begin{proof}
      For point i), notice that because the assumption of boundedness of $K_{A_0}$ in a Banach norm, by the Banach inverse function theorem the inverse exists in a neighborhood of every point $\delta$ where the differential of $\til{\kappa}_{A_0}$, 
\begin{equation}
    (d \til\kappa_{A_0})_\delta = \mathrm{Id} + K_{A_0}\mathrm{ad}_\ddd \colon \Omega^1 \to \Omega^1
\end{equation}
is invertible. In particular, by the triangle inequality this happens when the operator norm of $K_{A_0}\ddd$ is less than one. 

 For point ii), we have to show $\til{\delta}_{A_0}(\til{\kappa}_{A_0}(\alpha)) = \til{\kappa}_{A_0}(\til{\delta}_{A_0}(\alpha) = \alpha.$  To see that $\til\delta_{A_0}(\til\kappa_{A_0}(\alpha)) = \alpha, $ recall that the coefficients $\alpha^{(j)}$ of $\til\delta_{A_0}$ are given by summing over binary trees with $j$ leaves, with prefactor $1/2^{j+1}$ and sign $(-1)^{j+1}$. When evaluating $\til\delta_{A_0}(\til\kappa_{A_0}(\alpha)$ we are placing $\til\kappa_{A_0}(\alpha)$ instead on every leaf. But since $\til\kappa_{A_0}(\alpha) = \alpha + \frac12K_{A_0}[\alpha,\alpha]$, we can express $\til\delta_{A_0}(\til\kappa_{A_0}(\alpha))$ again as a sum over binary trees $T'$ evaluated according to the same rules, but with a different combinatorial factor $c_{T'}$, allowing for the fact that the same tree $T'$ could arise from several different trees $T$.  See Figure \ref{fig: two trees} for an example. 
 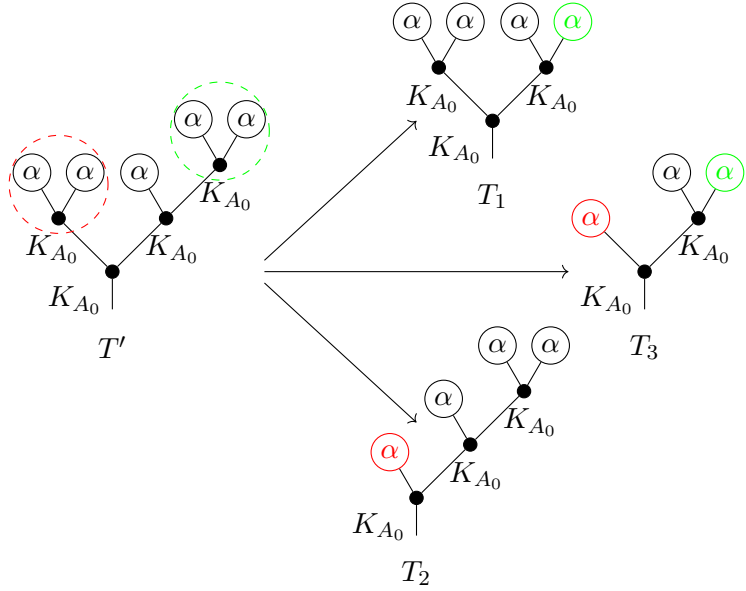
\begin{figure}
     \centering
     \begin{tikzpicture}
     \begin{scope}[shift={(1cm,2cm)}]
     \node at (0,-1) {$T_1$};
        \node[vertex] (r) at (0,0) {}; 
        \draw (r) edge node[pos=0.7, left] {$K_{A_0}$} (0,-.5);
        \node[vertex] (v2) at (45:1) {};
        \draw (r) edge node[pos=0.4, right] {$K_{A_0}$} (v2);
        \node[leaf, green] (l1) at ($(v2) +(60:0.7)$) {$\alpha$};
        \draw (v2) edge node[pos=0.4, right] {} (l1);
        \node[leaf] (l2) at ($(v2)+(120:0.7)$) {$\alpha$};
        \draw (v2) edge node[pos=0.4, left] {} (l2);
        \node[vertex] (v1) at ($(r)+(135:1)$) {};
        \draw (r) edge node[pos=0.4, left] {$K_{A_0}$} (v1);
        \node[leaf] (l3) at ($(v1) +(60:.7)$) {$\alpha$};
        \draw (v1) edge node[pos=0.4, right] {} (l3);
        \node[leaf] (l4) at ($(v1)+(120:.7)$) {$\alpha$};
        \draw (v1) edge node[pos=0.4, left] {} (l4);
        \end{scope}
        \begin{scope}[shift={(3cm,0cm)}]
     \node at (0,-1) {$T_3$};
        \node[vertex] (r) at (0,0) {}; 
        \draw (r) edge node[pos=0.7, left] {$K_{A_0}$} (0,-.5);
        \node[vertex] (v2) at (45:1) {};
        \draw (r) edge node[pos=0.4, right] {$K_{A_0}$} (v2);
        \node[leaf, green] (l1) at ($(v2) +(60:0.7)$) {$\alpha$};
        \draw (v2) edge node[pos=0.4, right] {} (l1);
        \node[leaf] (l2) at ($(v2)+(120:0.7)$) {$\alpha$};
        \draw (v2) edge node[pos=0.4, left] {} (l2);
        \node[leaf, red] (l1) at ($(r)+(135:1)$) {$\alpha$};
        \draw (r) edge node[pos=0.4, left] {} (l1);
        \end{scope}
        \begin{scope}[xshift=-4cm]
        \node at (0,-1) {$T'$};
        \node[vertex] (r) at (0,0) {}; 
        \draw (r) edge node[pos=0.7, left] {$K_{A_0}$} (0,-.5);
        \node[vertex] (v1) at ($(r)+(135:1)$) {};
        \draw (r) edge node[pos=0.4, left] {$K_{A_0}$} (v1);
        \node[leaf] (l1) at ($(v1) +(120:0.7)$) {$\alpha$};
        \draw (v1) edge node[pos=0.4, right] {} (l1);
        \node[leaf] (l2) at ($(v1)+(60:0.7)$) {$\alpha$};
        \draw (v1) edge node[pos=0.4, right] {} (l2);
        \draw[dashed,red] ($(v1) + (0,0.45)$) circle (0.65cm);
        \node[vertex] (v2) at (45:1) {};
        \draw (r) edge node[pos=0.4, right] {$K_{A_0}$} (v2);     
        \node[vertex] (v3) at ($(v2) +(45:1)$) {};
        \draw (v2) edge node[pos=0.4, right] {$K_{A_0}$} (v3);
        \node[leaf] (l4) at ($(v3)+(120:.7)$) {$\alpha$};
        \draw (v3) edge node[pos=0.4, left] {} (l4);
        \node[leaf] (l5) at ($(v3)+(60:.7)$) {$\alpha$};
        \draw (v3) edge node[pos=0.4, left] {} (l5);
        \draw[dashed,green] ($(v3) + (0,0.45)$) circle (0.65cm);
        \node[leaf] (l3) at ($(v2)+(120:.7)$) {$\alpha$};
        \draw (v2) edge node[pos=0.4, left] {} (l3);
        \draw[->] (2,0.15) to (4,2);
        \draw[->] (2,0) to (6,0);
        \draw[->] (2,-0.15) to (4,-2);
        \end{scope}
        \begin{scope}[yshift=-3cm]
        \node at (0,-1) {$T_2$};
        \node[vertex] (r) at (0,0) {}; 
        \draw (r) edge node[pos=0.7, left] {$K_{A_0}$} (0,-.5);
        \node[leaf,red] (l1) at ($(r) +(120:0.7)$) {$\alpha$};
        \draw (r) edge node[pos=0.4, right] {} (l1);
        \node[vertex] (v2) at (45:1) {};
        \draw (r) edge node[pos=0.4, right] {$K_{A_0}$} (v2);     
        \node[vertex] (v3) at ($(v2) +(45:1)$) {};
        \draw (v2) edge node[pos=0.4, right] {$K_{A_0}$} (v3);
        \node[leaf] (l4) at ($(v3)+(120:.7)$) {$\alpha$};
        \draw (v3) edge node[pos=0.4, left] {} (l4);
        \node[leaf] (l5) at ($(v3)+(60:.7)$) {$\alpha$};
        \draw (v3) edge node[pos=0.4, left] {} (l5);
        \node[leaf] (l3) at ($(v2)+(120:.7)$) {$\alpha$};
        \draw (v2) edge node[pos=0.4, left] {} (l3);
        \end{scope}
\end{tikzpicture}     \caption{The tree $T'$ has $n_{T'} = 2 $ corollas, collapsing the green one yields $T_1$, collapsing the red one yields $T_2$, collapsing both yields $T_3$. Thus it will appear in $\til\delta_{A_0}(\til{\kappa}_{A_0}(\alpha))$ four times, with total combinatorial coefficient $1 + (-1) + (-1) + 1 = 0$.}
     \label{fig: two trees}
 \end{figure}We claim that $c_{T'} = 0$ for all trees with at least two leaves. Indeed, for a tree $T'$ let $n_{T'}$ denote the number of internal vertices connected to exactly two leaves. Note that $n_{T'} = 0$ if and only if $T'$ is the tree with a single leaf at the root and no internal vertex. If $v$ is such an internal vertex, then we call $v$ together with the two adjacent leaves a corolla. Then $T'$ could be obtained from the tree $T$ where we collapse the corolla of $v$ into a leaf $\alpha$. Note that this operation changes the sign. In total, we will obtain the tree $T'$ exactly $2^{n_{T'}}$ times, but with different signs: If we collapse $k$ corollas then there is a sign $(-1)^k$. Therefore the combinatorial coefficient of $T'$ is $c_{T'} = \sum_{k \geq 0}^{n_{T'}}(-1)^k{k \choose {n_{T'}}} = 0$. The other direction $\til\kappa_{A_0}(\til{\delta}_{A_0}(\alpha)) = \alpha$ is proven similarly. 

 As for point iii), we simply compute 
 $$\dd_{A_0}{\til\kappa_{A_0}}(\alpha) = d_{A_0}\alpha + \frac12 \dd_{A_0}K_{A_0}[\alpha,\alpha] = \dd_{A_0}\alpha + \frac12[\alpha,\alpha] - K_{A_0}\dd_{A_0}[\alpha,\alpha] - p_{A_0}[\alpha,\alpha].$$
 If $\dd_{A_0}\alpha = -\frac12[\alpha,\alpha]$, then the first two terms cancel and the latter two terms vanish because $[\alpha,\alpha] = -2d_{A_0}\alpha$ is exact. 
 Point iv) follows immediately from the definition of $\til\varphi_{A_0} = A_0 + \til\delta_{A_0}$ and points i) and iii). 
\end{proof}
\subsubsection{Proof of Proposition \ref{prop: bij}}\label{app: proofs 2}

  Let $A_1$ be a different flat connection, and $m = A_1 - A_0$. Then $m \in MC_{A_0}$, and therefore $\dd_{A_0}\til\kappa_{A_0}(m)=0$, so it defines a tangent vector to the space of flat connections at $A_0$ (and moreover, if $K_{A_0}m = 0$, then $K_{A_0}^2 = 0$ implies $K_{A_0}\til{\kappa}_{A_0} = 0$). In this case, $A(t) = A_0+\til{\delta}_{A_0}(t \til{\kappa}_{A_0}(m))$ defines a curve of flat connections with $A(1) = A_1$. 
  \begin{lem}Suppose $A_0$ is a smooth point. Then there are neighborhoods  $0 \in U \subset \mathrm{Im}\, i_{A_0}$ and $0 \in V \subset MC_{A_0} \cap \ker K_{A_0}$ such that 
      $ \til{\delta}_{A_0}\colon U \to V$ is bijective with inverse $\til{\kappa}_{A_0}$. 
  \end{lem}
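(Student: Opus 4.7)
The plan is to combine the Kuranishi proposition (Proposition \ref{prop: kuranishi}) with the smoothness assumption on $A_0$ and the standard SDR identities in order to show that $\til\delta_{A_0}$ and $\til\kappa_{A_0}$ restrict to mutually inverse bijections between the two subsets in question, on sufficiently small neighborhoods.

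First, by parts (i) and (ii) of Proposition \ref{prop: kuranishi}, I would fix a small neighborhood $W \subset V_{A_0}\cap U_{A_0}$ of $0\in \Omega^1_B$ on which $\til\kappa_{A_0}$ and $\til\delta_{A_0}$ are defined, take values in $W$ and satisfy $\til\kappa_{A_0}\circ \til\delta_{A_0}=\mathrm{id}= \til\delta_{A_0}\circ \til\kappa_{A_0}$. Now set
\begin{equation*}
    U := W\cap \mathrm{Im}\, i_{A_0}, \qquad V := \til\delta_{A_0}(U).
\end{equation*}
Then $U$ is open in $\mathrm{Im}\, i_{A_0}$ by construction, and I must show (a) $V\subset MC_{A_0}\cap \ker K_{A_0}$, (b) $V$ is open in $MC_{A_0}\cap \ker K_{A_0}$, and (c) $\til\kappa_{A_0}|_V$ is inverse to $\til\delta_{A_0}|_U$.

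For (a): any $\alpha=i_{A_0}\sfa\in U$ satisfies $K_{A_0}\alpha=0$ (since $K_{A_0}\circ i_{A_0}=0$ in the SDR data) and, because $A_0$ is smooth, Proposition \ref{prop: deformation of A0} (applied in the convergent Banach regime) yields $A_0+\til\delta_{A_0}(\alpha)\in \FC$, i.e.\ $\til\delta_{A_0}(\alpha)\in MC_{A_0}$. The same proposition records that every summand $\alpha^{(k)}$ of $\til\delta_{A_0}(\alpha)$ lies in $\ker K_{A_0}$, so $K_{A_0}\til\delta_{A_0}(\alpha)=0$. Thus $V\subset MC_{A_0}\cap \ker K_{A_0}$, and part (c) is then immediate from the global identity $\til\kappa_{A_0}\circ\til\delta_{A_0}=\mathrm{id}$ on $W$.

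For (b), I would verify the stronger statement $V= W\cap MC_{A_0}\cap \ker K_{A_0}$. The $\subset$ inclusion is clear. Conversely, let $m\in W\cap MC_{A_0}\cap \ker K_{A_0}$. By part (iii) of Proposition \ref{prop: kuranishi}, $\til\kappa_{A_0}(m)$ is $d_{A_0}$-closed; and because $K_{A_0}m=0$ and $K_{A_0}^2=0$,
\begin{equation*}
    K_{A_0}\til\kappa_{A_0}(m)=K_{A_0}m+\tfrac12 K_{A_0}^2[m,m]=0.
\end{equation*}
Now apply the Hodge-type identity $\mathrm{id}=i_{A_0}p_{A_0}+d_{A_0}K_{A_0}+K_{A_0}d_{A_0}$ of the SDR data to $\til\kappa_{A_0}(m)$: both the second and third terms vanish, hence $\til\kappa_{A_0}(m)=i_{A_0}p_{A_0}\til\kappa_{A_0}(m)\in \mathrm{Im}\,i_{A_0}$. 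Since $\til\kappa_{A_0}(m)\in W$, it belongs to $U$, and then $m=\til\delta_{A_0}\til\kappa_{A_0}(m)\in V$, as required.

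The only delicate point I anticipate is making sure that the neighborhood $W$ is small enough for \emph{simultaneous} convergence of $\til\delta_{A_0}$ and $\til\kappa_{A_0}$ in the Banach norm and for the mutual-inverse identity to hold on the nose (rather than just as formal power series); this is exactly what parts (i) and (ii) of Proposition \ref{prop: kuranishi} provide, so the main labor is bookkeeping. Every other ingredient — smoothness of $A_0$ to get the lift to $MC_{A_0}$, the built-in vanishing $K_{A_0}\alpha^{(k)}=0$, and the Hodge identity to close the loop — has already been established in the preceding subsections.
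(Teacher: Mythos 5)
Your proposal is correct and follows essentially the same route as the paper: forward direction via Proposition \ref{prop: deformation of A0} (smoothness gives the Maurer--Cartan lift) plus the $K_{A_0}$-exactness of the higher-order terms, backward direction via Proposition \ref{prop: kuranishi}(iii) and the characterization $\mathrm{Im}\,i_{A_0}=\ker d_{A_0}\cap\ker K_{A_0}$, then mutual inverseness from the Kuranishi proposition. Your explicit use of the homotopy identity $\mathrm{id}=i_{A_0}p_{A_0}+d_{A_0}K_{A_0}+K_{A_0}d_{A_0}$ and the precise identification $V=W\cap MC_{A_0}\cap\ker K_{A_0}$ merely spell out steps the paper leaves implicit.
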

  \begin{proof}
      We know that $\til{\delta}_{A_0}$ converges in a neighborhood $U_1$ of $0 \in \Omega^1_l$, therefore $\til{\delta}_{A_0}$ is defined on $U = U_1 \cap \operatorname{Im}i_{A_0}$. On $\operatorname{Im} i_{A_0}$, we know that $\til{\delta}_{A_0} \in MC_{A_0}$ by Proposition \ref{prop: deformation of A0}. Since $\til{\delta}_{A_0}(\alpha) = \alpha + K_{A_0}(\ldots)$, we have $K_{A_0}\til\delta_{A_0}(\alpha) = 0$. Therefore $\til\delta_{A_0}(U) \subset  MC_{A_0} \cap \ker K_{A_0}$. On the other hand, if $m \in MC_{A_0} \cap \ker K_{A_0}$, then $\dd_{A_0}\til{\kappa}(m) = K_{A_0}\til{\kappa}(m) = 0$, i.e $\til{\kappa}_{A_0}(m) \in \operatorname{Im}i_{A_0}$. For $m$ small enough, we therefore have $\til\kappa_{A_0}m \in U$, and since we already know that $\til{\delta}_{A_0}$ and $\til\kappa_{A_0}$ are inverse to each other, we conclude the statement. 
  \end{proof}
\begin{cor}
The restriction of the map $\kappa_{A_0}\colon\Omega^1 \to H^1_{A_0}$ given by
\begin{equation}
\kappa_{A_0}(\ddd) = p_{A_0}(\ddd + \frac{1}{2}K_{A_0}[\ddd,\ddd]) = p_{A_0}(\ddd)\in H^1_{A_0}
\end{equation}
to the image of $\delta_{A_0}\colon H^1_{A_0}\supset U \to {\Omega^1(M,\g)}$, is a compositional inverse to $\delta_{A_0}$. 
\end{cor} 
Finally, let us prove Proposition \ref{prop: bij}. 
\begin{proof}[Proof of Proposition \ref{prop: bij}]
We begin with point i). 
   To show surjectivity onto a small neighborhood of 
   $A_0 \in \FC_l$, in the first step, we construct a gauge transformation that takes an arbitrary connection $A_1$ close enough to $A_0$ to a 
   connection $A_1'$ satisfying $K_{A_0}(A_1'-A_0)=0$.
   To this end, consider the map
    $F \colon\Omega^0_{K-\mr{ex}, l} \times \Omega^1_l \to \Omega^0_{K-\mr{ex},l}$ given by 
    $$ (\beta, \ddd) \to K_{A_0}(\, {}^{\exp(-\beta)}(A_0 + \ddd) - A_0 ). $$
    We want to solve for $\beta = \beta(\ddd)$ such that 
    $F(\beta(\ddd),\ddd) = 0$,
    this is the desired gauge transformation. Existence of $\beta$, for small enough $\ddd$, is then guaranteed by the implicit function theorem for Banach spaces, since the derivative of $F$ at $(0,0)$ in direction of $\beta$ is $(dF/d\beta)(0,0) = K_{A_0}d_{A_0} = \mr{id}_{\Omega^0_{K-\mr{ex}}}$. 
    This means that for $A_1$ close enough to $A_0$, there is $\beta$ such that $K_{A_0}(\,{}^{\exp(-\beta)}A_1-A_0) = 0$. For such connections, we know that $\til{\kappa}_{A_0}(\,{}^{\exp(-\beta)}A_1-A_0)$ is a $d_{A_0}$- and $K_{A_0}$-closed 1-form. Therefore, if $A_1$ is a flat connection close to $A_0$, then 
    $\alpha = \til\kappa_{A_0}(\,{}^{\exp(-\beta)}A_1-A_0) -  d_{A_0}\beta$ 
    is a $d_{A_0}$-closed form such that $\til{\psi}_{A_0}(\alpha) = A_1$.
    
As for point ii), we notice that the implicit function theorem also guarantees smoothness of the map $\ddd \mapsto \beta(\ddd)$. It follows that $\til\psi_{A_1}^{-1} \circ \til\psi_{A_0}$ is a composition of smooth maps between differential forms (projections, gauge transformations, $\beta$, and the maps $\til\varphi$ and $\til\kappa$) and hence smooth. 
\end{proof}

\subsection{Proof of Proposition \ref{prop: tau Grothendieck horizontality}: ``horizontality'' of Ray-Singer torsion}
\label{Appendix: proof of Prop 4.4}

We first need some auxiliary results.

\begin{lem}\label{lemma: dependence of torsion on A}
Given a path of flat connections $A_t$, one has the following formula for the infinitesimal change of the Ray-Singer torsion $\tau_{A_t}$:
    \begin{equation}\label{tau dependence on A}
        \left.\frac{d}{dt}\right|_{t=0}\det(\BB^\mr{diag}_{A_0\la A_t})\tau_{A_t}=\tau_{A_0}\,\mr{Str}_{\Omega^\bt}(K_{A_0}\ad_{\dot{A}_0}).
    \end{equation}
    Here $\BB^\mr{diag}_{A_0\la A_t}\colon H_{A_t}\ra H_{A_0}$ is the projection to cohomology of the parallel transport of the connection $\nabla^\mr{Harm}$ along the path $(A_{t-\tau},A_{t-\tau})$, $0\leq \tau\leq t$ in $\FC'\times\FC'$.
\end{lem}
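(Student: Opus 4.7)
The plan is to rewrite the left-hand side in terms of zeta-regularized traces of Laplacians and then to compute the variation using the explicit formula for $\dot\Delta_{A_0}$. The cohomology comparison map $\mathfrak{B}^{\mr{diag}}_{A_0\la A_t}$ along the diagonal is, by Proposition \ref{prop 3.5}(\ref{prop 3.5 (b)}), induced by an isometry between the spaces of harmonic forms with respect to the Hodge inner product. Consequently, if $\mu_{A_t}\in\mr{Det}\, H^{\bt}_{A_t}$ denotes the volume element associated to the Hodge inner product on harmonic representatives, then $\det(\mathfrak{B}^{\mr{diag}}_{A_0\la A_t})\,\mu_{A_t}=\mu_{A_0}$. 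Since Ray–Singer torsion can be written as
\begin{equation*}
\tau_{A_t}=\mu_{A_t}\prod_{p=0}^{3}\bigl({\det}'_{\Omega^p}\Delta_{A_t}\bigr)^{-(-1)^p p/2},
\end{equation*}
the first step reduces (\ref{tau dependence on A}) to the identity
\begin{equation*}
\tau_{A_0}^{-1}\frac{d}{dt}\Big|_{t=0}\det(\mathfrak{B}^{\mr{diag}}_{A_0\la A_t})\tau_{A_t}
=\sum_{p=0}^{3}\frac{-(-1)^p p}{2}\,\mr{tr}_{\Omega^p}(\dot\Delta\, G_{A_0}),
\end{equation*}
where each trace is understood in the zeta-regularized sense and standard arguments justify differentiating under the $\log$.

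Next, I will expand $\dot\Delta=[\ad_{\dot A_0},d^{*}_{A_0}]_{+}+[d_{A_0},\ad^{*}_{\dot A_0}]_{+}$. Inserting this, each of the four anticommutator terms will be rearranged using the cyclicity of the (regularized) trace so as to move a factor of $d^{*}_{A_0}$ or $d_{A_0}$ next to $G_{A_0}$ and form either $K_{A_0}=d^{*}_{A_0}G_{A_0}$ or its Hodge dual $K^{*}_{A_0}=d_{A_0}G_{A_0}$. Reindexing the four resulting sums $\sum_p$ and collecting coefficients, the $d^{*}$-derived contributions combine into $\frac12\mr{Str}_{\Omega^{\bt}}(K_{A_0}\ad_{\dot A_0})$, while the $d$-derived contributions give $-\frac12\mr{Str}_{\Omega^{\bt}}(K^{*}_{A_0}\ad^{*}_{\dot A_0})$. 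The key elementary identity here is
\begin{equation*}
\frac{-(-1)^p p-(-1)^{p+1}(p+1)}{2}=\frac{(-1)^p}{2},
\end{equation*}
which appears when one shifts $p$ by $\pm 1$ to move a differential operator across the trace.

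The last step is to show that these two supertraces are actually equal, hence combine to the single supertrace appearing in (\ref{tau dependence on A}). This follows from Hodge duality: conjugation by $\ast$ is a (graded) isometry intertwining $K_{A_0}\leftrightarrow K^{*}_{A_0}$ and $\ad_{\dot A_0}\leftrightarrow \ad^{*}_{\dot A_0}$, so
\begin{equation*}
\mr{Str}_{\Omega^{\bt}}(K^{*}_{A_0}\ad^{*}_{\dot A_0})
=\mr{Str}_{\Omega^{\bt}}(\ast K_{A_0}\ad_{\dot A_0}\ast^{-1})
=-\mr{Str}_{\Omega^{\bt}}(K_{A_0}\ad_{\dot A_0}),
\end{equation*}
the final sign coming from the odd degree of the operator on the right-hand side. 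Combining with the previous step yields the claimed formula.

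The main obstacle is the rigorous manipulation of the regularized supertraces: cyclicity and the anticommutator identities used above are not automatic for $\zeta$-regularized traces, and one must verify that the potential ``multiplicative/derivation anomalies'' vanish. The standard way to handle this is to observe that all operators involved differ from the identity by operators of order $\leq 1$ composed with the parametrix $G_{A_0}$ of the Laplacian, so the relevant trace defect is given by local Wodzicki-residue-type expressions. These residues are derivatives of symbols that vanish when the deformation $\dot A_0$ is collected into a graded-commutator form, which is precisely the structure produced by the rearrangement above. This analytic verification, essentially the same one used to prove metric-independence of Ray–Singer torsion in the acyclic case, is the step that requires the most care.
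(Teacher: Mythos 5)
Your proposal follows essentially the same route as the paper's proof: reduce via the isometry property of $\BB^\mr{diag}$ (Proposition \ref{prop 3.5}) to the variation of $\log\prod_p(\det'\Delta_{A_t})^{-(-1)^pp/2}$, expand $\dot\Delta$ as a sum of anticommutators, collect the traces with the same coefficient identity, and combine the two resulting supertraces via Hodge duality. Your concluding remarks on the regularized traces correspond to the paper's Remark \ref{rem: regularization of supertraces}, which disposes of the anomaly more directly by noting that the singular heat-kernel terms are proportional to $\mr{id}\in\mr{End}(\g)$ and vanish under the trace against $\ad_{\dot A_0}$ by unimodularity of $\g$.
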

\begin{proof}
    By definition of Ray-Singer torsion, 
    \begin{equation}
    \tau_{A_t}=\mu_{A_t} \prod_{p=0}^3 \left({\det}'_{\Omega^p} \Delta_{A_t}\right)^{\frac{-(-1)^pp}{2}}
    \end{equation}
with $\mu_{A_t}$ the volume element in $\mr{Det}H^\bt_{A_t}$ corresponding to Hodge inner product. Note that in this lemma we are using the synchronized $(A_t,A_t)$ Hodge decompositon. Since $\BB_{A_t\la A_0}^\mr{diag}$ is an isometry (Proposition \ref{prop 3.5} (\ref{prop 3.5 (b)})), we have $\det(\BB_{A_t\la A_0}) \mu_{A_t}=\mu_{A_0}$. Hence,
\begin{multline}\label{lem 4.12 computation}
     \tau_{A_0}^{-1}\left.\frac{d}{dt}\right|_{t=0}\det(\BB_{A_0\la A_t})\tau_{A_t}=
     \left.\frac{d}{dt}\right|_{t=0} \log \prod_{p=0}^3 \left({\det}'_{\Omega^p} \Delta_{A_t}\right)^{\frac{-(-1)^pp}{2}} \\
     =
     \sum_{p=0}^3\frac{-(-1)^p p}{2}   
     \mr{tr}_{\Omega^p} (\dot{\Delta} G), 
\end{multline}
where 
$$\dot{\Delta}\colon=\frac{d}{dt}\Big|_{t=0}\Delta_{A_t}=[\ad_{\dot{A}_0},d^*]_+ +
[d,\ad^*_{\dot{A}_0}]_+ .$$
Here we suppress the subscript $A_0$ in $G,d,d^*$.  Continuing the computation (\ref{lem 4.12 computation}) we have
\begin{multline*}
    \cdots = \sum_{p=0}^3 \frac{-(-1)^p p}{2}\Big(\tr_{\Omega^{p-1}} \underbrace{d^*G}_K \ad_{\dot{A}_0} +\tr_{\Omega^{p}} \underbrace{d^*G}_K \ad_{\dot{A}_0} 
    +\tr_{\Omega^p} \underbrace{dG}_{K^*} \ad^*_{\dot{A}_0} + \tr_{\Omega^{p+1}} \underbrace{dG}_{K^*} \ad^*_{\dot{A}_0}
    \Big)
\\ =
\sum_{p=0}^3 \underbrace{\frac{-(-1)^p p-(-1)^{p+1}(p+1)}{2}}_{\frac{(-1)^p}{2}} \tr_{\Omega^p} K \ad_{\dot{A}_0} +
\underbrace{\frac{-(-1)^p p-(-1)^{p-1}(p-1)}{2}}_{\frac{-(-1)^p}{2}} \tr_{\Omega^p} K^* \ad^*_{\dot{A}_0}\\
=\frac12 \mr{Str}_{\Omega^\bt} K\ad_{\dot{A}_0}- \frac12 \mr{Str}_{\Omega^\bt} \underbrace{K^*\ad^*_{\dot{A}_0}}_{*K\ad_{\dot{A}_0}*} = \mr{Str}_{\Omega^\bt} K\ad_{\dot{A}_0}.
\end{multline*}
This proves (\ref{tau dependence on A}).
\end{proof}

\begin{rem}\label{rem: regularization of supertraces}
    Traces in the proof above should be understood as zeta-regularized traces. For instance, $\mr{Str}_{\Omega^\bt}K\ad_{\dot{A}_0}$ should be understood as
    \begin{equation}
        \mr{Str}_{\Omega^\bt}K\ad_{\dot{A}_0}\colon=\lim_{s\ra 0} \int_0^\infty du\, u^s\, \mr{Str}_{\Omega^\bt} d^* e^{-u \Delta_{A_0}} \ad_{\dot{A}_0}.
    \end{equation}
    However, by the results of Axelrod-Singer \cite{Axelrod1991}, the singular terms of the heat kernel expansion are proportional to $\mr{id}\in \mr{End}(\g)$ and hence vanish under the trace with $\ad_{\dot{A}_0}$ by unimodularity of $\g$. Therefore, the zeta-regularized supertrace coincides with the point-splitting regularized supertrace that we use to define tadpoles in Feynman diagrams, cf. footnote \ref{footnote: tadpoles}.
\end{rem}

\begin{lem}\label{lemma D.2}
    Given a path of flat connections $A_t$ and $A'$ close $A_0$, we have
    \begin{equation}
    \left. \frac{d}{dt}\right|_{t=0} \det(\BB_{A_0\la A_t;A'})\tau_{A_t} = \tau_{A_0} \,\mr{Str}_{\Omega^\bt} (K_{A_0,A'} \ad_{\dot{A}_0}).
    \end{equation}
\end{lem}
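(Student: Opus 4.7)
The plan is to reduce the desynchronized statement to Lemma~\ref{lemma: dependence of torsion on A} (the synchronized case) by controlling the ratio $\det(\BB_{A_0\la A_t;A'})/\det(\BB^\mr{diag}_{A_0\la A_t})$. Set
\[
\mathbb{M}(t) \;:=\; \BB_{A_0\la A_t;A'}\circ \big(\BB^\mr{diag}_{A_0\la A_t}\big)^{-1} \;\in\;\mr{End}(H^\bt_{A_0}),
\]
which satisfies $\mathbb{M}(0)=\mr{id}$. Since
\[
\det(\BB_{A_0\la A_t;A'})\,\tau_{A_t} \;=\; \det\mathbb{M}(t)\cdot\det(\BB^\mr{diag}_{A_0\la A_t})\,\tau_{A_t},
\]
Lemma~\ref{lemma: dependence of torsion on A} reduces the problem to verifying the identity
\[
\mr{Str}_{H^\bt_{A_0}}\!\big(\dot{\mathbb{M}}(0)\big) \;=\; \mr{Str}_{\Omega^\bt}\!\big((K_{A_0,A'}-K_{A_0,A_0})\ad_{\dot A_0}\big).
\]

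To compute the left-hand side, interpret $\mathbb{M}(t)$ as the holonomy of $\nabla^\mathbb{H}$ around the ``square'' loop in $\mc{U}$ with vertices $(A_0,A_0),(A_t,A_t),(A_t,A'),(A_0,A')$. Parametrizing the bounding surface by $\Phi(s,u)=(A_0+s\dot A_0,\,A_0+u(A'-A_0)+(1-u)s\dot A_0)$ for $(s,u)\in[0,t]\times[0,1]$ and applying Stokes' theorem for the non-abelian holonomy, one obtains to leading order
\[
\dot{\mathbb{M}}(0) \;=\; \int_0^1 F_{\nabla^\mathbb{H}}\big|_{(A_0,\,A_0+u(A'-A_0))}(\dot A_0,\,A'-A_0)\,du,
\]
with $F_{\nabla^\mathbb{H}}$ given by Proposition~\ref{prop 3.5}(\ref{prop 3.5 (a)}). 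Note that at $t=0$ the loop is the degenerate ``there and back'' path along the vertical segment from $(A_0,A_0)$ to $(A_0,A')$, consistent with $\mathbb{M}(0)=\mr{id}$.

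The last ingredient is the pointwise identity
\[
\mr{Str}_{H^\bt_{A_0}}\!F_{\nabla^\mathbb{H}}(\dot A_0,\delta A') \;=\; \mr{Str}_{\Omega^\bt}\!\big(\delta_{A'}K_{A_0,A_0+u\delta A'}\cdot\ad_{\dot A_0}\big),
\]
where $\delta_{A'}K$ is the first-order variation from \eqref{variation of ipK in A'}. To verify it, expand $\delta_{A'}K = [d_{A_0},\Lambda]+P\mathbb{P}+\mathbb{I}P$; the commutator term vanishes under supertrace since $[d_{A_0},\ad_{\dot A_0}]=0$ (using $\dot A_0$ is $d_{A_0}$-closed), leaving $\mr{Str}((P\mathbb{P}+\mathbb{I}P)\ad_{\dot A_0})$. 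Standard cyclicity of the $\zeta$-regularized supertrace (cf.\ Remark~\ref{rem: regularization of supertraces}) together with $pi\ad^*_{\delta A'}G\ad_{\dot A_0} i = -\mr{Str}F_{\nabla^\mathbb{H}}(\dot A_0,\delta A')$ established from the explicit curvature formula \eqref{F_nabla^H} gives the claim. Integrating over $u\in[0,1]$ along the segment from $A_0$ to $A'$ converts the pointwise variation into the finite difference $K_{A_0,A'}-K_{A_0,A_0}$, yielding the desired supertrace identity.

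The main technical obstacle is the non-abelian path-ordering in the holonomy--Stokes argument (which is legitimate at first order in $t$), and making the supertrace cyclicity argument rigorous. The former is standard for families of loops becoming degenerate, while the latter relies on the fact that the singular parts of the heat-kernel expansions are proportional to $\mr{id}\in\mr{End}(\g)$ and hence annihilated by $\ad_{\dot A_0}$ by unimodularity, so zeta-regularization coincides with the point-splitting regularization used in the body of the paper.
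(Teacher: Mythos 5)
Your argument is essentially the paper's proof repackaged: the paper interpolates the gauge-fixing connection along a path $A'_s$ from $A_0$ to $A'$ and shows $\partial_s f_s=\partial_s h_s$ by matching the curvature of $\nabla^{\mr{Harm}}$ (computed as a second derivative of a holonomy around a small rectangle) against the variation \eqref{variation of ipK in A'} of $K$ contracted with $\ad_{\dot A_0}$ — which is precisely your holonomy--Stokes computation of $\dot{\mathbb{M}}(0)$ combined with your pointwise identity $\mr{Str}_{H}F(\dot A_0,\delta A')=\mr{Str}_{\Omega^\bt}(\delta_{A'}K\,\ad_{\dot A_0})$, so the two proofs coincide up to packaging. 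One small correction: the interpolating family must remain in $\FC'\times\FC'$ (the second slot has to be a \emph{flat} connection, since $(d^*_{A'})^2=0$ and $d_{A'}\delta A'=0$ are used in the curvature and variation formulas), so the affine segment $A_0+u(A'-A_0)$ in your parametrization $\Phi$ should be replaced by a path of flat connections from $A_0$ to $A'$; this does not affect the rest of the argument.
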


\begin{proof}
    Consider a path $A'_s$ in $\FC'$ starting at $A'_0=A_0$ and ending at $A'_1=A'$ (and staying close to $A_0$). Denote
    $$ f_s\colon = \tau_{A_0}^{-1}\left. \frac{d}{dt}\right|_{t=0} \det(\BB_{A_0\la A_t;A'_s})\tau_{A_t},\quad
    h_s\colon= \mr{Str}_{\Omega^\bt} (K_{A_0,A'_s} \ad_{\dot{A}_0}).
    $$
Note that Lemma \ref{lemma: dependence of torsion on A} implies that $f_0=h_0$. To prove the result it suffices to show that $\frac{d}{ds} f_s=\frac{d}{ds} h_s$.
For the derivative of $h_s$ we find
\begin{multline}\label{lemma D.2 proof eq1}
    \frac{d}{ds} h_s= \mr{Str}(\frac{d}{ds}K_{A_0,A'_s} \ad_{\dot{A}_0})=
    \mr{Str}(\cancel{[d,K\ad^*_{\partial_s A'_s} G]}+P\ad^*_{\partial_s A'_s}G+G\ad^*_{\partial_s A'_s} P)\ad_{\dot{A}_0}\\
    =\mr{Str}\,P(\ad^*_{\partial_s A'_s}G \ad_{\dot{A}_0}-\ad_{\dot{A}_0} G\ad^*_{\partial_s A'_s}).
\end{multline}
For $f_s$ we have
\begin{multline}\label{f_s via hol around rectangle}
    f_s=\tau_{A_0}^{-1}\left.\frac{d}{dt}\right|_{t=0} \det (\mr{Hol}_{\nabla^\mr{Harm}}(R_{s,t}))\cdot \det (\BB_{A_0\la A_t;A_0}) \tau_{A_t} \\= f_0+\left.\frac{d}{dt}\right|_{t=0}\det \mr{Hol}_{\nabla^\mr{Harm}}(R_{s,t}).
\end{multline}
Here $R_{s,t}$ is the (curved) rectangle in $\mc{U}$ with sides (i) $(A_\tau,A_0)$ with $0<\tau<t$, (ii) $(A_t,A'_\sigma)$ with $0<\sigma <s$, (iii) $(A_{t-\tau},A'_s)$ with $0<\tau<t$, (iv) $(A_0,A'_{s-\sigma})$ with $0<\sigma < s$; $\mr{Hol}_{\nabla^\mr{Harm}}(R_{s,t})\in \mr{End}(\mr{Harm}_{A_0,A_0})$ stands for the holonomy of $\nabla^\mr{Harm}$ around the rectangle. 

Denote $\rho_{s,t,\epsilon}=R_{s+\epsilon,t}-R_{s,t}$ (here difference is an operation on singular 1-chains) -- a small rectangle with vertices at $(A_0,A'_s)$, $(A_t,A'_s)$, $(A_t,A'_{s+\epsilon})$, $(A_0,A'_{s+\epsilon})$. 
Next, (\ref{f_s via hol around rectangle}) implies
\begin{multline}\label{lemma D.2 proof eq2}
    \frac{d}{ds}f_s= \left.\frac{\partial^2}{\partial \epsilon\, \partial t}\right|_{\epsilon=t=0} \det \mr{Hol}_{\nabla^\mr{Harm}}(\rho_{s,t,\epsilon})\\
    =-\mr{Str}\, \iota_{\partial_s A'_s}\iota_{\dot{A}_0}F_{\nabla^\mr{Harm}}\Big|_{(A_0,A'_s)} =\mr{Str}\,P (\ad^*_{\partial_s A'_s} G \ad_{\dot{A}_0}-\ad_{\dot{A}_0} G \ad^*_{\partial_s A'_s}).
\end{multline}
Here in the last step we used the result (\ref{nabla^Harm curvature}) for the curvature of $\nabla^\mr{Harm}$. Comparing with (\ref{lemma D.2 proof eq1}), we see that we have $\partial_s f_s=\partial_s h_s$ which, together with the initial condition $f_0=h_0$ implies the desired result $f_1=h_1$. 
\end{proof}

\begin{proof}[Proof of Proposition \ref{prop: tau Grothendieck horizontality}]
 Let $A_t=\varphi(A,A',t\alpha)$  -- a path of flat connections from $A$ at $t=0$ to $\til{A}$ at $t=1$. We want to show that
 \begin{equation}\label{Prop 4.4 proof eq1}
     \det(\BB_{A\la A_t;A'})\circ \tau_{A_t} \stackrel{!}{=} \tau_A \exp\sum_\gamma \frac{2}{|\mr{Aut}(\gamma)|}\Phi_{\gamma,A,A'}(t\alpha).
 \end{equation}
For $t=1$, this is the desired relation (\ref{tau G horizontality}). Denote the l.h.s. of (\ref{Prop 4.4 proof eq1}) by $\lambda_t$ and the r.h.s. by $\mu_t$. We have $\lambda_0=\mu_0$, so it suffices to prove $\lambda_t^{-1}\partial_t \lambda_t=\mu_t^{-1}\partial_t\mu_t$.

We have
\begin{multline}\label{Prop 4.4 proof eq4}
    \frac{d}{dt}\lambda_t=\left.\frac{d}{d\epsilon}\right|_{\epsilon=0} \det\BB_{A\la A_t;A'}\circ(\det\BB_{A_t\la A_{t+\epsilon};A'}\circ \tau_{A_t+\epsilon})\\
    \underset{\mr{Lemma}\; \ref{lemma D.2}}{=}
    \det\BB_{A\la A_t;A'}\circ\tau_{A_t}\,\mr{Str}\, K_{A_t,A'}\ad_{\dot{A}_t}=
    \lambda_t \,\mr{Str}\, K_{A_t,A'}\ad_{\dot{A}_t}.
\end{multline}

To analyze $\mu_t$, we first remark that 
\begin{equation}\label{Prop 4.4 proof eq2}
\exp\sum_\gamma \frac{2}{|\mr{Aut}(\gamma)|}\Phi_{\gamma,A,A'}(t\alpha)= \mr{Sdet}_{\Omega^\bt}(1+K_{A,A'}\ad_{A_t-A}). 
\end{equation}
Indeed, log of the r.h.s. here is 
$$ \mr{Str}\,\log (1+K_{A,A'}\ad_{A_t-A})=\sum_{n\geq 1}\frac{-1}{n}\mr{Str}(-K_{A,A'}\ad_{A_t-A})^n $$
-- twice the sum of  one-loop graphs, with $n\geq 1$ trees plugged into the cycle.

From (\ref{Prop 4.4 proof eq2}) we find
\begin{equation}\label{Prop 4.4 proof eq3}
    \frac{d}{dt} \mu_t=\mu_t \mr{Str} \underbrace{(1+K_{A,A'}\ad_{A_t-A})^{-1}K_{A,A'}}_{K_{A_t,A'}}\ad_{\dot{A}_t}.
\end{equation}
Comparing with (\ref{Prop 4.4 proof eq4}), we see that $\lambda_t^{-1}\dot\lambda_t=\mu_t^{-1}\dot\mu_t$.\footnote{The r.h.s. here is regularized via point-splitting and the l.h.s. is defined via zeta-regularization. By Remark \ref{rem: regularization of supertraces}, they coincide.
} This finishes the proof.
\end{proof}

\subsection{Vanishing of contributions of hidden boundary strata to $\delta_{A'}Z$ in Proposition \ref{prop: variation of Z wrt A'}}
\label{Appendix: vanishing of hidden boundary strata}
Let $\Gamma$ be a 
trivalent graph, possibly with leaves. Let $\Gamma^\mm$ be $\Gamma$ with one edge marked by $\Lambda$ (instead of $K$) or one leaf marked by $\mathbb{I} i(\sfa)$ (instead of $i(\sfa)$). Let $\omega_{\Gamma^\mm}$ be the corresponding form on the configuration space $\overline{\mr{Conf}}_V(M)$ -- the integrand of (\ref{eq: def Phi_Gamma}) in the $(A,A')$-gauge, with the factor for the marked edge or leaf replaced by $\Lambda$ or $\mathbb{I}i(\sfa)$.

Let $\Gamma'$ be a full subgraph of $\Gamma^\mm$ on a subset $V'$ of vertices with $|V'|\geq 3$ 
and let $\partial_{V'} \overline{\mr{Conf}}_V(M)$ be the codimension $1$ boundary stratum of the compactified configuration space corresponding to vertices $V'$ 
collapsing  at a point of $M$, see \cite{Axelrod1994}.

\begin{prop}\label{prop: hidden strata}
    The contribution of the hidden boundary stratum $\partial_{V'} \overline{\mr{Conf}}_V(M)$ to the variation of the partition function $\delta_{A'}Z_{A,A'}(\sfa)$ vanishes:
    \begin{equation}\label{hidden stratum integral}
        \int_{\partial_{V'} \overline{\mr{Conf}}_V(M)} \omega_{\Gamma^\mm} =0
    \end{equation}
\end{prop}

Before the proving the proposition, we need the following lemma. 

    Let $\eta_\Lambda\in \Omega^1(\overline{\mr{Conf}}_2(M),\g\otimes \g)$ be the integral kernel of the operator $\Lambda$ (\ref{Lambda, II, PP}).

We will call a form $\alpha\in \Omega^\bt(\overline{\mr{Conf}}_2(M))$ 
\emph{regular}\footnote{We think of ``regularity'' as a slightly weakened notion of continuity of a form across the diagonal in $M\times M$. A toy example is the 1-form $r d\phi $ on the punctured plane $\mathbb{R}^2\setminus \{0\}$. It does not continuously extend to the origin, but it extends to the differential geometric blow-up of $\mathbb{R}^2$ at $0$ and vanishes on the boundary circle of the blow-up.
} if 
its restriction to the boundary of the configuration space\footnote{
Recall that the boundary of the compactified configuration space of two points on $M$ can be identified with the sphere tangent bundle of $M$ (seen as the diagonal of $M\times M$): 
$\partial \overline{\mr{Conf}}_2(M)\cong  STM$.
}
is the pullback of a smooth form on $M$:
\begin{equation*}
\alpha|_{\partial \overline{\mr{Conf}}_2(M)} = (\rho|_\partial)^* \beta
\end{equation*}
for some $\beta\in \Omega^1(M,\g\otimes \g)$. Here $\rho\colon  \overline{\mr{Conf}}_2(M)\ra M\times M$ is the blowdown map and $\rho|_\partial\colon \partial \overline{\mr{Conf}}_2(M)\ra \mr{Diag}\cong M$  is its restriction to the boundary ($\mr{Diag}\subset M\times M$ is the diagonal).


\begin{lem}\label{lemma: eta_Lambda behavior on boundary}
The form $\eta_\Lambda$ is regular.
\end{lem}

\begin{proof}
    The leading singularity of $\eta_\Lambda$ on the diagonal is 
    \begin{equation}\label{eta_Lambda singularity}
        \eta_\Lambda(x,y) \sim_{x\ra y}  \frac{1}{4\pi}(\det g)^{\frac12} g^{\alpha \zeta} f^{abc} \epsilon_{\alpha\beta\gamma} \delta A'^{b}_\zeta\frac{u^\beta}{|| u||} du^\gamma + \cdots
    \end{equation}
    where $\cdots$ stands for less singular terms; $u=x-y$ (in a local chart at $y$); $g$ and $\delta A'$ are evaluated at $y$. This result is obtained by Hadamard parametrix method, by performing a model computation of the integral kernel of $\Lambda=K \ad^*_{\delta A'}G$ on $\mathbb{R}^3$ with $A=A'=0$ and constant $\delta A'$.\footnote{
    On $\mathbb{R}^3$, fixing $y=0$, we have
    $\displaystyle \eta_\Lambda(x,0)=[K\ad^*_{\delta A'} G](x,0)= \int_{\mathbb{R}^3\ni z} \epsilon_{\alpha\beta\gamma} \frac{v^\alpha dv^\beta dv^\gamma}{4\pi ||v ||^3} f^{abc}\iota_{\delta A'^{b\mu}\frac{\partial}{\partial z^\mu}} \epsilon_{\delta \epsilon\zeta}\frac{ dz^\delta dz^\epsilon dz^\zeta}{24 \pi ||z||}=
    \frac{1}{16\pi^2} f^{abc}A'^{b\mu} \epsilon_{\alpha\beta\gamma}\epsilon_{\delta\epsilon\mu}dx^\beta\int_{\mathbb{R}^3\ni z}\frac{v^\alpha dz^\gamma dz^\delta dz^\epsilon}{||v||^3 ||z||} 
    =\frac{1}{4\pi}\epsilon_{\alpha\beta\gamma}f^{abc} \delta A'^{b\alpha} \frac{x^\beta}{||x ||} dx^\gamma $, 
    where $v=x-z$; $[\cdots]$ stands for the integral kernel of the operator in brackets. Here the model integral is $\int_{\mathbb{R}^3\ni z}d^3 z\frac{(x-z)^\alpha}{||x-z||^3 ||z||}=\frac{2\pi x^\alpha}{|| x||}$.
    }  
    The ansatz (\ref{eta_Lambda singularity}) immediately implies the statement of the lemma.
\end{proof}

\begin{cor}\label{cor: eta_A,A' boundary behavior}
The integral kernel $\eta_{A,A'}$ of the chain homotopy $K_{A,A'}$ considered modulo regular forms on $\overline{\mr{Conf}}_2(M)$ does not depend on $A'$. In particular, it coincides with the result of Axelrod-Singer \cite[PL5]{Axelrod1991} at $A'=A$.
\end{cor}
\begin{proof}
    By (\ref{variation of ipK in A'}), we have $\delta_{A'}\eta_{A,A'}=d_A \eta_\Lambda+ \rho^*(\cdots)$ with $\cdots$ a smooth form on $M\times M$. Here $d_A$ acts on both arguments of $\eta_\Lambda$. Restricting to $\partial \overline{\mr{Conf}}_2(M)$ and using Lemma \ref{lemma: eta_Lambda behavior on boundary}, we get the statement of the corollary.
\end{proof}

Thus, according to the result of \cite[PL5]{Axelrod1991}, the propagator near the diagonal in $M\times M$ splits into singular, bounded and regular pieces:
$\eta=\eta^\mr{sing}+\eta^\mr{bd}+\eta^\mr{reg}$. Note that the singular and bounded pieces do not depend on $A,A'$ and are of the form $\delta^{ab}$ times a form on $\overline{\mr{Conf}}_2(M)$ with real coefficients.
\begin{proof}[Proof of Proposition \ref{prop: hidden strata}]
    Consider the integral in the l.h.s. of (\ref{hidden stratum integral}). Consider a decoration of edges of $\Gamma'$ by pieces $\eta^\mr{sing}$, $\eta^\mr{bd}$, $\eta^\mr{reg}$. Let $\overline{\Gamma}'$ be the subgraph of $\Gamma'$ where we only retain edges decorated by $\eta^\mr{sing}$ or $\eta^\mr{bd}$  (edges decorated by $\eta^\mr{reg}$, the possible edge decorated by $\eta_\Lambda=\eta_\Lambda^\mr{reg}$, and the leaves are all removed in $\overline\Gamma'$). We have the following:
    \begin{enumerate}[(a)]
        \item If $\bar\Gamma'$ contains a univalent vertex $v$, the integral (\ref{hidden stratum integral}) vanishes by degree reason (one has an integral over 3-dimensional space of positions of $v$, of a form of degree $\leq 2$ in the position of $v$).\footnote{Here 3 is the dimension of the fiber of $\partial_{V'}\overline{\mr{Conf}}_V(M)\ra \partial_{V'\setminus v}\overline{\mr{Conf}}_V(M)$. This argument fails in the special case when $\bar\Gamma'$ consists of two vertices connected by an edge (in this case the dimension of the fiber is 2 instead). This is the case of a \emph{principal} (as opposed to \emph{hidden}) boundary stratum.}
        Also, if $\bar\Gamma'$ contains a 0-valent vertex, the integral (\ref{hidden stratum integral}) vanishes by a similar degree reason.
        \item If $\bar\Gamma'$ contains a bivalent vertex, the integral vanishes by Kontsevich's vanishing lemma \cite[Lemma 2.2]{Kontsevich1994}.\footnote{Here we are using the fact that $T^*\eta=-\eta$, $T^*\eta_\Lambda=-\eta_\Lambda$ for $T\colon M\times M\ra M\times M$, $(x,y)\mapsto (y,x)$; we understand that $T^*$ also swaps the two $\g$ factors. This symmetry property of propagators follows from skew-self-adjointness of operators $K,\Lambda$ with respect to Poincar\'e inner product $\int \langle \alpha \stackrel{\wedge}{,} \beta \rangle$ (not the Hodge inner product, where one does not have self-adjointness in the desynchronized setting).
        }
        \item 
        If $\overline\Gamma'=\overline\Gamma_1\sqcup \overline\Gamma_2$ is disconnected, the integral (\ref{hidden stratum integral}) vanishes: fixing a point $x\in M$, the integral over the fiber of $\partial_{V'}\Conf\ra M$ over $x$ vanishes due to invariance of the integrand under translation of vertices of $\overline\Gamma_1$ by any vector $w\in T_xM$. 
    \end{enumerate}
    Starting with a 
    trivalent graph $\Gamma$, possibly with leaves, with at least one marked edge or one marked leaf, select a full subgraph $\Gamma'$ and construct the associated graph $\overline\Gamma'$ -- in particular the marked edge/leaf is removed in $\overline\Gamma'$. The integral (\ref{hidden stratum integral}) vanishes by (a), (b), (c) above unless $\overline\Gamma'$ is a connected trivalent subgraph of $\Gamma$ with all edges decorated by $\eta^\mr{sing}$ or $\eta^\mr{bd}$. But the latter integral also vanishes by degree reason: the form degree of the integrand is 1 higher than the dimension of the domain of integration.
\end{proof}

\begin{rem}
    The proof of vanishing above also works in the case when $\Gamma'$ contains two vertices $u,v$, with $\overline\Gamma'$ containing 2 or 3 edges connecting $u$ and $v$.
\end{rem}

\begin{rem}
    The proof above is an application of technology of \cite{Axelrod1994}, \cite[Appendix A]{Cattaneo2008}. These references consider the contribution of hidden strata of the configuration space on the variation of Chern-Simons partition function with respect to a change of metric. In that case one actually does get an anomaly (the framing anomaly) from the strata corresponding to the collapse of all vertices of a connected component (with no leaves) of a graph. In the case of variation of $A'$ such an anomaly does not arise, ultimately because the integral kernel of $\Lambda_{\delta A'}$ has a milder singularity on the diagonal (Lemma \ref{lemma: eta_Lambda behavior on boundary}) than $\Lambda_{\delta g}$ (see \cite[PL6]{Axelrod1991}): 
    \begin{equation*}
        \eta_{\Lambda_{\delta g}}(x,y)\sim_{x\ra y} \frac{1}{8\pi} (\det g)^{\frac12} \epsilon_{\mu\nu\rho} \frac{u^\mu}{|| u||^3} du^\nu (g^{-1}\delta g)^\rho_{\tau} u^\tau \delta^{ab}+\cdots
    \end{equation*}
    

This difference of singular behavior can be traced to the fact that $\Lambda_{\delta g}$ is a pseudodifferential operator of order $-2$ while $\Lambda_{\delta A'}$ is a pseudodifferential operator of order $-3$.

\end{rem}
 
\printbibliography

@article{Ho2019,
author={ Ho, NK. and Wilkin, G. and Wu, S. },
title={Conditions of smoothness of moduli spaces of flat connections and of character varieties}, 
journal={Math. Z.},
volume={293},
pages={ 1–23},
doi={10.1007/s00209-018-2158-2}
}

@online{Andersen2025,
    author = {J{\o}rgen Ellegaard Andersen and Li Han and Yong Li and William Elb{\ae}k Misteg{\aa}rd and David Sauzin and Shanzhong Sun},
    title = {A proof of Witten’s asymptotic expansion conjecture for WRT invariants of Seifert fibered homology spheres},
year = {2025},
    eprint        = {2510.10678},
  archiveprefix = {arXiv}
}

@Article{Atiyah1990,
  Title                    = {On framings of 3\mbox{-}{m}anifolds},
  Author                   = {Atiyah, Michael},
  Journal                  = {Topology},
  Year                     = {1990},
  Number                   = {1},
  Pages                    = {1–7},
  Volume                   = {29},

  Doi                      = {10.1016/0040-9383(90)90021-b},
  ISSN                     = {0040-9383},
  Owner                    = {kwernl},
  Publisher                = {Elsevier BV},
  Timestamp                = {2016.01.08}
}

@book{Aubin,
    author = {Thierry Aubin},
    title = {Some Nonlinear problems in Riemannian Geometry},
    publisher = {Springer Berlin, Heidelberg},
    year = {1998} 
}

@Article{Axelrod1994,
  author        = {Axelrod, Scott and Singer, I. M.},
  title         = {Chern-{S}imons perturbation theory. {II}},
  journal       = {J. Diff. Geom.},
  year          = {1994},
  volume        = {39},
  number        = {1},
  pages         = {173--213},
  eprint        = {hep-th/9304087},
  url           = {http://projecteuclid.org/euclid.jdg/1214454681},
  archiveprefix = {arXiv},
  fjournal      = {Journal of Differential Geometry},
  publisher     = {Lehigh University},
}

@InProceedings{Axelrod1991,
  author        = {Axelrod, S. and Singer, I. M.},
  title         = {{Chern-Simons perturbation theory}},
  booktitle     = {Differential geometric methods in theoretical physics, Proceedings, New York},
  year          = {1991},
  volume        = {1},
  pages         = {3-45},
  eprint        = {hep-th/9110056},
  archiveprefix = {arXiv},
  primaryclass  = {hep-th},
  slaccitation  = {%%CITATION = HEP-TH/9110056;%%},
}

@inproceedings{Axelrod1995,
    author = "Axelrod, Scott",
    title = "{Overview and warmup example for perturbation theory with instantons}",
    booktitle = "{Conference on Geometry and Physics: Quantum Invariants and Low-Dimensional Topology}",
    eprint = "hep-th/9511196",
    archivePrefix = "arXiv",
    pages = "321--338",
    month = "11",
    year = "1995"
}

@Article{Bott1999,
  author        = {Bott, Raoul and Cattaneo, Alberto S.},
  title         = {Integral invariants of 3-manifolds {II}},
  journal       = {J. Diff. Geom.},
  year          = {1999},
  volume        = {53},
  number        = {1},
  pages         = {1--13},
  issn          = {0022-040X},
  eprint        = {math/9802062},
  url           = {http://projecteuclid.org/euclid.jdg/1214425446},
  archiveprefix = {arXiv},
  coden         = {JDGEAS},
  fjournal      = {Journal of Differential Geometry},
  mrclass       = {58J28 (57M27 81T45)},
  mrnumber      = {1776090 (2001m:58054)},
  mrreviewer    = {Ruth Lawrence},
}

@Article{Bott1998,
  author   = {Bott, Raoul and Cattaneo, Alberto S.},
  title    = {Integral invariants of 3-manifolds},
  journal  = {J. Diff. Geom.},
  year     = {1998},
  volume   = {48},
  number   = {1},
  pages    = {91--133},
  issn     = {0022-040X},
  eprint   = {dg-ga/9710001},
  url      = {http://projecteuclid.org/euclid.jdg/1214460608},
  fjournal = {Journal of Differential Geometry},
  mrclass  = {58J28 (57M27 57N10)},
  mrnumber = {1622602 (2000i:58040)},
}

@Article{Cattaneo2008,
  author         = {Alberto S. Cattaneo and Pavel Mnev},
  title          = {Remarks on {C}hern-{S}imons {I}nvariants},
  journal        = {Commun. Math. Phys.},
  year           = {2008},
  volume         = {293},
  month          = nov,
  pages          = {803--836},
  eprint         = {0811.2045},
  abstract       = {The perturbative Chern-Simons theory is studied in a finite-dimensional version or assuming that the propagator satisfies certain properties (as is the case, e.g., with the propagator defined by Axelrod and Singer). It turns out that the effective BV action is a function on cohomology (with shifted degrees) that solves the quantum master equation and is defined modulo certain canonical transformations that can be characterized completely. Out of it one obtains invariants.},
  archiveprefix  = {arXiv},
  comments       = {33 pages; minor corrections, new appendices with technical details, new references, new example; to appear in Commun. Math. Phys..},
  oai2identifier = {0811.2045},
  owner          = {kwernl},
  primaryclass   = {math},
  timestamp      = {2015.11.25},
}

@Article{Cattaneo2014,
  author    = {Cattaneo, Alberto S. and Mnev, Pavel and Reshetikhin, Nicolai},
  title     = {Classical {BV T}heories on Manifolds with Boundary},
  journal   = {Commun. Math. Phys.},
  year      = {2014},
  volume    = {332},
  number    = {2},
  month     = aug,
  pages     = {535-603},
  issn      = {1432-0916},
  doi       = {10.1007/s00220-014-2145-3},
  eprint    = {1201.0290},
  owner     = {kwernl},
  publisher = {Springer Science + Business Media},
  timestamp = {2015.11.04},
}

@Article{Froehlich1989,
  author    = {Fr{\"o}hlich, J. and King, C.},
  title     = {The {C}hern-{S}imons theory and knot polynomials},
  journal   = {Commun. Math. Phys.},
  year      = {1989},
  volume    = {126},
  number    = {1},
  pages     = {167--199},
  url       = {http://projecteuclid.org/euclid.cmp/1104179728},
  fjournal  = {Communications in Mathematical Physics},
  publisher = {Springer},
}

@Article{Freed1991,
  author    = {Freed, Daniel S. and Gompf, Robert E.},
  title     = {Computer calculation of {W}itten's $3$-manifold invariant},
  journal   = {Commun. Math. Phys.},
  year      = {1991},
  volume    = {141},
  number    = {1},
  pages     = {79--117},
  url       = {http://projecteuclid.org/euclid.cmp/1104248195},
  fjournal  = {Communications in Mathematical Physics},
  publisher = {Springer},
}

@article{Rozansky1995,
	title = {A large k asymptotics of Witten's invariant of Seifert manifolds},
	volume = {171},
	issn = {1432-0916},
	url = {https://doi.org/10.1007/BF02099272},
	doi = {10.1007/BF02099272},
	abstract = {We calculate a largek asymptotic expansion of the exact surgery formula for Witten'{sSU}(2) invariant of some Seifert manifolds. The contributions of all flat connections are identified. An agreement with the 1-loop formula is checked. A contribution of the irreducible connections appears to contain only a finite number of terms in the asymptotic series. A 2-loop correction to the contribution of the trivial connection is found to be proportional to Casson's invariant.},
	pages = {279--322},
	number = {2},
	fjournal = {Communications in Mathematical Physics},
	  journal   = {Commun. Math. Phys.},
	author = {Rozansky, L.},
	date = {1995-08-01},
}

@article{Fintushel1990,
    author = {R. Fintushel and R. J. Stern},
    title = {Instanton homology of Seifert fibred homology three spheres},
    journal = {Proc. Lond. Math. Soc.},
    year = {1990},
    pages = {109--137}, 
volume = {3}, 
number = {1},
}

@Online{Iacovino2008,
    author = {Iacovino, Vito},
    title = {Master Equation and Perturbative Chern-Simons theory},
archiveprefix  = {arXiv},
    eprint = {0811.2181},
    year = {2008}
}

@Article{Jeffrey1992,
  author       = {Jeffrey, L. C.},
  title        = {{C}hern-{S}imons-{W}itten invariants of lens spaces and torus bundles, and the semiclassical approximation},
  journal      = {Commun. Math. Phys.},
  year         = {1992},
  volume       = {147},
  pages        = {563-604},
  doi          = {10.1007/BF02097243},
  slaccitation = {%%CITATION = CMPHA,147,563;%%},
}

@article{joyce2020orientations,
  title={On orientations for gauge-theoretic moduli spaces},
  author={Joyce, Dominic and Tanaka, Yuuji and Upmeier, Markus},
journal = {Adv. Math.},
  fjournal={Advances in Mathematics},
  volume={362},
  pages={106957},
  year={2020},
  publisher={Elsevier}
}

@InCollection{Kontsevich1994,
  Title                    = {Feynman Diagrams and Low-Dimensional Topology},
  Author                   = {Kontsevich, Maxim},
  Booktitle                = {First European Congress of Mathematics Paris, July 6–10, 1992},
  Publisher                = {Birkhäuser Basel},
  Year                     = {1994},
  Editor                   = {Joseph, Anthony and Mignot, Fulbert and Murat, François and Prum, Bernard and Rentschler, Rudolf},
  Pages                    = {97-121},
  Series                   = {Progress in Mathematics},
  Volume                   = {120},

  Doi                      = {10.1007/978-3-0348-9112-7_5},
  ISBN                     = {978-3-0348-9912-3},
  Language                 = {English}
}

@Article{Kontsevich2003,
  author         = {Maxim Kontsevich},
  title          = {Deformation quantization of {P}oisson manifolds},
  journal        = {Lett. Math. Phys.},
  year           = {2003},
  volume         = {66},
  number         = {3},
  pages          = {157-216,},
  eprint         = {q-alg/9709040},
  abstract       = {I prove that every finite-dimensional Poisson manifold X admits a canonical deformation quantization. Informally, it means that the set of equivalence classes of associative algebras close to the algebra of functions on X is in one-to-one correspondence with the set of equivalence classes of Poisson structures on X modulo diffeomorphisms. In fact, a more general statement is proven ("Formality conjecture"), relating the Lie superalgebra of polyvector fields on X and the Hochschild complex of the algebra of functions on X. Coefficients in explicit formulas for the deformed product can be interpreted as correlators in a topological open string theory, although I do not use explicitly the language of functional integrals. One of corollaries is a justification of the orbit method in the representation theory.},
  comments       = {plain TeX and epsf.tex, 46 pages, 24 figures},
  oai2identifier = {q-alg/9709040},
  owner          = {kwernl},
  timestamp      = {2016.01.08},
}

@InProceedings{Kontsevich1993a,
  author    = {Maxim Kontsevich},
  title     = {Formal (non)commutative symplectic geometry},
  booktitle = {The Gelfand Mathematical Seminars, 1990–1992},
  year      = {1993},
  publisher = {Birkhäuser},
  pages     = {173--187},
}

@Online{Kuperberg1999,
  author         = {Greg Kuperberg and Dylan P. Thurston},
  title          = {Perturbative 3-manifold invariants by cut-and-paste topology},
  year           = {1999},
  abstract       = {We give a purely topological definition of the perturbative quantum invariants of links and 3-manifolds associated with Chern-Simons field theory. Our definition is as close as possible to one given by Kontsevich. We will also establish some basic properties of these invariants, in particular that they are universally finite type with respect to algebraically split surgery and with respect to Torelli surgery. Torelli surgery is a mutual generalization of blink surgery of Garoufalidis and Levine and clasper surgery of Habiro.},
  comments       = {18 pages, many figures. The important change in this version is an improved blowup construction. Also 20-30 typos have been corrected},
    archiveprefix  = {arXiv},
  eprint         = {math/9912167},
  oai2identifier = {math/9912167},
  owner          = {kwernl},
  reportno       = {UC Davis Math 1999-36},
  timestamp      = {2016.01.08},
}

@InCollection{Lescop2002,
  Title                    = {On configuration space integrals for links},
  Author                   = {Lescop, Christine},
  Booktitle                = {Invariants of knots and 3-manifolds ({K}yoto, 2001)},
  Publisher                = {Geom. Topol. Publ., Coventry},
  Year                     = {2002},
  Pages                    = {183--199 (electronic)},
  Series                   = {Geom. Topol. Monogr.},
  Volume                   = {4},

  Doi                      = {10.2140/gtm.2002.4.183},
  Mrclass                  = {57M27 (17B37 55R80)},
  Mrnumber                 = {2002610 (2004g:57028)}
}

@article{Mathai2011,
author = {Varghese Mathai and Siye Wu},
title = {{Analytic Torsion for Twisted De Rham Complexes}},
volume = {88},
journal = {Journal of Differential Geometry},
number = {2},
publisher = {Lehigh University},
pages = {297 -- 332},
year = {2011},
doi = {10.4310/jdg/1320067649},
URL = {https://doi.org/10.4310/jdg/1320067649}
}

@Online{Mnev2008,
  author         = {Pavel Mnev},
  title          = {Discrete {BF} theory},
  year           = {2008},
  month          = sep,
  abstract       = {In this work we discuss the simplicial program for topological field theories for the case of non-abelian BF theory. Discrete BF theory with finite-dimensional space of fields is constructed for a triangulated manifold (or for a manifold equipped with cubical cell decomposition), that is in a sense equivalent to the topological BF theory on manifold. This discrete version allows one to calculate interesting quantities from the BF theory, like the effective action on cohomology, in terms of finite-dimensional integrals instead of functional integrals, as demonstrated in a series of explicit examples. We also discuss the interpretation of discrete BF action as the generating function for $qL_\infty$ structure (certain "one-loop version" of ordinary $L_\infty$ algebra) on the cell cochains of triangulation, related to the de Rham algebra of the underlying manifold by homotopy transfer procedure. This work is a refinement of older text hep-th/0610326.},
  archiveprefix  = {arXiv},
  comments       = {204 pages},
  eprint         = {0809.1160},
  oai2identifier = {0809.1160},
  owner          = {kwernl},
  timestamp      = {2015.11.06},
}

@Article{Reshetikhin1991,
  Title                    = {Invariants of 3-{m}anifolds via link polynomials and quantum groups},
  Author                   = {Reshetikhin, N. and Turaev, V. G.},
  Journal                  = {Invent. Math.},
  Year                     = {1991},

  Month                    = dec,
  Number                   = {1},
  Pages                    = {547–597},
  Volume                   = {103},

  Doi                      = {10.1007/bf01239527},
  ISSN                     = {1432-1297},
  Owner                    = {kwernl},
  Publisher                = {Springer Science + Business Media},
  Timestamp                = {2015.11.25}
}

@Article{Schwarz1978,
  author    = {Schwarz, A.S.},
  title     = {The partition function of degenerate quadratic functional and {R}ay-{S}inger invariants},
  journal   = {Lett. Math. Phys.},
  year      = {1978},
  language  = {English},
  volume    = {2},
  number    = {3},
  pages     = {247-252},
  issn      = {0377-9017},
  doi       = {10.1007/BF00406412},
  publisher = {Kluwer Academic Publishers},
}

@Article{Witten1989,
  author    = {Witten, Edward},
  title     = {Quantum field theory and the Jones polynomial},
  journal   = {Commun. Math. Phys.},
  year      = {1989},
  volume    = {121},
  number    = {3},
  pages     = {351--399},
  url       = {http://projecteuclid.org/euclid.cmp/1104178138},
  fjournal  = {Communications in Mathematical Physics},
  publisher = {Springer},
}

@Article{Alexandrov1997,
  author     = {Alexandrov, M. and Kontsevich, M. and Schwarz, A. and Zaboronsky, O. },
  title      = {The geometry of the master equation and topological quantum field theory},
  journal    = {Internat. J. Modern Phys. A},
  year       = {1997},
  volume     = {12},
  number     = {7},
  pages      = {1405--1429},
  issn       = {0217-751X},
  doi        = {10.1142/S0217751X97001031},
  url        = {https://doi.org/10.1142/S0217751X97001031},
  fjournal   = {International Journal of Modern Physics A. Particles and Fields. Gravitation. Cosmology},
  mrclass    = {81T70 (58D29 58F05)},
  mrnumber   = {1432574},
  mrreviewer = {Luigi Martina},
}

@Article{Cattaneo2001a,
  author    = {Alberto S. Cattaneo and Giovanni Felder},
  title     = {On the Globalization of {K}ontsevich{\textquotesingle}s Star Product and the Perturbative {P}oisson Sigma Model},
  journal   = {Progress of Theoretical Physics Supplement},
  year      = {2001},
  volume    = {144},
  month     = {dec},
  pages     = {38--53},
  doi       = {10.1143/ptps.144.38},
  owner     = {kwernl},
  publisher = {Oxford University Press ({OUP})},
  timestamp = {2018.07.04},
}

@Article{Cattaneo2002,
  author    = {Alberto S. Cattaneo and Giovanni Felder and Lorenzo Tomassini},
  title     = {From local to global deformation quantization of Poisson manifolds},
  journal   = {Duke Math J.},
  year      = {2002},
  volume    = {115},
  number    = {2},
  month     = {nov},
  pages     = {329--352},
  doi       = {10.1215/s0012-7094-02-11524-5},
  owner     = {kwernl},
  publisher = {Duke University Press},
  timestamp = {2018.07.04},
}

@Article{Cattaneo2017,
  author    = {Alberto S. Cattaneo and Pavel Mnev and Nicolai Reshetikhin},
  title     = {Perturbative Quantum Gauge Theories on Manifolds with Boundary},
  journal   = {Commun. Math. Phys.},
  year      = {2017},
  volume    = {357},
  number    = {2},
  month     = {dec},
  pages     = {631--730},
  doi       = {10.1007/s00220-017-3031-6},
  owner     = {kwernl},
  publisher = {Springer Nature},
  timestamp = {2018.07.04},
}

@InCollection{Reshetikhin2010,
  author    = {N. Reshetikhin},
  title     = {Lectures on Quantization of Gauge Systems},
  booktitle = {New Paths Towards Quantum Gravity},
  year      = {2010},
  publisher = {Springer Berlin Heidelberg},
  pages     = {125--190},
  doi       = {10.1007/978-3-642-11897-5_3},
  owner     = {kwernl},
  timestamp = {2018.07.04},
}

@Article{Schwarz1993,
  author    = {Albert Schwarz},
  title     = {{Geometry of Batalin-Vilkovisky quantization}},
  journal   = {Commun. Math. Phys.},
  year      = {1993},
  volume    = {155},
  number    = {2},
  month     = {jul},
  pages     = {249--260},
  doi       = {10.1007/bf02097392},
  owner     = {kwernl},
  publisher = {Springer Nature},
  timestamp = {2018.07.04},
}

@Article{Bonechi2012,
  author     = {Bonechi, Francesco and Cattaneo, Alberto S. and Mnev, Pavel},
  title      = {The {P}oisson sigma model on closed surfaces},
  journal    = {J. High Energy Phys.
},
  year       = {2012},
  number     = {1},
  pages      = {099, 26},
  issn       = {1126-6708},
  doi        = {10.1007/JHEP01(2012)099},
  url        = {https://doi.org/10.1007/JHEP01(2012)099},
  fjournal   = {Journal of High Energy Physics},
  mrclass    = {81T45 (53Cxx)},
  mrnumber   = {2949286},
  mrreviewer = {Ryan E. Grady},
  owner      = {kwernl},
  timestamp  = {2018.07.20},
}

@Article{Freed1995,
  author    = {D. S. Freed},
  title     = {Classical Chern-Simons Theory, 1},
journal = {Adv. Math.},
  fjournal   = {Advances in Mathematics},
  year      = {1995},
  volume    = {113},
  number    = {2},
  month     = {jul},
  pages     = {237--303},
  doi       = {10.1006/aima.1995.1039},
  owner     = {kwernl},
  publisher = {Elsevier {BV}},
  timestamp = {2018.07.20},
}

@Book{Costello2011,
  author    = {Kevin Costello},
  title     = {Renormalization and Effective Field Theory},
  year      = {2011},
  series    = {Mathematical Surveys and Monographs},
  number    = {170},
  publisher = {American Mathematical Society},
  isbn      = {978-0821852880},
}

@InProceedings{Cattaneo1999,
  author    = {Alberto Cattaneo},
  title     = {Configuration space integrals and invariants for 3-manifolds and knots},
  booktitle = {Low Dimensional Topology},
  year      = {1999},
  editor    = {H. Nencka},
  series    = {Cont. Math. 233},
  pages     = {153-165},
}

@Article{Ray1971,
  author    = {D.B. Ray and I.M. Singer},
  title     = {R-Torsion and the Laplacian on Riemannian manifolds},
journal = {Adv. Math.},
  fjournal   = {Advances in Mathematics},
  year      = {1971},
  volume    = {7},
  number    = {2},
  month     = {oct},
  pages     = {145--210},
  doi       = {10.1016/0001-8708(71)90045-4},
  publisher = {Elsevier {BV}},
}

@Article{Axelrod1991a,
  author    = {Scott Axelrod and Steve Della Pietra and Edward Witten},
  title     = {Geometric quantization of Chern-Simons gauge theory},
journal   = {J. Diff. Geom.},
  fjournal   = {Journal of Differential Geometry},
  year      = {1991},
  volume    = {33},
  number    = {3},
  pages     = {787--902},
  doi       = {10.4310/jdg/1214446565},
  publisher = {International Press of Boston},
}

@Article{Witten1991,
  author    = {Edward Witten},
  title     = {On quantum gauge theories in two dimensions},
  journal   = {Commun. Math. Phys.},
  fjournal   = {Communications in Mathematical Physics},
  year      = {1991},
  volume    = {141},
  number    = {1},
  month     = {oct},
  pages     = {153--209},
  doi       = {10.1007/bf02100009},
  publisher = {Springer Nature},
}

@Article{Kontsevich1993,
  author  = {Kontsevich, Maxim},
  title   = {Vassiliev’s knot invariants},
  journal = {Adv. in Sov. Math},
  year    = {1993},
  volume  = {16},
  number  = {2},
  pages   = {137--150},
}

@Article{Le1998,
  author    = {Thang T.Q. Le and Jun Murakami and Tomotada Ohtsuki},
  title     = {On a universal perturbative invariant of 3-manifolds},
  journal   = {Topology},
  year      = {1998},
  volume    = {37},
  number    = {3},
  month     = {may},
  pages     = {539--574},
  doi       = {10.1016/s0040-9383(97)00035-9},
  publisher = {Elsevier {BV}},
}

@Article{Bar-Natan2002,
  author    = {D. Bar-Natan and S. Garoufalidis and L. Rozansky and D. P. Thurston},
  title     = {The Århus integral of rational homology 3-spheres I: A highly non trivial flat connection on {$S_3$}},
journal = {Sel. Math.},
  fjournal   = {Selecta Mathematica},
  year      = {2002},
  volume    = {8},
  number    = {3},
  month     = {sep},
  pages     = {315--339},
  doi       = {10.1007/s00029-002-8108-0},
  publisher = {Springer Nature},
}

@PhdThesis{Wernli2019,
  author      = {Konstantin Wernli},
  title       = {{Perturbative Quantization of Split Chern-Simons Theory on Handlebodies and Lens Spaces by the BV-BFV Formalism}},
  institution = {Universit\"at Z\"urich},
  year        = {2019},
}

@Article{Cattaneo2019,
  author    = {Alberto S. Cattaneo and Nima Moshayedi and Konstantin Wernli},
  title     = {Globalization for Perturbative Quantization of Nonlinear Split {AKSZ} Sigma Models on Manifolds with Boundary},
  journal   = {Commun. Math. Phys.},
  year      = {2019},
  volume    = {372},
  number    = {1},
  month     = {nov},
  pages     = {213--260},
  doi       = {10.1007/s00220-019-03591-5},
  publisher = {Springer Science and Business Media {LLC}},
}

@Article{Cattaneo2020,
  author    = {Alberto S. Cattaneo and Nima Moshayedi and Konstantin Wernli},
  title     = {On the Globalization of the Poisson Sigma Model in the {BV}-{BFV} Formalism},
  journal   = {Commun. Math. Phys.},
  year      = {2020},
  volume    = {375},
  number    = {1},
  month     = {mar},
  pages     = {41--103},
  doi       = {10.1007/s00220-020-03726-z},
  publisher = {Springer Science and Business Media {LLC}},
}

@Online{Crainic2004,
  author      = {M. Crainic},
  title       = {On the perturbation lemma, and deformations},
  date        = {2004-03-16},
  abstract    = {We have one more look at the (homological) perturbation lemma and we point out some non-standard consequences, including the relevance to deformations.},
  eprint      = {math/0403266v1},
  eprintclass = {math.AT},
  eprinttype  = {arXiv},
  file        = {:http\://arxiv.org/pdf/math/0403266v1:PDF},
  keywords    = {math.AT, math.DG, math.SG},
}

@InBook{Andersen2002,
  author    = {Andersen, {J{\o}rgen Ellegaard}},
  title     = {The asymptotic expansion conjecture},
  booktitle = {Invariants of knots and 3–manifolds (Kyoto 2001)},
  year      = {2002},
  editor    = {Ohtsuki, {Tomotada } and Kohno, {Toshitake } and Le, {Thang } and Jun Murakami and {Roberts }, Justin and Turaev, {Vladimir }},
  language  = {English},
  series    = {Geometry and Topology Monographs},
volume = {4},
  publisher = {Mathematical Sciences Publishers},
  pages     = {474--480},
  doi       = {10.2140/gtm.2002.4.377},
  day       = {1},
  month     = jun,
}

@InProceedings{Kuranishi1965,
author="Kuranishi, M.",
editor="Aeppli, Alfred
and Calabi, Eugenio
and R{\"o}hrl, Helmut",
title="New Proof for the Existence of Locally Complete Families of Complex Structures",
booktitle="Proceedings of the Conference on Complex Analysis",
year="1965",
publisher="Springer Berlin Heidelberg",
address="Berlin, Heidelberg",
pages="142--154",
isbn="978-3-642-48016-4"
}

@Article{Wernli2022,
  author       = {Konstantin Wernli},
  date         = {2022-01},
  journaltitle = {Rev. Math. Phys.},
  title        = {Notes on Chern{\textendash}Simons perturbation theory},
  doi          = {10.1142/s0129055x22300035},
  number       = {03},
  volume       = {34},
  publisher    = {World Scientific Pub Co Pte Ltd},
}

@article{Bar-Natan2002a,
  title = {The {{Århus}} Integral of Rational Homology 3-Spheres {{II}}: {{Invariance}} and Universality},
  author = {Bar-Natan, D. and Garoufalidis, S. and Rozansky, L. and Thurston, D. P.},
  date = {2002-09-01},
journal = {Sel. Math.},

  fjournal = {Selecta Mathematica},
  volume = {8},
  number = {3},
  pages = {341--371},
  issn = {1420-9020},
  doi = {10.1007/s00029-002-8109-z},
  url = {https://doi.org/10.1007/s00029-002-8109-z},
  abstract = {We continue the work started in [Å-I], and prove the invariance and universality in the class of finite type invariants of the object defined and motivated there, namely the Århus integral of rational homology 3-spheres. Our main tool in proving invariance is a translation scheme that translates statements in multi-variable calculus (Gaussian integration, integration by parts, etc.) to statements about diagrams. Using this scheme the straightforward “philosophical” calculus-level proofs of [Å-I] become straightforward honest diagram-level proofs here. The universality proof is standard and utilizes a simple “locality” property of the Kontsevich integral.}
}

@article{Bar-Natan2004,
  title = {The {{Århus}} Integral of Rational Homology 3-Spheres {{III}}: {{Relation}} with the {{Le}}–{{Murakami}}–{{Ohtsuki}} Invariant},
  author = {Bar-Natan, Dror and Garoufalidis, Stavros and Rozansky, Lev and Thurston, Dylan P.},
  date = {2004-11-01},
  fjournal = {Selecta Mathematica},
  journal = {Sel. Math.},

  volume = {10},
  number = {3},
  pages = {305},
  issn = {1420-9020},
  doi = {10.1007/s00029-004-0344-z},
  url = {https://doi.org/10.1007/s00029-004-0344-z},
  abstract = {Continuing the work started in [Å-I] and [Å-II], we prove the relationship between the Århus integral and the invariant Ω (henceforth called LMO) defined by T.Q.T. Le, J. Murakami and T. Ohtsuki in [LMO]. The basic reason for the relationship is that both constructions afford an interpretation as “integrated holonomies”. In the case of the Århus integral, this interpretation was the basis for everything we did in [Å-I] and [Å-II]. The main tool we used was “formal Gaussian integration”. For the case of the LMO invariant, we develop an interpretation of a key ingredient, the map jm, as “formal negative dimensional integration”. The relation between the two constructions is then an immediate corollary of the relationship between the two integration theories.}
}

@article{cattaneo2020cellular,
  title={A cellular topological field theory},
  author={Cattaneo, Alberto S and Mnev, Pavel and Reshetikhin, Nicolai},
  journal   = {Commun. Math. Phys.},
  fjournal={Communications in Mathematical Physics},
  volume={374},
  pages={1229--1320},
  year={2020},
  publisher={Springer}
}

@Online{igusa2009iterated,
  title={Iterated integrals of superconnections},
  author={Igusa, Kiyoshi},
archiveprefix  = {arXiv},
  eprint={0912.0249},
  year={2009}
}

@article{gugenheim1989perturbation,
  title={Perturbation theory in differential homological algebra I},
  author={Gugenheim, Victor KAM and Lambe, Larry A},
  journal={Illinois Journal of Mathematics},
  volume={33},
  number={4},
  pages={566--582},
  year={1989},
  publisher={Duke University Press}
}

@article{eilenberg1953groups,
  title={On the groups H ($\pi$, n), I},
  author={Eilenberg, Samuel and Lane, Saunders Mac},
  journal={Ann. Math.},
  volume={58},
  number={1},
  pages={55--106},
  year={1953},
  publisher={JSTOR}
}

@book{mnev2019quantum,
  title={Quantum field theory: Batalin--Vilkovisky formalism and its applications},
  author={Mnev, Pavel},
  volume={72},
  year={2019},
  publisher={American Mathematical Soc.}
}

@article{cattaneo2021note,
  title={A note on the $\Theta$-invariant of 3-manifolds},
  author={Cattaneo, Alberto S and Shimizu, Tatsuro},
  journal={Quantum Topology},
  volume={12},
  number={1},
  pages={111--127},
  year={2021},
  publisher={EMS European Mathematical Society}
}

@article{mnev2014lecture,
  title={Lecture notes on torsions},
  author={Mnev, Pavel},
  journal={arXiv preprint arXiv:1406.3705},
  year={2014}
}

@article{braverman2003new,
  title={New Proof of the Cheeger--M{\"u}ller Theorem},
  author={Braverman, Maxim},
  journal={Annals of Global Analysis and Geometry},
  volume={23},
  number={1},
  pages={77--92},
  year={2003},
  publisher={Springer}
}

@article{milnor1966whitehead,
  title={Whitehead torsion},
  author={Milnor, John},
  journal={Bulletin of the American Mathematical Society},
  volume={72},
  number={3},
  pages={358--426},
  year={1966}
}

@article{bismut1992extension,
  title={An extension of a theorem by Cheeger and M{\"u}ller},
  author={Bismut, Jean-Michel and Zhang, Weiping},
  journal={Ast{\'e}risque},
  volume={205},
  number={205},
  pages={235},
  year={1992}
}

\textbf{Data Availability Statement:}  No data have been used or created for this work. 

\textbf{Conflict of Interest Statement:}  The authors declare no conflict of interest. 

\end{document}